\tikzstyle{tre}=[circle,draw,minimum size=3mm,inner sep=1pt]%fill=green!50,
\tikzstyle{trepp}=[circle,draw,minimum size=1.5mm,inner sep=0.1pt]%fill=green!50,
\tikzstyle{treppp}=[circle,draw,minimum size=1mm,inner sep=0pt]%fill=green!50,
\renewcommand{\leq}{\leqslant}
\renewcommand{\geq}{\geqslant}
\newcommand{\NN}{\mathbb{N}}
\newcommand{\oC}{\overline{c}}
\newcommand{\treeset}{\mathit{treeset}}
\newcommand{\TT}{\mathcal{T}}
\newcommand{\ZZ}{\mathbb{Z}}
\newcommand{\RR}{\mathbb{R}}
\newcommand{\Tgfb}{T^{\mathit{gfb}}_n}
\DeclarePairedDelimiter\ceil{\lceil}{\rceil}
\DeclarePairedDelimiter\floor{\lfloor}{\rfloor}
\begin{document}
\title{%
On the minimum value of the Colless index and the bifurcating trees that achieve it}
\author{Tom\'as M. Coronado
		\and Mareike Fischer
		\and Lina Herbst
   		\and Francesc Rossell\'o
   		\and Kristina Wicke}

\institute{Francesc Rossell\'o (\Letter) \email{cesc.rossello@uib.eu} \\
Tom\'as M. Coronado, Francesc Rossell\'o \at Dept. of Mathematics and Computer Science, University of the Balearic Islands, E-07122 Palma, Spain, and Balearic Islands Health Research Institute (IdISBa), E-07010 Palma, Spain \\
Mareike Fischer, Lina Herbst, Kristina Wicke \at Institute of Mathematics and Computer Science, University of Greifswald, Greifswald, Germany}

\titlerunning{Minimum Colless index}

\date{Received:  / Revised version: }

\maketitle

\begin{abstract}
Measures of tree balance play an important role in the analysis of phylogenetic trees. One of the oldest and most popular indices in this regard is the Colless index for rooted bifurcating trees, introduced by \citet{Colless1982}. While many of its  statistical properties under different probabilistic models for phylogenetic trees have already been established, little is  known about its minimum value and the trees that achieve it. In this manuscript, we fill this gap in the literature. To begin with, we derive both recursive and closed expressions for the minimum Colless index of a tree with $n$ leaves. Surprisingly, these expressions show a connection between the minimum Colless index and the so-called Blancmange curve, a fractal curve. 
We then fully characterize the tree shapes that achieve this minimum value and we introduce both an algorithm to generate them and a recurrence to count them. After focusing on two extremal classes of trees with minimum Colless index (the maximally balanced trees and the greedy from the bottom trees), we conclude by showing that all trees with minimum Colless index also have minimum Sackin index, another popular balance index.
\keywords{Phylogenetic tree \and Tree balance \and Colless index \and Sackin index \and Blancmange curve \and Takagi curve}
\end{abstract}

\section{Introduction}
{One of the main goals of evolutionary biology is to understand  the forces that influence speciation and extinction processes and their effect on macroevolution \citep{Futuyma}. Since phylogenetic trees are the standard representation of joint evolutionary histories of groups of species, there has been a natural interest in the development of techniques that allow to assess the imprint of these forces in them \citep{Kubo95,Mooers1997,Stich09}. This imprint may be found in two aspects of a phylogenetic tree: in its branch lengths, which are determined by the timing of speciation events, and in its
\emph{shape}, or \emph{topology}, which is determined by the differences in the diversification rates among clades  \citep[Chap. 33]{fel:04}. Now, it turns out that accurately reconstructing branch lengths that associate a robust timeline to a phylogenetic tree is not easy \citep{Drummond}, whereas different phylogenetic reconstruction methods on the same empirical data tend to agree on the topology of the reconstructed tree \citep{BrowerRindal13,Hillis92,RindalBrower11}. It thus has been the shape of phylogenetic trees  which has become the focus of most studies in this regard, be it through the definition of indices that quantify topological features 
---see,  for instance,  \citep{Fusco95,Mooers1997,Shao:90} and the references on balance indices given below---
or through the frequency distribution of small rooted subtrees \citep{cherries,Savage,Slowinski90,Wu15}. 

Since the early observation by \citet{Yule} that taxonomic trees tend to be asymmetric, with many small clades and only a few large ones at every taxonomic level, the most popular topological feature used to describe the shape of a phylogenetic tree  has been its \emph{balance}: the tendency of the children of any given node to have the same number of descendant leaves. In this way, the \emph{imbalance} of a phylogenetic tree reflects the propensity of diversification events to occur preferentially along specific lineages \citep{Nelson,Shao:90}.
Several \emph{balance indices} have been proposed so far to quantify the balance (or rather, in most cases, the imbalance) of a phylogenetic tree: see, for instance,   \citep{Colless1982,CMR,Fischer2015,Fusco95,KiSl:93,cherries,Mir2013,Mir2018,Sackin1972,Shao:90} and the section ``Measures of overall asymmetry'' in \citet{fel:04} (pp.  562--563). These indices have then been used, among other applications, to test evolutionary models \citep{Aldous01,Blum2005,duchene2018,KiSl:93,Mooers1997,Purvis1996,Verboom2019};  to assess biases in the distribution of shapes obtained through different phylogenetic tree reconstruction methods \citep{Colless1995,Farris98,Holton2014,Sober93,Stam02}; as a tool to discriminate between input parameters in phylogenetic tree simulations 
\citep{Poon2015,Saulnier16}; to compare tree shapes \citep{Avino18,Goloboff17,Kayondo}; or simply to describe phylogenies \citep{chalmandrier2018,Cunha2019,metzig2019,Purvis11}. 

One of the most popular balance indices is the \emph{Colless index}, introduced by \citet{Colless1982}. The Colless index of a rooted bifurcating tree $T$ is defined as  the sum, over all the internal nodes $v$ of $T$, of the absolute value of the difference between the numbers of descendant leaves of the pair of children of $v$; for a recent sound extension to multifurcating trees, see \citep{Mir2018}. The popularity of this index is due to several reasons. First, it is one of  the first balance indices introduced in the literature.   Second, being a sum of values reflecting the  ``local imbalance'' of each internal node in $T$, it measures the global imbalance  of $T$ in a very intuitive way. Moreover,  it has been proved to be one of the most powerful tree shape  indices  in goodness-of-fit tests of probabilistic  models of phylogenetic trees    \citep{Agapow02,KiSl:93,Matsen06} as well as one of the most shape-discriminant balance indices \citep{Hayati19}.
 
As a consequence of this popularity, the statistical properties of the Colless index under several probabilistic models for phylogenetic trees have been thoroughly studied  \citep{Blum2006a,CMR2013,Ford,Heard1992}. In this manuscript we focus on its extremal properties.  More specifically, we solve several open problems related to the minimum Colless index for rooted bifurcating trees with a given number of leaves. Let us mention here that, as far as the maximum Colless index for a given number of leaves $n$ goes, it is folklore knowledge that it is reached at the \emph{caterpillar tree}, or \emph{comb}: the unique rooted bifurcating tree with $n$ leaves where all internal nodes have different numbers of descendant leaves (cf. Figure \ref{treetop}.(a)). Caterpillars are considered since the early paper by \citet{Sackin1972} to be the most imbalanced type of phylogenetic trees, and the fact that they have the maximum Colless index for any number of leaves $n$ was already hinted at by \citet{Colless1982}, who gave a wrong value for their Colless index that was later corrected by \citet{Heard1992} (and confirmed by \citet{Colless1995}) giving the correct maximum value of $(n-1)(n-2)/2$. For a formal proof of the maximality of this Colless index, see Lemma 1 in \citep{Mir2018}.

In contrast, the analysis of the minimum value of the Colless index is much more involved.  On the one hand, despite its popularity and wide use, the minimum Colless index of a bifurcating tree with $n$ leaves is unknown   beyond the often stated straightforward result that for numbers of leaves that are powers of 2 it is reached  at the fully symmetric  trees, which clearly have Colless index 0; see for instance \citep{Heard1992,KiSl:93,Mooers1997}. To have a closed formula for this minimum value is essential in order to normalize the Colless index to the range $[0,1]$  for every number of leaves, making its value independent of its size as it is recommended, for instance, by \citet{Shao:90} or \citet{Stam02}. Up to now, this normalization is performed by simply dividing by its maximum value, as it was suggested by \citet{Heard1992}, but then the normalized index only reaches 0 when $n$ is a power of 2. By subtracting the minimum value and then dividing by the maximum value minus the minimum value we guarantee to reach both ends of the interval $[0,1]$.

On the other hand, this minimum value may be achieved by several trees.
In fact, as we shall see, for every number $n$ of leaves, the \emph{maximally balanced tree} with $n$ leaves \citep{Mir2013}, which is characterized by the property that all its internal nodes are maximally balanced in the sense that the numbers of descendant leaves of their children differ by at most 1, always achieves the minimum Colless index among all bifurcating trees with $n$ leaves. These maximally balanced trees were called ``the most balanced trees'' by \citet{Shao:90}, and they are also classified as ``most balanced'' by the Sackin index \citep{Fischer2018}, the total cophenetic index \citep{Mir2013}, or the rooted quartets index \citep{CMR}, among other indices. But it turns out that, for every $n$ except those of the form $2^m$ or $2^m\pm 1$, there also exist other bifurcating trees with $n$ leaves that achieve the minimum Colless index without being maximally balanced.  In other words, the least global amount of imbalance is almost always achieved also at trees that do not minimize the  local imbalance at each internal node.
This raises the questions of characterizing the family of all ``most balanced trees'' according to the Colless index and counting them.

In this manuscript, we fill these gaps in the literature. To be precise, we first prove a recursive formula and two closed expressions for the minimum Colless index for a given number $n$ of leaves. One of the closed expressions is related to a fractal curve, namely the so-called Blancmange, or Takagi, curve, thus showing the fractal structure and symmetry of the minimum Colless index.
Next, we fully characterize all rooted bifurcating trees with $n$ leaves that have minimum Colless index, we prove that they include the maximally balanced trees, and we provide an efficient algorithm to generate them and a recursive formula to count them. We also focus on a particular class of trees with minimum Colless index,
which we call \emph{greedy from the bottom} (\emph{GFB}) trees. It turns out that there exists a GFB tree for every number $n$ of leaves and they are  almost never maximally balanced (in fact, they are only maximally balanced when $n$ has the form $2^m$ or $2^m\pm 1$, in which case there is only one tree that attains the minimum Colles index). Moreover, the GFB trees and the maximally balanced trees are extremal among those trees with minimum Colless index  in the following sense: for every $m$,
the difference  (in absolute value) between the numbers of descendant leaves of the pair of children of an internal node with  $m$ descendant leaves in a tree $T$ with minimum Colless index achieves its minimum value
when $T$ is maximally balanced and its maximum value when $T$ is greedy from the bottom. 
We conclude by showing that all trees with minimum Colless index also have minimum Sackin index  \citep{Sackin1972,Shao:90} and that the converse implication is false.

Before leaving this Introduction, we want to point out that, although the main motivation to study the Colless index is its application to the description and analysis of phylogenetic trees, it is actually a shape index, that is, its value does not depend on the specific labels at the leaves of the tree, only on the unlabeled tree underlying the phylogenetic tree. For this reason, in most of the rest of this manuscript we shall restrict ourselves to unlabeled trees, and we shall only deal with phylogenetic trees in some remarks.

}

\section{Basic definitions and preliminary results} \label{Sec_Preliminaries}
Before we can present our results, we need to introduce some definitions and notations. Throughout this manuscript, by {a \emph{tree} we mean a  non-empty \emph{rooted tree}: that is, a directed graph $T=(V(T),E(T))$, with node set $V(T)$ and edge set $E(T)$, containing exactly one node of indegree 0, which is called its \emph{root} (denoted henceforth by $\rho$) and such that for every $v\in V(T)$ there exists a unique path from $\rho$ to $v$.}
We use $V_L(T) \subseteq V(T)$ to denote the leaf set of $T$ (i.e. $V_L(T) = \{v \in V\mid \deg_{\mathit{out}}(v) =0\}$) and by $\mathring{V}(T)$ we denote the set of internal nodes, i.e. $\mathring{V}(T) = V(T) \setminus V_L(T)$. {Note in particular that if $|V(T)|\geq 2$, $\rho \in \mathring{V}(T)$.} If $|V(T)|=1$, $T$ consists of only one node, which is at the same time the root and the only leaf of the tree, and no edge.  Whenever there is no ambiguity we simply denote $E(T)$, $V(T)$, $\mathring{V}(T)$, and $V_L(T)$ by $E$, $V$, $\mathring{V}$, and $V_L$, respectively.  To simplify the language, we shall often say that two trees are \emph{equal} when they are actually only isomorphic as rooted trees; we shall also use the expression \emph{to have the same shape} as a synonym of being isomorphic.

Now, a \emph{bifurcating tree} is a rooted tree where all internal nodes have out-degree 2.   We denote by $\mathcal{T}_n$, for every $n\in \NN_{\geq 1}$,  the set of (isomorphism classes of)  bifurcating trees with $n$ leaves.{\footnote{{We always understand that 0 belongs to the set $\NN$ of natural numbers, and, for any given $m\in\NN\setminus\{0\}$, we use the notation $\NN_{\geq m}\coloneqq\{n\in \NN\mid n\geq m\}$.}}} Note that, for $n = 1$, $\TT_1$ consists only of the tree with one node and no edge.

Whenever there exists a path from $u$ to $v$ in a  tree $T$, we say that $u$ is an \emph{ancestor} of $v$ and that $v$ is a \emph{descendant} of $u$. In addition, whenever there exists an edge from $u$ to $v$, we say that $v$ is a  \emph{child} of $u$ and that $u$ is the \emph{parent} of $v$. Note that in a bifurcating tree with $n \geq 2$ leaves, each internal node has exactly two children. Two leaves $x$ and $y$ are said to form a \emph{cherry} when they have the same parent. Given a node $v$ of $T$, we denote by $T_v$ the subtree of $T$ rooted at $v$.
%and by $\kappa_T(v)$ the number of leaves of $T_v$, i.e.~the number of descendant leaves of $v$. 

The \emph{depth} $\delta_T(v)$ of a node $v$ is the number of edges on the  path from $\rho$ to $v$ and the \emph{height} $h(T)$ of a tree $T$ is  the maximum depth of any leaf in it.

A  bifurcating tree $T$ {with $n\geq 2$ leaves} can be decomposed into its two \emph{maximal pending subtrees} $T_a$ and $T_b$ rooted at the children $a$ and $b$ of $\rho$, and we shall denote this decomposition by $T=(T_a,T_b)$; cf.\ Figure \ref{fig:star}. We shall usually denote by $n_a$ and $n_b$ the numbers of leaves of $T_a$ and $T_b$, respectively, and without any loss of generality we shall always assume, usually without any further notice, that $n_a \geq n_b \geq 1$.

\begin{figure}[htb]
\begin{center}
\begin{tikzpicture}[thick,>=stealth,yscale=0.25,xscale=0.4]
\draw(0,0) node[tre] (z1) {\scriptsize $a$}; 
\draw (z1)--(-2,-3)--(2,-3)--(z1);
\draw(0,-2) node  {\footnotesize $T_a$};
\draw(5,0) node[tre] (z2) {\scriptsize $a$}; 
\draw (z2)--(3,-3)--(7,-3)--(z2);
\draw(5,-2) node  {\footnotesize $T_b$};
\draw(2.5,2) node[tre] (z) {\scriptsize $\rho$}; 
\draw (z)--(z1);
\draw (z)--(z2);
\end{tikzpicture}
\end{center}
\caption{\label{fig:star} The decomposition $T=(T_a,T_b)$ of a  bifurcating tree into its two maximal pending subtrees.}
\end{figure}
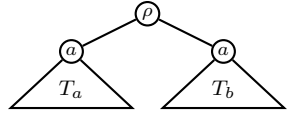

{An internal node $v$ of a bifurcating tree $T$ is a \emph{symmetry vertex}  when the subtrees rooted at its two children have the same shape ---hence, in particular, the same number of leaves. We shall denote by $s(T)$ the number of symmetry vertices in $T$.

Next two definitions introduce two concepts that play a key role in this paper.

\begin{definition}
Let $T$ be a bifurcating tree and let $v \in \mathring{V}$ with children $v_1$ and $v_2$. Then, the \emph{balance value} of $v$ is defined as $bal_T(v) = |\kappa_T(v_1)-\kappa_T(v_2)|$, where $\kappa_T(w)$ denotes the number of leaves of $T_w$, i.e.~the number of descendant leaves of $w\in V$.
We call an internal node $v$ \emph{balanced} if $bal_T(v) \leq 1$, i.e. when its two children have $\lceil \kappa_T(v)/2 \rceil$ and $\lfloor \kappa_T(v)/2 \rfloor$ descendant leaves, respectively.
\end{definition}

\begin{definition}
A bifurcating tree $T$ is called \emph{maximally balanced}  if all its internal nodes are balanced (cf. Figure \ref{treetop}.(b)). Recursively, a  bifurcating tree {with $n\geq 2$ leaves} is maximally balanced if its root is balanced and its two maximal pending subtrees are maximally balanced.
\end{definition}

Note that this last definition easily implies that any rooted subtree of a maximally balanced tree is again maximally balanced, by induction on the depth of the root of the subtree. It also implies that, for every $n \in \mathbb{N}_{\geq 1}$, there exists a unique maximally balanced tree with $n$ leaves, which we shall denote by $T_n^\mathit{mb}$, and that when $n\geq 2$, as we have just mentioned, $T^{\mathit{mb}}_n = (T^{\mathit{mb}}_{\lceil n/2 \rceil}, T^{\mathit{mb}}_{\lfloor n/2 \rfloor} )$.

Our maximally balanced trees were called by  \citet{Shao:90} the  ``most balanced'' bifurcating trees, and they are natural candidates to have the minimum Colless index for every number of leaves. As we shall see, this is indeed the case (see Theorem \ref{thm:minC}), but it will also turn out that for almost all numbers $n$ of leaves there are also other trees with $n$ leaves and minimum Colless index (cf. Proposition \ref{GFB_is_minimal} and Corollary \ref{cor:morethan1}).}

Two other particular families of trees appearing in this manuscript are the caterpillar trees  and the fully symmetric  trees (cf. Figures \ref{treetop}.(a) and (c)). The \emph{caterpillar tree} with $n$ leaves, $T^{\mathit{cat}}_n$, is the unique  bifurcating tree with $n$ leaves all  of whose internal nodes have different numbers of descendant leaves. 
As to the \emph{fully symmetric  tree of height $k$}, $T^{\mathit{fs}}_{k}$, it is the unique tree with $n=2^k$ leaves in which all leaves have depth  $k$. Note that  if $k\geq 1$, $T^{\mathit{fs}}_{k}=(T^{\mathit{fs}}_{k-1}, T^{\mathit{fs}}_{k-1})$, i.e.
 the maximal pending subtrees of a fully symmetric  tree of height $k$ are fully symmetric  trees of height $k-1$. Note also that $T^{\mathit{fs}}_{k} = T^{\mathit{mb}}_{2^k}$, because in the special case when $n=2^k$, $T_k^{\mathit{fs}}$ is the unique tree all of whose internal nodes have balance value 0.

\begin{figure}[htb]
\begin{center}
\begin{tikzpicture}[thick,>=stealth,scale=0.25]
\draw(0,0) node [trepp] (1) {};%  \etq 1
\draw(2,0) node [trepp] (2) {};%  \etq 2
\draw(4,0) node [trepp] (3) {};%  \etq 3
\draw(6,0) node [trepp] (4) {};%  \etq 4
\draw(8,0) node [trepp] (5) {}; % \etq 5
\draw(10,0) node [trepp] (6) {};%  \etq 6
\draw(12,0) node [trepp] (7) {};%  \etq 7
\draw(11,1) node[trepp] (a) {};
\draw(10,2) node[trepp] (b) {};
\draw(9,3) node[trepp] (c) {};
\draw(8,4) node[trepp] (d) {};
\draw(7,5) node[trepp] (e) {};
\draw(6,6) node[trepp] (r) {};
\draw  (a)--(6);
\draw  (a)--(7);
\draw  (b)--(a);
\draw  (b)--(5);
\draw  (c)--(b);
\draw  (c)--(4);
\draw  (d)--(3);
\draw  (d)--(c);
\draw  (e)--(d);
\draw  (e)--(2);
\draw  (r)--(e);
\draw  (r)--(1);
\draw(6,-2) node {\footnotesize (a) $T^{\mathit{cat}}_7$};
\end{tikzpicture}
\quad
\begin{tikzpicture}[thick,>=stealth,scale=0.25]
\draw(0,0) node [trepp] (1) {};%  \etq 1
\draw(2,0) node [trepp] (2) {};%  \etq 2
\draw(4,0) node [trepp] (3) {};%  \etq 3
\draw(6,0) node [trepp] (4) {};%  \etq 4
\draw(8,0) node [trepp] (5) {}; % \etq 5
\draw(10,0) node [trepp] (6) {};%  \etq 6
\draw(12,0) node [trepp] (7) {};%  \etq 7
\draw(1,1.5) node[trepp] (a) {};
\draw(5,1.5) node[trepp] (b) {};
\draw(3,3) node[trepp] (c) {};
\draw(11,1.5) node[trepp] (d) {};
\draw(10,3) node[trepp] (e) {};
\draw(6.5,4.5) node[trepp] (r) {};
\draw  (r)--(c);
\draw  (r)--(e);
\draw  (c)--(a);
\draw  (c)--(b);
\draw  (a)--(1);
\draw  (a)--(2);
\draw  (b)--(3);
\draw  (b)--(4);
\draw  (e)--(d);
\draw  (e)--(5);
\draw  (d)--(6);
\draw  (d)--(7);
\draw(6,-2) node {\footnotesize (b) $T^{\mathit{mb}}_7$};
\end{tikzpicture}
\quad
\begin{tikzpicture}[thick,>=stealth,scale=0.25]
\draw(0,0) node [trepp] (1) {};%  \etq 1
\draw(2,0) node [trepp] (2) {};%  \etq 2
\draw(4,0) node [trepp] (3) {};%  \etq 3
\draw(6,0) node [trepp] (4) {};%  \etq 4
\draw(8,0) node [trepp] (5) {}; % \etq 5
\draw(10,0) node [trepp] (6) {};%  \etq 6
\draw(12,0) node [trepp] (7) {};%  \etq 7
\draw(14,0) node [trepp] (8) {};%  \etq 8
\draw(1,1.5) node[trepp] (a) {};
\draw(5,1.5) node[trepp] (b) {};
\draw(3,3) node[trepp] (c) {};
\draw(9,1.5) node[trepp] (d) {};
\draw(13,1.5) node[trepp] (e) {};
\draw(11,3) node[trepp] (f) {};
\draw(7,4.5) node[trepp] (r) {};
\draw  (a)--(1);
\draw  (a)--(2);
\draw  (b)--(3);
\draw  (b)--(4);
\draw  (c)--(a);
\draw  (c)--(b);
\draw  (d)--(5);
\draw  (d)--(6);
\draw  (e)--(7);
\draw  (e)--(8);
\draw  (f)--(d);
\draw  (f)--(e);
\draw  (r)--(c);
\draw  (r)--(f);
\draw(7,-2) node {\footnotesize (c) $T^{\mathit{fs}}_3 = T^{\mathit{mb}}_8$};
\end{tikzpicture}
\end{center}
\caption{\label{treetop} From left to right, the caterpillar tree $T^{\mathit{cat}}_7$ with 7 leaves, the maximally balanced tree $T^{\mathit{mb}}_7$ with 7 leaves, and the fully symmetric  tree $T^{\mathit{fs}}_3 = T^{\mathit{mb}}_8$ of depth 3, with $2^3=8$ leaves.}
\end{figure}
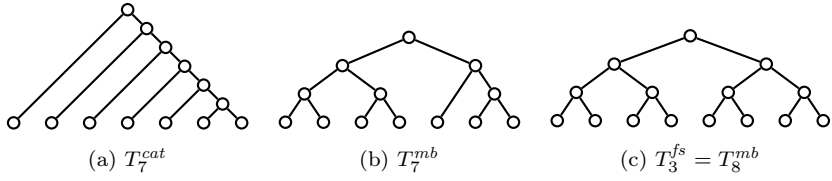

We are now in a position to define the focus of this manuscript: 

\begin{definition}[\citet{Colless1982}] \label{Def_Colless}
The \emph{Colless index} of a  bifurcating tree $T$ is the sum of the balance values of its internal nodes:
$$
\mathcal{C}(T) = \sum_{v \in \mathring{V}(T)}  bal_T(v) = \sum_{v \in \mathring{V}(T)} \vert  \kappa_T(v_1) - \kappa_T(v_2) \vert,
$$
where $v_1$ and $v_2$ denote the children of each $v \in \mathring{V}(T)$.
\end{definition}

Note that $\mathcal{C}(T) \geq 0$, because it is defined as a sum of absolute values.
For instance, consider the three trees depicted in Figure \ref{treetop}. Here, we have: $\mathcal{C}(T^{\mathit{cat}}_7)=15$, $\mathcal{C}(T^{\mathit{mb}}_7)=2$, and $\mathcal{C}(T^{\mathit{fs}}_3)=0$.

Since the Colless index of a tree measures its global imbalance,  the smaller the Colless index of a tree is, the \emph{more balanced} we consider it  to be. In other words, for every pair of trees $T_1,T_2\in \TT_n$, if $\mathcal{C}(T_1) < \mathcal{C}(T_2)$, then $T_1$ is \emph{more balanced} than $T_2$. 
For example, in Figure \ref{treetop}, $T_7^{\mathit{mb}}$ is more balanced than $T_7^{\mathit{cat}}$. {Notice that this comparison is meaningful only  if both trees have the same number of leaves.}

It is easy to see that the Colless index satisfies the following recurrence   \citep{Rogers1993}.

\begin{lemma} \label{colless_sum}
If $T=(T_a,T_b)$ is a  bifurcating tree with $T_a\in \TT_{n_a}$ and $T_b\in \TT_{n_b}$, where $n_a \geq n_b$, then 
$$
\mathcal{C}(T) = \mathcal{C}(T_a) + \mathcal{C}(T_b) +  n_a - n_b .
$$
\end{lemma}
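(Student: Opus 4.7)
The plan is to unfold the definition of the Colless index as a sum over internal nodes and split that sum according to where each internal node of $T$ comes from.

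First I would observe that, because $T = (T_a, T_b)$, the set of internal nodes decomposes disjointly as $\mathring{V}(T) = \{\rho\} \sqcup \mathring{V}(T_a) \sqcup \mathring{V}(T_b)$, where $\rho$ is the root of $T$. Next I would note the key invariance: for any node $w$ lying in $T_a$ (respectively $T_b$), the subtree of $T$ rooted at $w$ coincides with the subtree of $T_a$ (resp. $T_b$) rooted at $w$, so $\kappa_T(w) = \kappa_{T_a}(w)$ (resp.\ $= \kappa_{T_b}(w)$). Consequently $bal_T(v) = bal_{T_a}(v)$ for every $v \in \mathring{V}(T_a)$ and similarly $bal_T(v) = bal_{T_b}(v)$ for every $v \in \mathring{V}(T_b)$.

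Then I would compute the balance value at the root directly: the children of $\rho$ are the roots of $T_a$ and $T_b$, with $n_a$ and $n_b$ descendant leaves respectively, so $bal_T(\rho) = |n_a - n_b| = n_a - n_b$ by the hypothesis $n_a \geq n_b$. Splitting the defining sum according to the decomposition above then yields
$$
\mathcal{C}(T) = bal_T(\rho) + \sum_{v \in \mathring{V}(T_a)} bal_{T_a}(v) + \sum_{v \in \mathring{V}(T_b)} bal_{T_b}(v) = (n_a - n_b) + \mathcal{C}(T_a) + \mathcal{C}(T_b),
$$
which is the claimed recurrence.

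There is essentially no obstacle here; the only subtle point worth stating explicitly is the invariance $\kappa_T(w) = \kappa_{T_a}(w)$ for $w$ internal to a maximal pending subtree, which ensures that the local balance values are not affected by viewing the subtree as part of $T$ rather than on its own.
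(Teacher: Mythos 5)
Your proof is correct and is exactly the standard argument the paper has in mind: the paper states this lemma without proof, remarking only that it is ``easy to see'' and citing Rogers (1993), and your decomposition $\mathring{V}(T)=\{\rho\}\sqcup\mathring{V}(T_a)\sqcup\mathring{V}(T_b)$ together with the invariance of $\kappa$ and hence of the balance values on the maximal pending subtrees is precisely the intended reasoning.
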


{\begin{corollary}\label{min_colless}
For every $n \geq 1$ and for every $T \in \mathcal{T}_n$, $\mathcal{C}(T) = 0 $ if, and only if, $n$ is a power of 2 and $T$ is fully symmetric.
\end{corollary}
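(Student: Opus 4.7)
The plan is to prove both directions by a straightforward induction on $n$, using Lemma~\ref{colless_sum} as the main tool. The key observation is that $\mathcal{C}(T)$ is a sum of non-negative terms, so $\mathcal{C}(T) = 0$ forces every summand to vanish.

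For the \emph{if} direction, I would proceed by induction on $k$ where $n = 2^k$. The base case $k = 0$ gives the trivial one-node tree, whose Colless index is an empty sum and hence $0$. For the inductive step, I use the decomposition $T^{\mathit{fs}}_{k} = (T^{\mathit{fs}}_{k-1}, T^{\mathit{fs}}_{k-1})$ noted in the text: both maximal pending subtrees have $2^{k-1}$ leaves, so by Lemma~\ref{colless_sum} one gets $\mathcal{C}(T^{\mathit{fs}}_{k}) = \mathcal{C}(T^{\mathit{fs}}_{k-1}) + \mathcal{C}(T^{\mathit{fs}}_{k-1}) + (2^{k-1} - 2^{k-1}) = 0$ by the induction hypothesis.

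For the \emph{only if} direction, I induct on $n$. The base case $n = 1$ is immediate since $1 = 2^0$ and the unique tree in $\mathcal{T}_1$ is (trivially) fully symmetric of height $0$. For the inductive step, suppose $n \geq 2$ and $T = (T_a, T_b) \in \mathcal{T}_n$ with $\mathcal{C}(T) = 0$ and $n_a \geq n_b$. By Lemma~\ref{colless_sum},
\[
0 \;=\; \mathcal{C}(T) \;=\; \mathcal{C}(T_a) + \mathcal{C}(T_b) + (n_a - n_b),
\]
and since all three summands are non-negative, each must equal $0$. Hence $n_a = n_b = n/2$ and $\mathcal{C}(T_a) = \mathcal{C}(T_b) = 0$, so in particular $n$ is even. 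By the induction hypothesis applied to $T_a$ and $T_b$, there exist $k_a, k_b$ with $n_a = 2^{k_a}$, $n_b = 2^{k_b}$, $T_a = T^{\mathit{fs}}_{k_a}$, and $T_b = T^{\mathit{fs}}_{k_b}$. From $n_a = n_b$ we get $k_a = k_b =: k-1$, whence $n = 2^k$ and $T = (T^{\mathit{fs}}_{k-1}, T^{\mathit{fs}}_{k-1}) = T^{\mathit{fs}}_{k}$, as required.

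No genuine obstacle is expected here: the whole argument is essentially bookkeeping on top of the recurrence of Lemma~\ref{colless_sum}. The only subtle point is making sure to use that the three quantities $\mathcal{C}(T_a)$, $\mathcal{C}(T_b)$, and $n_a - n_b$ are each non-negative, so that their sum can vanish only if each vanishes separately; this is where both the "$n$ is a power of $2$" and the "$T$ is fully symmetric" conclusions are simultaneously extracted.
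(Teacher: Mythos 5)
Your proof is correct and follows essentially the same route as the paper: the ``only if'' direction is the same induction on $n$ via Lemma~\ref{colless_sum}, forcing $n_a=n_b$ and $\mathcal{C}(T_a)=\mathcal{C}(T_b)=0$ from non-negativity of the summands, and the ``if'' direction is the same observation about symmetric splits (which you merely spell out as an explicit induction on the height $k$). No gaps.
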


\begin{proof}
The ``if'' implication is a direct consequence of the fact that, in a fully symmetric tree, both children of each internal node have the same number of descendant leaves. We prove now the ``only if'' implication by induction on $n$. The base case $n=1$ being obvious, let $n\geq 2$ and let us assume that the assertion is true for every $1\leq n'<n$. Let $T\in \mathcal{T}_n$ be such that $\mathcal{C}(T)=0$, and let $T=(T_a,T_b)$, with $T_a\in \mathcal{T}_{n_a}$ and $T_b\in \mathcal{T}_{n_b}$,  be its decomposition into its maximal pending subtrees. Then, by Lemma \ref{colless_sum}, $\mathcal{C}(T)=0$ is equivalent to $n_a=n_b$ and $\mathcal{C}(T_a)=\mathcal{C}(T_b)=0$. By the induction hypothesis, this implies that $n_a=n_b$ is a power of 2, and hence that $n=n_a+n_b$ is also a power of 2, and that both $T_a$ and $T_b$ are fully symmetric, and hence that $T=(T_a,T_b)$ is fully symmetric, too. \qed
\end{proof}}

\section{The minimum Colless index}

We shall denote throughout this manuscript by  $c_n$  the \emph{minimum Colless index of a  bifurcating tree with $n$ leaves}:
$$
c_n= \min\big\{\mathcal{C}(T)\mid T \in \mathcal{T}_n\big\}.
$$
{Notice that, by Corollary \ref{min_colless}, $c_n=0$ if, and only if, $n$ is a power of 2.} The main aim of this section is to study the sequence $c_n$. We derive both a recurrence and two closed formulas for this sequence and we point out  both its fractal structure and its symmetry. 
We start by showing that if a  bifurcating tree $T=(T_a, T_b)$ has minimum Colless index, its two maximal pending subtrees also have minimum Colless index.

\begin{lemma} \label{max_subtrees}
Let $T=(T_a,T_b)$ be a  bifurcating tree with $n$ leaves. If $T$ has minimum Colless index on $\TT_n$,  then  $T_a$ and $T_b$ have minimum Colless indices on $\TT_{n_a}$ and $\TT_{n_b}$, respectively. 
\end{lemma}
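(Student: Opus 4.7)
The plan is a standard optimal-substructure/exchange argument relying on the additive recurrence given in Lemma \ref{colless_sum}. Suppose, towards a contradiction, that the conclusion fails; without loss of generality, assume that $T_a$ does not have minimum Colless index on $\TT_{n_a}$, so there exists $T'_a\in\TT_{n_a}$ with $\mathcal{C}(T'_a)<\mathcal{C}(T_a)$.

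I would then build a new tree by grafting $T'_a$ in place of $T_a$: let $T'=(T'_a,T_b)$. Since the number of leaves in the replacement subtree is still $n_a$, the assumption $n_a\geq n_b$ continues to hold, so Lemma \ref{colless_sum} applies to both $T$ and $T'$ with the same difference $n_a-n_b$. Subtracting the two expressions gives
$$
\mathcal{C}(T')-\mathcal{C}(T)=\bigl(\mathcal{C}(T'_a)+\mathcal{C}(T_b)+n_a-n_b\bigr)-\bigl(\mathcal{C}(T_a)+\mathcal{C}(T_b)+n_a-n_b\bigr)=\mathcal{C}(T'_a)-\mathcal{C}(T_a)<0,
$$
so $\mathcal{C}(T')<\mathcal{C}(T)$. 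But $T'\in\TT_n$, contradicting the minimality of $\mathcal{C}(T)$ on $\TT_n$. The symmetric argument with $T_b$ replaced by some $T'_b\in\TT_{n_b}$ of smaller Colless index produces the same contradiction, and the case $n=1$ is trivial since then $T$, $T_a$, $T_b$ are not all defined.

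I do not anticipate any real obstacle: the lemma is essentially a one-line corollary of Lemma \ref{colless_sum}, with the only mild subtlety being the bookkeeping that the decomposition convention $n_a\geq n_b$ is preserved under the substitution (which it is, because we swap subtrees of equal leaf count). The argument is clean and needs no induction.
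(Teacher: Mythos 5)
Your proof is correct and is essentially identical to the paper's own argument: both assume by contradiction that $T_a$ is not minimal, graft a strictly better tree $\widehat{T}\in\TT_{n_a}$ in its place, and apply Lemma \ref{colless_sum} to conclude $\mathcal{C}((\widehat{T},T_b))<\mathcal{C}(T)$, contradicting minimality. Your added remarks about the convention $n_a\geq n_b$ being preserved and the triviality of $n=1$ are harmless bookkeeping that the paper leaves implicit.
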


\begin{proof}
Assume that $\mathcal{C}(T_a)$ is not minimal; the case when $\mathcal{C}(T_b)$ is not minimal is symmetrical.
Then, there exists $\widehat{T}\in \TT_{n_a}$  such that $\mathcal{C}(\widehat{T}) < \mathcal{C}(T_a)$. 
Consider the tree $\widetilde{T}= (\widehat{T},T_b)\in \TT_n$ obtained by replacing in $T$ the rooted subtree  $T_a$ by $\widehat{T}$. Then,  by Lemma \ref{colless_sum},
$$
\mathcal{C}(\widetilde{T}) = \mathcal{C}(\widehat{T}) + \mathcal{C}(T_b) +  n_a  - n_b  < \mathcal{C}(T_a) + \mathcal{C}(T_b) +  n_a - n_b  = \mathcal{C}(T),
$$
which implies that $\mathcal{C}(T)$ is not minimal. Thus, if $\mathcal{C}(T)$ is  minimal, $\mathcal{C}(T_a)$ must be minimal, too. \qed
\end{proof}

\begin{remark}\label{rem:inheritance}
Lemma \ref{max_subtrees} easily implies that every rooted subtree of a tree with minimum Colless index has also minimum Colless index, by induction on the depth of the root of the subtree.
\end{remark}

Lemmas \ref{colless_sum} and \ref{max_subtrees} directly imply that
\begin{align}
 c_n &= \min\{c_{n_a} + c_{n_b} + n_a - n_b \mid n_a \geq n_b \geq 1,\ n_a + n_b=n\}.  \label{cn_min}
\end{align}
In particular, 
\begin{align}
c_n &\leq c_{n_a} + c_{n_b} + n_a - n_b \text{ for every }   n_a \geq n_b \geq 1 \text{ with } n_a + n_b=n,
\label{cn_leq}
\end{align}
a fact that will be useful in subsequent proofs. 

\subsection{The maximally balanced trees have minimum Colless index}

In this subsection we prove that the Colless index of a maximally balanced tree $T^{\mathit{mb}}_n$ is $c_n$. The proof relies on the following lemma, which shows that the sequence $\mathcal{C}(T^{\mathit{mb}}_{n})$ also satisfies the Inequalities (\ref{cn_leq}).

\begin{lemma}\label{lem:key}
For every $n \in \NN_{\geq 2}$ and for every $n_a \geq n_b \geq 1$ such that $n_a + n_b=n$, 
$$
 \mathcal{C}(T^{\mathit{mb}}_{n})\leq \mathcal{C}(T^{\mathit{mb}}_{n_a}) + \mathcal{C}(T^{\mathit{mb}}_{n_b}) + n_a-n_b.
$$
\end{lemma}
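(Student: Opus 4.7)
The plan is to prove the lemma by strong induction on $n$, writing $m_n := \mathcal{C}(T^{\mathit{mb}}_n)$ for brevity. The base case $n = 2$ is immediate, since then $n_a = n_b = 1$ and both sides vanish. For the inductive step, fix a split $n_a + n_b = n$ with $n_a \geq n_b \geq 1$. If $n_a - n_b \leq 1$, then $(n_a, n_b) = (\lceil n/2 \rceil, \lfloor n/2 \rfloor)$, and the defining recursion $T^{\mathit{mb}}_n = (T^{\mathit{mb}}_{\lceil n/2 \rceil}, T^{\mathit{mb}}_{\lfloor n/2 \rfloor})$ combined with Lemma \ref{colless_sum} makes the claimed inequality an equality, so I would restrict the remaining analysis to $n_a - n_b \geq 2$.

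The main subcase $n_b \geq 2$ is handled by a cross-pairing argument. Decomposing both $T^{\mathit{mb}}_{n_a}$ and $T^{\mathit{mb}}_{n_b}$ at their roots yields four maximal pending subtrees, of sizes $\lceil n_a/2 \rceil$, $\lfloor n_a/2 \rfloor$, $\lceil n_b/2 \rceil$ and $\lfloor n_b/2 \rfloor$. A short parity check shows that the \emph{cross-pairs} $(\lceil n_a/2 \rceil, \lfloor n_b/2 \rfloor)$ and $(\lfloor n_a/2 \rfloor, \lceil n_b/2 \rceil)$ are valid ordered splits (largest part first, both parts $\geq 1$) of $\lceil n/2 \rceil$ and $\lfloor n/2 \rfloor$ respectively; the ordering of the second cross-pair, $\lfloor n_a/2 \rfloor \geq \lceil n_b/2 \rceil$, is precisely where the standing assumption $n_a - n_b \geq 2$ is used. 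I would then invoke the induction hypothesis on each cross-pair to bound $m_{\lceil n/2 \rceil}$ and $m_{\lfloor n/2 \rfloor}$ from above, sum the two resulting inequalities, and combine the outcome with Lemma \ref{colless_sum} applied to each of $m_n$, $m_{n_a}$ and $m_{n_b}$; tracking parities then yields the desired inequality (in fact with strict slack of $2$ when both $n_a$ and $n_b$ are odd).

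The main obstacle I foresee is the edge case $n_b = 1$, where $\lfloor n_b/2 \rfloor = 0$ and the cross-pairing degenerates. There the claim becomes $m_n \leq m_{n-1} + n - 2$; applying Lemma \ref{colless_sum} to decompose both $m_n$ and $m_{n-1}$, and splitting on the parity of $n$, reduces this to an upper bound on the consecutive difference $m_{\lceil n/2 \rceil} - m_{\lceil n/2 \rceil - 1}$. Such a bound is in turn supplied by the induction hypothesis applied at $\lceil n/2 \rceil$ with the trivial split $(\lceil n/2 \rceil - 1, 1)$, giving $m_{\lceil n/2 \rceil} \leq m_{\lceil n/2 \rceil - 1} + \lceil n/2 \rceil - 2$, far more than what is needed. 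The conceptual heart of the proof is thus the cross-pairing idea together with a double application of Lemma \ref{colless_sum}; the remaining effort is routine parity bookkeeping.
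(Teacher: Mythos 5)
Your proposal is correct. Every step checks out: the cross-pairs $(\lceil n_a/2\rceil,\lfloor n_b/2\rfloor)$ and $(\lfloor n_a/2\rfloor,\lceil n_b/2\rceil)$ do sum to $\lceil n/2\rceil$ and $\lfloor n/2\rfloor$ in some order in every parity case (note that when $n_a$ is even and $n_b$ is odd the roles of ceiling and floor are swapped relative to your ``respectively'', which is harmless), the ordering of the second pair indeed needs $n_a-n_b\geq 2$ exactly when both parts are odd, summing the two induction-hypothesis inequalities contributes exactly $n_a-n_b$, and the parity bookkeeping $[n\text{ odd}]=[n_a\text{ odd}]\oplus[n_b\text{ odd}]$ delivers the claim (with slack $2$ when both are odd, matching Remark \ref{rem:postkey}). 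The edge case $n_b=1$ reduces, as you say, to $m_{\lceil n/2\rceil}-m_{\lceil n/2\rceil-1}\leq 2\lceil n/2\rceil-1$ or $\leq 2\lceil n/2\rceil -2$ depending on parity, which the induction hypothesis with the split $(\lceil n/2\rceil-1,1)$ comfortably supplies. Compared with the paper: the core manipulation is the same --- the paper's four parity cases in the inductive step perform exactly your crosswise regrouping of the four halves before invoking its induction hypothesis --- but the induction is organized differently. The paper fixes $m=n_b$ and inducts on it, proving $C(m+s)+C(m)+s\geq C(2m+s)$ for all $s$ at once, which forces a separate inner induction on $s$ (with its own two parity cases) just to establish the base case $m=1$. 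Your single strong induction on the total $n$ lets you invoke the already-proved statement at the smaller totals $\lceil n/2\rceil$ and $\lfloor n/2\rfloor$ for \emph{arbitrary} admissible splits, so the troublesome $n_b=1$ case collapses to a one-line consecutive-difference bound instead of a full sub-induction. The price is the explicit verification that the cross-pairs are admissible splits, but on balance your organization is the leaner of the two.
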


\begin{proof}
To simplify the notations, throughout this proof we shall denote $\mathcal{C}(T^{\mathit{mb}}_{n})$ by $C(n)$.
By Lemma \ref{colless_sum} and the equality $T^{\mathit{mb}}_n = (T^{\mathit{mb}}_{\lceil n/2 \rceil}, T^{\mathit{mb}}_{\lfloor n/2 \rfloor} )$, we have that, for every $n\geq 2$,
$$
C(n)=C(\lceil n/2 \rceil)+C(\lfloor n/2 \rfloor)+\lceil n/2 \rceil-\lfloor n/2 \rfloor,
$$
or, equivalently, for every $n\geq 1$,
\begin{equation}
C(2n)=2C(n)\quad\mbox{and}\quad C(2n+1)=C(n+1)+C(n)+1.
\label{eqn:recuforC}
\end{equation}
We shall use this recurrence to prove by induction on $m$ that, for every $m\geq 1$,  the inequality
\begin{equation}
C(m + s) + C(m) + s \geq C(2m + s)
\label{eq:goal}
\end{equation}
holds for every $s\in \NN$. Taking $n_a=m+s$ and $n_b=m$, this clearly entails the statement.

{Since $C(1)=0$, the base case $m=1$ says that, for every $s\geq 0$,
\begin{equation}
C(1+s) + s \geq C(2+s).
\label{eq:base1}
\end{equation}
We prove it by induction on $s$.
The cases  $s=0$ and $s=1$ are obviously true, because  $C(1) + 0 = 0 = C(2)$ and $C(2) + 1 = 1 = C(3)$. Let us now consider the case $s\geq 2$ and let us assume that,  for every $s'< s$,
\begin{equation}
C(1+s') +  s'  \geq C(2+s').
\label{eqn:IHs=1}
\end{equation}
To prove the induction step, we distinguish two cases.
\begin{itemize}
\item If $s$ is even, say $s=2s'$ with $s'\geq 1$, then, 
\begin{align*}
& C(1+s) +s = C(2s'+1)+2s'\\
&\quad =C(s'+1)+C(s')+ 1 + 2s'&\quad \mbox{(by (\ref{eqn:recuforC}))}&\\
&\quad =  C(s'+1) + C(1+(s'-1)) + (s'-1) + s' + 2 \\
&\quad\geq C(s'+1) + C(2+(s'-1))+s'+2 &\quad\mbox{(by (\ref{eqn:IHs=1}))}&\\
&\quad=2C(s'+1)+s'+2=C(2s'+2)+s'+2 &\quad \mbox{(again by (\ref{eqn:recuforC}))}&\\
&\quad >C(2s'+2)= C(2+s).
\end{align*}

\item If $s$ is odd, say $s=2s'+1$ with $s'\geq 1$, 
\begin{align*}
& C(1+s) +s =C(2s'+2)+2s'+1 \\
&\quad = C(s'+1)+C(s'+1)+2s'+1  &\qquad \mbox{(by (\ref{eqn:recuforC}))}&\\
&\quad\geq C(s'+1) + C(s'+2)+s'+1 &\qquad\mbox{(by (\ref{eqn:IHs=1}))}&\\
&\quad=C(2s'+3)+s' &\qquad \mbox{(again by (\ref{eqn:recuforC}))}&\\
& \quad = C(2+s) +s'> C(2+s).
\end{align*}
\end{itemize}
This completes the proof of the base case $m=1$.

Let us consider now the case $m\geq 2$ and let us assume that, for every $1\leq m' < m$ and $s\geq 0$, 
\begin{equation}
C(m' + s) + C(m') + s \geq C(2m' + s).
\label{eqn:IHm}
\end{equation} 
To prove that (\ref{eq:goal}) is true for every $s\in\NN$ we distinguish four cases:

\begin{itemize}
\item  $m$ and $s$  even: say, $m=2m'$ and $s=2s'$. Then, 
\begin{align*}
 & C(m+s) +C(m)+ s=C(2m'+2s')+C(2m')+2s'\\
 &\quad =2C(m'+s')+2C(m')+2s'  &\qquad \mbox{(by (\ref{eqn:recuforC}))}&\\
 &\quad \geq 2C(2m'+s') &\qquad\mbox{(by (\ref{eqn:IHm}))}&\\
 &\quad =C(4m'+2s')=C(2m+s)
\end{align*}
where the second last equality is due again to (\ref{eqn:recuforC}).

\item $m$  even and $s$  odd: say, $m=2m'$ and $s=2s'+1$. Then
\begin{align*}
 & C(m+s) +C(m)+ s\\
 &\quad =C(2m'+2s'+1)+C(2m')+2s'+1\\
 &\quad =C(m'+s'+1)+C(m'+s')+1+2C(m')+2s'+1  &\  \mbox{(by (\ref{eqn:recuforC}))}&\\
  &\quad =\big(C(m'+s'+1)+C(m')+s'+1\big)\\
  &\qquad +\big(C(m'+s')+C(m')+s'\big)+1 \\
  &\quad\geq C(2m'+s'+1)+C(2m'+s')+1 &\ \mbox{(by \quad(\ref{eqn:IHm}))}&\\
 &\quad =C(4m'+2s'+1)=C(2m+s)
\end{align*}
where the second last equality is due again to (\ref{eqn:recuforC}).

\item $m$  odd and $s$ even: say, $m=2m'+1$ and $s=2s'$. 
If $s'=0$,  the desired Inequality (\ref{eq:goal}) amounts to $C(m)+C(m)\geq C(2m)$, which is true because it is actually an equality. So, assume that $s'\geq 1$. Then
\begin{align*}
 & C(m+s) +C(m)+ s\\
 &\quad=C(2m'+2s'+1)+C(2m'+1)+2s'\\
 &\quad =C(m'+s'+1)+C(m'+s')+1\\
 &\qquad+C(m'+1)+C(m')+1+2s'  & \mbox{(by (\ref{eqn:recuforC}))}&\\
  &\quad =\big(C(m'+s'+1)+C(m')+s'+1\big)\\
  &\qquad +\big(C((m'+1)+(s'-1))+C(m'+1)+s'-1\big)+2 \\
  &\quad\geq C(2m'+s'+1)+C(2(m'+1)+s'-1)+2 & 
  \mbox{(by (\ref{eqn:IHm}))}&\\
 &\quad =C(2m'+s'+1)+C(2m'+s'+1)+2\\
 &\quad=C(4m'+2+2s')+2& \mbox{(again by (\ref{eqn:recuforC}))}&\\
 &\quad =C(2m+s)+2>C(2m+s)
 \end{align*}

\item $m$ and $s$  odd: say, $m=2m'+1$ and $s=2s'+1$. Then
\begin{align*}
 & C(m+s) +C(m)+ s\\
 &\quad=C(2m'+2s'+2)+C(2m'+1)+2s'+1\\
 &\quad =2C(m'+s'+1)+C(m'+1)+C(m')+1+2s'+1 &\  \mbox{(by (\ref{eqn:recuforC}))}&\\
  &\quad =\big(C(m'+s'+1)+C(m'+1)+s'\big)\\
  &\qquad +\big(C(m'+s'+1)+C(m')+s'+1\big)+1\\
  &\quad\geq C(2m'+s'+2)+C(2m'+s'+1)+1 &\ \mbox{(by (\ref{eqn:IHm}))}&\\
 &\quad =C(4m'+2s'+3)=C(2m+s)
 \end{align*}
where the second last equality is due again to (\ref{eqn:recuforC}).
\end{itemize}
This completes the proof of the inductive step.} \qed
\end{proof}

{\begin{remark}\label{rem:postkey}
Notice that in the proof of the last lemma  we have established the following two facts, which will be used later:
\begin{enumerate}[(a)]
\item  The inequalities obtained in both cases of the induction step of the proof of (\ref{eq:base1}) are strict. This implies that
$$
C(1+s) + s > C(2+s) \mbox{\quad if, and only if, $s>1$}.
$$

\item The inequality obtained in the induction step of the proof of (\ref{eq:goal}) when  $m\geq 2$ is odd and $s$ is even and greater than 0 is strict. Combining this fact with (a) when $m=1$, we deduce that
$$
C(m+s) +C(m)+ s>C(2m+s)\mbox{\quad if $m$ is odd and $s\geq 2$ is even.}
$$
\end{enumerate}
\end{remark}}

We are now in a position to establish our first main result.

\begin{theorem}\label{thm:minC}
For every $n\geq 1$, $\mathcal{C}(T^{\mathit{mb}}_n)=c_n$.
\end{theorem}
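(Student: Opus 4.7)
The plan is to prove the theorem by strong induction on $n$, using Lemma \ref{lem:key} as the crucial ingredient together with the recurrence (\ref{cn_min}) for $c_n$.

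The base case $n=1$ is immediate: $T^{\mathit{mb}}_1$ is the single-node tree with no internal nodes, hence $\mathcal{C}(T^{\mathit{mb}}_1)=0=c_1$.

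For the inductive step, fix $n\geq 2$ and assume $\mathcal{C}(T^{\mathit{mb}}_k)=c_k$ for every $1\leq k<n$. Since $T^{\mathit{mb}}_n \in \mathcal{T}_n$, the inequality $c_n \leq \mathcal{C}(T^{\mathit{mb}}_n)$ is automatic from the definition of $c_n$. For the reverse inequality, let $(n_a^*, n_b^*)$ be a pair with $n_a^* \geq n_b^* \geq 1$ and $n_a^* + n_b^* = n$ that realizes the minimum in (\ref{cn_min}), so that
$$
c_n = c_{n_a^*} + c_{n_b^*} + n_a^* - n_b^*.
$$
Since $n_a^*, n_b^* < n$, the inductive hypothesis gives $c_{n_a^*}=\mathcal{C}(T^{\mathit{mb}}_{n_a^*})$ and $c_{n_b^*}=\mathcal{C}(T^{\mathit{mb}}_{n_b^*})$. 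Applying Lemma \ref{lem:key} then yields
$$
c_n = \mathcal{C}(T^{\mathit{mb}}_{n_a^*}) + \mathcal{C}(T^{\mathit{mb}}_{n_b^*}) + n_a^* - n_b^* \geq \mathcal{C}(T^{\mathit{mb}}_n),
$$
so $\mathcal{C}(T^{\mathit{mb}}_n)=c_n$, completing the induction.

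There is no real obstacle here, as all the heavy lifting has already been done in Lemma \ref{lem:key}, which shows that the sequence $\mathcal{C}(T^{\mathit{mb}}_n)$ satisfies the same splitting inequalities (\ref{cn_leq}) that $c_n$ does. The theorem simply combines this with the fact that the optimal split for $c_n$ can be instantiated by plugging in maximally balanced subtrees of the appropriate sizes, at which point the induction closes. The only thing to be careful about is to quote the recurrence (\ref{cn_min}) (derived from Lemmas \ref{colless_sum} and \ref{max_subtrees}) rather than trying to guess the optimal split, since the optimal $(n_a^*,n_b^*)$ need not be $(\lceil n/2\rceil, \lfloor n/2\rfloor)$ in general.
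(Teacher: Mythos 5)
Your proof is correct and follows essentially the same route as the paper: induction on $n$ with Lemma \ref{lem:key} supplying the decisive inequality. The only cosmetic difference is that the paper runs the induction over an arbitrary tree $T=(T_a,T_b)\in\TT_n$ using Lemma \ref{colless_sum} directly, whereas you run it over an optimal split $(n_a^*,n_b^*)$ via the recurrence (\ref{cn_min}) (which already encodes Lemma \ref{max_subtrees}); the two formulations are interchangeable.
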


\begin{proof}
We shall prove  by induction on $n$ that $\mathcal{C}(T)\geq \mathcal{C}(T^{\mathit{mb}}_n)$ for every $T\in \TT_n$. The case when $n=1$ is obvious, because $\TT_1=\{T^{\mathit{mb}}_1\}$. Assume now that $n\geq 2$ and that the assertion is true for every number of leaves smaller than $n$ and let $T=(T_{a},T_{b})\in \TT_n$, with $T_a\in\TT_{n_a}$ and $T_b\in \TT_{n_b}$. Then, by Lemma \ref{colless_sum},
$$
\mathcal{C}(T)  = \mathcal{C}(T_a) + \mathcal{C}(T_b) + n_a-n_b  
\geq   \mathcal{C}(T^{\mathit{mb}}_a) + \mathcal{C}(T^{\mathit{mb}}_b) + n_a-n_b \geq \mathcal{C}(T^{\mathit{mb}}_n),
$$
where the first inequality holds by the induction hypothesis and the second inequality by the previous lemma. \qed
\end{proof}

Next corollary says that the sequence $c_n$ is the sequence A296062 in the \textsl{On-Line Encyclopedia of Integer Sequences} \citep{OEIS}.

\begin{corollary}\label{cor:autom}
Let $A(T^{\mathit{mb}}_{n})$ be the number of automorphisms of $T^{\mathit{mb}}_{n}$. Then,
$c_n=n-1-\log_2(A(T^{\mathit{mb}}_{n}))$.
\end{corollary}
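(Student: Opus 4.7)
The plan is to combine Theorem \ref{thm:minC} with a direct counting argument that relates $\mathcal{C}(T^{\mathit{mb}}_n)$ to the number of symmetry vertices of $T^{\mathit{mb}}_n$, and then to invoke the standard fact that the automorphism group of a bifurcating tree $T$ has order $2^{s(T)}$.

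First I would recall (or prove by a quick induction on the height of $T$) that for any bifurcating tree $T$ one has $A(T) = 2^{s(T)}$: any automorphism is uniquely determined by the subset of symmetry vertices at which it swaps the two subtrees rooted at the children, and every such subset indeed yields an automorphism because swapping subtrees of identical shape preserves the tree. Consequently, $\log_2(A(T^{\mathit{mb}}_n)) = s(T^{\mathit{mb}}_n)$, and the claim reduces to showing that
\begin{equation*}
\mathcal{C}(T^{\mathit{mb}}_n) = (n-1) - s(T^{\mathit{mb}}_n).
\end{equation*}

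The key observation is that in $T^{\mathit{mb}}_n$ every internal node $v$ has balance value either $0$ or $1$. Indeed, if $v$ has $k$ descendant leaves, then by the maximally balanced property its children have $\lceil k/2\rceil$ and $\lfloor k/2 \rfloor$ descendant leaves, so $bal_{T^{\mathit{mb}}_n}(v) \in \{0,1\}$. Moreover, $bal_{T^{\mathit{mb}}_n}(v)=0$ precisely when $k$ is even, in which case both maximal pending subtrees of $T^{\mathit{mb}}_n{}_v$ equal $T^{\mathit{mb}}_{k/2}$ (using the uniqueness of the maximally balanced tree on a given number of leaves, together with the fact, noted in the paper, that rooted subtrees of a maximally balanced tree are again maximally balanced), so $v$ is a symmetry vertex; conversely, if $bal_{T^{\mathit{mb}}_n}(v)=1$ the two children of $v$ have different numbers of descendants, so $v$ is not a symmetry vertex.

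Therefore, in $T^{\mathit{mb}}_n$ the Colless index counts exactly the internal nodes that are not symmetry vertices:
\begin{equation*}
\mathcal{C}(T^{\mathit{mb}}_n) = \sum_{v\in \mathring V(T^{\mathit{mb}}_n)} bal_{T^{\mathit{mb}}_n}(v) = |\mathring V(T^{\mathit{mb}}_n)| - s(T^{\mathit{mb}}_n) = (n-1) - s(T^{\mathit{mb}}_n),
\end{equation*}
using that a bifurcating tree with $n$ leaves has exactly $n-1$ internal nodes. Applying Theorem \ref{thm:minC} gives $c_n = n - 1 - \log_2(A(T^{\mathit{mb}}_n))$.

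There is no real obstacle here; the only subtlety is making sure the dichotomy ``balance $0$ $\iff$ symmetry vertex'' in $T^{\mathit{mb}}_n$ is justified cleanly, which rests on the uniqueness of the maximally balanced tree on a prescribed number of leaves (so that the two children of a balanced internal node of even size carry \emph{identically shaped} subtrees, not merely subtrees with the same leaf count). \qed
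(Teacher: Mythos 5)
Your proposal is correct and follows essentially the same route as the paper's own proof: both argue that in $T^{\mathit{mb}}_n$ every internal node has balance value $0$ or $1$, that balance value $0$ is equivalent to being a symmetry vertex (via uniqueness of the maximally balanced tree on a given leaf count), hence $\mathcal{C}(T^{\mathit{mb}}_n)=(n-1)-s(T^{\mathit{mb}}_n)$, and then invoke $A(T)=2^{s(T)}$ together with Theorem \ref{thm:minC}. No gaps.
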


\begin{proof}
Since, by definition, the balance value of every internal node in $T_n^{\mathit{mb}}$ is 0 or 1, $c_n=\mathcal{C}(T_n^{\mathit{mb}})$ is equal to the number of internal nodes of $T_n^{\mathit{mb}}$ with non zero balance value. Now, for every internal node $u$ of $T_n^{\mathit{mb}}$, its balance value is 0 if, and only if, the subtrees of $T_n^{\mathit{mb}}$ rooted at its children are isomorphic, that is, if, and only if,  $u$ is a symmetry vertex. Indeed, as we mentioned in Section \ref{Sec_Preliminaries},  the subtrees rooted at the children of $u$ are again maximally balanced, and therefore they have the same numbers of leaves if, and only if, they are isomorphic.

So, the number of symmetry vertices in $T^{\mathit{mb}}_n$ is $s(T^{\mathit{mb}}_n)=n-1-c_n$. 
{Since the number of  automorphisms  of  a  tree  is 2 raised to the number of symmetry vertices in it (see, for instance, Proposition 2.4.2 in \citep{Semple2003}), we conclude that  $A(T^{\mathit{mb}}_{n})=2^{n-1-c_n}$, as stated. \qed}
\end{proof}

Theorem \ref{thm:minC}, together with Lemma \ref{colless_sum}, directly imply the following recurrence for $c_n$, which was already used, for $\mathcal{C}(T_n^{\mathit{mb}})$, in the proof of Lemma \ref{lem:key}: cf. Eqns.~(\ref{eqn:recuforC}).  

\begin{corollary} \label{colless_minimum}
The sequence $c_n$ satisfies that $c_1=0$ and, for every $n\geq 2$,
$$
c_n = c_{\lceil n/2 \rceil} + c_{\lfloor n/2 \rfloor} + \left\lceil n/2 \right\rceil - \left\lfloor n/2 \right\rfloor
$$
or, equivalently, 
$c_{2n}=2c_n$ and $c_{2n+1}=c_{n+1}+c_{n}+1$ for every $n \geq 1$.
\end{corollary}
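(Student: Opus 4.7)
The plan is to derive the recurrence as an immediate consequence of Theorem \ref{thm:minC}, together with the recursive structure of $T^{\mathit{mb}}_n$ and the additivity of the Colless index under decomposition given by Lemma \ref{colless_sum}. Since all the heavy lifting has already been done in establishing that the maximally balanced tree attains the minimum Colless index, this corollary reduces to a short computation rather than a new proof by induction.

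First I would dispose of the base case: for $n=1$, $\TT_1$ contains only the single-node tree, which has no internal nodes and hence trivially Colless index $0$, so $c_1=0$. For the inductive formula, fix $n\geq 2$. By Theorem \ref{thm:minC}, $c_n=\mathcal{C}(T^{\mathit{mb}}_n)$. Recall from Section \ref{Sec_Preliminaries} that $T^{\mathit{mb}}_n$ decomposes as $T^{\mathit{mb}}_n=(T^{\mathit{mb}}_{\lceil n/2\rceil},T^{\mathit{mb}}_{\lfloor n/2\rfloor})$, with $\lceil n/2\rceil\geq \lfloor n/2\rfloor\geq 1$. Applying Lemma \ref{colless_sum} to this decomposition and then using Theorem \ref{thm:minC} once more to rewrite $\mathcal{C}(T^{\mathit{mb}}_{\lceil n/2\rceil})=c_{\lceil n/2\rceil}$ and $\mathcal{C}(T^{\mathit{mb}}_{\lfloor n/2\rfloor})=c_{\lfloor n/2\rfloor}$ yields exactly
$$
c_n \;=\; c_{\lceil n/2 \rceil} + c_{\lfloor n/2 \rfloor} + \lceil n/2\rceil - \lfloor n/2\rfloor .
$$

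For the equivalent formulation, I would just split into the parities of $n$. If $n=2m$, then $\lceil n/2\rceil=\lfloor n/2\rfloor=m$ and the formula collapses to $c_{2m}=2c_m$. If $n=2m+1$, then $\lceil n/2\rceil=m+1$ and $\lfloor n/2\rfloor=m$, so the formula becomes $c_{2m+1}=c_{m+1}+c_m+1$. Since both recurrences hold for every $m\geq 1$ (noting $n=2$ gives $m=1$ and $n=3$ gives $m=1$), we obtain the stated equivalent form.

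There is essentially no obstacle here: the genuine difficulty was proving Theorem \ref{thm:minC}, which required the delicate induction on $m$ and $s$ of Lemma \ref{lem:key}. Once that is in hand, the corollary is a one-line computation via Lemma \ref{colless_sum} applied to the recursive structure of $T^{\mathit{mb}}_n$, and the only bookkeeping to check is the trivial parity case split.
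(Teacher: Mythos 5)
Your proof is correct and follows exactly the route the paper takes: the paper derives this corollary directly from Theorem \ref{thm:minC} and Lemma \ref{colless_sum} applied to the decomposition $T^{\mathit{mb}}_n=(T^{\mathit{mb}}_{\lceil n/2\rceil},T^{\mathit{mb}}_{\lfloor n/2\rfloor})$, with the parity split giving the equivalent form. Nothing is missing.
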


\subsection{Two closed formulas for the minimum Colless index}

Corollary \ref{colless_minimum} implies that we can recurrently compute $c_n$ for any desired $n$. In this subsection, however, we derive from that recurrence two different closed expressions for $c_n$ and we prove some properties of this sequence. Our first closed formula for $c_n$ is given in terms of the binary expansion of $n$.

\begin{theorem} \label{thm_binaryExpansion}
If $n=\sum_{j=1}^\ell 2^{m_j}$, with $\ell\geq 1$ and $m_1,\ldots,m_\ell\in \NN$ such that $m_1>\cdots>m_\ell$, then
$$ 
c_n = \sum_{j=2}^\ell 2^{m_j}(m_1 - m_j - 2(j-2)).
$$
\end{theorem}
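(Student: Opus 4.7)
The plan is to prove the formula by strong induction on $n$, using the recurrence of Corollary \ref{colless_minimum}. Let $g(n)$ denote the proposed right-hand side of the formula. It suffices to show that $g(1)=0$ and that $g$ satisfies $g(2n)=2g(n)$ and $g(2n+1)=g(n+1)+g(n)+1$, since that recurrence together with the initial value $c_1=0$ uniquely characterises $c_n$.

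The base case is immediate (empty sum), and the even case is routine: the binary expansion of $2n$ is obtained from that of $n$ by shifting every exponent up by $1$, so $\ell$, all differences $m_1-m_j$, and all indices $j$ are preserved, while each power $2^{m_j}$ doubles; hence $g(2n)=2g(n)$ by direct substitution into the formula.

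The bulk of the work lies in the odd case. Write the odd number as $N = \sum_{j=1}^\ell 2^{m_j}$ with $m_\ell = 0$, and let $t\geq 0$ be the largest integer for which $m_{\ell-i}=t-i$ for every $0\leq i\leq t$; equivalently, $N$ has exactly $t+1$ trailing $1$-bits in binary. From $t$ one can read off the binary expansions of both halves: $\lfloor N/2\rfloor$ has exponents $m_1-1,\ldots,m_{\ell-t-1}-1, t-1, t-2,\ldots, 0$, while incrementing collapses the trailing run, so $\lceil N/2\rceil = \lfloor N/2\rfloor+1$ has exponents $m_1-1,\ldots,m_{\ell-t-1}-1, t$. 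Substituting these expansions into $g$ and using the inductive hypothesis, the contributions from the high-order exponents $m_j-1$ with $j\leq \ell-t-1$ appear with multiplicity $2$ in $g(\lfloor N/2\rfloor)+g(\lceil N/2\rceil)$ and combine, exactly as in the even case, to reproduce the corresponding piece of $g(N)$. What remains is the purely numerical identity
\begin{equation*}
\sum_{i=0}^{t-1}2^i(m_1-2\ell+5+i)+2^t(m_1-2\ell+t+3)+1 \;=\; \sum_{k=0}^{t}2^k(m_1-2\ell+4+k),
\end{equation*}
which, after separating the common factor $(m_1-2\ell+4)\sum_k 2^k$ on each side, reduces to a standard evaluation of $\sum_{k=0}^{t} k\cdot 2^k$ and can be verified by elementary manipulation.

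The main obstacle is the bookkeeping across three binary expansions of potentially different lengths (those of $N$, $\lfloor N/2\rfloor$, and $\lceil N/2\rceil$), since the cascading carry when $N$ ends in several consecutive $1$-bits shifts which $m_j$ corresponds to which index $j$ in the sum defining $g$. Isolating this cascade via the single parameter $t$ keeps the combinatorics under control; the edge case $N = 2^{t+1}-1$, in which $\ell=t+1$ and there are no high-order exponents, is absorbed into the same computation without separate treatment.
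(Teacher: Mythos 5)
Your proposal is correct and follows essentially the same route as the paper's proof: induction on $n$ via the recurrence $c_{2n}=2c_n$, $c_{2n+1}=c_{n+1}+c_n+1$ of Corollary \ref{colless_minimum}, with the even case handled by an exponent shift and the odd case by isolating the trailing block of ones --- your parameter $t$ is exactly $\ell-k$ for the paper's $k=\min\{j\mid m_j=\ell-j\}$, and I have checked that your residual numerical identity is true and that the reduction to it is valid. Two cosmetic points: the displayed condition defining $t$ should read $m_{\ell-i}=i$ (as written, $i=0$ would force $t=0$ since $m_\ell=0$, though your gloss ``exactly $t+1$ trailing $1$-bits'' makes the intent clear), and absorbing the edge case $N=2^{t+1}-1$ does require the small observation that the spurious $j=1$ contributions ($2^{t+1}$ on each side) cancel, which they do.
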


\begin{proof}
For every  $n\geq 1$, let $\oC_n=\sum_{j=2}^\ell 2^{m_j}(m_1-m_j-2(j-2))$, where $n=\sum_{j=1}^\ell 2^{m_j}$ with $m_1>\cdots>m_\ell$. We shall prove that $c_n = \oC_n$ by induction on $n$. 

If $n=1$, $\oC_1=\oC_{2^0} = 0 = c_1$, which proves the base case of the induction.
Now, we assume that the claim holds for every $n'\leq n-1$ and we prove it for $n$ by distinguishing two cases: $n$ even and $n$ odd.

If $n$ is even, i.e. if $m_\ell>0$, we have $\floor*{n/2}=\ceil*{n/2}=n/2 = \sum_{j=1}^\ell 2^{m_j-1}$ with $m_1-1>\cdots>m_\ell-1\geq 0$ and thus
\begin{align*}
c_n &= 2 \cdot c_{n/2} \quad \text{(by Corollary \ref{colless_minimum})} \\
&= 2 \cdot \oC_{n/2} \quad \text{(by the induction hypothesis)}\\
&= 2 \cdot \sum_{j=2}^\ell 2^{m_j-1}\big(m_1-1-(m_j-1)-2(j-2)\big) \\
&= \sum_{j=2}^\ell 2^{m_j}(m_1-m_j-2(j-2))=\oC_n.
\end{align*}

Assume now that $n$ is odd, i.e. that  $m_\ell=0$. Let $k=\min\{j\mid m_j=\ell-j\}$ (which exists because $m_\ell=\ell-\ell$). Then, $\floor*{n/2}=\sum_{j=1}^{\ell-1} 2^{m_j-1}$, with $m_1-1>\cdots>m_{\ell-1}-1$,
and
$$
\ceil*{n/2}=\sum_{j=1}^{\ell-1} 2^{m_j-1}+1=\sum_{j=1}^{k-1} 2^{m_j-1}+\sum_{j=k}^{\ell-1} 2^{\ell-j-1}+1=\sum_{j=1}^{k-1}2^{m_j-1}+2^{\ell-k}
$$
with $m_1-1>\cdots>m_{k-1}-1>\ell-k\geq 0$. 
In this case, 
\begin{align*}
c_n &= c_{\lceil n/2 \rceil} + c_{\lfloor n/2 \rfloor} + \left\lceil n/2 \right\rceil - \left\lfloor n/2 \right\rfloor \quad \text{(by Corollary \ref{colless_minimum})} \\
&= \oC_{\ceil*{n/2}} + \oC_{\floor*{n/2}} + \ceil*{n/2} - \floor*{n/2} \quad \text{(by the induction hypothesis)} \\
&= \sum_{j=2}^{k-1}2^{m_j-1}\big((m_1-1)-(m_j-1)-2(j-2))  \\
&\qquad +2^{\ell-k}(m_1-1-(\ell-k)-2(k-2)\big)\\
&\qquad+\sum_{j=2}^{\ell-1}2^{m_j-1}\big((m_1-1)-(m_j-1)-2(j-2)\big)+1 \\
\hphantom{c_n }&= \sum_{j=2}^{k-1}2^{m_j-1}(m_1-m_j-2({j}-2))+2^{m_k}(m_1-m_k-2(k-2))-2^{\ell-k}\\
&\qquad+\sum_{j=2}^{k-1}2^{m_j-1}(m_1-m_j-2(j-2))\\
&\qquad+\sum_{j=k}^{\ell-1} 2^{\ell-j-1}(m_1-(\ell-j)-2(j-2))+1\\
&\mbox{(because $m_j=\ell-j$ for every $j\geq k$)}\\
&=\sum_{j=2}^{k}2^{m_j}(m_1-m_j-2(j-2))\\
&\qquad+\sum_{j=k}^{\ell-1}2^{\ell-j-1}(m_1-(\ell-j)-2(j-2))+1-2^{\ell-k}
\end{align*}
\begin{align*}
&= \sum_{j=2}^{k}2^{m_j}(m_1-m_j-2(j-2))\\
&\qquad +\sum_{i=k+1}^{\ell}2^{\ell-i}(m_1-(\ell-i)-2(i-2)+1)+1-2^{\ell-k}\\
&= \sum_{j=2}^{k}2^{m_j}(m_1-m_j-2(j-2))\\
&\qquad+\sum_{i=k+1}^{\ell}2^{m_i}(m_1-m_i-2(i-2))+\sum_{i=k+1}^{\ell}2^{\ell-i} +1-2^{\ell-k}\\
&= \sum_{j=2}^{\ell}2^{m_j}(m_1-m_j-2(j-2))=\oC_n.
\end{align*}
This completes the proof of the inductive step. \qed
\end{proof}

%{\begin{remark} \label{rem:min_colless}
%The assertion on $n$ in Corollary \ref{min_colless}, i.e., that $c_n=0$ exactly when $n$ is a power of 2, can also be seen as  a direct consequence of Theorem \ref{thm_binaryExpansion}. Indeed, if $n=\sum_{j=1}^\ell 2^{m_j} \geq 1$, with $m_1,\ldots,m_\ell\in \NN$ such that  $m_1 > \cdots > m_\ell$, then, since $2^{m_j}(m_1 - m_j - 2(j-2))>0$ if $j>1$, by the last theorem we have  that $c_n = 0$ if, and only if, $\sum_{j=2}^\ell 2^{m_j}(m_1 - m_j - 2(j-2))$ is an empty sum, which is equivalent to $\ell=1$, i.e. to $n=2^{m_1}$.
%\end{remark}
%}

\begin{figure}[htbp]
	\centering
	\includegraphics[width=\linewidth]{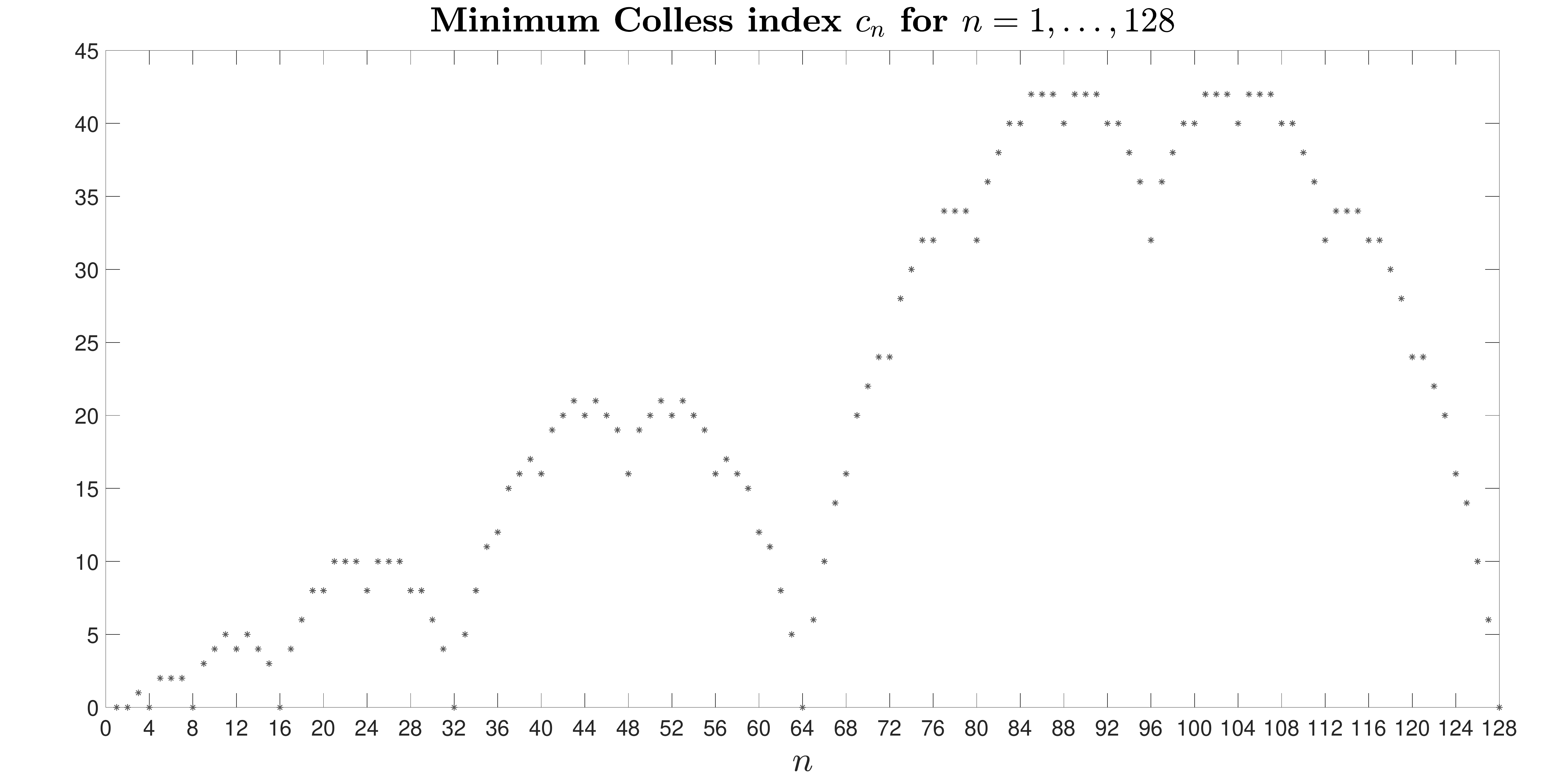}
	\caption{Plot of $c_n$ for $n=1,\ldots, 128$.}
	\label{Fig_MinimumColless}
\end{figure}

Figure \ref{Fig_MinimumColless} depicts the value of $c_n$ for $n=1, \ldots, 128$. Surprisingly, the minimum Colless index exhibits a fractal structure. In the next theorem we provide a second closed formula for $c_n$ that {will explain this fractal structure by entailing} a connection between the sequence $c_n$ and the so-called \emph{Blancmange curve}, a fractal curve also known as the \emph{Takagi curve} (cf. \citet{Takagi1901}).  This curve plays an important role in different areas such as combinatorics, number theory and analysis \citep{Allaart2012} and it is defined as the graph of the function $T: [0,1] \rightarrow \mathbb{R}$ with
\begin{align}\label{Tx}
T(x) = \sum_{i=0}^{\infty} 2^{-i}\cdot s(2^i \cdot x), 
\end{align}
where $s(x) = \min\limits_{z \in \mathbb{Z}}\vert  x-z \vert$  is the distance from $x$ to its nearest integer. {Note that $s(x) \in [0,1/2]$.}
{Moreover, recall that $s$} satisfies the following straightforward properties: $s(n)=0$ for every $n\in \ZZ$;
$s(n+x)=s(x)$ for every $n\in \ZZ$ and $x\in \RR$; $s(x)=s(-x)$ for every $x\in \RR$; if $0\leq x\leq 1/2$, then $s(x)=x$; and  if $1/2\leq x\leq 1$, then $s(x)=1-x$.

\begin{theorem} \label{colless_explicit}
For every $n \geq 1$, let $k_n \coloneqq \lceil \log_2 (n) \rceil$. Then,
$$ 
c_n = \sum_{j=1}^{k_n-1} 2^j\cdot s(2^{-j} \cdot n),
$$
where $s(x)$ is the distance from $x\in \RR$ to its nearest integer.
\end{theorem}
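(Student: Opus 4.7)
The plan is to show that the right-hand side $f(n):=\sum_{j=1}^{k_n-1}2^j\, s(2^{-j}n)$ satisfies the same recurrence as $c_n$ from Corollary~\ref{colless_minimum}, via strong induction on $n$. The base case $n=1$ is trivial because $k_1=0$ makes the sum empty, giving $f(1)=0=c_1$; the inductive step then consists of verifying $f(2m)=2f(m)$ and $f(2m+1)=f(m)+f(m+1)+1$ for every $m\geq 1$.

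First I would handle the even case, which is essentially bookkeeping. Since $k_{2m}=k_m+1$, the substitution $i=j-1$ in the defining sum for $f(2m)$ gives $f(2m)=2\sum_{i=0}^{k_m-1}2^i\, s(2^{-i}m)$; the $i=0$ term equals $s(m)=0$ because $m\in\NN$, and what remains is precisely $2f(m)$, matching $c_{2m}=2c_m$.

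The hard part is the odd case $f(2m+1)=f(m)+f(m+1)+1$. My approach is to isolate the $j=1$ term of $f(2m+1)$, which contributes $2\,s\bigl((2m+1)/2\bigr)=1$ since $(2m+1)/2$ is a half-integer. For the remaining terms I would exploit the geometric observation that the sawtooth function $s$ is \emph{linear} on every dyadic interval $[m/2^i,(m+1)/2^i]$ with $i\geq 1$: any corner of $s$ in its interior would be a multiple of $1/2$, and scaling by $2^i$ would then force an integer $k\cdot 2^{i-1}$ strictly between the consecutive integers $m$ and $m+1$, which is impossible. Evaluating $s$ at the midpoint of this interval, which is precisely $(2m+1)/2^{i+1}$, yields the key identity
$$
2\, s\bigl(2^{-(i+1)}(2m+1)\bigr) \;=\; s(2^{-i}m) + s(2^{-i}(m+1)) \qquad (i\geq 1),
$$
and multiplying by $2^i$ and reindexing $j=i+1$ in the sum for $f(2m+1)$ produces the desired $f(m)+f(m+1)$.

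The last bit of care, which I expect to be the main technical obstacle, is matching the three summation ranges after reindexing. A short case split does the job: when $m$ is not a power of two, one checks $k_m=k_{m+1}=k_{2m+1}-1$, so all three sums have the same range and the identity assembles directly into $f(m)+f(m+1)$. When $m=2^k$, one has $k_m=k$, $k_{m+1}=k+1$ and $k_{2m+1}=k+2$, and the identity side runs one step further on the $m$-summand; but that extra term is $2^k\, s(2^{-k}\cdot 2^k)=2^k\,s(1)=0$, so it does no harm. Adding back the isolated contribution $1$ then closes the odd case and completes the induction.
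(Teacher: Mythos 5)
Your proof is correct, and it takes a genuinely different route from the paper's. The paper proves Theorem~\ref{colless_explicit} by showing that the sawtooth sum coincides with the binary-expansion formula of Theorem~\ref{thm_binaryExpansion}: it writes $n=\sum_i 2^{n_i}$, computes $s(2^{-j}n)$ explicitly according to which block of exponents $j$ falls into (Eqns.~(\ref{eq:s1}) and (\ref{eq:s2})), and then matches the coefficient of each $2^{n_i}$. You instead verify directly that $f(n)=\sum_{j=1}^{k_n-1}2^j s(2^{-j}n)$ satisfies the recurrence of Corollary~\ref{colless_minimum}, with the even case being a reindexing and the odd case resting on the observation that $s$ is affine on each dyadic interval $[m/2^i,(m+1)/2^i]$ for $i\geq 1$ (no corner $k/2$ can lie in its interior), so its value at the midpoint $(2m+1)/2^{i+1}$ is the average of the endpoint values; your bookkeeping of the summation ranges, including the extra vanishing term $2^k s(1)=0$ when $m=2^k$, is accurate. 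What each approach buys: yours is shorter, self-contained (it needs only Corollary~\ref{colless_minimum}, not Theorem~\ref{thm_binaryExpansion}), and isolates the structural reason the Takagi-type sum obeys the Colless recurrence, namely the midpoint-averaging property of the sawtooth; the paper's computation, while heavier, establishes the equality of the two closed formulas directly, which is the identity it actually reuses later (e.g.\ in the proofs of Corollary~\ref{cor:takagi} and Corollary~\ref{min_colless_properties}(d)).
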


\begin{proof}
We shall prove that the expression for $c_n$ given in the statement is equal to the expression
provided in Theorem \ref{thm_binaryExpansion}. In this proof, it is convenient to write the binary expansion of $n$ as
$n = \sum_{i=1}^\ell 2^{n_i}$ with $n_1 < \cdots < n_\ell$. In this way, the formula given in Theorem \ref{thm_binaryExpansion} becomes
$$
c_n = \sum_{i=1}^{\ell-1} 2^{n_i}(n_\ell - n_i - 2(\ell -i-1)).
$$

With these notations, for every $j\in \NN$,  if $j\leq n_1$, then $2^{-j}\cdot n\in \NN$ and thus $s(2^{-j}\cdot n)=0$, while if $n_{t}< j\leq n_{t+1}$ for some $t=1,\ldots,\ell-1$, then
$$
2^{-j}\cdot n=\sum_{i=1}^t 2^{n_i-j}+\sum_{i=t+1}^\ell 2^{n_i-j},
$$
where $\sum_{i=t+1}^\ell 2^{n_i-j}\in \NN$ and, as far as $\sum_{i=1}^t 2^{n_i-j}$ goes:
\begin{itemize}
\item If $j>n_{t}+1$
$$
\sum_{i=1}^t 2^{n_i-j}=\frac{\sum_{i=1}^t 2^{n_i-n_1}}{2^{j-n_1}}\leq \frac{\sum_{s=0}^{n_t-n_1} 2^{s}}{2^{n_t+2-n_1}}=
\frac{2^{n_t-n_1+1}-1}{2^{n_t-n_1+2}}< \frac{1}{2}
$$

\item If $j=n_{t}+1$
$$
\sum_{i=1}^t 2^{n_i-j}=\sum_{i=1}^t 2^{n_i-n_{t}-1}=\frac{1}{2}+\sum_{i=1}^{t-1} 2^{n_i-n_{t}-1}
$$
where 
$$
0\leq \sum_{i=1}^{t-1} 2^{n_i-n_{t}-1}\leq \frac{\sum_{s=0}^{n_{t-1}} 2^s}{2^{n_t+1}}=\frac{2^{n_{t-1}+1}-1}{2^{n_t+1}}<\frac{1}{2}
$$
and therefore in this case
$1/2\leq \sum_{i=1}^t 2^{n_i-j}<1$.
\end{itemize}

This implies that, if $n_{t}+1< j\leq n_{t+1}$,
\begin{equation}
2^j\cdot s(2^{-j}\cdot n)=2^j  \sum_{i=1}^t 2^{n_i-j} =\sum_{i=1}^t 2^{n_i}
\label{eq:s1}
\end{equation}
and if $j=n_{t}+1$,
\begin{equation}
2^{n_t+1}\cdot s(2^{-n_t-1}\cdot n)  =2^{n_t+1}\Big(\frac{1}{2}-\sum_{i=1}^{t-1} 2^{n_i-n_t-1}\Big)=2^{n_t}-\sum_{i=1}^{t-1} 2^{n_i}.
\label{eq:s2}
\end{equation}

Now, on the one hand, if $n$ is a power of 2, i.e.  if $n=2^{n_1}$, then $k_n=n_1$ and the previous discussion shows that $s(2^{-j} \cdot n)=0$ for every $j\leq n_1-1$, which implies that 
$$
\sum_{j=1}^{k_n-1} 2^j\cdot s(2^{-j} \cdot n) =0=c_n.
$$
On the other hand,   if $n$ is not a power of 2, i.e. if $\ell>1$, then $k_n=n_\ell+1$ and, by the previous discussion,
$$
\sum_{j=1}^{k_n-1} 2^j\cdot s(2^{-j} \cdot n) =
\sum_{j=n_1+1}^{n_\ell}  2^j\cdot s(2^{-j} \cdot n)=
\sum_{t=1}^{\ell-1}  \sum_{j=n_{t}+1}^{n_{t+1}} 2^j\cdot s(2^{-j} \cdot n) 
$$
where, for each $t=1,\ldots,\ell-1$, 
\begin{align*}
& \sum_{j=n_{t}+1}^{n_{t+1}} 2^j\cdot s(2^{-j} \cdot n) =
2^{n_t+1}\cdot s(2^{-n_t-1} \cdot n) +\sum_{j=n_t+2}^{n_{t+1}} 2^j\cdot s(2^{-j} \cdot n)\\
& = 2^{n_t}-\sum_{i=1}^{t-1} 2^{n_i}+(n_{t+1}-n_t-1)\sum_{i=1}^t 2^{n_i}\quad \mbox{(by Eqns.~(\ref{eq:s1}) and (\ref{eq:s2}))}\\
& =(n_{t+1}-n_t)2^{n_t}+(n_{t+1}-n_t-2)\sum_{i=1}^{t-1} 2^{n_i}.
\end{align*}
Therefore
$$ 
\sum_{j=1}^{k_n-1} 2^j\cdot s(2^{-j} \cdot n)=\sum_{t=1}^{\ell-1}\Big((n_{t+1}-n_t)2^{n_t}+(n_{t+1}-n_t-2)\sum_{i=1}^{t-1} 2^{n_i}\Big)
$$
and the coefficient of each $2^{n_i}$, for $i=1,\ldots,\ell-1$, in this expression is
$$
n_{i+1}-n_i+\sum_{j=i+1}^{\ell-1} (n_{j+1}-n_j-2)=n_\ell-n_i-2(\ell-i-1)
$$
which proves that
$$
\sum_{j=1}^{k_n-1} 2^j\cdot s(2^{-j} \cdot n)=
\sum_{i=1}^{\ell-1} 2^{n_i}(n_\ell - n_i - 2(\ell -i-1))=c_n
$$
as we claimed. \qed
\end{proof}

{As we mentioned, there is a close relationship between the sequence $c_n$ and the Takagi curve, which we bring forth now.

\begin{corollary}\label{cor:takagi}
For every $n \in \mathbb{N}_{\geq 1}$, let $k_n \coloneqq \lceil \log_2(n) \rceil$, and let $T(x): [0,1]\to \RR$ be the function whose graph defines the Takagi curve (cf. Eqn.~\eqref{Tx}).
Then, 
$$ 
c_n = 2^{k_n-1} \cdot T\Big(\frac{n}{2^{k_n-1}}-1\Big).
$$
\end{corollary}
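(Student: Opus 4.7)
The plan is to reduce the Takagi-function expression on the right-hand side to the closed formula for $c_n$ already established in Theorem \ref{colless_explicit}, namely
$$
c_n = \sum_{j=1}^{k_n-1} 2^j\cdot s(2^{-j} n).
$$
So the whole game is an algebraic manipulation of $2^{k_n-1}\cdot T(n/2^{k_n-1}-1)$ that ends up reproducing this sum.

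First I would set $x\coloneqq n/2^{k_n-1}-1$ and verify that $x\in[0,1]$: since $2^{k_n-1}\leq n\leq 2^{k_n}$ by definition of $k_n=\lceil\log_2 n\rceil$, we have $n/2^{k_n-1}\in[1,2]$, so the argument fed into $T$ does lie in the domain $[0,1]$ where the Blancmange curve is defined. Then I would plug into the defining series \eqref{Tx} and reindex:
$$
2^{k_n-1}\, T(x)=\sum_{i=0}^{\infty} 2^{k_n-1-i}\, s(2^i x)= \sum_{j=-\infty}^{k_n-1} 2^{j}\, s(2^{k_n-1-j}\, x),
$$
with the substitution $j=k_n-1-i$.

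The second step is to simplify the argument of $s$. For $j\leq k_n-1$ the exponent $k_n-1-j$ is a non-negative integer, so $2^{k_n-1-j}$ is a positive integer; therefore
$$
2^{k_n-1-j}\, x = 2^{k_n-1-j}\Big(\frac{n}{2^{k_n-1}}-1\Big)=\frac{n}{2^{j}} - 2^{k_n-1-j},
$$
and, since $s$ is invariant under integer translations, $s(2^{k_n-1-j}\, x)=s(n/2^{j})=s(2^{-j}\, n)$. This turns the sum into
$$
2^{k_n-1}\, T(x)=\sum_{j=-\infty}^{k_n-1} 2^{j}\, s(2^{-j}\, n).
$$

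The third and final step is to truncate the tail: for $j\leq 0$ the number $2^{-j}\, n=n\cdot 2^{|j|}$ is an integer, so $s(2^{-j}\, n)=0$ and those terms drop out. What remains is exactly $\sum_{j=1}^{k_n-1}2^{j}\, s(2^{-j}\, n)$, which equals $c_n$ by Theorem \ref{colless_explicit}. I do not expect a serious obstacle; the only point that needs some care is checking that $n/2^{k_n-1}-1$ really falls in $[0,1]$ (in particular, handling the power-of-two case where $x=1$ and also the degenerate case $n=1$, where $k_n=0$, $2^{k_n-1}=1/2$, and both sides vanish, so the identity still holds).
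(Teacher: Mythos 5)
Your proposal is correct and follows essentially the same route as the paper's proof: both reduce the identity to Theorem~\ref{colless_explicit} via the reindexing $j=k_n-1-i$, the integer-translation invariance of $s$, and the vanishing of $s$ at integer arguments to dispose of the infinite tail. The only difference is the direction of the computation (you start from the Takagi side, the paper starts from $c_n$), which is immaterial.
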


\begin{proof}
First of all, recall that $T(x)$ is defined on the unit interval $[0,1]$. Now, as $k_n = \lceil \log_2(n) \rceil$, we clearly have $n \in (2^{k_n-1}, 2^{k_n}]$ and thus, $\frac{n}{2^{k_n-1}}-1 \in (0,1]$. Thus, $T(\frac{n}{2^{k_n-1}}-1)$ is well-defined.
Now, by Theorem \ref{colless_explicit},
\begin{align*}
    c_n &= \sum_{j=1}^{k_n-1} 2^j\cdot s(2^{-j} \cdot n) = \sum_{i=0}^{k_n-2} 2^{k_n-1-i} \cdot s(2^{i-k_n+1} \cdot n) \\
    &= 2^{k_n-1} \cdot \sum_{i=0}^{\infty} 2^{-i} \cdot s(2^{i-k_n+1} \cdot n) \quad\mbox{(because $2^{i-k_n+1}\cdot n\in \NN$ if $i\geq k_n-1$)} \\
    &= 2^{k_n-1} \cdot \sum_{i=0}^{\infty} 2^{-i} \cdot s(2^{i-k_n+1} \cdot n - 2^{i}) \quad \mbox{(because each $2^i \in \mathbb{N}$)} \\
    &= 2^{k_n-1} \cdot \sum_{i=0}^{\infty} 2^{-i} \cdot s\left(2^{i} \cdot \left(\frac{n}{2^{k_n-1}} - 1 \right)\right)= 2^{k_n-1} \cdot T\left(\frac{n}{2^{k_n-1}}-1\right).
\end{align*}
\qed
\end{proof}}

We close this section with the following result, which establishes some properties of the minimum Colless index $c_n$ that are  reflected in Figure \ref{Fig_MinimumColless}, in particular its symmetry. 

\begin{corollary} \label{min_colless_properties}
The sequence $c_n$ satisfies the following properties:
\begin{enumerate}[(a)]
\item For every $m\geq 0$, $c_{2^{m}+1}=m$.
\item {For every $n\geq 1$,  $c_{n} <2^{\lceil\log_2(n)\rceil}/3$.}
\item {For every $n\geq 1$, $c_n<n/2$.}
\item For every $m\geq 1$ and for every $p=1,\ldots,2^m-1$,  $c_{2^{m}+p} = c_{2^{m+1}-p}$.
\end{enumerate}
\end{corollary}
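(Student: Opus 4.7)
We address the four claims using two tools: the recurrence of Corollary \ref{colless_minimum} ($c_{2n}=2c_n$ and $c_{2n+1}=c_{n+1}+c_n+1$) and the Takagi representation of Corollary \ref{cor:takagi}. Claims (a) and (d) are pure recurrence manipulations. For (a), I would induct on $m$: the base $m=0$ gives $c_2=2c_1=0$, and the inductive step exploits $c_{2^{m-1}}=0$ (any power of $2$ has Colless index $0$) to rewrite $c_{2^m+1}=c_{2^{m-1}+1}+c_{2^{m-1}}+1=(m-1)+0+1=m$. For (d), I would induct on $m$ with a case split on the parity of $p$: when $p=2p'$, the doubling rule gives $c_{2^m+p}=2c_{2^{m-1}+p'}$ and $c_{2^{m+1}-p}=2c_{2^m-p'}$, and the equality reduces to the induction hypothesis at $m-1$; when $p=2p'+1$, both sides unfold via the odd recurrence to $c_{2^{m-1}+p'+1}+c_{2^{m-1}+p'}+1$ and $c_{2^m-p'}+c_{2^m-p'-1}+1$ respectively, and the two pairings $c_{2^{m-1}+q}\leftrightarrow c_{2^m-q}$ at $q=p'$ and $q=p'+1$ close the case whenever $1\leq p'\leq 2^{m-1}-2$. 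The two boundary cases $p'\in\{0,2^{m-1}-1\}$ require the auxiliary equality $c_{2^m-1}=m-1$, which follows from a one-line induction on $m$ via $c_{2^m-1}=c_{2^{m-1}}+c_{2^{m-1}-1}+1$ together with part (a).

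Claim (b) is where I expect the real substance to live. I would use Corollary \ref{cor:takagi} to write $c_n=2^{k_n-1}T\!\left(n/2^{k_n-1}-1\right)$ and then invoke the classical fact $\max_{x\in[0,1]}T(x)=2/3$, attained at $x=1/3$ (see \cite{Allaart2012}), to obtain the weak bound $c_n\leq 2^{k_n}/3$. For the strict inequality, observe that $n/2^{k_n-1}-1$ is a dyadic rational of the form $p/2^{k_n-1}$; at such a point the series defining $T$ collapses to a finite sum of dyadic rationals and is therefore itself a rational with power-of-two denominator, which cannot equal $2/3$.

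For (c), since $c_n$ is an integer, the bound $c_n<n/2$ is equivalent to the strengthened statement $c_n\leq\lfloor(n-1)/2\rfloor$, which I would prove by strong induction on $n$. The even case $n=2m$ is immediate from $c_n=2c_m\leq 2\lfloor(m-1)/2\rfloor\leq m-1=\lfloor(n-1)/2\rfloor$. The odd case $n=2m+1$ uses $c_n=c_{m+1}+c_m+1\leq \lfloor m/2\rfloor+\lfloor(m-1)/2\rfloor+1=m=\lfloor(n-1)/2\rfloor$, exploiting the elementary identity $\lfloor m/2\rfloor+\lfloor(m-1)/2\rfloor=m-1$ valid for $m\geq 1$. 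The main obstacle is indeed (b): the other three parts are elementary bookkeeping around the recurrence, but (b) genuinely relies on the Takagi extremum $\max T=2/3$, and a fully self-contained proof would have to either re-derive that classical fact or devise an ad hoc strong induction on $n$ with a carefully strengthened invariant of the form $3c_n+\varepsilon(k_n)\leq 2^{k_n}$ with a suitable correction $\varepsilon$.
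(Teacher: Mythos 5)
Your proposal is correct throughout, but it reaches parts (a) and (d) by a genuinely different route than the paper, so a comparison is worthwhile. The paper derives (a) as a one-line consequence of the closed binary-expansion formula of Theorem \ref{thm_binaryExpansion} (with $\ell=2$, $m_1=m$, $m_2=0$), and proves the symmetry (d) by a direct manipulation of the $s$-function formula of Theorem \ref{colless_explicit}, using $s(x)=s(-x)$ and the integer-shift invariance of $s$; your versions instead run inductions on the recurrence $c_{2n}=2c_n$, $c_{2n+1}=c_{n+1}+c_n+1$, and both close correctly --- for (d) your two pairings $c_{2^{m-1}+q}=c_{2^m-q}$ at $q=p'$ and $q=p'+1$ even cover the boundary values $q\in\{0,2^{m-1}\}$ trivially (both sides are powers of $2$), so the auxiliary identity $c_{2^m-1}=m-1$ you invoke, while true, is not strictly needed. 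For (b) you follow the paper exactly in using Corollary \ref{cor:takagi} together with $\max_{x\in[0,1]}T(x)=2/3$ from \citet{Allaart2012}; your argument for strictness (at a dyadic rational the Takagi series truncates to a finite sum of dyadic rationals, hence cannot equal $2/3$) is cleaner and more self-contained than the paper's appeal to the explicit description of the maximizers of $T$, though both still lean on the classical bound $T\leq 2/3$. For (c) your induction via the sharpened integer form $c_n\leq\lfloor(n-1)/2\rfloor$ and the identity $\lfloor m/2\rfloor+\lfloor(m-1)/2\rfloor=m-1$ is the same idea as the paper's, which splits on $n$ modulo $4$ instead of modulo $2$; the two are equivalent in content and length.
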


\begin{proof}
Assertion (a) is a direct consequence of Theorem \ref{thm_binaryExpansion}. Indeed,  if $n=2^m+1$ then, with the notations of that theorem, $\ell=2$, $m_1=m$ and $m_2=0$, and therefore $c_{2^m+1} = 2^{0}(m - 0 - 2(2-2))=m$.

{As to (b), if $n=2^m$, then $c_n=0<2^m/3$, and if $n=2^m+p$ with $1\leq p\leq 2^m-1$, so that $\lceil\log_2(n)\rceil=m+1$, then, by Corollary \ref{cor:takagi},
$$
c_n= 2^{m}T\Big(\frac{n}{2^m}-1\Big)\leq \frac{2^{m+1}}{3}
$$
because, by Theorem 3.1 in \citep{Allaart2012}, $T(x)\leq 2/3$ for every $x\in [0,1]$. Moreover, from the explicit description of the numbers $x\in [0,1]$ such that $T(x)= 2/3$ given in the aforementioned theorem, we easily deduce that if $x$ has the form $n/2^m-1$ with $n\in \NN_{\geq 1}$, then $T(x)<2/3$.

Let us prove now (c) by induction on $n$ using Corollary \ref{colless_minimum}. The base case  $n=1$ holds because $c_{1}=0<1/2$. Assume now that $n\geq 2$ and that  the statement holds for every $1\leq n'<n$. Since $c_{n'}$ is a natural number, the inequality $c_{n'}<n'/2$ actually says that if $n'$ is even, say $n'=2n'_0$, then $c_{n'}\leq n'_0-1$, and if $n'$ is odd, say $n'=2n'_0+1$, then $c_{n'}\leq n'_0$. Now we distinguish three cases, depending on the congruence class of $n$ modulo 4:
\begin{itemize}
\item If $n$ is even, say $n=2n_0$, then $c_n=2c_{n_0}<2\cdot (n_0/2)=n_0=n/2$.
\item If $n=4n_0+1$ for some $n_0\in \NN$, then
$$
c_{n}=c_{2n_0+1}+c_{2n_0}+1\leq n_0+n_0-1+1=2n_0<\frac{n}{2}.
$$
\item If $n=4n_0+3$ for some $n_0\in \NN$, then
$$
c_{n}=c_{2n_0+2}+c_{2n_0+1}+1\leq n_0+n_0+1=2n_0+1<\frac{n}{2}.
$$
\end{itemize}
This concludes the proof of (c).}

Finally, as far as (d) goes, let $n=2^m+p$ for some $p=1,\ldots,2^m-1$. Then:
\begin{align*}
c_{2^{m}+p} &= \sum_{j=1}^{m} 2^j\cdot s(2^{-j}  (2^{m}+p)) \quad \text{(by Theorem \ref{colless_explicit})} \\
&= \sum_{j=1}^{m} 2^j\cdot s(2^{m-j}+2^{-j} \cdot p) \\
&= \sum_{j=1}^{m} 2^j\cdot s(2^{-j} \cdot p)\quad \text{(because each $2^{m-j} \in \mathbb{N}$)} \\
&= \sum_{j=1}^{m} 2^j\cdot s(-2^{-j} \cdot p)\quad \text{(because $s(x)=s(-x)$)} \\
&= \sum_{j=1}^{m} 2^j\cdot s(2^{m+1-j}-2^{-j} \cdot p) \quad \text{(because each $2^{m+1-j} \in \mathbb{N}$)}\\
&= \sum_{j=1}^{m} 2^j\cdot s(2^{-j}(2^{m+1}-p))= c_{2^{m+1}-p} \quad \text{(again by  Theorem \ref{colless_explicit}). \hspace*{\fill}\qed}
\end{align*}
%\textbf{Alternative proof of (b). In this case, rewrite the statement t $p=0,\ldots,2^{m-1}$.}
%
%As far as (b) goes, we prove it by induction on $m$. The case when $m=1$ is obvious, since 
%$c_{2^1+0}=0=c_{2^2-0}$ and $c_{2^1+1}=c_3=c_{2^2-1}$. Assume now that $m\geq 2$ and  $c_{2^{m-1}+p'}=c_{2^{m}-p'}$ for every $p'\in\{0,\ldots,2^{m-1}-1\}$. Then, if $p$ is even, say $p=2p'$ with $p'\in\{0,\ldots,2^{m-1}-1\}$,
%$$
%c_{2^m+p} =c_{2^m+2p'}=2c_{2^{m-1}+p'}=2c_{2^{m}-p'} =c_{2^{m+1}-2p'}=c_{2^{m+1}-p}
%$$
%while if $p$ is odd, say $p=2p'+1$ with $p'\in\{0,\ldots,2^{m-1}-1\}$,
%\begin{align*}
% c_{2^m+p} & =c_{2^m+2p'+1}=c_{2^{m-1}+p'+1}+c_{2^{m-1}+p'}+1\\
%& =c_{2^{m}-p'-1}+c_{2^m-p'}+1 =c_{2^{m+1}-2p'-1}=c_{2^{m+1}-p}
%\end{align*}
\end{proof}

{Notice that, when $n\geq 4$, the bounds given in points (a), (b), and (c) in this corollary are stronger  than the upper bound $c_n\leq n-2$ that stems from  Corollary \ref{cor:autom}. Notice moreover that, depending on $n$, either $n/2$ or $2^{\lceil\log_2(n)\rceil}/3$ is a sharper strict bound for $c_n$ and, in general, they cannot be improved: for instance, when $n=11=2^3+3$, $c_n=5=(11-1)/2=\lfloor 2^4/3\rfloor$.}

\section{Minimal Colless trees}
We now turn our attention to the trees that achieve the minimum Colless index for their number of leaves, which we shall call henceforth \emph{minimal Colless} trees. While we have already seen in Theorem \ref{thm:minC} that, for every $n$, the maximally balanced tree $T_n^{\mathit{mb}}$ has minimum Colless index and in Corollary \ref{min_colless} that when $n$ is a power of 2 this is the only minimal Colless tree, for numbers $n$ of leaves that are not powers of 2 there may exist other minimal Colless trees in $\TT_n$. For instance, $c_6=2$ is reached at both trees depicted in Figure \ref{Fig_ExampleGFB}. Actually, as we shall see, for numbers of leaves $n$ that differ more than 1 from a power  of 2 there \emph{always} exist at least two minimal Colless trees (see Corollary \ref{cor:morethan1} below). So, the main goal of this section is to characterize all  minimal Colless trees and to provide an efficient way of  generating them for any given number $n$ of leaves as well as a recurrence to count them.

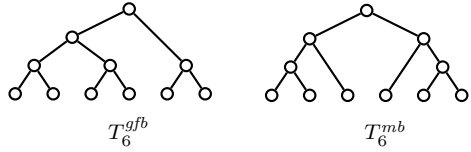
\begin{figure}[htbp]
\begin{center}
\begin{tikzpicture}[thick,>=stealth,scale=0.25]
\draw(0,0) node [trepp] (1) {};%  \etq 1
\draw(2,0) node [trepp] (2) {};%  \etq 2
\draw(4,0) node [trepp] (3) {};%  \etq 3
\draw(6,0) node [trepp] (4) {};%  \etq 4
\draw(8,0) node [trepp] (5) {}; % \etq 5
\draw(10,0) node [trepp] (6) {};%  \etq 6
\draw(1,1.5) node[trepp] (a) {};
\draw(5,1.5) node[trepp] (b) {};
\draw(3,3) node[trepp] (c) {};
\draw(9,1.5) node[trepp] (d) {};
\draw(6,4.5) node[trepp] (r) {};
\draw  (r)--(c);
\draw  (r)--(d);
\draw  (c)--(a);
\draw  (c)--(b);
\draw  (a)--(1);
\draw  (a)--(2);
\draw  (b)--(3);
\draw  (b)--(4);
\draw  (d)--(5);
\draw  (d)--(6);
\draw(6,-2) node {\footnotesize $T^{\mathit{gfb}}_6$};
\end{tikzpicture}
\qquad
\begin{tikzpicture}[thick,>=stealth,scale=0.25]
\draw(0,0) node [trepp] (1) {};%  \etq 1
\draw(2,0) node [trepp] (2) {};%  \etq 2
\draw(4,0) node [trepp] (3) {};%  \etq 3
\draw(6,0) node [trepp] (4) {};%  \etq 4
\draw(8,0) node [trepp] (5) {}; % \etq 5
\draw(10,0) node [trepp] (6) {};%  \etq 6
\draw(1,1.5) node[trepp] (a) {};
\draw(2,3) node[trepp] (b) {};
\draw(9,1.5) node[trepp] (c) {};
\draw(8,3) node[trepp] (d) {};
\draw(5,4.5) node[trepp] (r) {};
\draw  (r)--(b);
\draw  (r)--(d);
\draw  (b)--(a);
\draw  (b)--(3);
\draw  (a)--(1);
\draw  (a)--(2);
\draw  (d)--(4);
\draw  (d)--(c);
\draw  (c)--(5);
\draw  (c)--(6);
\draw(6,-2) node {\footnotesize $T^{\mathit{mb}}_6$};
\end{tikzpicture}
\end{center}	
\caption{\label{Fig_ExampleGFB} The GFB tree $T^{\mathit{gfb}}_6$ (cf. Subsection \ref{sec:GFB}) and the maximally balanced tree $T^{\mathit{mb}}_6$ with 6 leaves. Both trees have minimum Colless index in $\TT_6$, namely $c_6 = \mathcal{C}(T^{\mathit{gfb}}_6)=\mathcal{C}(T^{\mathit{mb}}_6)=2$, and they are the only trees in $\TT_6$ with Colless index 2.}	
\end{figure}

\subsection{Characterizing and generating minimal Colless trees}
Recall from Eqn.~\eqref{cn_min} that for $n \geq 2$
$$ 
c_n = \min\{c_{n_a} + c_{n_b} + n_a - n_b \, | \, n_a \geq n_b \geq 1, n_a + n_b=n\}. 
$$
To simplify the language, for every $n \geq 2$, let
$$
\begin{array}{rl}
QB(n)\coloneqq\big\{ (n_a,n_b)\in \NN^2\mid &  n_a\geq n_b\geq 1,\ n_a+n_b=n,\\ &ç
 c_{n_a}+c_{n_b}+n_a-n_b=c_n \big\}.
\end{array}
$$ 
Notice that $QB(n)\neq \emptyset$, because $(\ceil*{n/2},\floor*{n/2})\in QB(n)$ by  Corollary \ref{colless_minimum}.

The next proposition gives a characterization of the minimal Colless trees in terms of the sets $QB$ that will allow us to efficiently generate them.

{
\begin{proposition}\label{lem:charmin1}
Let $T=(T_a,T_b)\in \TT_n$, with  $T_a\in \TT_{n_a}$ and $T_b\in \TT_{n_b}$. The following three conditions are equivalent:
\begin{enumerate}[(a)]
    \item $T$ is a minimal Colless tree.
    \item $T_a$ and $T_b$ are minimal Colless trees and $(n_a,n_b)\in QB(n)$.
    \item $(\kappa_T(v_1),\kappa_T(v_2))\in QB(\kappa_T(v))$ for every $v\in \mathring{V}(T)$ with children $v_1,v_2$ so that $\kappa_T(v_1)\geq \kappa_T(v_2)$.
\end{enumerate}
\end{proposition}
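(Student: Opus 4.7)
The plan is to prove the cycle of implications $(a)\Rightarrow (c)\Rightarrow (b)\Rightarrow (a)$, although I would first establish the equivalence of $(a)$ and $(b)$ since this is essentially immediate from the tools already developed, and then bootstrap this to $(c)$ by a straightforward induction on the number of leaves.

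For $(a)\Leftrightarrow (b)$: by Lemma \ref{colless_sum} we have $\mathcal{C}(T)=\mathcal{C}(T_a)+\mathcal{C}(T_b)+n_a-n_b$. Combining this identity with Lemma \ref{max_subtrees} (which guarantees that $\mathcal{C}(T_a)=c_{n_a}$ and $\mathcal{C}(T_b)=c_{n_b}$ whenever $T$ is minimal) and the definition of $QB(n)$, the assertion $\mathcal{C}(T)=c_n$ is equivalent to saying that $\mathcal{C}(T_a)=c_{n_a}$, $\mathcal{C}(T_b)=c_{n_b}$, and $c_{n_a}+c_{n_b}+n_a-n_b=c_n$, which is exactly $(b)$. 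The non-trivial direction $(a)\Rightarrow(b)$ uses Lemma \ref{max_subtrees}, while $(b)\Rightarrow(a)$ is just direct substitution into Lemma \ref{colless_sum}.

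For $(a)\Rightarrow (c)$: apply Remark \ref{rem:inheritance}, which asserts that every rooted subtree $T_v$ of a minimal Colless tree $T$ is itself minimal Colless on $\TT_{\kappa_T(v)}$. Then for every internal node $v$ with children $v_1,v_2$, the decomposition $T_v=(T_{v_1},T_{v_2})$ exhibits $T_v$ as a minimal Colless tree, and applying the already-proven direction $(a)\Rightarrow (b)$ to $T_v$ yields $(\kappa_T(v_1),\kappa_T(v_2))\in QB(\kappa_T(v))$, which is exactly $(c)$.

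For $(c)\Rightarrow (b)$ (and hence $(c)\Rightarrow (a)$): proceed by induction on $n$. The base case $n=2$ is trivial since $\TT_2$ contains only one tree, which is the cherry and is minimal Colless. For the inductive step, assume $(c)$ holds for $T$. Condition $(c)$ applied at the root $\rho$ of $T$ gives $(n_a,n_b)\in QB(n)$. Moreover, condition $(c)$ restricted to the internal nodes of $T_a$ (respectively, $T_b$) is exactly condition $(c)$ for the tree $T_a$ (respectively, $T_b$), so by the inductive hypothesis combined with the already-established $(c)\Rightarrow (a)$ for smaller trees, both $T_a$ and $T_b$ are minimal Colless. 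This is precisely $(b)$, and the already-proven implication $(b)\Rightarrow (a)$ then closes the argument. I do not anticipate any substantive obstacle: the whole proof is a bookkeeping exercise once Lemmas \ref{colless_sum} and \ref{max_subtrees} and Remark \ref{rem:inheritance} are in place, the only mildly subtle point being to notice that condition $(c)$ is genuinely hereditary along rooted subtrees, so that the inductive step goes through cleanly.
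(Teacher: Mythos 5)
Your proof is correct and uses essentially the same ingredients and inductions as the paper: Lemma \ref{colless_sum} plus Lemma \ref{max_subtrees} for $(a)\Leftrightarrow(b)$, the heredity of minimality along rooted subtrees for reaching $(c)$, and an induction on the number of leaves for the converse. The only difference is organizational --- you close the cycle as $(c)\Rightarrow(b)\Rightarrow(a)\Rightarrow(c)$ and invoke Remark \ref{rem:inheritance} where the paper re-runs the induction on the depth of $v$ inside its proof of $(b)\Rightarrow(c)$ --- which is a harmless, slightly tidier repackaging of the same argument.
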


\begin{proof}
(a)$\Rightarrow$(b): Let $T$ be a minimal Colless tree. Then, $T_a$ and $T_b$ are also minimal Colless by Lemma \ref{max_subtrees} and, by Lemma \ref{colless_sum},
$$
c_n=\mathcal{C}(T)=c_{n_a}+c_{n_b}+n_a-n_b,
$$
which implies that $(n_a,n_b)\in QB(n)$.

\noindent (b)$\Rightarrow$(c): We shall prove by induction on $m$ the following assertion:
\begin{quote}
If $T=(T_a,T_b)\in \TT_n$ is such that $T_a$ and $T_b$ are minimal Colless  trees and $(n_a,n_b)\in QB(n)$ and if $v\in \mathring{V}(T)$, with children $v_1,v_2$ so that $\kappa_T(v_1)\geq \kappa_T(v_2)$, has depth $\delta_T(v)=m$, then $(\kappa_T(v_1),\kappa_T(v_2))\in QB(\kappa_T(v))$. 
\end{quote}

The case when $m=0$ holds because if $\delta_T(v)=0$, then $v$ is the root of $T$ and
$$
(\kappa_T(v_1),\kappa_T(v_2))=(n_a,n_b)\in QB(n)=QB(\kappa_T(v))
$$ 
by assumption. Assume now that the assertion is true for $m-1$ and let 
$v$ be an internal node of depth $m>0$ of a tree $T$ such that $T_a$ and $T_b$ are minimal Colless  and $(n_a,n_b)\in QB(n)$. Then $v$ will be an internal node of either $T_a$ or $T_b$; without any loss of generality, we shall assume that $v\in \mathring{V}(T_a)$. Let $T_a=(T_{a,1},T_{a,2})$ be the decomposition of $T_a$ into its maximal pending subtrees, with $T_{a,i}\in \TT_{n_{a,i}}$, $i=1,2$. Then, since $T_a$ is minimal Colless, by the implication (a)$\Rightarrow$(b), which we have already proved, $T_a$ satisfies that $T_{a,1}$ and $T_{a,2}$ are minimal Colless and $(n_{a,1},n_{a,2})\in QB(n_a)$. Since $\delta_{T_a}(v)=m-1$, by the inductive hypothesis we conclude that $$
(\kappa_T(v_1),\kappa_T(v_2))=(\kappa_{T_a}(v_1),\kappa_{T_a}(v_2))\in QB(\kappa_{T_a}(v))=QB(\kappa_{T}(v)),
$$
as we wanted to prove.

\noindent (c)$\Rightarrow$(a):  We shall prove that if $T$ satisfies that 
 if $c_{\kappa_T(v_1)}+c_{\kappa_T(v_2)}+\kappa_T(v_1)-\kappa_T(v_2)= c_{\kappa_T(v)}$ 
for every $v\in \mathring{V}(T)$ with children $v_1,v_2$ so that $\kappa_T(v_1)\geq \kappa_T(v_2)$, then $\mathcal{C}(T)=c_n$,  by induction on the number $n$ of leaves in $T$. The case when $n=1$ is obvious, because $\TT_1=\{T^{\mathit{mb}}_1\}$. Assume now that this implication is true for every tree in $\TT_{n'}$ with $n'<n$, and let $T\in \TT_n$
be such that, for every $v\in \mathring{V}(T)$, 
$$c_{\kappa_T(v_1)}+c_{\kappa_T(v_2)}+\kappa_T(v_1)-\kappa_T(v_2)=c_{\kappa_T(v)},$$ where $v_1,v_2$ stand for the children of $v$ so that $\kappa_T(v_1)\geq \kappa_T(v_2)$. 
Let  $T=(T_a,T_b)$ be the decomposition of $T$ into its maximal pending subtrees, with
$T_a\in \TT_{n_a}$ and $T_b\in \TT_{n_b}$ so that $n_a\geq n_b$. 
Then, for every $v\in \mathring{V}(T_{a})$, with children $v_1,v_2$ so that $\kappa_T(v_1)\geq \kappa_T(v_2)$, 
$$
\begin{array}{l}
c_{\kappa_{T_{a}}(v_1)}+c_{\kappa_{T_{a}}(v_2)}+\kappa_{T_{a}}(v_1)-\kappa_{T_{a}}(v_2)\\
\quad =c_{\kappa_{T}(v_1)}+c_{\kappa_{T}(v_2)}+\kappa_{T}(v_1)-\kappa_{T}(v_2)=c_{\kappa_{T}(v)}=c_{\kappa_{T_{a}}(v)}.
\end{array}
$$
This implies, by the induction hypothesis, that $\mathcal{C}(T_{a})=c_{\kappa_T(a)}$. 
By symmetry, we also have that $\mathcal{C}(T_{b})=c_{\kappa_T(b)}$.
Finally,  
\begin{align*}
\mathcal{C}(T) & =\mathcal{C}(T_{a})+\mathcal{C}(T_{b})+n_a-n_b\\ & =
c_{\kappa_T(a)}+c_{\kappa_T(b)}+\kappa_T(a)-\kappa_T(b) =c_{\kappa_T(\rho)}=c_n
\end{align*}
as we wanted to prove. \qed
\end{proof}

}
Next result provides  a characterization of the  pairs $(n_a,n_b)\in QB(n)$, for every $n\geq 2$. 

\begin{proposition}\label{prop:eqC}
For every $n\geq 2$ and for every $n_a,n_b\in \NN_{\geq 1}$ such that $n_a\geq n_b$ and $n_a+n_b=n$:
\begin{enumerate}[(1)]

\item If $n_a=n_b=n/2$, then $(n_a,n_b)\in QB(n)$ always.

\item If $n_a>n_b$, then $(n_a,n_b)\in QB(n)$ if, and only if, one of the following three conditions is satisfied:
\begin{itemize}
\item There exist $k\in \NN$ and $p\in \NN_{\geq 1}$ such that $n=2^k(2p+1)$, $n_a=2^k(p+1)$ and $n_b=2^kp$.

\item There exist $k\in \NN$, $l\in \NN_{\geq 2}$, $p\in \NN_{\geq 1}$, and $t\in \NN$,  $0\leq t<2^{l-2}$, such that $n=2^k(2^l(2p+1)+2t+1)$, $n_a=2^{k+l}(p+1)$, and $n_b=2^{k}(2^lp+2t+1)$. 

\item There exist $k\in \NN$, $l\in \NN_{\geq 2}$, $p\in \NN_{\geq 1}$, and $t\in \NN$,  $0\leq t<2^{l-2}$, such that $n=2^k(2^l(2p+1)-(2t+1))$, $n_a=2^k(2^l(p+1)-(2t+1))$, and $n_b=2^{k+l}p$.
\end{itemize}
\end{enumerate}
\end{proposition}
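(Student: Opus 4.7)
The plan is to prove (1) and (2) separately, with most of the work going into (2). For (1), when $n$ is even and $n_a = n_b = n/2$, Corollary \ref{colless_minimum} gives $c_{n_a} + c_{n_b} + n_a - n_b = 2c_{n/2} + 0 = c_n$, so $(n/2, n/2) \in QB(n)$ immediately.

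For the sufficiency direction of (2), my strategy is direct algebraic verification using the recurrences $c_{2m} = 2c_m$ and $c_{2m+1} = c_m + c_{m+1} + 1$ from Corollary \ref{colless_minimum}. Form A is immediate:
\[
c_{2^k(p+1)} + c_{2^k p} + 2^k = 2^k(c_{p+1} + c_p + 1) = 2^k c_{2p+1} = c_{2^k(2p+1)}.
\]
For Forms B and C, after stripping the common factor $2^k$ (legitimate via $c_{2m} = 2c_m$), it suffices to check the case $k=0$, where the identity to be proved becomes
\[
2^l c_{p+1} + c_{2^l p + 2t+1} + (2^l - 2t - 1) = c_{2^l(2p+1) \pm (2t+1)},
\]
which I would establish by induction on $l$, using $l=2$ (where $t=0$ is forced) as a base case and unfolding the recurrences a few times in the inductive step.

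The necessity direction of (2) is the main obstacle, and I would prove it by strong induction on $n$. Let $(n_a, n_b) \in QB(n)$ with $n_a > n_b$, and split into cases based on the parities of $n_a, n_b$. If both are even, then $n$ is even, and writing $c_n = 2c_{n/2}$ and $c_{n_a} + c_{n_b} + n_a - n_b = 2(c_{n_a/2} + c_{n_b/2} + n_a/2 - n_b/2)$ shows that $(n_a/2, n_b/2) \in QB(n/2)$ with $n_a/2 > n_b/2$; the induction hypothesis then places $(n_a/2, n_b/2)$ in one of the three prescribed forms for $n/2$, and multiplying through by $2$ (i.e., replacing $k$ by $k+1$) produces the corresponding form for $(n_a, n_b)$. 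If both are odd, Remark \ref{rem:postkey}(b), applied with $m = n_b$ and $s = n_a - n_b \geq 2$, yields $c_{n_a} + c_{n_b} + n_a - n_b > c_n$, contradicting $(n_a, n_b) \in QB(n)$; so this case is vacuous. The remaining case, where $n_a$ and $n_b$ have opposite parities (forcing $n$ odd), is where the real work lies.

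In the mixed-parity case, I would use $c_n = c_{(n+1)/2} + c_{(n-1)/2} + 1$ together with the expansion of $c_{n_a}$ and $c_{n_b}$ via the recurrences to transform the equation $c_{n_a} + c_{n_b} + n_a - n_b = c_n$ into a relation among half-values, then iterate the halving, tracking the parities encountered at each step. In the subcase $n_a$ even, $n_b$ odd, the largest $l$ with $2^l \mid n_a$ determines Form B with $k = 0$, and writing $n_a = 2^l(p+1)$ the residual equation pins down $p$ and $t$ uniquely; the subcase $n_a$ odd, $n_b$ even is dual and yields Form C. The main obstacle is ruling out all other candidate pairs: for any split not of these three forms, some step of the unfolding must encounter a configuration covered by Remark \ref{rem:postkey}, producing a strict inequality and hence the desired contradiction. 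An alternative route would bypass the induction by using the closed formula of Theorem \ref{thm_binaryExpansion} and translating $c_{n_a} + c_{n_b} + n_a - n_b = c_n$ into a combinatorial identity on the binary expansions of $n_a, n_b, n$, from which the three-form structure would emerge by analysing the carry/cancellation patterns.
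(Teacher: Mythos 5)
Your overall architecture — case (1) via Corollary \ref{colless_minimum}, sufficiency by direct unfolding of the recurrences, necessity by parity analysis with halving — is essentially the paper's (its proof, in Appendix A.1, is organized as three mutually dependent lemmas indexed by the parity of $n_a-n_b$ and, when that difference is odd, of $n_b$). Your dispatch of the both-even case (divide by $2$, apply the induction hypothesis, shift $k\mapsto k+1$) and of the both-odd case (Remark \ref{rem:postkey}(b) with $m=n_b$ odd and $s=n_a-n_b\geq 2$ even gives a strict inequality) is correct, and your verification of sufficiency for Form A checks out.

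However, the mixed-parity necessity case, which you yourself identify as ``where the real work lies,'' is left as an unexecuted program, and that is where essentially all the content of the proposition sits (it occupies the two long Lemmas \ref{lem:eqodd2} and \ref{lem:eqodd3} in the paper). Two ideas are missing from your sketch. First, the mechanism that makes the halving productive: writing $s=n_a-n_b$ odd, the identity $c_{2m+s}+c_{2m}+s=c_{4m+s}$ does not merely become ``a relation among half-values'' — it splits into a \emph{conjunction of two equalities} of the same shape, because $c_{4m+s}=c_{2m+\lceil s/2\rceil}+c_{2m+\lfloor s/2\rfloor}+1$, the left-hand side regroups as the sum of two expressions each of which dominates the corresponding term by Inequality (\ref{cn_leq}), and a sum of inequalities that is an equality forces term-wise equality. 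Without this forced splitting, ``iterating the halving'' does not close. Second, the two sub-equations produced by one halving step generally land in \emph{different} parity classes (one has even difference, one odd), so the induction must be set up as a mutual recursion among the even-difference case (your both-even reduction, which is what produces the divisibility conditions $2^l\mid n_a$ or $2^l\mid n_b$), the odd-difference/even-$n_b$ case, and the odd-difference/odd-$n_b$ case, with a further split on the parity of the auxiliary parameter $t$ at each step. In particular, your hope that every non-conforming pair is directly eliminated by ``a configuration covered by Remark \ref{rem:postkey}'' is too optimistic: the elimination happens indirectly, through the divisibility constraints propagated by the even-difference lemma, which only bottoms out at Remark \ref{rem:postkey} after several reductions. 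As it stands the proposal names the obstacle but does not overcome it, so the necessity direction of (2) remains unproved.
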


{Our proof of this proposition is by induction on  $n_a-n_b$ and discussing three cases that depend on whether this difference is even or odd and, in this last case, on whether $n_b$ is even or odd. Since the resulting proof is long, in order not to lose the thread of the manuscript we postpone it until Appendix A.1.}

We now translate this proposition into an explicit and non-redundant description of $QB(n)$ from the binary expansion of $n$.

\begin{proposition}\label{cor:QB}
For every $n\geq 2$, let $k\in \NN$ be the exponent of the largest power of 2 that divides $n$, let $n_0=n/2^k$, and let $n_0=\sum_{i=1}^\ell 2^{m_i}$, with $\ell\geq 1$ and  $m_1>\cdots>m_{\ell-1}>m_\ell=0$, be the binary expansion of $n_0$. Then:
\begin{enumerate}[(a)]
\item If $\ell=1$, i.e. if $n=2^k$, then $QB(n)=\{(n/2,n/2)\}$.

\item If $\ell>1$:
\begin{itemize}
\item[(b.1)] $QB(n)$ always contains the pair
$$
\Big(2^k\Big(\sum_{i=1}^{\ell-1} 2^{m_i-1}+1\Big), 2^k\sum_{i=1}^{\ell-1} 2^{m_i-1}\Big).
$$

\item[(b.2)] For every $j=2,\ldots, \ell-1$ such that $m_j>m_{j+1}+1$, $QB(n)$ contains the pair 
$$
\Big(2^{k}\Big(\sum_{i=1}^{j-1} 2^{m_i-1}+2^{m_j}\Big),n-2^{k}\Big(\sum_{i=1}^{j-1} 2^{m_i-1}+2^{m_j}\Big)\Big).
$$

\item[(b.3)] For every $j=2,\ldots,\ell-1$ such that $m_j<m_{j-1}-1$, 
$QB(n)$ contains the pair 
$$
\Big(n-2^{k}\sum_{i=1}^{j-1} 2^{m_i-1},2^{k}\sum_{i=1}^{j-1} 2^{m_i-1}\Big).
$$

\item[(b.4)] If $k\geq 1$, then $QB(n)$ contains the pair $(n/2,n/2)$.
\end{itemize}\smallskip

Moreover, the pairs  described in  (b.1) to (b.4) are pairwise different and $QB(n)$ contains no other element.
\end{enumerate}
\end{proposition}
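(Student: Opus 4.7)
The strategy is to apply Proposition \ref{prop:eqC} directly and to translate each of its cases into the language of the binary expansion of the odd part $n_0=n/2^k$. The intended correspondence is that item (b.1) matches Proposition \ref{prop:eqC}(2)(i); items (b.2) and (b.3) enumerate the pairs arising from Proposition \ref{prop:eqC}(2)(ii) and (iii), respectively, parametrized by a ``gap'' index $j$ in the binary expansion of $n_0$; and item (b.4) comes from Proposition \ref{prop:eqC}(1).

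For part (a), where $\ell=1$ and $n_0=1$, I would first rule out all three disjuncts of Proposition \ref{prop:eqC}(2): each of them forces $n_0$ to be at least $3$, $13$, or $11$ respectively, with the minima attained at $p=1$ (and, where applicable, $l=2$, $t=0$). Hence none of them admits $n_0=1$, so the only contribution to $QB(n)$ comes from Proposition \ref{prop:eqC}(1). Since $n\geq 2$ forces $n=2^k$ with $k\geq 1$, we conclude $QB(n)=\{(n/2,n/2)\}$.

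For part (b) with $\ell\geq 2$, disjunct (i) always applies with the given $k$ and $p=(n_0-1)/2=\sum_{i=1}^{\ell-1}2^{m_i-1}$ (because $m_\ell=0$), and a direct substitution into $n_a=2^k(p+1)$, $n_b=2^k p$ yields (b.1). The heart of the argument is matching (ii) with (b.2): the admissible quadruples $(k,l,p,t)$ of Proposition \ref{prop:eqC}(2)(ii) are in bijection with indices $j\in\{2,\ldots,\ell-1\}$ satisfying $m_j>m_{j+1}+1$, via $l=m_j$. Indeed, in $n_0=2^l(2p+1)+(2t+1)$ with $2t+1<2^{l-1}$, the bit of $n_0$ at position $l$ must be set (contributed by the summand $2^l$ of $2^l(2p+1)$), the bit at position $l-1$ must be zero, and there must be bits above $l$ since $p\geq 1$; this forces $l=m_j$ for some $j$ with $m_{j+1}\leq l-2$, and the constraints $l\geq 2$ and $p\geq 1$ translate into $2\leq j\leq \ell-1$. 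Substituting $2p+1=1+\sum_{i=1}^{j-1}2^{m_i-l}$ into $n_a=2^{k+l}(p+1)$ recovers the formula in (b.2). Disjunct (iii) is treated symmetrically with $l=m_j+1$, indexed by $j\in\{2,\ldots,\ell-1\}$ satisfying $m_j<m_{j-1}-1$, yielding (b.3); and Proposition \ref{prop:eqC}(1) provides $(n/2,n/2)$ precisely when $k\geq 1$, i.e.~(b.4).

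Distinctness is then straightforward: (b.4) is separated by $n_a=n_b$ while all the others have $n_a>n_b$; (b.1) is distinguished by $n_a-n_b=2^k$, whereas any pair from (b.2) or (b.3) satisfies $n_a-n_b=2^k(2^l-(2t+1))\geq 3\cdot 2^k$; pairs in (b.2) have the highest power of $2$ dividing $n_a$ strictly larger than the one dividing $n$ (while the one dividing $n_b$ coincides with $2^k$), whereas (b.3) reverses the roles of $n_a$ and $n_b$, and both cannot hold simultaneously because the smaller of the two highest powers of $2$ dividing $n_a$ and $n_b$ equals $2^k$; finally, within each of (b.2) and (b.3), different indices $j$ give different values of $l=m_j$ (respectively $l=m_j+1$), hence different pairs. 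The main technical obstacle will be the combinatorial bookkeeping in the bijection above, and in particular verifying that the parameter $l$ of Proposition \ref{prop:eqC}(2)(ii) (resp. (iii)) is forced to equal $m_j$ (resp. $m_j+1$) for a unique gap index $j$ in the binary expansion of $n_0$.
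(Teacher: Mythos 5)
Your proposal is correct and follows essentially the same route as the paper's proof: both translate each disjunct of Proposition~\ref{prop:eqC} into the binary expansion of $n_0$, identify the parameter $l$ with a ``gap'' in that expansion (with $l=m_j$ for disjunct (ii) and $l=m_j+1$ for disjunct (iii)), and then verify pairwise distinctness of the resulting pairs. The only step you leave implicit is why, within class (b.2) (resp.\ (b.3)), different gap indices $j$ cannot yield the same pair; this is immediate from your own computation $n_a-n_b=2^k(2^l-(2t+1))$, since $2t+1<2^{l-1}$ gives $2^{l-1}<(n_a-n_b)/2^k<2^l$, so the pair determines $l$ (the paper instead argues via monotonicity of $n_a$, resp.\ $n_b$, as a function of $j$).
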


\begin{proof}
Assertion (a) is a consequence of {Corollary \ref{colless_minimum} and} the fact that if $QB(n)$ contains some $(n_a,n_b)$ with $n_a>n_b$, then by (2) in the last proposition $n$ cannot be a power of 2. So, assume henceforth that $\ell>1$. Let now $(n_a,n_b)\in \NN^2$ be such that $n=n_a+n_b$ and $1\leq n_b< n_a$. Then, by Proposition \ref{prop:eqC}, $(n_a,n_b)\in QB(n)$ if, and only if,  one of the following three conditions is satisfied:
\begin{itemize}
\item[(b.1)] There exist $k\in \NN$ and $p\in \NN_{\geq 1}$ such that $n=2^k(2p+1)$, and hence $n_0=2p+1$, and $n_a=2^k(p+1)$.
In this case 
$$
p=\frac{n_0-1}{2}=\sum_{i=1}^{\ell-1} 2^{m_i-1}
$$
 and this contributes to $QB(n)$ the pair
$(n_a,n_b)$ with
$$
n_a=2^{k}\Big(\sum_{i=1}^{\ell-1} 2^{m_i-1}+1\Big),\quad n_b=2^{k}\sum_{i=1}^{\ell-1} 2^{m_i-1}.
$$

\item[(b.2)] There exist $k\in \NN$, $l\in \NN_{\geq 2}$, $p\in \NN_{\geq 1}$, and $t\in \NN$,  $0\leq t<2^{l-2}$, such that 
$n=2^k(2^{l+1}p+2^l+2t+1)$, and hence $n_0=2^{l+1}p+2^l+2t+1$, and $n_a=2^{k+l}(p+1)$. Now, if $t<2^{l-2}$ and $p\geq 1$, then $2t+1<2^{l-1}$ and $2^{l+1}p\geq 2^{l+1}$. Therefore, the equality $$2^{l+1}p+2^l+2t+1=\sum_{i=1}^\ell 2^{m_i}$$ holds for some $p\geq 1$ and $t<2^{l-2}$ if, and only if,  $m_j=l\geq 2$ and $m_{j+1}<l-1$ for some $j=2,\ldots,\ell-1$, in which case  $$
p=\frac{\sum_{i=1}^{j-1} 2^{m_i}}{2^{m_j+1}}.
$$ 
This contributes to $QB(n)$ the pairs $(n_a,n_b)$ of the form 
\begin{equation}
\begin{array}{l}
\displaystyle n_a=2^{k+m_j}\Big(\frac{\sum_{i=1}^{j-1} 2^{m_i}}{2^{m_j+1}}+1\Big)=2^{k}\Big(\sum_{i=1}^{j-1} 2^{m_i-1}+2^{m_j}\Big),\\
\displaystyle  n_b=n-2^{k}\Big(\sum_{i=1}^{j-1} 2^{m_i-1}+2^{m_j}\Big),
\end{array}
\label{eq:QB.b.2}
\end{equation}
with $j=2,\ldots,\ell-1$ and $m_j\geq 2$ such that $m_{j+1}<m_j-1$. All these pairs are different, because
$\sum_{i=1}^{h-1} 2^{m_i-1}+2^{m_h}$ is {decreasing} on $h$ (because $m_{h+1}< m_h$) and
\begin{align*}
\sum_{i=1}^{h-1} 2^{m_i-1}+2^{m_h} & =\sum_{i=1}^{h} 2^{m_i-1}+2^{m_{h+1}}  \Longleftrightarrow 
 2^{m_h}=2^{m_h-1}+2^{m_{h+1}} \\
&\qquad  \Longleftrightarrow 2^{m_h-1}=2^{m_{h+1}}\Longleftrightarrow m_h=m_{h+1}+1.
\end{align*}

\item[(b.3)] There exist $k\in \NN$, $l\in \NN_{\geq 2}$, $p\in \NN_{\geq 1}$, and $t\in \NN$,  $0\leq t<2^{l-2}$ such that $n=2^k(2^{l+1}p+2^l-(2t+1))$, and hence $n_0=2^{l+1}p+2^l-(2t+1)$, and  $n_b=2^{k+l}p$. Since $t<2^{l-2}$, we have that $n_0=2^{l+1}p+2^{l-1}+2t_0+1$ with $2t_0+1<2^{l-1}$. Then, the equality $$2^{l+1}p+2^{l-1}+2t_0+1=\sum_{i=1}^\ell 2^{m_i}$$ holds for some $p\geq 1$ and $t_0<2^{l-2}$  if, and only if,  $l-1=m_j$ for some $j=2,\ldots,\ell-1$ such that $m_{j-1}\geq l+1=m_j+2$, and then  $$p=\frac{\sum_{i=1}^{j-1} 2^{m_i}}{2^{m_j+2}}.$$ This contributes to $QB(n)$  all pairs $(n_a,n_b)$ of the form 
$$
n_b=2^{k+m_j+1}\Big(\frac{\sum_{i=1}^{j-1} 2^{m_i}}{2^{m_j+2}}\Big)=2^k\sum_{i=1}^{j-1} 2^{m_i-1},\quad n_a=n-2^k\sum_{i=1}^{j-1} 2^{m_i-1},
$$
with $j=2,\ldots,\ell-1$ such that $m_j<m_{j-1}-1$, belong to $QB(n)$, and they are pairwise different because $n_b$ is strictly increasing on $j$. 
\end{itemize}
This gives all pairs $(n_a,n_b)$ in $QB(n)$ with $n_a>n_b$. If $n$ is even, we must add moreover to $QB(n)$ the pair $(n/2,n/2)$ and this completes the set of pairs belonging to $QB(n)$. To finish the proof of the statement, we verify that these pairs are pairwise different:
{\begin{itemize}
    \item Along our construction we have already checked that the pairs of the form (b.2), as well as those of the form (b.3),  are pairwise different. 
    \item The pairs of the form (b.2) are different from the pair (b.1) because their entry $n_a$ are strictly larger than the entry $n_a$ in (b.1). Indeed, since the~index $j = 2,\ldots,\ell-1$ defining a 
     pair of the form (b.2) satisfies that $m_j-1>m_{j+1}$, we have that
    \begin{align*}
    &\sum_{i=1}^{j-1} 2^{m_i-1}+2^{m_j} =\sum_{i=1}^{j} 2^{m_i-1}+2^{m_j-1}>\sum_{i=1}^{j} 2^{m_i-1}+2^{m_{j+1}}\\
    & \qquad =\sum_{i=1}^{j} 2^{m_i-1}+\sum_{s=0}^{m_{j+1}-1} 2^s+1\\
    &\qquad 
    \geq \sum_{i=1}^{j} 2^{m_i-1}+\sum_{i=j+1}^{\ell-1} 2^{m_i-1}+1=
    \sum_{i=1}^{\ell-1} 2^{m_i-1}+1.
    \end{align*}
%which is clearly equivalent to the fact that the entry $n_a$ of the pair of the form (b.2) is strictly larger than the entry $n_a$ in the pair (b.1).

\item The pairs of the form (b.3)  are  different from the pair (b.1) because their entry $n_b$ are strictly smaller than the entry $n_b$ in (b.1), a fact that is a direct consequence of their form and the assumption $j\leq \ell-1$  in (b.3).

\item If the pair 
$(n/2,n/2)$ is added to $QB(n)$, it is not of the form (b.1) to (b.3), because all these pairs have both entries divisible by $2^k$, while  the maximum power of 2 that divides $n/2$ is $2^{k-1}$. 
\item Finally, if $(n_a,n_b)$ is a pair of the form (b.2), then $n_a/2^k$ is even and $n_b/2^k$ is odd, while   if $(n_a,n_b)$ is a pair of the form (b.3), then $n_a/2^k$ is odd and $n_b/2^k$ is even. Therefore, no pair can simultaneously be of the form (b.2) and (b.3). \qed
\end{itemize}}
\end{proof}

\begin{example}
Let us find $QB(214)$. Since $214=2(2^6+2^5+2^3+2+1)$, with the notations of the last corollary we have that $k=1$, $\ell=5$, $m_1=6$, $m_2=5$, $m_3=3$, $m_4=1$, and $m_5=0$. Then:
\begin{itemize}
\item[(b.1)]  The pair of this type  in $QB(214)$ is 
$\big(2^k(\sum_{i=1}^4 2^{m_i-1}+1),
2^k\sum_{i=1}^4 2^{m_i-1}\big)=(108,106)$.

\item[(b.2)] The indices $j\in \{2,3,4\}$ such that $m_j>m_{j+1}+1$ are 2 and 3. Therefore, the pairs of this type  in $QB(214)$ are:
\begin{itemize}
\item For $j=2$, $\big(2^{k}(2^{m_1-1}+2^{m_2}),n-2^{k}(2^{m_1-1}+2^{m_2})\big)=(128,86)$.

\item For $j=3$, $\big(2^{k}(2^{m_1-1}+2^{m_2-1}+2^{m_3}),n-2^{k}(2^{m_1-1}+2^{m_2-1}+2^{m_3})\big)=(112,102)$.

\end{itemize}

\item[(b.3)] The indices $j\in \{2,3,4\}$ such $m_j<m_{j-1}-1$ are 3 and 4. Therefore, the pairs of this type  in $QB(214)$ are:
\begin{itemize}
\item For $j=3$, $\big(n-2^k(2^{m_1-1}+2^{m_2-1}),2^k(2^{m_1-1}+2^{m_2-1})\big)=(118,96)$.

\item For $j=4$, $\big(n-2^k(2^{m_1-1}+2^{m_2-1}+2^{m_3-1}),2^k(2^{m_1-1}+2^{m_2-1}+2^{m_3-1})\big)=(110,104)$.
\end{itemize}

\item[(b.4)]  Since $214=2\cdot 107$ is even, $QB(214)$ contains the pair $(107,107)$.

\end{itemize}
Therefore
$$
QB(214)=\big\{(107,107),(108,106),(110,104),(112,102),(118,96),(128,86)\big\}.
$$
\end{example}

\begin{corollary}
For every $n\geq 2$, the cardinality of $QB(n)$ is at most $ \floor*{\log_2(n)}$.
\end{corollary}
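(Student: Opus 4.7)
The natural approach is to apply the explicit description of $QB(n)$ in Proposition~\ref{cor:QB} and count the elements contributed by each of (b.1)--(b.4). I would write $n = 2^k n_0$ with $n_0$ odd and $n_0 = \sum_{i=1}^{\ell} 2^{m_i}$, with $m_1 > \cdots > m_\ell = 0$. Then $\lfloor \log_2(n) \rfloor = k + m_1$, which is the target bound. The case $\ell = 1$ (so $n = 2^k$ with $k \geq 1$, since $n \geq 2$) is immediate from (a): $|QB(n)| = 1 \leq k$.

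For $\ell \geq 2$, the count from Proposition~\ref{cor:QB} reads
$$|QB(n)| = 1 + [k \geq 1] + N_2 + N_3,$$
where $N_2$ and $N_3$ are the numbers of pairs of types (b.2) and (b.3) respectively, and $[k\geq 1]$ denotes $1$ if $k\geq 1$ and $0$ otherwise. Both $N_2$ and $N_3$ are indexed by $j \in \{2, \ldots, \ell-1\}$, so trivially $N_2, N_3 \leq \ell - 2$. On the other hand, setting $e_i = m_i - m_{i+1} \geq 1$ for $i=1,\ldots,\ell-1$, each pair of type (b.2) or (b.3) corresponds to an index $i$ with $e_i \geq 2$; since $\sum_{i=1}^{\ell-1} e_i = m_1$ and each $e_i \geq 1$, the number of such indices is at most $m_1 - (\ell-1)$, yielding $N_2, N_3 \leq m_1 - \ell + 1$.

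The key step is then to combine these bounds \emph{crosswise}: applying $N_2 \leq \ell - 2$ to one summand and $N_3 \leq m_1 - \ell + 1$ to the other gives
$$N_2 + N_3 \leq (\ell - 2) + (m_1 - \ell + 1) = m_1 - 1.$$
Hence $|QB(n)| \leq m_1 + [k \geq 1] \leq m_1 + k = \lfloor \log_2(n) \rfloor$, as required. The main obstacle is recognizing the need for this crosswise combination: either single bound alone ($N_2+N_3\leq 2(\ell-2)$ or $N_2+N_3\leq 2(m_1 - \ell + 1)$) grows too fast with $\ell$ or with $m_1$ respectively, and only the mixed version cancels the $\ell$-dependence to leave the desired $m_1 - 1$.
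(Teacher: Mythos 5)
Your proof is correct, and while it starts from the same place as the paper's (the explicit, non-redundant description of $QB(n)$ in Proposition~\ref{cor:QB}, which both arguments must use to count the pairs of types (b.1)--(b.4)), the final counting is organized quite differently. The paper translates the types (b.2) and (b.3) into maximal runs of zeroes in the binary representation $n_{(2)}$, writes $|QB(n)|=2M_0(n)+1-|\{\text{forbidden runs}\}|$, and then runs a pigeonhole argument on $M_0(n)$ with a case split on the parity of $\lfloor\log_2(n)\rfloor$, checking in each borderline case that enough ``forbidden'' runs must exist. You instead work with the gaps $e_i=m_i-m_{i+1}$ between consecutive exponents of the odd part $n_0$, observe that $N_2$ and $N_3$ each count (over slightly different index ranges) the gaps with $e_i\geq 2$, and exploit the two independent bounds $N_2,N_3\leq \ell-2$ and $N_2,N_3\leq m_1-\ell+1$ (the latter from $\sum_i e_i=m_1$ with all $e_i\geq 1$), combined crosswise to get $N_2+N_3\leq m_1-1$ with no parity analysis at all. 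Your identification of the (b.2) and (b.3) conditions with $e_j\geq 2$ and $e_{j-1}\geq 2$ is accurate, the inequality $m_1\geq \ell-1$ guarantees both bounds are nonnegative, and the final step $1+[k\geq 1]+(m_1-1)\leq m_1+k=\lfloor\log_2(n)\rfloor$ closes the argument. What your route buys is a shorter, case-free proof; what the paper's route buys is the intermediate identity for $|QB(n)|$ in terms of zero-runs of $n_{(2)}$, which is of some independent descriptive interest.
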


\begin{proof}
Let $n_{(2)}$ denote the binary representation of $n$. If $n$ is a power of 2, then $|QB(n)|=1\leq \floor*{\log_2(n)}$. Assume henceforth that $n$ is not a power of 2. {In this case, by construction, the number of pairs of type (b.2) in $QB(n)$ is the number of maximal sequences of zeroes in $n_{(2)}$ that do not start immediately after the leading 1 or that do not end in the units position;  the number of pairs of type (b.3) in $QB(n)$ is the number of maximal sequences of zeroes in $n_{(2)}$ that do not end immediately before the last 1 or  in the units position;}  there is one pair of type (b.4) in $QB(n)$ if $n_{(2)}$ contains a sequence of zeroes ending in the units position; and $QB(n)$ always contains a pair of the form (b.1). 
So, if we denote by $M_0(n)$ the number of maximal sequences of zeroes in $n_{(2)}$,  to compute the cardinality $|QB(n)|$:
\begin{itemize}
\item We count twice the number of  maximal sequences of zeroes in $n_{(2)}$ plus 1, $2M_0(n)+1$
\item We subtract 1 if $n_{(2)}$ contains a maximal sequence of zeroes  starting immediately after the leading 1
\item We subtract 1 if $n_{(2)}$ contains a maximal sequence of zeroes  ending immediately before the last 1
\item We subtract 2 and we add 1 (i.e. we subtract 1)  if $n_{(2)}$ contains a maximal sequence of zeroes  ending in the units position
\end{itemize}
For simplicity, we call any maximal sequence of zeroes in $n_{(2)}$ that starts immediately after the leading 1 or ends immediately before the last 1 or in the units position \emph{forbidden}. Using this notation we have
\begin{equation}
|QB(n)|\!=\! 2M_0(n)\!+\!1\! -\!\big|\{\mbox{forbidden maximal sequences of 0s in $n_{(2)}$}\}\big|.
\label{eqn:M0}
\end{equation}
In the subtraction in this formula we count each forbidden maximal sequence of zeroes as many times as it satisfies a ``forbidden'' property. So, a maximal sequence of zeroes starting immediately after the leading 1 and  ending immediately before the last 1 or in the units position subtracts 2. 

Now, on the one hand, if $\floor*{\log_2(n)}$ is an even number, by the pigeonhole principle we have that $M_0(n)\leq \floor*{\log_2(n)}/2$. But if $n_{(2)}$ does not contain any forbidden maximal sequence of zeroes, then $n_{(2)}$ starts with $11$ and ends with $11$ and the number of maximal sequences of zeroes in such an $n_{(2)}$ is at most $\floor*{\log_2(n)}/2-1$. So, if $M_0(n)=\floor*{\log_2(n)}/2$, then $n_{(2)}$ contains some forbidden maximal sequence of zeroes and then
$|QB(n)|\leq 2M_0(n)=\floor*{\log_2(n)}$, while if $M_0(n)\leq \floor*{\log_2(n)}/2-1$, then $|QB(n)|\leq 2M_0(n)+1\leq \floor*{\log_2(n)}-1$.

On the other hand,  if $\floor*{\log_2(n)}$ is an odd number, again by the pigeonhole principle we have that $M_0(n)\leq (\floor*{\log_2(n)}+1)/2$. Now, if $M_0(n)=(\floor*{\log_2(n)}+1)/2$, then $n_{(2)}$ contains at least 2 forbidden maximal sequences of zeroes. Indeed, let $\floor*{\log_2(n)}=2s+1$. If $n_{(2)}$ starts with $11$, avoiding a forbidden maximal sequence of zeroes at the beginning, then $M_0(n)\leq s=(\floor*{\log_2(n)}-1)/2$.  On the other hand, if it ends in $11$, avoiding a forbidden maximal sequence of zeroes at the end, then again $M_0(n)\leq s=(\floor*{\log_2(n)}-1)/2$.
So, to reach the maximum value of $M_0(n)$, $n_{(2)}$ must start with $10$ and end with $10$, $01$ or $00$, thus having at least 2 forbidden maximal sequences of zeroes. 
Thus, if $M_0(n)=(\floor*{\log_2(n)}+1)/2$, then
$|QB(n)|\leq 2M_0(n)-1=\floor*{\log_2(n)}$, while if $M_0(n)\leq (\floor*{\log_2(n)}+1)/2-1$, then $|QB(n)|\leq 2M_0(n)+1\leq \floor*{\log_2(n)}$. \qed
\end{proof}

{Proposition \ref{lem:charmin1}, together with Corollary \ref{min_colless}, provide the following algorithm to produce minimal Colless trees in $\TT_n$, which is reminiscent of Aldous' $\beta$-model~\citep{Ald1}. If the algorithm is run non-deterministically for all choices of a labeled leaf in line 3 and of a pair  $(m_a, m_b) \in QB(m)$ in line 8 (using Proposition \ref{cor:QB} to find all these pairs) in all executions of the \textbf{while} loop, one obtains all minimal Colless trees in $\TT_n$, possibly with repetitions that can be then removed (see Example \ref{ex:2} below). The non-deterministic choice of the leaf in line 3 can be made deterministic by considering \emph{oriented trees} (i.e., adding an orientation left--right to the pair of children of each internal node, with the number of descendant leaves decreasing from left to right) and then always choosing the left-most remaining labeled leaf, and at the end suppressing the orientations from the resulting trees.}

\begin{algorithm}[H] \label{alg_MinColless}
\SetAlgoLined
Start with a single node labeled $n$\;
\While{the current tree contains labeled leaves}{
 Choose a leaf with label $m$\;
	\If{$m$ is a power of 2}{
		replace this leaf by a fully symmetric  tree $T^{\mathit{fs}}_{\log_2(m)}$ with its nodes unlabeled\;
		}
	\Else{
		Find a pair of integers $(m_a, m_b) \in QB(m)$\;
		Split the leaf labeled $m$ into a cherry with unlabeled root and its leaves labeled $m_a$ and $m_b$, respectively.
	}
}
\caption{MinColless}
\end{algorithm}

\begin{example}\label{ex:2}
Let us use this Algorithm MinColless to find all minimal Colless trees with 20 leaves; we describe the trees by means of the usual Newick format,\footnote{See \url{http://evolution.genetics.washington.edu/phylip/newicktree.html}}   with the unlabeled leaves represented by a symbol $\cdot$ and omitting the semicolon ending mark in order not to confuse it with a punctuation mark. 
\begin{itemize}
\item[1)] We start with a single node labeled 20. 
\item[2)] Since $QB(20)=\{(10,10),(12,8)\}$, this node can split into the cherries $(10,10)$ and $(12,8)$.
\item[3.1)] Since $QB(10)=\{(5,5),(6,4)\}$, the different ways of splitting the leaves of the tree $(10,10)$ produce the trees $((5,5),(5,5))$, $((5,5),(6,4))$, and $((6,4),(6,4))$.
Now, since $QB(5)=\{(3,2)\}$, $QB(6)=\{(3,3),(4,2)\}$, and $QB(3)=\{(2,1)\}$, and 1, 2,  and 4 are powers of 2,
we have the following derivations from these trees through all possible combinations of splitting the leaves in the trees:
$$
\begin{array}{l}
((5,5),(5,5))  \Rightarrow (((3,2),(3,2)),((3,2),(3,2)))\\ \quad \Rightarrow ((((2,1),2),((2,1),2)),(((2,1),2),((2,1),2)))\\ \quad \Rightarrow 
(((((\cdot,\cdot),\cdot),(\cdot,\cdot)),(((\cdot,\cdot),\cdot),(\cdot,\cdot))),((((\cdot,\cdot),\cdot),(\cdot,\cdot)),(((\cdot,\cdot),\cdot),(\cdot,\cdot))))\\
((5,5),(6,4)) \Rightarrow (((3,2),(3,2)),((3,3),4))\\ \quad \Rightarrow  ((((2,1),2),((2,1),2)),(((2,1),(2,1)),4))\\ \quad\Rightarrow  (((((\cdot,\cdot),\cdot),(\cdot,\cdot)),(((\cdot,\cdot),\cdot),(\cdot,\cdot))),((((\cdot,\cdot),\cdot),((\cdot,\cdot),\cdot)),((\cdot,\cdot),(\cdot,\cdot)))\\
((5,5),(6,4)) \Rightarrow (((3,2),(3,2)),((4,2),4))\\ \quad \Rightarrow  ((((2,1),2),((2,1),2)),((4,2),4))\\ \quad\Rightarrow  (((((\cdot,\cdot),\cdot),(\cdot,\cdot)),(((\cdot,\cdot),\cdot),(\cdot,\cdot))),((((\cdot,\cdot),(\cdot,\cdot)),(\cdot,\cdot)),((\cdot,\cdot),(\cdot,\cdot))))
\end{array}
$$
$$
\begin{array}{l}
((6,4),(6,4)) \Rightarrow (((3,3),4),((3,3),4))\\ \quad \Rightarrow  ((((2,1),(2,1)),4),(((2,1),(2,1)),4))\\ \quad \Rightarrow  (((((\cdot,\cdot),\cdot),((\cdot,\cdot),\cdot)),((\cdot,\cdot),(\cdot,\cdot))),((((\cdot,\cdot),\cdot),((\cdot,\cdot),\cdot)),((\cdot,\cdot),(\cdot,\cdot))))\\
((6,4),(6,4)) \Rightarrow (((3,3),4),((4,2),4))\\ \quad \Rightarrow  ((((2,1),(2,1)),4),((4,2),4))\\ \quad  \Rightarrow  (((((\cdot,\cdot),\cdot),((\cdot,\cdot),\cdot)),((\cdot,\cdot),(\cdot,\cdot))),((((\cdot,\cdot),(\cdot,\cdot)),(\cdot,\cdot)),((\cdot,\cdot),(\cdot,\cdot))))\\
((6,4),(6,4))  \Rightarrow (((4,2),4),((4,2),4)) \\ \quad \Rightarrow  (((((\cdot,\cdot),(\cdot,\cdot)),(\cdot,\cdot)),((\cdot,\cdot),(\cdot,\cdot))),((((\cdot,\cdot),(\cdot,\cdot)),(\cdot,\cdot)),((\cdot,\cdot),(\cdot,\cdot))))
\end{array}
$$
\item[3.2)] Since $QB(12)=\{(6,6),(8,4)\}$ and 8 is a power of 2, the tree $(12,8)$ gives rise to the trees
$((6,6),8)$ and  $((8,4),8)$, and then, using $QB(6)=\{(3,3),(4,2)\}$ and $QB(3)=\{(2,1)\}$,
$$
\begin{array}{l}
((6,6),8)\Rightarrow (((3,3),(3,3)),8) \Rightarrow ((((2,1),(2,1)),((2,1),(2,1))),8) \\ \quad \Rightarrow (((((\cdot,\cdot),\cdot),((\cdot,\cdot),\cdot)),(((\cdot,\cdot),\cdot),((\cdot,\cdot),\cdot))),(((\cdot,\cdot),(\cdot,\cdot)),((\cdot,\cdot),(\cdot,\cdot))))\\
((6,6),8)  \Rightarrow (((3,3),(4,2)),8)\Rightarrow ((((2,1),(2,1)),(4,2)),8) \\ \quad \Rightarrow (((((\cdot,\cdot),\cdot),((\cdot,\cdot),\cdot)),(((\cdot,\cdot),(\cdot,\cdot)),(\cdot,\cdot))),(((\cdot,\cdot),(\cdot,\cdot)),((\cdot,\cdot),(\cdot,\cdot))))\\
((6,6),8)  \Rightarrow (((4,2),(4,2)),8)  \\ \quad\Rightarrow (((((\cdot,\cdot),(\cdot,\cdot)),(\cdot,\cdot)),(((\cdot,\cdot),(\cdot,\cdot)),(\cdot,\cdot))),(((\cdot,\cdot),(\cdot,\cdot)),((\cdot,\cdot),(\cdot,\cdot))))\\
((8,4),8)\\ \quad \Rightarrow (((((\cdot,\cdot),(\cdot,\cdot)),((\cdot,\cdot),(\cdot,\cdot))),((\cdot,\cdot),(\cdot,\cdot))),(((\cdot,\cdot),(\cdot,\cdot)),((\cdot,\cdot),(\cdot,\cdot))))
\end{array}
$$
\end{itemize}
So, there are 10 different minimal Colless trees in $\TT_{20}$. We depict them in Figure \ref{fig:1020} below.
\end{example}

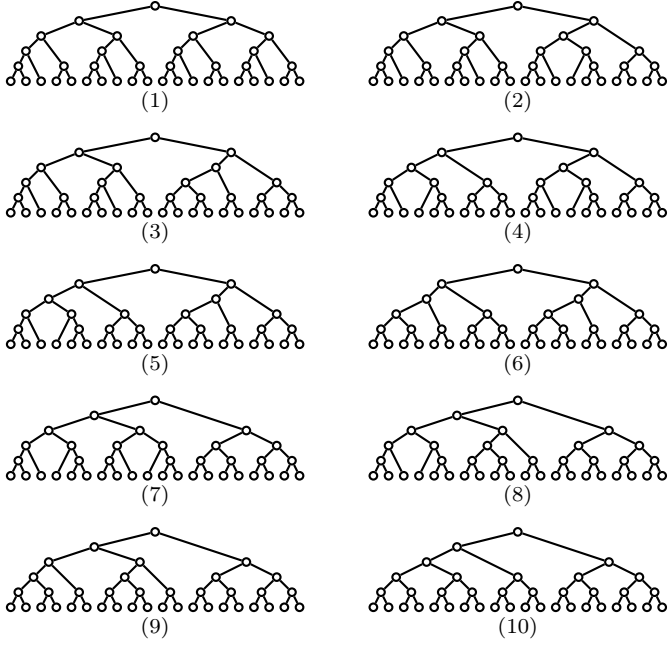
\begin{figure}[htb]
\begin{center}
\begin{tabular}{ccc}
%1
\begin{tikzpicture}[thick,>=stealth,scale=0.2]
\draw(1,0) node[treppp] (l1) {}; 
\draw(2,0) node[treppp] (l2) {}; 
\draw(3,0) node[treppp] (l3) {}; 
\draw(4,0) node[treppp] (l4) {}; 
\draw(5,0) node[treppp] (l5) {}; 
\draw(6,0) node[treppp] (l6) {}; 
\draw(7,0) node[treppp] (l7) {}; 
\draw(8,0) node[treppp] (l8) {}; 
\draw(9,0) node[treppp] (l9) {}; 
\draw(10,0) node[treppp] (l10) {}; 
\draw(11,0) node[treppp] (l11) {}; 
\draw(12,0) node[treppp] (l12) {}; 
\draw(13,0) node[treppp] (l13) {}; 
\draw(14,0) node[treppp] (l14) {}; 
\draw(15,0) node[treppp] (l15) {}; 
\draw(16,0) node[treppp] (l16) {}; 
\draw(17,0) node[treppp] (l17) {}; 
\draw(18,0) node[treppp] (l18) {}; 
\draw(19,0) node[treppp] (l19) {}; 
\draw(20,0) node[treppp] (l20) {}; 
\draw(1.5,1) node[treppp] (v1) {}; 
\draw(4.5,1) node[treppp] (v2) {}; 
\draw(6.5,1) node[treppp] (v3) {}; 
\draw(9.5,1) node[treppp] (v4) {}; 
\draw(11.5,1) node[treppp] (v5) {}; 
\draw(14.5,1) node[treppp] (v6) {}; 
\draw(16.5,1) node[treppp] (v7) {}; 
\draw(19.5,1) node[treppp] (v8) {}; 
\draw(2,2) node[treppp] (w1) {}; 
\draw(7,2) node[treppp] (w2) {}; 
\draw(12,2) node[treppp] (w3) {}; 
\draw(17,2) node[treppp] (w4) {}; 
\draw(3,3) node[treppp] (x1) {}; 
\draw(8,3) node[treppp] (x2) {}; 
\draw(13,3) node[treppp] (x3) {}; 
\draw(18,3) node[treppp] (x4) {}; 
\draw(5.5,4) node[treppp] (y1) {}; 
\draw(15.5,4) node[treppp] (y2) {}; 
\draw(10.5,5) node[treppp] (r) {}; 
\draw (r)--(y1);
\draw (r)--(y2);
\draw (y1)--(x1);
\draw (y1)--(x2);
\draw (y2)--(x3);
\draw (y2)--(x4);
\draw (x1)--(w1);
\draw (x2)--(w2);
\draw (x3)--(w3);
\draw (x4)--(w4);
\draw (x1)--(v2);
\draw (x2)--(v4);
\draw (x3)--(v6);
\draw (x4)--(v8);
\draw (w1)--(v1);
\draw (w1)--(l3);
\draw (w2)--(v3);
\draw (w2)--(l8);
\draw (w3)--(v5);
\draw (w3)--(l13);
\draw (w4)--(v7);
\draw (w4)--(l18);
\draw (v1)--(l1);
\draw (v1)--(l2);
\draw (v2)--(l4);
\draw (v2)--(l5);
\draw (v3)--(l6);
\draw (v3)--(l7);
\draw (v4)--(l9);
\draw (v4)--(l10);
\draw (v5)--(l11);
\draw (v5)--(l12);
\draw (v6)--(l14);
\draw (v6)--(l15);
\draw (v7)--(l16);
\draw (v7)--(l17);
\draw (v8)--(l19);
\draw (v8)--(l20);
\end{tikzpicture}
&\quad&
%2
\begin{tikzpicture}[thick,>=stealth,scale=0.2]
\draw(1,0) node[treppp] (l1) {}; 
\draw(2,0) node[treppp] (l2) {}; 
\draw(3,0) node[treppp] (l3) {}; 
\draw(4,0) node[treppp] (l4) {}; 
\draw(5,0) node[treppp] (l5) {}; 
\draw(6,0) node[treppp] (l6) {}; 
\draw(7,0) node[treppp] (l7) {}; 
\draw(8,0) node[treppp] (l8) {}; 
\draw(9,0) node[treppp] (l9) {}; 
\draw(10,0) node[treppp] (l10) {}; 
\draw(11,0) node[treppp] (l11) {}; 
\draw(12,0) node[treppp] (l12) {}; 
\draw(13,0) node[treppp] (l13) {}; 
\draw(14,0) node[treppp] (l14) {}; 
\draw(15,0) node[treppp] (l15) {}; 
\draw(16,0) node[treppp] (l16) {}; 
\draw(17,0) node[treppp] (l17) {}; 
\draw(18,0) node[treppp] (l18) {}; 
\draw(19,0) node[treppp] (l19) {}; 
\draw(20,0) node[treppp] (l20) {}; 
\draw(1.5,1) node[treppp] (v1) {}; 
\draw(4.5,1) node[treppp] (v2) {}; 
\draw(6.5,1) node[treppp] (v3) {}; 
\draw(9.5,1) node[treppp] (v4) {}; 
\draw(11.5,1) node[treppp] (v5) {}; 
\draw(15.5,1) node[treppp] (v6) {}; 
\draw(17.5,1) node[treppp] (v7) {}; 
\draw(19.5,1) node[treppp] (v8) {}; 
\draw(2,2) node[treppp] (w1) {}; 
\draw(7,2) node[treppp] (w2) {}; 
\draw(12,2) node[treppp] (w3) {}; 
\draw(15,2) node[treppp] (w4) {}; 
\draw(3,3) node[treppp] (x1) {}; 
\draw(8,3) node[treppp] (x2) {}; 
\draw(13.5,3) node[treppp] (x3) {}; 
\draw(18.5,2) node[treppp] (x4) {}; 
\draw(5.5,4) node[treppp] (y1) {}; 
\draw(15.5,4) node[treppp] (y2) {}; 
\draw(10.5,5) node[treppp] (r) {}; 
\draw (r)--(y1);
\draw (r)--(y2);
\draw (y1)--(x1);
\draw (y1)--(x2);
\draw (y2)--(x3);
\draw (y2)--(x4);
\draw (x1)--(w1);
\draw (x2)--(w2);
\draw (x3)--(w3);
\draw (x3)--(w4);
\draw (x1)--(v2);
\draw (x2)--(v4);
\draw (x4)--(v7);
\draw (x4)--(v8);
\draw (w1)--(v1);
\draw (w1)--(l3);
\draw (w2)--(v3);
\draw (w2)--(l8);
\draw (w3)--(v5);
\draw (w3)--(l13);
\draw (w4)--(v6);
\draw (w4)--(l14);
\draw (v1)--(l1);
\draw (v1)--(l2);
\draw (v2)--(l4);
\draw (v2)--(l5);
\draw (v3)--(l6);
\draw (v3)--(l7);
\draw (v4)--(l9);
\draw (v4)--(l10);
\draw (v5)--(l11);
\draw (v5)--(l12);
\draw (v6)--(l15);
\draw (v6)--(l16);
\draw (v7)--(l18);
\draw (v7)--(l17);
\draw (v8)--(l19);
\draw (v8)--(l20);
\end{tikzpicture}
\\[-0.5ex]
 (1) & & (2)\\[2ex]
%3
\begin{tikzpicture}[thick,>=stealth,scale=0.2]
\draw(1,0) node[treppp] (l1) {}; 
\draw(2,0) node[treppp] (l2) {}; 
\draw(3,0) node[treppp] (l3) {}; 
\draw(4,0) node[treppp] (l4) {}; 
\draw(5,0) node[treppp] (l5) {}; 
\draw(6,0) node[treppp] (l6) {}; 
\draw(7,0) node[treppp] (l7) {}; 
\draw(8,0) node[treppp] (l8) {}; 
\draw(9,0) node[treppp] (l9) {}; 
\draw(10,0) node[treppp] (l10) {}; 
\draw(11,0) node[treppp] (l11) {}; 
\draw(12,0) node[treppp] (l12) {}; 
\draw(13,0) node[treppp] (l13) {}; 
\draw(14,0) node[treppp] (l14) {}; 
\draw(15,0) node[treppp] (l15) {}; 
\draw(16,0) node[treppp] (l16) {}; 
\draw(17,0) node[treppp] (l17) {}; 
\draw(18,0) node[treppp] (l18) {}; 
\draw(19,0) node[treppp] (l19) {}; 
\draw(20,0) node[treppp] (l20) {}; 
\draw(1.5,1) node[treppp] (v1) {}; 
\draw(4.5,1) node[treppp] (v2) {}; 
\draw(6.5,1) node[treppp] (v3) {}; 
\draw(9.5,1) node[treppp] (v4) {}; 
\draw(11.5,1) node[treppp] (v5) {}; 
\draw(13.5,1) node[treppp] (v6) {}; 
\draw(15.5,1) node[treppp] (v7) {}; 
\draw(17.5,1) node[treppp] (v8) {}; 
\draw(19.5,1) node[treppp] (v9) {}; 
\draw(2,2) node[treppp] (w1) {}; 
\draw(7,2) node[treppp] (w2) {}; 
\draw(12.5,2) node[treppp] (w3) {}; 
\draw(18.5,2) node[treppp] (w4) {}; 
\draw(3,3) node[treppp] (x1) {}; 
\draw(8,3) node[treppp] (x2) {}; 
\draw(14.5,3) node[treppp] (x3) {}; 
\draw(5.5,4) node[treppp] (y1) {}; 
\draw(15.5,4) node[treppp] (y2) {}; 
\draw(10.5,5) node[treppp] (r) {}; 
\draw (r)--(y1);
\draw (r)--(y2);
\draw (y1)--(x1);
\draw (y1)--(x2);
\draw (y2)--(x3);
\draw (y2)--(w4);
\draw (x1)--(w1);
\draw (x2)--(w2);
\draw (x3)--(w3);
\draw (x3)--(v7);
\draw (x1)--(v2);
\draw (x2)--(v4);
\draw (w1)--(v1);
\draw (w1)--(l3);
\draw (w2)--(v3);
\draw (w2)--(l8);
\draw (w3)--(v5);
\draw (w3)--(v6);
\draw (w4)--(v8);
\draw (w4)--(v9);
\draw (v1)--(l1);
\draw (v1)--(l2);
\draw (v2)--(l4);
\draw (v2)--(l5);
\draw (v3)--(l6);
\draw (v3)--(l7);
\draw (v4)--(l9);
\draw (v4)--(l10);
\draw (v5)--(l11);
\draw (v5)--(l12);
\draw (v6)--(l13);
\draw (v6)--(l14);
\draw (v7)--(l15);
\draw (v7)--(l16);
\draw (v8)--(l17);
\draw (v8)--(l18);
\draw (v9)--(l19);
\draw (v9)--(l20);
\end{tikzpicture}
&\quad&
%4
\begin{tikzpicture}[thick,>=stealth,scale=0.2]
\draw(1,0) node[treppp] (l1) {}; 
\draw(2,0) node[treppp] (l2) {}; 
\draw(3,0) node[treppp] (l3) {}; 
\draw(4,0) node[treppp] (l4) {}; 
\draw(5,0) node[treppp] (l5) {}; 
\draw(6,0) node[treppp] (l6) {}; 
\draw(7,0) node[treppp] (l7) {}; 
\draw(8,0) node[treppp] (l8) {}; 
\draw(9,0) node[treppp] (l9) {}; 
\draw(10,0) node[treppp] (l10) {}; 
\draw(11,0) node[treppp] (l11) {}; 
\draw(12,0) node[treppp] (l12) {}; 
\draw(13,0) node[treppp] (l13) {}; 
\draw(14,0) node[treppp] (l14) {}; 
\draw(15,0) node[treppp] (l15) {}; 
\draw(16,0) node[treppp] (l16) {}; 
\draw(17,0) node[treppp] (l17) {}; 
\draw(18,0) node[treppp] (l18) {}; 
\draw(19,0) node[treppp] (l19) {}; 
\draw(20,0) node[treppp] (l20) {}; 
\draw(1.5,1) node[treppp] (v1) {}; 
\draw(5.5,1) node[treppp] (v2) {}; 
\draw(7.5,1) node[treppp] (v3) {}; 
\draw(9.5,1) node[treppp] (v4) {}; 
\draw(11.5,1) node[treppp] (v5) {}; 
\draw(15.5,1) node[treppp] (v6) {}; 
\draw(17.5,1) node[treppp] (v7) {}; 
\draw(19.5,1) node[treppp] (v8) {}; 
\draw(2,2) node[treppp] (w1) {}; 
\draw(5,2) node[treppp] (w2) {}; 
\draw(12,2) node[treppp] (w3) {}; 
\draw(15,2) node[treppp] (w4) {}; 
\draw(3.5,3) node[treppp] (x1) {}; 
\draw(8.5,2) node[treppp] (x2) {}; 
\draw(13.5,3) node[treppp] (x3) {}; 
\draw(18.5,2) node[treppp] (x4) {}; 
\draw(5.5,4) node[treppp] (y1) {}; 
\draw(15.5,4) node[treppp] (y2) {}; 
\draw(10.5,5) node[treppp] (r) {}; 
\draw (r)--(y1);
\draw (r)--(y2);
\draw (y1)--(x1);
\draw (y1)--(x2);
\draw (y2)--(x3);
\draw (y2)--(x4);
\draw (x1)--(w1);
\draw (x1)--(w2);
\draw (x2)--(v3);
\draw (x2)--(v4);
\draw (x3)--(w3);
\draw (x3)--(w4);
\draw (x4)--(v7);
\draw (x4)--(v8);
\draw (w1)--(v1);
\draw (w1)--(l3);
\draw (w2)--(v2);
\draw (w2)--(l4);
\draw (w3)--(v5);
\draw (w3)--(l13);
\draw (w4)--(v6);
\draw (w4)--(l14);
\draw (v1)--(l1);
\draw (v1)--(l2);
\draw (v2)--(l5);
\draw (v2)--(l6);
\draw (v3)--(l7);
\draw (v3)--(l8);
\draw (v4)--(l9);
\draw (v4)--(l10);
\draw (v5)--(l11);
\draw (v5)--(l12);
\draw (v6)--(l15);
\draw (v6)--(l16);
\draw (v7)--(l18);
\draw (v7)--(l17);
\draw (v8)--(l19);
\draw (v8)--(l20);
\end{tikzpicture}
\\[-0.5ex]
 (3) & & (4)\\[2ex]
%5
\begin{tikzpicture}[thick,>=stealth,scale=0.2]
\draw(1,0) node[treppp] (l1) {}; 
\draw(2,0) node[treppp] (l2) {}; 
\draw(3,0) node[treppp] (l3) {}; 
\draw(4,0) node[treppp] (l4) {}; 
\draw(5,0) node[treppp] (l5) {}; 
\draw(6,0) node[treppp] (l6) {}; 
\draw(7,0) node[treppp] (l7) {}; 
\draw(8,0) node[treppp] (l8) {}; 
\draw(9,0) node[treppp] (l9) {}; 
\draw(10,0) node[treppp] (l10) {}; 
\draw(11,0) node[treppp] (l11) {}; 
\draw(12,0) node[treppp] (l12) {}; 
\draw(13,0) node[treppp] (l13) {}; 
\draw(14,0) node[treppp] (l14) {}; 
\draw(15,0) node[treppp] (l15) {}; 
\draw(16,0) node[treppp] (l16) {}; 
\draw(17,0) node[treppp] (l17) {}; 
\draw(18,0) node[treppp] (l18) {}; 
\draw(19,0) node[treppp] (l19) {}; 
\draw(20,0) node[treppp] (l20) {}; 
\draw(1.5,1) node[treppp] (v1) {}; 
\draw(5.5,1) node[treppp] (v2) {}; 
\draw(7.5,1) node[treppp] (v3) {}; 
\draw(9.5,1) node[treppp] (v4) {}; 
\draw(11.5,1) node[treppp] (v5) {}; 
\draw(13.5,1) node[treppp] (v6) {}; 
\draw(15.5,1) node[treppp] (v7) {}; 
\draw(17.5,1) node[treppp] (v8) {}; 
\draw(19.5,1) node[treppp] (v9) {}; 
\draw(2,2) node[treppp] (w1) {}; 
\draw(5,2) node[treppp] (w2) {}; 
\draw(12.5,2) node[treppp] (w3) {}; 
\draw(18.5,2) node[treppp] (w4) {}; 
\draw(3.5,3) node[treppp] (x1) {}; 
\draw(8.5,2) node[treppp] (x2) {}; 
\draw(14.5,3) node[treppp] (x3) {}; 
\draw(5.5,4) node[treppp] (y1) {}; 
\draw(15.5,4) node[treppp] (y2) {}; 
\draw(10.5,5) node[treppp] (r) {}; 
\draw (r)--(y1);
\draw (r)--(y2);
\draw (y1)--(x1);
\draw (y1)--(x2);
\draw (y2)--(x3);
\draw (y2)--(w4);
\draw (x1)--(w1);
\draw (x1)--(w2);
\draw (x2)--(v3);
\draw (x2)--(v4);
\draw (x3)--(w3);
\draw (x3)--(v7);
\draw (w1)--(v1);
\draw (w1)--(l3);
\draw (w2)--(v2);
\draw (w2)--(l4);
\draw (w3)--(v5);
\draw (w3)--(v6);
\draw (w4)--(v8);
\draw (w4)--(v9);
\draw (v1)--(l1);
\draw (v1)--(l2);
\draw (v2)--(l5);
\draw (v2)--(l6);
\draw (v3)--(l7);
\draw (v3)--(l8);
\draw (v4)--(l9);
\draw (v4)--(l10);
\draw (v5)--(l11);
\draw (v5)--(l12);
\draw (v6)--(l13);
\draw (v6)--(l14);
\draw (v7)--(l15);
\draw (v7)--(l16);
\draw (v8)--(l17);
\draw (v8)--(l18);
\draw (v9)--(l19);
\draw (v9)--(l20);
\end{tikzpicture}
&\quad&
%6
\begin{tikzpicture}[thick,>=stealth,scale=0.2]
\draw(1,0) node[treppp] (l1) {}; 
\draw(2,0) node[treppp] (l2) {}; 
\draw(3,0) node[treppp] (l3) {}; 
\draw(4,0) node[treppp] (l4) {}; 
\draw(5,0) node[treppp] (l5) {}; 
\draw(6,0) node[treppp] (l6) {}; 
\draw(7,0) node[treppp] (l7) {}; 
\draw(8,0) node[treppp] (l8) {}; 
\draw(9,0) node[treppp] (l9) {}; 
\draw(10,0) node[treppp] (l10) {}; 
\draw(11,0) node[treppp] (l11) {}; 
\draw(12,0) node[treppp] (l12) {}; 
\draw(13,0) node[treppp] (l13) {}; 
\draw(14,0) node[treppp] (l14) {}; 
\draw(15,0) node[treppp] (l15) {}; 
\draw(16,0) node[treppp] (l16) {}; 
\draw(17,0) node[treppp] (l17) {}; 
\draw(18,0) node[treppp] (l18) {}; 
\draw(19,0) node[treppp] (l19) {}; 
\draw(20,0) node[treppp] (l20) {}; 
\draw(1.5,1) node[treppp] (v1) {}; 
\draw(3.5,1) node[treppp] (v2) {}; 
\draw(5.5,1) node[treppp] (v3) {}; 
\draw(7.5,1) node[treppp] (v4) {}; 
\draw(9.5,1) node[treppp] (v51) {}; 
\draw(11.5,1) node[treppp] (v5) {}; 
\draw(13.5,1) node[treppp] (v6) {}; 
\draw(15.5,1) node[treppp] (v7) {}; 
\draw(17.5,1) node[treppp] (v8) {}; 
\draw(19.5,1) node[treppp] (v9) {}; 
\draw(2.5,2) node[treppp] (w1) {}; 
\draw(8.5,2) node[treppp] (w2) {}; 
\draw(12.5,2) node[treppp] (w3) {}; 
\draw(18.5,2) node[treppp] (w4) {}; 
\draw(4.5,3) node[treppp] (x1) {}; 
\draw(14.5,3) node[treppp] (x3) {}; 
\draw(5.5,4) node[treppp] (y1) {}; 
\draw(15.5,4) node[treppp] (y2) {}; 
\draw(10.5,5) node[treppp] (r) {}; 
\draw (r)--(y1);
\draw (r)--(y2);
\draw (y1)--(x1);
\draw (y1)--(w2);
\draw (y2)--(x3);
\draw (y2)--(w4);
\draw (x1)--(w1);
\draw (x1)--(v3);
\draw (x3)--(w3);
\draw (x3)--(v7);
\draw (w1)--(v1);
\draw (w1)--(v2);
\draw (w2)--(v4);
\draw (w2)--(v51);
\draw (w3)--(v5);
\draw (w3)--(v6);
\draw (w4)--(v8);
\draw (w4)--(v9);
\draw (v1)--(l1);
\draw (v1)--(l2);
\draw (v2)--(l3);
\draw (v2)--(l4);
\draw (v3)--(l5);
\draw (v3)--(l6);
\draw (v4)--(l7);
\draw (v4)--(l8);
\draw (v51)--(l9);
\draw (v51)--(l10);
\draw (v5)--(l11);
\draw (v5)--(l12);
\draw (v6)--(l13);
\draw (v6)--(l14);
\draw (v7)--(l15);
\draw (v7)--(l16);
\draw (v8)--(l17);
\draw (v8)--(l18);
\draw (v9)--(l19);
\draw (v9)--(l20);
\end{tikzpicture}\\[-0.5ex]
 (5) & & (6)\\[2ex]
%7
\begin{tikzpicture}[thick,>=stealth,scale=0.2]
\draw(1,0) node[treppp] (l1) {}; 
\draw(2,0) node[treppp] (l2) {}; 
\draw(3,0) node[treppp] (l3) {}; 
\draw(4,0) node[treppp] (l4) {}; 
\draw(5,0) node[treppp] (l5) {}; 
\draw(6,0) node[treppp] (l6) {}; 
\draw(7,0) node[treppp] (l7) {}; 
\draw(8,0) node[treppp] (l8) {}; 
\draw(9,0) node[treppp] (l9) {}; 
\draw(10,0) node[treppp] (l10) {}; 
\draw(11,0) node[treppp] (l11) {}; 
\draw(12,0) node[treppp] (l12) {}; 
\draw(13,0) node[treppp] (l13) {}; 
\draw(14,0) node[treppp] (l14) {}; 
\draw(15,0) node[treppp] (l15) {}; 
\draw(16,0) node[treppp] (l16) {}; 
\draw(17,0) node[treppp] (l17) {}; 
\draw(18,0) node[treppp] (l18) {}; 
\draw(19,0) node[treppp] (l19) {}; 
\draw(20,0) node[treppp] (l20) {}; 
\draw(1.5,1) node[treppp] (v1) {}; 
\draw(5.5,1) node[treppp] (v2) {}; 
\draw(7.5,1) node[treppp] (v3) {}; 
\draw(11.5,1) node[treppp] (v4) {}; 
\draw(13.5,1) node[treppp] (v6) {}; 
\draw(15.5,1) node[treppp] (v7) {}; 
\draw(17.5,1) node[treppp] (v8) {}; 
\draw(19.5,1) node[treppp] (v9) {}; 
\draw(2,2) node[treppp] (w1) {}; 
\draw(5,2) node[treppp] (w2) {}; 
\draw(8,2) node[treppp] (w11) {}; 
\draw(11,2) node[treppp] (w21) {}; 
\draw(14.5,2) node[treppp] (w3) {}; 
\draw(18.5,2) node[treppp] (w4) {}; 
\draw(3.5,3) node[treppp] (x1) {}; 
\draw(9.5,3) node[treppp] (x2) {}; 
\draw(6.5,4) node[treppp] (y1) {}; 
\draw(16.5,3) node[treppp] (y2) {}; 
\draw(10.5,5) node[treppp] (r) {}; 
\draw (r)--(y1);
\draw (r)--(y2);
\draw (y1)--(x1);
\draw (y1)--(x2);
\draw (y2)--(w3);
\draw (y2)--(w4);
\draw (x1)--(w1);
\draw (x1)--(w2);
\draw (x2)--(w11);
\draw (x2)--(w21);
\draw (w1)--(v1);
\draw (w1)--(l3);
\draw (w2)--(v2);
\draw (w2)--(l4);
\draw (w11)--(v3);
\draw (w11)--(l9);
\draw (w21)--(v4);
\draw (w21)--(l10);
\draw (w3)--(v6);
\draw (w3)--(v7);
\draw (w4)--(v8);
\draw (w4)--(v9);
\draw (v1)--(l1);
\draw (v1)--(l2);
\draw (v2)--(l5);
\draw (v2)--(l6);
\draw (v3)--(l7);
\draw (v3)--(l8);
\draw (v5)--(l11);
\draw (v5)--(l12);
\draw (v6)--(l13);
\draw (v6)--(l14);
\draw (v7)--(l15);
\draw (v7)--(l16);
\draw (v8)--(l17);
\draw (v8)--(l18);
\draw (v9)--(l19);
\draw (v9)--(l20);
\end{tikzpicture}
&\quad&
%8
\begin{tikzpicture}[thick,>=stealth,scale=0.2]
\draw(1,0) node[treppp] (l1) {}; 
\draw(2,0) node[treppp] (l2) {}; 
\draw(3,0) node[treppp] (l3) {}; 
\draw(4,0) node[treppp] (l4) {}; 
\draw(5,0) node[treppp] (l5) {}; 
\draw(6,0) node[treppp] (l6) {}; 
\draw(7,0) node[treppp] (l7) {}; 
\draw(8,0) node[treppp] (l8) {}; 
\draw(9,0) node[treppp] (l9) {}; 
\draw(10,0) node[treppp] (l10) {}; 
\draw(11,0) node[treppp] (l11) {}; 
\draw(12,0) node[treppp] (l12) {}; 
\draw(13,0) node[treppp] (l13) {}; 
\draw(14,0) node[treppp] (l14) {}; 
\draw(15,0) node[treppp] (l15) {}; 
\draw(16,0) node[treppp] (l16) {}; 
\draw(17,0) node[treppp] (l17) {}; 
\draw(18,0) node[treppp] (l18) {}; 
\draw(19,0) node[treppp] (l19) {}; 
\draw(20,0) node[treppp] (l20) {}; 
\draw(1.5,1) node[treppp] (v1) {}; 
\draw(5.5,1) node[treppp] (v2) {}; 
\draw(7.5,1) node[treppp] (v3) {}; 
\draw(9.5,1) node[treppp] (v4) {}; 
\draw(11.5,1) node[treppp] (v5) {}; 
\draw(13.5,1) node[treppp] (v6) {}; 
\draw(15.5,1) node[treppp] (v7) {}; 
\draw(17.5,1) node[treppp] (v8) {}; 
\draw(19.5,1) node[treppp] (v9) {}; 
\draw(2,2) node[treppp] (w1) {}; 
\draw(5,2) node[treppp] (w2) {}; 
\draw(8.5,2) node[treppp] (w11) {}; 
\draw(14.5,2) node[treppp] (w3) {}; 
\draw(18.5,2) node[treppp] (w4) {}; 
\draw(3.5,3) node[treppp] (x1) {}; 
\draw(9.5,3) node[treppp] (x2) {}; 
\draw(6.5,4) node[treppp] (y1) {}; 
\draw(16.5,3) node[treppp] (y2) {}; 
\draw(10.5,5) node[treppp] (r) {}; 
\draw (r)--(y1);
\draw (r)--(y2);
\draw (y1)--(x1);
\draw (y1)--(x2);
\draw (y2)--(w3);
\draw (y2)--(w4);
\draw (x1)--(w1);
\draw (x1)--(w2);
\draw (x2)--(w11);
\draw (x2)--(v5);
\draw (w1)--(v1);
\draw (w1)--(l3);
\draw (w2)--(v2);
\draw (w2)--(l4);
\draw (w11)--(v3);
\draw (w11)--(v4);
\draw (w3)--(v6);
\draw (w3)--(v7);
\draw (w4)--(v8);
\draw (w4)--(v9);
\draw (v1)--(l1);
\draw (v1)--(l2);
\draw (v2)--(l5);
\draw (v2)--(l6);
\draw (v3)--(l7);
\draw (v3)--(l8);
\draw (v4)--(l9);
\draw (v4)--(l10);
\draw (v5)--(l11);
\draw (v5)--(l12);
\draw (v6)--(l13);
\draw (v6)--(l14);
\draw (v7)--(l15);
\draw (v7)--(l16);
\draw (v8)--(l17);
\draw (v8)--(l18);
\draw (v9)--(l19);
\draw (v9)--(l20);
\end{tikzpicture}\\[-0.5ex]
 (7) & & (8)\\[2ex]
%9
\begin{tikzpicture}[thick,>=stealth,scale=0.2]
\draw(1,0) node[treppp] (l1) {}; 
\draw(2,0) node[treppp] (l2) {}; 
\draw(3,0) node[treppp] (l3) {}; 
\draw(4,0) node[treppp] (l4) {}; 
\draw(5,0) node[treppp] (l5) {}; 
\draw(6,0) node[treppp] (l6) {}; 
\draw(7,0) node[treppp] (l7) {}; 
\draw(8,0) node[treppp] (l8) {}; 
\draw(9,0) node[treppp] (l9) {}; 
\draw(10,0) node[treppp] (l10) {}; 
\draw(11,0) node[treppp] (l11) {}; 
\draw(12,0) node[treppp] (l12) {}; 
\draw(13,0) node[treppp] (l13) {}; 
\draw(14,0) node[treppp] (l14) {}; 
\draw(15,0) node[treppp] (l15) {}; 
\draw(16,0) node[treppp] (l16) {}; 
\draw(17,0) node[treppp] (l17) {}; 
\draw(18,0) node[treppp] (l18) {}; 
\draw(19,0) node[treppp] (l19) {}; 
\draw(20,0) node[treppp] (l20) {}; 
\draw(1.5,1) node[treppp] (v1) {}; 
\draw(3.5,1) node[treppp] (v2) {}; 
\draw(5.5,1) node[treppp] (v3) {}; 
\draw(7.5,1) node[treppp] (v4) {}; 
\draw(9.5,1) node[treppp] (v45) {}; 
\draw(11.5,1) node[treppp] (v5) {}; 
\draw(13.5,1) node[treppp] (v6) {}; 
\draw(15.5,1) node[treppp] (v7) {}; 
\draw(17.5,1) node[treppp] (v8) {}; 
\draw(19.5,1) node[treppp] (v9) {}; 
\draw(2.5,2) node[treppp] (w1) {}; 
\draw(8.5,2) node[treppp] (w11) {}; 
\draw(14.5,2) node[treppp] (w3) {}; 
\draw(18.5,2) node[treppp] (w4) {}; 
\draw(3.5,3) node[treppp] (x1) {}; 
\draw(9.5,3) node[treppp] (x2) {}; 
\draw(6.5,4) node[treppp] (y1) {}; 
\draw(16.5,3) node[treppp] (y2) {}; 
\draw(10.5,5) node[treppp] (r) {}; 
\draw (r)--(y1);
\draw (r)--(y2);
\draw (y1)--(x1);
\draw (y1)--(x2);
\draw (y2)--(w3);
\draw (y2)--(w4);
\draw (x1)--(w1);
\draw (x1)--(v3);
\draw (x2)--(w11);
\draw (x2)--(v5);
\draw (w1)--(v1);
\draw (w1)--(v2);
\draw (w11)--(v4);
\draw (w11)--(v45);
\draw (w3)--(v6);
\draw (w3)--(v7);
\draw (w4)--(v8);
\draw (w4)--(v9);
\draw (v1)--(l1);
\draw (v1)--(l2);
\draw (v2)--(l3);
\draw (v2)--(l4);
\draw (v3)--(l5);
\draw (v3)--(l6);
\draw (v4)--(l7);
\draw (v4)--(l8);
\draw (v45)--(l9);
\draw (v45)--(l10);
\draw (v5)--(l11);
\draw (v5)--(l12);
\draw (v6)--(l13);
\draw (v6)--(l14);
\draw (v7)--(l15);
\draw (v7)--(l16);
\draw (v8)--(l17);
\draw (v8)--(l18);
\draw (v9)--(l19);
\draw (v9)--(l20);
\end{tikzpicture}
&\quad&
%10
\begin{tikzpicture}[thick,>=stealth,scale=0.2]
\draw(1,0) node[treppp] (l1) {}; 
\draw(2,0) node[treppp] (l2) {}; 
\draw(3,0) node[treppp] (l3) {}; 
\draw(4,0) node[treppp] (l4) {}; 
\draw(5,0) node[treppp] (l5) {}; 
\draw(6,0) node[treppp] (l6) {}; 
\draw(7,0) node[treppp] (l7) {}; 
\draw(8,0) node[treppp] (l8) {}; 
\draw(9,0) node[treppp] (l9) {}; 
\draw(10,0) node[treppp] (l10) {}; 
\draw(11,0) node[treppp] (l11) {}; 
\draw(12,0) node[treppp] (l12) {}; 
\draw(13,0) node[treppp] (l13) {}; 
\draw(14,0) node[treppp] (l14) {}; 
\draw(15,0) node[treppp] (l15) {}; 
\draw(16,0) node[treppp] (l16) {}; 
\draw(17,0) node[treppp] (l17) {}; 
\draw(18,0) node[treppp] (l18) {}; 
\draw(19,0) node[treppp] (l19) {}; 
\draw(20,0) node[treppp] (l20) {}; 
\draw(1.5,1) node[treppp] (v1) {}; 
\draw(3.5,1) node[treppp] (v2) {}; 
\draw(5.5,1) node[treppp] (v3) {}; 
\draw(7.5,1) node[treppp] (v4) {}; 
\draw(9.5,1) node[treppp] (v45) {}; 
\draw(11.5,1) node[treppp] (v5) {}; 
\draw(13.5,1) node[treppp] (v6) {}; 
\draw(15.5,1) node[treppp] (v7) {}; 
\draw(17.5,1) node[treppp] (v8) {}; 
\draw(19.5,1) node[treppp] (v9) {}; 
\draw(2.5,2) node[treppp] (w1) {}; 
\draw(6.5,2) node[treppp] (w2) {}; 
\draw(10.5,2) node[treppp] (w11) {}; 
\draw(14.5,2) node[treppp] (w3) {}; 
\draw(18.5,2) node[treppp] (w4) {}; 
\draw(4.5,3) node[treppp] (x1) {}; 
\draw(6.5,4) node[treppp] (y1) {}; 
\draw(16.5,3) node[treppp] (y2) {}; 
\draw(10.5,5) node[treppp] (r) {}; 
\draw (r)--(y1);
\draw (r)--(y2);
\draw (y1)--(x1);
\draw (y1)--(w11);
\draw (y2)--(w3);
\draw (y2)--(w4);
\draw (x1)--(w1);
\draw (x1)--(w2);
\draw (w1)--(v1);
\draw (w1)--(v2);
\draw (w2)--(v3);
\draw (w2)--(v4);
\draw (w11)--(v45);
\draw (w11)--(v5);
\draw (w3)--(v6);
\draw (w3)--(v7);
\draw (w4)--(v8);
\draw (w4)--(v9);
\draw (v1)--(l1);
\draw (v1)--(l2);
\draw (v2)--(l3);
\draw (v2)--(l4);
\draw (v3)--(l5);
\draw (v3)--(l6);
\draw (v4)--(l7);
\draw (v4)--(l8);
\draw (v45)--(l9);
\draw (v45)--(l10);
\draw (v5)--(l11);
\draw (v5)--(l12);
\draw (v6)--(l13);
\draw (v6)--(l14);
\draw (v7)--(l15);
\draw (v7)--(l16);
\draw (v8)--(l17);
\draw (v8)--(l18);
\draw (v9)--(l19);
\draw (v9)--(l20);
\end{tikzpicture}\\[-0.5ex]
 (9) & & (10)
 \end{tabular}
\end{center}
\caption{\label{fig:1020} The 10 trees in $\TT_{20}$ with minimum Colless index, 8. They are enumerated in the same order as they have been produced in Example \ref{ex:2}.}
\end{figure}

We have implemented Algorithm MinColless, with step 8 efficiently carried out by means of Proposition \ref{cor:QB}, in a Python script that generates, for every $n$, the Newick description of all minimal Colless trees in $\TT_n$. It is available at the GitHub repository \url{https://github.com/biocom-uib/Colless}. As a proof of concept, we have computed for every $n$ from 1 to  128 all such minimal Colless trees in $\TT_n$. Figure \ref{Fig_NumberCollessMinima} shows their number for every $n$. These numbers are in agreement with those provided by the recurrence established in Proposition \ref{Thm_number_Colless} in the next subsection.

\begin{figure}[htb]
	\centering
	\includegraphics[width=\linewidth]{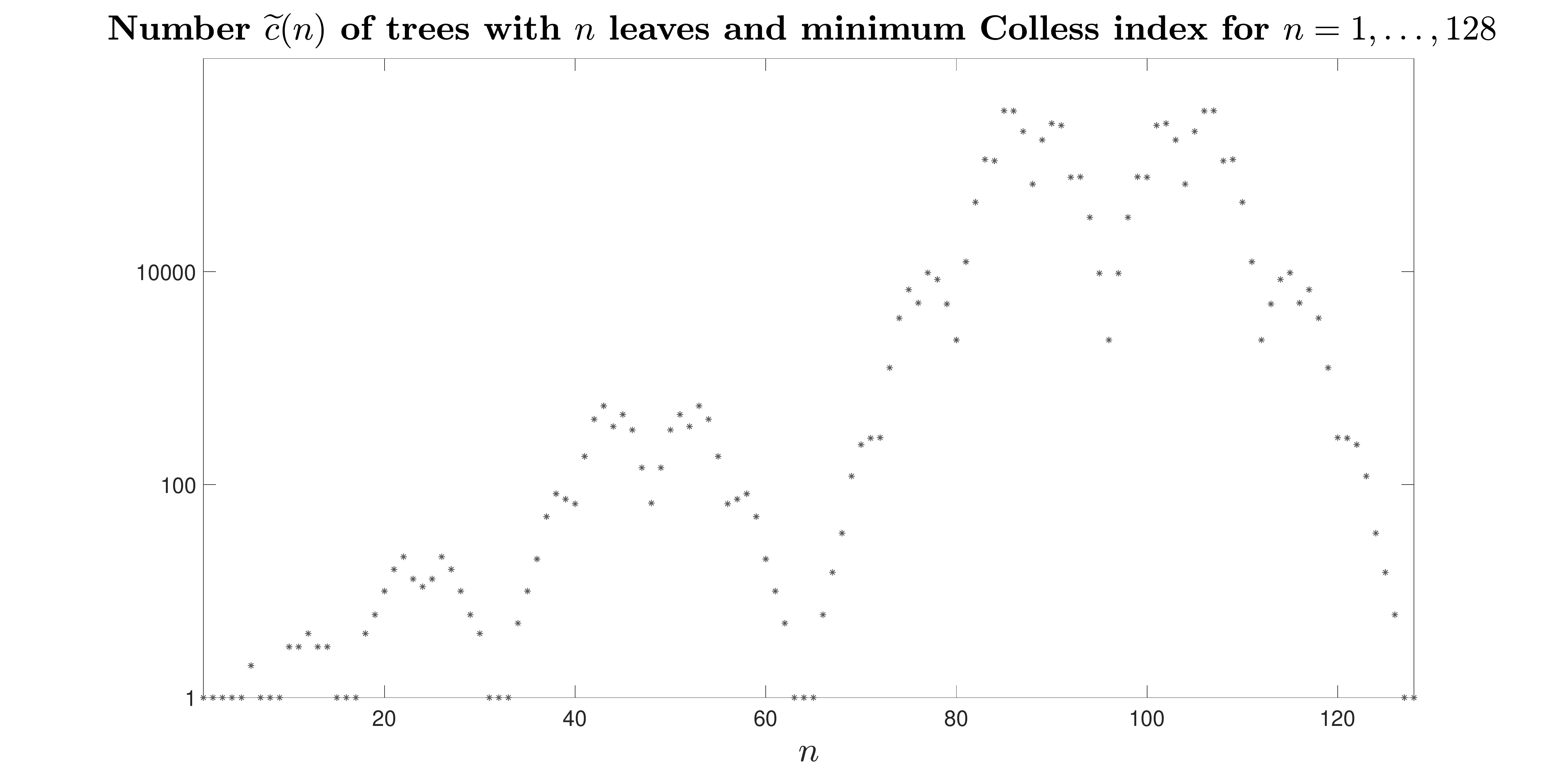}
	\caption{Plot of $\widetilde{c}(n)$  for $n=1, \ldots, 128$.}
	\label{Fig_NumberCollessMinima}
\end{figure}

\subsection{Counting minimal Colless trees}
Let $\widetilde{\mathcal{MC}}_n$ denote the set of all minimal Colless trees in $\TT_n$  and   $\widetilde{c}(n)\coloneqq\big|\widetilde{\mathcal{MC}}_n\big |$ its cardinality. 
To simplify the notations, set
$$
\widetilde{QB}(n)\coloneqq\{(n_a,n_b)\in QB(n)\mid n_a>n_b\}.
$$
We have the following recursive formula for $\widetilde{c}(n)$:

\begin{proposition} \label{Thm_number_Colless}
The sequence $\widetilde{c}(n)$ satisfies that  $\widetilde{c}(1) = 1$ and, for every $n\geq 2$,
$$
\widetilde{c}(n) =\hspace*{-2ex} \sum_{(n_a,n_b)\in \widetilde{QB}(n)}\hspace*{-4ex}   \widetilde{c}(n_a)\cdot \widetilde{c}(n_b) +\binom{\widetilde{c}(n/2)+1}{2}\cdot \delta_{\mathit{even}}(n)
$$
where $\delta_{\mathit{even}}(n)=1$ if $n$ is even and 0 otherwise.
\end{proposition}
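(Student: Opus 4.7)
The plan is to prove the recurrence by decomposing every minimal Colless tree in $\TT_n$ as $T=(T_a,T_b)$, relying on the characterization given in Proposition \ref{lem:charmin1}: a tree $T=(T_a,T_b)\in \TT_n$ with $T_a\in \TT_{n_a}$ and $T_b\in \TT_{n_b}$, $n_a\geq n_b$, is a minimal Colless tree if, and only if, $T_a$ and $T_b$ are minimal Colless trees and $(n_a,n_b)\in QB(n)$. This immediately yields the base case $\widetilde{c}(1)=|\TT_1|=1$, and reduces the induction step to a counting argument.

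For $n\geq 2$, I would partition $QB(n)$ into $\widetilde{QB}(n)$ and, when $n$ is even, the singleton $\{(n/2,n/2)\}$ (by definition of $\widetilde{QB}(n)$, these are disjoint and cover $QB(n)$). For each element $(n_a,n_b)\in QB(n)$, let $\widetilde{\mathcal{MC}}_{n,(n_a,n_b)}$ denote the set of (isomorphism classes of) minimal Colless trees in $\widetilde{\mathcal{MC}}_n$ whose root decomposition is $(T_a,T_b)$ with $T_a\in \TT_{n_a}$ and $T_b\in \TT_{n_b}$. Since the maximal pending subtree decomposition of a tree is unique (up to the convention $n_a\geq n_b$), these sets partition $\widetilde{\mathcal{MC}}_n$.

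Next I would count each piece. If $(n_a,n_b)\in \widetilde{QB}(n)$, then $n_a>n_b$ and the two maximal pending subtrees can be distinguished by their number of leaves; hence an element of $\widetilde{\mathcal{MC}}_{n,(n_a,n_b)}$ is determined by an ordered pair consisting of a minimal Colless tree in $\TT_{n_a}$ and one in $\TT_{n_b}$, and conversely every such ordered pair gives rise to a distinct element. This yields $|\widetilde{\mathcal{MC}}_{n,(n_a,n_b)}|=\widetilde{c}(n_a)\cdot \widetilde{c}(n_b)$. If $n$ is even and $(n_a,n_b)=(n/2,n/2)$, the two maximal pending subtrees cannot be distinguished by their number of leaves, so an element of $\widetilde{\mathcal{MC}}_{n,(n/2,n/2)}$ is determined by an \emph{unordered} pair (a multiset of size 2) of minimal Colless trees in $\TT_{n/2}$; the number of such multisets from a set of size $\widetilde{c}(n/2)$ is $\binom{\widetilde{c}(n/2)+1}{2}$. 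Summing over the partition yields the claimed recurrence.

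The only subtle point is the last step: one must be careful that, in the case $n_a=n_b=n/2$, picking an ordered pair $(T_a,T_b)$ and the swapped pair $(T_b,T_a)$ produces isomorphic rooted trees, so unordered pairs (with repetition) are the correct counting unit; this is precisely what $\binom{\widetilde{c}(n/2)+1}{2}=\binom{\widetilde{c}(n/2)}{2}+\widetilde{c}(n/2)$ counts. Everything else is a routine application of Proposition \ref{lem:charmin1} and the uniqueness of the root decomposition.
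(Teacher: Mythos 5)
Your proposal is correct and follows essentially the same route as the paper's proof: both invoke Proposition \ref{lem:charmin1} to reduce the count to choices of $(n_a,n_b)\in QB(n)$ together with minimal Colless subtrees, count ordered pairs $\widetilde{c}(n_a)\cdot\widetilde{c}(n_b)$ when $n_a>n_b$, and count unordered multisets $\binom{\widetilde{c}(n/2)+1}{2}$ when $n_a=n_b=n/2$. The paper merely phrases the same partition as explicit bijections with a set $X_n$ and a disjoint union of unordered pairs and singletons.
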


\begin{proof}
By Lemma \ref{max_subtrees} and Proposition \ref{lem:charmin1}, $T=(T_a,T_b)\in \widetilde{\mathcal{MC}}_n$ if, and only if, 
$(n_a,n_b)\in QB(n)$, $T_a\in \widetilde{\mathcal{MC}}_{n_a}$ and $T_b\in \widetilde{\mathcal{MC}}_{n_b}$.  
The correctness of the formula in the statement stems then from the following  facts:
\begin{itemize}
\item If $n$ is odd, $\widetilde{\mathcal{MC}}_n$ is in bijection with the set
$$
X_n=\big\{(n_a,n_b,T_a,T_b)\mid  (n_a,n_b)\in \widetilde{QB}(n), T_a\in \widetilde{\mathcal{MC}}_{n_a}, T_b\in \widetilde{\mathcal{MC}}_{n_b}\big\},
$$
through the relation 
$$
T=(T_a,T_b)\in \widetilde{\mathcal{MC}}_n\Longleftrightarrow (n_a,n_b,T_a,T_b)\in X_n.
$$

\item If $n$ is even, $\widetilde{\mathcal{MC}}_n$ is in bijection with the set
$$
X_n \sqcup \big\{\{T_a,T_b\}\mid T_a,T_b\in \widetilde{\mathcal{MC}}_{n/2}, T_a\neq T_b\big\}
 \sqcup
\big\{T_a\mid T_a\in  \widetilde{\mathcal{MC}}_{n/2}\big\}
$$
through the relation 
$$
T=(T_a,T_b)\in \widetilde{\mathcal{MC}}_n\Longleftrightarrow \begin{cases}
n_a>n_b\mbox{ and }(n_a,n_b,T_a,T_b)\in X_n,\mbox{ or}\\
n_a=n_b, T_a\neq T_b, \mbox{ and }T_a,T_b\in \widetilde{\mathcal{MC}}_{n/2},\mbox{ or}\\
n_a=n_b, T_a=T_b\in \widetilde{\mathcal{MC}}_{n/2}\end{cases}
$$

\item The cardinality of $X_n$ is $\sum\limits_{(n_a,n_b)\in \widetilde{QB}(n)}\hspace*{-4ex}   \widetilde{c}(n_a)\cdot \widetilde{c}(n_b) $ and the cardinality of
$$
 \big\{\{T_a,T_b\}\mid T_a,T_b\in \widetilde{\mathcal{MC}}_{n/2}, T_a\neq T_b\big\}
\cup
\big\{T_a\mid T_a\in  \widetilde{\mathcal{MC}}_{n/2}\big\}
$$
is $\binom{\widetilde{c}(n/2)+1}{2}$. \qed
\end{itemize}
\end{proof}

The sequence $\widetilde{c}(n)$ seems to be new in the literature, and it has been added to the \textsl{Online Encyclopedia of Integer Sequences} \citep{OEIS} as sequence A307689. It would definitely be of interest to find an explicit formula for $\widetilde{c}(n)$ and to analyze  the fractal structure  suggested by Figure \ref{Fig_NumberCollessMinima}, which continues for larger values of $n$ and  seems also related to the Blancmange curve (compare Figure \ref{Fig_NumberCollessMinima} with Figure \ref{Fig_MinimumColless}).

{
\begin{remark}\label{rem:widehat}
Recall that, as we mentioned in the Introduction, even though the Colless index is mainly used to study the shape of \emph{phylogenetic trees} (i.e. of leaf-labeled trees where the leaf labels may for example correspond to some extant taxa or any other Operational Taxonomic Units) in the present manuscript we deal with \emph{unlabeled trees}. This decision was simply due to the fact that the Colless index only depends on the topology of the tree and not on the actual taxa labeling its leaves, and therefore, in particular, the fact that a phylogenetic tree achieves or does not achieve the minimum Colless index for its number of leaves does not depend on its actual labels. When counting minimal Colless trees, however, it might be of interest not only to count the number of minimal Colless tree topologies, but also to count the number of minimal Colless phylogenetic trees on a given set of $n$ taxa. 

Now, using some combinatorial arguments (in particular the famous Burnside's lemma) it can be shown that, for any given bifurcating tree $T \in  \TT_n$, there are $n!/2^{s(T)}$ many \emph{phylogenetic trees on $n$ leaves} ---that is, phylogenetic trees  with their leaves bijectively labeled by $\{1,\ldots,n\}$--- of this shape, where $s(T)$ denotes the number of symmetry vertices of $T$; see, for instance,  Corollary 2.4.3 in \citep{Semple2003}. 
Let $\widehat{c}(n)$ denote the number of phylogenetic trees on $n$ leaves that have minimum Colless index. Then, we can formally calculate this number as the sum, over all unlabeled minimal Colless trees $T \in \widetilde{\mathcal{MC}}_n$, of the number of phylogenetic trees on $n$ leaves that have shape $T$:
$$ 
\widehat{c}(n) = \sum_{T \in \widetilde{\mathcal{MC}}_n} \frac{n!}{2^{s(T)}}.
$$
Unfortunately, we have not been able to find even a recurrence for this sequence. We shall return to it in Remark \ref{rem:maxGFB}.
\end{remark}}

\subsection{{Greedy from the bottom trees: another particular} family of minimal Colless trees} \label{sec:GFB}

As we have seen in Theorem \ref{thm:minC}, the maximally balanced trees $T_n^{\mathit{mb}}$ have minimum Colless index. 
These trees are obtained through the recursive strategy suggested by Corollary \ref{colless_minimum}: given a number $n$ of leaves, we split $n$ into $n_a = \lceil n/2 \rceil$ and $n_b = \lfloor n/2 \rfloor$ and we produce a tree  $T=(T_a, T_b)$ with $T_a\in \TT_{n_a}$ and $T_b\in \TT_{n_b}$ constructed recursively through the same procedure. This strategy could be understood to be \enquote{greedy from the top} because, starting at the root and going towards the leaves, we bipartition the leaf set of each rooted subtree into two sets so that the difference of their cardinalities is minimized.  

There is another strategy for building minimal Colless trees, which we call \enquote{greedy from the bottom}, where instead of minimally splitting the sets of leaves, one minimally joins rooted subtrees by pending them from a common parent of their roots, as in the coalescent process \citep{kingman82}. More specifically, these trees are constructed by means of the following algorithm:\medskip

\begin{algorithm}[H] \label{alg_gfb}
\SetAlgoLined
$n \leftarrow$ number of taxa\;
$\treeset \leftarrow n$  trees consisting of one node each\;
$min \leftarrow 1$\  \texttt{//least number of leaves of all trees in treeset}\;
\While{ $\vert \treeset \vert >1$}{
	$u \leftarrow$ tree from $\treeset$ with $min$ leaves\;
	$\treeset = \treeset \setminus \{u\}$\;
	$min \leftarrow$ minimal number of leaves of all trees in $\treeset$\;
	$v \leftarrow$ tree from $\treeset$ with $min$ leaves\;
	$\treeset = \treeset \setminus \{v\}$\;
	$newtree \leftarrow$ tree consisting of new root $\rho_{uv}$ and maximal pending subtrees $u$ and $v$\;
	$\treeset \leftarrow \treeset \cup \{newtree\}$\;
	$min \leftarrow$ minimal number of leaves of all trees in $\treeset$\;
}
$finaltree \leftarrow \treeset[1]$\ \texttt{//the only remaining element of treeset}\;
\Return $finaltree$\;
\caption{Greedy from the bottom}
\label{gfb}
\end{algorithm}\medskip

We shall call henceforth {any bifurcating tree with $n$ leaves that results from Algorithm \ref{gfb} \emph{greedy from the bottom}, or simply \emph{GFB}}, and we shall denote it by $T^{\mathit{gfb}}_n$. This notation leads to no ambiguity, because of the following lemma. 

\begin{lemma}\label{cor:GFBunique}
For every $n\geq 1$, there exists only one GFB tree with $n$ leaves (up to isomorphisms).
\end{lemma}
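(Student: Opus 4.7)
The plan is to prove, by induction on the number of merges performed by Algorithm~\ref{alg_gfb}, the following invariant, which is strictly stronger than the statement: at every stage of the execution, for every size $s$ appearing in $\treeset$, all trees currently in $\treeset$ of size $s$ are isomorphic to a single shape $T_s$ depending only on $s$ (and $n$). Once this is established, applying it to the last stage (when $|\treeset|=1$) immediately gives the uniqueness of $T^{\mathit{gfb}}_n$ up to isomorphism.

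The argument is made tractable by the preliminary observation that the multiset of \emph{sizes} of the trees in $\treeset$ evolves deterministically: each step removes the two smallest sizes $a\leq b$ and inserts $a+b$, so the sequence of size-multisets depends only on $n$ and the only non-determinism lies in choosing representatives among trees sharing the current minimum size. With this, the base case is immediate (every initial tree is the single node), and the inductive step unfolds as follows. Assume the invariant through stage $i$; the algorithm merges two trees of sizes $a\leq b$ which, by the inductive hypothesis, have the common shapes $T_a$ and $T_b$, into a new tree of size $a+b$ and shape $(T_b,T_a)$. If $\treeset$ contains no other tree of size $a+b$ at that moment, we simply set $T_{a+b}\coloneqq(T_b,T_a)$; the delicate case is when $T_{a+b}$ was already defined at an earlier stage $i'<i$ by merging sizes $a'\leq b'$ with $a'+b'=a+b$, and we must check that $(T_{b'},T_{a'})\cong(T_b,T_a)$, which (by the inductive hypothesis applied to sizes $a,b,a',b'$) amounts to showing the multiset equality $\{a',b'\}=\{a,b\}$.

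The main (and essentially only) obstacle is establishing this last coincidence. The key tool is the simple \emph{monotonicity} of the minimum of the size-multiset: each step removes the smallest and second-smallest elements and inserts their strictly larger sum, so the minimum of $\treeset$ is non-decreasing along the execution. In particular, right after stage $i'$ the new minimum is at least the second-smallest before the merge, namely $b'$, and this bound persists at every later stage; hence $a\geq b'$. Combined with $a\leq b$ and $a+b=a'+b'$, this forces $b\leq a'$, so the chain $a\geq b'\geq a'\geq b\geq a$ collapses to $a=b=a'=b'$. Consequently $(T_b,T_a)\cong (T_{b'},T_{a'})$, closing the inductive step; applying the invariant to the unique tree of size $n$ at the end of the algorithm yields the uniqueness of $T^{\mathit{gfb}}_n$.
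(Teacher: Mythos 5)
Your proof is correct and follows essentially the same strategy as the paper's: induction on the iterations of the algorithm, using the fact that the size multiset evolves deterministically and that the minimum size is non-decreasing, with the crux being that a repeated size must arise from the same pair of summands. Your handling of that crux --- the chain $a\geq b'\geq a'\geq b\geq a$ obtained from the persistence of the post-merge lower bound $b'$ on the minimum --- is a slightly cleaner packaging of the paper's argument, which reaches the same conclusion by contradiction on the existence of trees of sizes $m_1$ and $m_2$ at the earlier stage.
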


\begin{proof}
When $n=1$, Algorithm \ref{alg_gfb} skips the \textbf{while} loop and it returns the only tree in $\TT_1$. Assume now that $n\geq 2$. With the notations of Algorithm \ref{alg_gfb}, let us denote by  $\treeset_k$, for $k=1,\ldots,n-1$, the content of the auxiliary tree multiset  $\treeset$ after the $k$-th iteration of the \textbf{while} loop. 
We shall prove by induction on $k$ that, for every two applications of Algorithm \ref{gfb} with input $n$ (whose \emph{treesets} will be distinguished henceforth with superscripts $(1)$ and $(2)$):
\begin{enumerate}[(a)] 
\item We have the equality of tree multisets
$\treeset_k^{(1)}={\treeset}_k^{(2)}$, which means that these two multisets of trees have the same elements with the same multiplicities;  and 
\item  For every $2\leq m\leq n$, all trees with $m$ leaves created in the first $k$  iterations of the loop in both applications of the algorithm have the same shape. 
\end{enumerate}
This will imply that when, after $n-1$ iterations of the loop, both multisets $\mathit{treeset}^{(i)}_{n-1}$, $i=1,2$, consist of a single tree with $n$ leaves, these two trees are the same.

The base case $k=1$ is obvious, because $\mathit{treeset}_1$ always consists of a cherry and $n-2$ isolated nodes.
Assume now that the statement is true for the $(k-1)$-th iteration, and in particular that, immediately before the $k$-th iteration, $\treeset_{k-1}^{(1)}={\treeset}_{k-1}^{(2)}$ (by (a)) and this multiset contains trees of only one shape for each present number of leaves (by (b)). This implies that the minimal number of leaves of a tree in $\treeset_{k-1}^{(1)}$ and ${\treeset}_{k-1}^{(2)}$ is the same, let us call it $m_1$, and that all trees with $m_1$ leaves in both $\mathit{treeset}$ have the same shape. Moreover,  if we remove one tree with $m_1$ leaves  from each $\treeset$ (which will be the same tree ---up to isomorphisms--- in both applications of the algorithm), the resulting multisets are equal again, and therefore the minimal number of leaves of a tree in each one of them is again the same, let us call it $m_2$, and all trees with $m_2$ leaves in both multisets are equal. Then, in the $k$-th iteration of the loop in each application of the algorithm, we remove from the corresponding $\mathit{treeset}$ the same tree with $m_1$ leaves and the same tree with $m_2$ leaves and we add the same tree with $m_1+m_2$ leaves, obtained by pending the removed trees to a common root. This proves that $\treeset_{k}^{(1)}={\treeset}_{k}^{(2)}$, i.e. assertion (a). 

To prove that (b) also holds, it remains to check that if some $\treeset_{j}^{(1)}$ with $j\leq k-1$ already contained some tree $T'$ with $m_1+m_2$ leaves, then it has the same shape as the new one. 
Assume that such a tree $T'$ with $m_1+m_2$ leaves has been created in the $j$-th iteration of the loop. Let $m_1'$ and $m_2'$, with $m_1'\leq m_2'$, be the numbers of leaves of the maximal pending subtrees of $T'$. By construction, this means that the minimal number of leaves of any tree in the multiset $\treeset_{j-1}^{(1)}$ was $m_1'$, and the second minimal number of leaves
was $m_2'$. Now, remember that, in each iteration of the loop, two trees are removed from the $\treeset$ and replaced by a tree with number of leaves the sum of the numbers of leaves of the removed trees. This clearly implies that the minimal and second minimal numbers of leaves of members of the $\treeset$ cannot decrease in any such iteration.
Therefore, $m_1'\leq m_1$, because if $m_1< m_1'$, then $\treeset_{j-1}^{(1)}$ cannot contain any tree with $m_1$ leaves (as $m_1'$ is the minimal number of leaves of a member of $\treeset_{j-1}^{(1)}$) and such a tree cannot be added in further iterations of the loop, but there is at least one such tree in $\treeset_{k-1}^{(1)}$. Since $m_1+m_2=m_1'+m_2'$, if $m_1'< m_1$ then $m_2'> m_2$, but a similar argument shows that this inequality is in contradiction with the fact that $m_2'$ is the smallest number of leaves of a tree in $\treeset_{j-1}^{(1)}$ after removing a tree with $m_1'$ leaves.
Therefore, $m_1'= m_1$ and hence $m_2'= m_2$, too. But then, by (b) in  the induction hypothesis, the trees with $m_1$ and $m_2$ leaves combined in the $j$-th iteration of the first application of the algorithm have the same shape as the trees with $m_1$ and $m_2$ leaves combined in the $k$-th iteration, and therefore the tree with $m_1+m_2$ leaves that already existed in $\treeset_{j}^{(1)}$ has the same shape as the one added in the $k$-th iteration. This completes the proof of the inductive step. \qed
\end{proof}

Note that Algorithm \ref{alg_gfb} greedily clusters trees of minimal numbers of leaves starting with single nodes and proceeding until only one tree is left, which is the reason we call the resulting trees ``greedy from the bottom.'' Our main goal in this subsection is to prove that they are also minimal Colless and, in general, different from the maximally balanced trees with the same number of leaves  (cf. Figure \ref{Fig_ExampleGFB}). 

Next result easily implies  that any rooted subtree of a  GFB tree is also a GFB tree, by induction on the depth of the subtree's root.

\begin{lemma} \label{gfb_subtrees}
If $T=(T_a,T_b)$ is a GFB tree, then $T_a$ and $T_b$ are also GFB trees.
\end{lemma}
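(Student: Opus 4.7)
The plan is, starting from a GFB tree $T = (T_a, T_b)$ produced by some execution $E$ of Algorithm \ref{alg_gfb} on $n$ single-node inputs, to extract from $E$ a valid execution of Algorithm \ref{alg_gfb} on $n_a$ single-node inputs whose output is $T_a$ (and analogously for $T_b$).

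First I would show, by an easy induction on the iteration number, that every tree present in the \emph{treeset} during $E$ is isomorphic to a rooted subtree $T_v$ of $T$ for some $v \in V(T)$. This immediately partitions the intermediate trees into three classes: rooted subtrees of $T_a$, rooted subtrees of $T_b$, and the final tree $T$ itself (which appears only after the very last iteration). Consequently, every merge step of $E$ either (i) combines two rooted subtrees of $T_a$, (ii) combines two rooted subtrees of $T_b$, or (iii) is the single final merge that combines $T_a$ with $T_b$. Let $E_a$ denote the sub-sequence of merge steps of type (i), taken in the order they occur in $E$, and let $E_b$ denote the sub-sequence of type (ii) merges.

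The key step is to verify that $E_a$, viewed in isolation, is itself a valid execution of Algorithm \ref{alg_gfb} starting from the $n_a$ single-node trees corresponding to the leaves of $T_a$. At any moment during the execution of $E_a$, the ``restricted treeset'' consisting of the currently present rooted subtrees of $T_a$ is a submultiset of the full \emph{treeset} of $E$ at the corresponding moment. The pair $(u,v)$ merged by $E$ at that step lies in the restricted submultiset by construction, and by the minimality rule applied in $E$ it has the two smallest leaf counts in the full \emph{treeset}; hence $(u,v)$ also has the two smallest leaf counts in the restricted submultiset, so merging $u$ and $v$ is a legal move of Algorithm \ref{alg_gfb} on $n_a$ inputs. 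Consequently $E_a$ is a valid execution with output $T_a$, showing that $T_a$ is a GFB tree; the same argument applied to $E_b$ shows $T_b$ is GFB.

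The only subtle point of this argument is the preservation of the minimality rule when passing from the full \emph{treeset} to the restricted submultiset, but this is immediate from the fact that the minimum of a submultiset is never smaller than the minimum of its ambient multiset, together with the observation that the pair actually selected by $E$ already lies in the submultiset.
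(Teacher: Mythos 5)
Your proposal is correct and follows essentially the same route as the paper's proof: both arguments observe that every merge before the last one combines two rooted subtrees lying entirely within $T_a$ or entirely within $T_b$, and that a pair minimizing leaf counts in the full \emph{treeset} also minimizes them in the submultiset of subtrees of $T_a$, so the restricted sequence of merges is a legal run of the algorithm on $n_a$ leaves. Your version merely makes the ``extracted execution'' $E_a$ explicit, which the paper leaves implicit.
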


\begin{proof}
Let $T = (T_a,T_b)$ be a GFB tree and let $n_a$ and $n_b$ denote the numbers of leaves of $T_a$ and $T_b$, respectively.
This entails that Algorithm \ref{alg_gfb} induces a bipartition of the $n$ leaves into two disjoint sets of sizes $n_a$ and $n_b$, respectively, in the sense  that all iterations  of the \textbf{while} loop except for the very last one  combine  pairs of subtrees with both sets of leaves contained  either in $V_L(T_a)$ or in $V_L(T_b)$.

Now, when in an iteration of the algorithm a pair of subtrees of $T_a$ is combined, it is because their numbers of leaves are the two smallest ones in the global $\treeset$, and hence also in the submultiset of $\treeset$ consisting only of trees with leaves in $V_L(T_a)$. This shows that $T_a$ is obtained through the application of Algorithm \ref{alg_gfb} to $n_a$ leaves, i.e. $T_a=T^{\mathit{gfb}}_{n_a}$, and by symmetry $T_b=T^{\mathit{gfb}}_{n_b}$. \qed
\end{proof}

The next proposition characterizes the pairs of numbers of leaves of the maximal pending subtrees of a GFB tree. Besides allowing the construction of GFB trees through an alternative top-to-bottom procedure, by  splitting clusters into subclusters of suitable sizes, this characterization easily entails that the GFB trees almost never are maximally balanced, and moreover it will allow us to use Proposition \ref{lem:charmin1} to prove that the GFB trees are minimal Colless (see Theorem \ref{GFB_is_minimal} below). 

\begin{proposition} \label{GFB_Decomposition}
Let $T_n^{\mathit{gfb}}=(T_a,T_b)$ be a GFB tree with $n\geq 2$, $T_a\in \TT_{n_a}$, $T_b\in \TT_{n_b}$ and $n_a\geq n_b$. Let $n=2^m+p$ with $m=\lfloor \log_2(n)\rfloor$ and $0\leq p<2^m$. Then, we have:
	\begin{enumerate}[\rm (i)]
	\item If $0\leq p\leq 2^{m-1}$, then $n_a = 2^{m-1}+p$, $n_b = 2^{m-1}$ and $T_b$ is fully symmetric. 
	\item If $2^{m-1}\leq p<2^m$, $n_a = 2^{m}$, $n_b=p$ and $T_a$ is fully symmetric.
	\end{enumerate}
\end{proposition}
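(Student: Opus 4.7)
I would prove Proposition \ref{GFB_Decomposition} by strong induction on $n$, exploiting Lemma \ref{cor:GFBunique} (so that it suffices to exhibit \emph{any} valid execution of Algorithm \ref{alg_gfb} that outputs a tree of the claimed form) together with Lemma \ref{gfb_subtrees}. The base case $n=2$ is immediate: pairing the two singletons yields the cherry, matching case (i) with $m=1$, $p=0$.

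For the inductive step I would first record an auxiliary fact, namely that $T^{\mathit{gfb}}_{2^k}=T^{\mathit{fs}}_k$ for every $k\ge 0$. This follows by an inner induction on $k$: starting from $2^k$ singletons, at every stage of Algorithm \ref{alg_gfb} the multiset $\treeset$ consists entirely of trees of the same power-of-two leaf count, so each iteration just pairs two equal fully symmetric trees and produces the next-larger fully symmetric tree, ending in $T^{\mathit{fs}}_k$. With this in hand, the main argument for $n=2^m+p$ with $0<p<2^m$ goes as follows. I partition the $n$ initial leaves into two disjoint groups: in case (i) ($p\le 2^{m-1}$), a group $B$ of size $2^{m-1}$ and a group $A$ of size $2^{m-1}+p$; in case (ii) ($p\ge 2^{m-1}$), a group $B$ of size $2^m$ and a group $A$ of size $p$. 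I then show that Algorithm \ref{alg_gfb} can be scheduled so that every iteration \emph{except the last one} combines two trees belonging to the same group. Granted this, the auxiliary fact applied inside $B$ produces $T^{\mathit{fs}}_{m-1}$ (case (i)) or $T^{\mathit{fs}}_m$ (case (ii)), the strong induction hypothesis applied inside $A$ produces $T^{\mathit{gfb}}_{2^{m-1}+p}$ or $T^{\mathit{gfb}}_p$ respectively (both have strictly fewer than $n$ leaves), and the final iteration combines these two subtrees to give exactly the decomposition claimed by the proposition.

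The main obstacle is verifying that the separated execution above is in fact a valid run of Algorithm \ref{alg_gfb}, i.e.\ that at every non-final iteration one can pick two minimum-leaf-count trees that lie in the same group. Group $B$'s sub-execution is very tame: by the auxiliary fact, throughout its run $\treeset_B$ consists of a power-of-two number of fully symmetric trees of a common power-of-two leaf count. The sub-execution inside $A$ is less uniform, but its state is controlled by the inductive hypothesis (via Lemma \ref{gfb_subtrees}), which lets one describe the leaf-count profile of $\treeset_A$ step by step. I would then check, in a short case analysis, that whenever the unique minimum-size tree of $\treeset$ lies in one of the groups, either (a) the second-smallest leaf count is attained by a tree in the same group, or (b) ties among second-smallest allow choosing a tree within the same group; the constraint $p\le 2^{m-1}$ in case (i) (respectively $p<2^m$ in case (ii)) is precisely what ensures that the two size profiles remain compatible in this sense. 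The boundary value $p=2^{m-1}$, at which both cases of the proposition apply and predict the same tree $(T^{\mathit{fs}}_m, T^{\mathit{fs}}_{m-1})$, provides a built-in consistency check for the argument.
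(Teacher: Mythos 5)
Your architecture is genuinely different from the paper's: you reduce the proposition, via the uniqueness of the GFB tree (Lemma \ref{cor:GFBunique}), to exhibiting \emph{one} legal run of Algorithm \ref{alg_gfb} whose final step merges a tree on the claimed $n_a$ leaves with one on the claimed $n_b$ leaves, and you propose to build such a run by interleaving independent sub-runs on two leaf groups $A$ and $B$ of the claimed sizes. The auxiliary fact $T^{\mathit{gfb}}_{2^k}=T^{\mathit{fs}}_k$ is easy and correct. The paper avoids tracking the execution altogether: for even $n$ it collapses all cherries to single leaves, observes that the quotient tree is again GFB on $n/2$ leaves, and transfers the induction hypothesis back; for odd $n$ it first proves Lemma \ref{subtree_in_common} (that $T^{\mathit{gfb}}_n$ shares a maximal pending subtree with each of $T^{\mathit{gfb}}_{n-1}$ and $T^{\mathit{gfb}}_{n+1}$) and then pins down the split by comparing the possible shared subtrees.

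There is, however, a genuine gap at the step you yourself flag as the main obstacle. The claim that the run can be scheduled so that every non-final iteration merges two trees of the same group is not a routine verification; it carries essentially the whole content of the proposition, and your sketch lacks the ingredients to establish it. First, the inductive hypothesis only describes the top-level split of $T^{\mathit{gfb}}_{|A|}$; it does not give the leaf-count profile of $\treeset_A$ at every intermediate stage, which is exactly what must be compared against the homogeneous power-of-two profile of $\treeset_B$ to exclude a forced cross-group merge. You would need either a strengthened induction about the entire multiset sequence or an explicit invariant (for instance, that at every stage at most one tree has a non-power-of-two leaf count, together with a bound relating its size to $B$'s current level). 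Second, the danger is real rather than notational: with the wrong split the scheduling genuinely breaks. For $n=13$ and the incorrect partition $(9,4)$ one reaches the state $\{5_A,4_A\}\cup\{4_B\}$, in which \emph{every} legal move of the algorithm merges the $4$-leaf tree of $A$ with the $4$-leaf tree of $B$. So one cannot argue that ties always rescue the construction; showing that they do for the claimed split, and that the constraint $p\leq 2^{m-1}$ (resp.\ $p<2^m$) is what makes the two profiles compatible, requires a nontrivial analysis of the joint dynamics rather than a short case check. Until that analysis is supplied, the proof is incomplete, although the strategy itself appears salvageable.
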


Since the proof of this proposition is quite long, we postpone it until Appendix A.2 at the end of the manuscript. 

\begin{remark}
We want to point out here that for the proof of Proposition \ref{GFB_Decomposition} provided in Appendix A.2, we derive a technical lemma (Lemma \ref{subtree_in_common}) stating that if $n\geq 3$ is any odd number of leaves, then the GFB trees $T_{n-1}^{\mathit{gfb}}$, $T_{n}^{\mathit{gfb}}$, and $T_{n+1}^{\mathit{gfb}}$ have a maximal pending subtree in common, which is moreover fully symmetric. Using that the maximal pending subtrees of a GFB tree are again GFB (Lemma \ref{gfb_subtrees}), their explicit numbers of leaves provided by Proposition  \ref{GFB_Decomposition}, and the next proposition, which clearly implies that the GFB trees with numbers of leaves that are powers of 2 must be fully symmetric, the thesis of Lemma \ref{subtree_in_common}  is easily extended to the GFB trees  $T_{n-1}^{\mathit{gfb}}$, $T_{n}^{\mathit{gfb}}$, and $T_{n+1}^{\mathit{gfb}}$ for any number of leaves $n$ that is not of the form $3\times 2^m$.
\end{remark}

Now, as we announced, we  use Proposition  \ref{GFB_Decomposition} to prove  that the GFB trees always have minimum Colless index:

\begin{proposition} \label{GFB_is_minimal}
Let $T_n^{\mathit{gfb}}$ be the GFB tree with $n$ leaves. Then, $\mathcal{C}(T_n^{\mathit{gfb}}) = c_n$.
\end{proposition}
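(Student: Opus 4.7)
The plan is to proceed by induction on $n$, combining Proposition \ref{GFB_Decomposition} (which pins down the numbers of leaves of the maximal pending subtrees of $T_n^{\mathit{gfb}}$) with Lemma \ref{gfb_subtrees} (inheritance of the GFB property) and Lemma \ref{colless_sum} (the recursion for the Colless index). The base case $n=1$ is immediate since $\mathcal{C}(T_1^{\mathit{gfb}})=0=c_1$. For the inductive step, write $T_n^{\mathit{gfb}}=(T_a,T_b)$ with $T_a\in\TT_{n_a}$, $T_b\in\TT_{n_b}$, $n_a\geq n_b$. By Lemma \ref{gfb_subtrees}, $T_a=T_{n_a}^{\mathit{gfb}}$ and $T_b=T_{n_b}^{\mathit{gfb}}$, and since $n_a,n_b<n$ the induction hypothesis gives $\mathcal{C}(T_a)=c_{n_a}$ and $\mathcal{C}(T_b)=c_{n_b}$. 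By Lemma \ref{colless_sum},
$$\mathcal{C}(T_n^{\mathit{gfb}})=c_{n_a}+c_{n_b}+n_a-n_b,$$
so the task reduces to proving the numerical identity $c_{n_a}+c_{n_b}+n_a-n_b=c_n$ for the values of $(n_a,n_b)$ dictated by Proposition \ref{GFB_Decomposition}.

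Write $n=2^m+p$ with $m=\lfloor\log_2(n)\rfloor$ and $0\leq p<2^m$. In Case (i) of Proposition \ref{GFB_Decomposition}, $n_b=2^{m-1}$, so $c_{n_b}=0$ by Corollary \ref{min_colless}, and the identity to check reduces to $c_{2^{m-1}+p}+p=c_{2^m+p}$. In Case (ii), $n_a=2^m$, so $c_{n_a}=0$, and the identity reduces to $c_p+2^m-p=c_{2^m+p}$. Both can be verified via the closed formula of Theorem \ref{thm_binaryExpansion}: if the binary expansion of $p$ is $p=\sum_{j=2}^{\ell}2^{m_j}$ with $m\geq m_2>\cdots>m_\ell$, then the expansions of $n$, $n_a$ and $n_b$ needed in each case are obtained by prepending a single $2^m$ (resp.\ by replacing the leading $2^m$ by $2^{m-1}$), and the telescoping of the sums $\sum_j 2^{m_j}(m_1-m_j-2(j-2))$ produces exactly the extra additive terms $p$ or $2^m-p$. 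The two corner situations $p=0$ (with $n$ a power of 2) and $p=2^{m-1}$ (with $n=3\cdot 2^{m-1}$, both pending subtrees fully symmetric) have to be checked separately, but follow directly from the recursion $c_{2n}=2c_n$ of Corollary \ref{colless_minimum}.

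An alternative, and perhaps more transparent, way to finish the induction is to invoke Proposition \ref{lem:charmin1}: since $T_a$ and $T_b$ are already minimal Colless by induction, it suffices to verify that the pair $(n_a,n_b)$ returned by Proposition \ref{GFB_Decomposition} lies in $QB(n)$. For $p<2^{m-1}$ this pair matches one of the pairs of type (b.3) in Proposition \ref{cor:QB} (namely the one indexed by $j=2$, corresponding to splitting off the leading $2^{m-1}$); for $p\geq 2^{m-1}$ (equivalently $n_0\equiv 3\pmod 4$ or similar) it matches a pair of type (b.2) or (b.4); and the power-of-2 case is trivial by (a). The bookkeeping required to match the GFB decomposition with the explicit list in Proposition \ref{cor:QB} is the main obstacle, but once the correspondence is set up the membership $(n_a,n_b)\in QB(n)$ is immediate and Proposition \ref{lem:charmin1} concludes the inductive step.
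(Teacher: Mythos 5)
Your proposal is correct and follows essentially the same route as the paper's proof: induction on $n$, using Lemma \ref{gfb_subtrees} and the induction hypothesis to reduce to the identity $c_{n_a}+c_{n_b}+n_a-n_b=c_n$ for the pairs given by Proposition \ref{GFB_Decomposition}, and then verifying that identity case by case via the binary-expansion formula of Theorem \ref{thm_binaryExpansion} (the paper writes out the telescoping you only sketch, in four cases according to $p=0$, $0<p<2^{m-1}$, $p=2^{m-1}$, $p>2^{m-1}$). The alternative finish via Proposition \ref{lem:charmin1} and $QB(n)$ that you mention would also work, though the pair for $p<2^{m-1}$ is of type (b.1) rather than (b.3) when $\ell=2$.
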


\begin{proof}
We prove that $T_n^{\mathit{gfb}}$ is Colless minimal by induction on the number of leaves $n$. 
The base case $n = 1$ is obvious, because there is only one tree in $\TT_1$.
Assume now that every GFB tree with at most $n-1$ leaves is Colless minimal and consider the tree $T_n^{\mathit{gfb}}$.   
By Lemma \ref{gfb_subtrees}, if $T_n^{\mathit{gfb}}=(T_a,T_b)$, then $T_a$ and $T_b$ are GFB trees and then, by the induction hypothesis, they  are Colless minimal and in particular $\mathcal{C}(T_a)=c_{n_a}$ and $\mathcal{C}(T_b)=c_{n_b}$. Let us write $n$ as $2^m+p$  with $m=\lfloor \log_2(n)\rfloor$ and $0\leq p<2^m$, and  consider its binary expansion
$n = \sum_{j=1}^\ell 2^{m_j}$ with $m_1>\cdots>m_\ell$, so that $m_1=m$ and $p = \sum_{j=2}^\ell 2^{m_j}$ is the binary expansion of $p$ if $p>0$.
Now:
\begin{enumerate}[(i)]
\item If $p=0$, then, by Proposition \ref{GFB_Decomposition}, $n_a =n_b = 2^{m-1}$, and then, by Lemma \ref{colless_sum}
and the induction hypothesis,
$$
\mathcal{C}(T^{\mathit{gfb}}_{n}) = \mathcal{C}(T^{\mathit{gfb}}_{n_a})+\mathcal{C}(T^{\mathit{gfb}}_{n_b})+n_a-n_b
=c_{n_a}+c_{n_b}+n_a-n_b=0=c_n.
$$

\item If $1\leq p<2^{m-1}$, then, by Proposition \ref{GFB_Decomposition}, $n_a = 2^{m-1} + p$ and $n_b = 2^{m-1}$. In this case, $m_2<m-1=m_1-1$ and thus $n_a = 2^{m_1 - 1} + \sum_{j=2}^\ell 2^{m_j}$ is the binary expansion of $n_a$. So, by Theorem \ref{thm_binaryExpansion} and the induction hypothesis, $\mathcal{C}(T^{\mathit{gfb}}_{n_b})  =  c_{n_b}=0$ and 
$$
\mathcal{C}(T^{\mathit{gfb}}_{n_a})  =  c_{n_a}= \sum_{j=2}^\ell 2^{m_j}(m_1 - 1 - m_j - 2(j-2))
$$
and then,  by Lemma \ref{colless_sum},
\begin{align*}
\mathcal{C}(T^{\mathit{gfb}}_{n}) & = \mathcal{C}(T^{\mathit{gfb}}_{n_a})+\mathcal{C}(T^{\mathit{gfb}}_{n_b})+n_a-n_b\\
&=
\sum_{j=2}^\ell 2^{m_j}(m_1 - 1 - m_j - 2(j-2)) + \sum_{j=2}^\ell 2^{m_j}\\
& =  \sum_{j=2}^\ell 2^{m_j}(m_1 - m_j - 2(j-2))=c_n.
\end{align*}

\item If $p= 2^{m-1}$, so that $n=2^m+2^{m-1}$ is the binary expansion of $n$, then, by Proposition \ref{GFB_Decomposition}, $n_a = 2^{m}$ and $n_b=2^{m-1}$. In this case, by the induction hypothesis, $\mathcal{C}(T_a)=c_{n_a}=0$ and $\mathcal{C}(T_b)=c_{n_b}=0$, and then,  by Lemma~\ref{colless_sum},
$$
\mathcal{C}(T^{\mathit{gfb}}_{n})  = \mathcal{C}(T^{\mathit{gfb}}_{n_a})+\mathcal{C}(T^{\mathit{gfb}}_{n_b})+n_a-n_b=2^{m-1}=c_n
$$
by Theorem \ref{thm_binaryExpansion}.

\item Finally, assume  that $p > 2^{m-1}$, so that its binary expansion is $p= 2^{m-1} + 2^{m_3}+\cdots+2^{m_\ell}$, and in particular $m_2=m-1$. In this case, $n_a = 2^m$ and $n_b = p$, so that $n_a-n_b=2^{m}-p=2^{m-1}-(2^{m_3}+\cdots+2^{m_\ell})$, and, by the induction hypothesis, $\mathcal{C}(T^{\mathit{gfb}}_{n_a})  =  c_{n_a}=0$ and
\begin{align*}
C(T^{\mathit{gfb}}_{n_b}) & =  c_{n_b}= \sum_{j=3}^\ell 2^{m_j}(m_2 - m_j - 2(j-1-2))   \\
			  	 & =  \sum_{j=3}^\ell 2^{m_j}(m-1 - m_j - 2(j-2)+2)\\
			  	 & =  \sum_{j=3}^\ell 2^{m_j}(m - m_j - 2(j-2)) +  \sum_{j=3}^\ell 2^{m_j}
\end{align*}
Then,  by Lemma \ref{colless_sum},
\begin{align*}
\mathcal{C}(T^{\mathit{gfb}}_{n}) & = \mathcal{C}(T^{\mathit{gfb}}_{n_a})+\mathcal{C}(T^{\mathit{gfb}}_{n_b})+n_a-n_b\\
&=\sum_{j=3}^\ell 2^{m_j}(m_1 - m_j - 2(j-2)) +  \sum_{j=3}^\ell 2^{m_j}+2^{m-1}-\sum_{j=3}^\ell 2^{m_j}\\
&=\sum_{j=3}^\ell 2^{m_j}(m_1 - m_j - 2(j-2)) +2^{m_2}\\
& =  \sum_{j=2}^\ell 2^{m_j}(m_1 - m_j - 2(j-2))=c_n
\end{align*}
(in the third and fourth equalities we use that $m=m_1$ and  $m_2=m-1=m_1-1$) as we wanted to show.\qed
\end{enumerate}
\end{proof}

So,  for any given number $n$ of leaves, both the maximally balanced trees and the GFB trees have minimum Colless index.
Moreover, while the balance value of the root of $T_{n}^{\mathit{mb}}$ is by definition at most 1, Proposition \ref{GFB_Decomposition} implies that if  $n=2^m+p$ with $m=\lfloor \log_2(n)\rfloor$, the balance value of the root of $T_{n}^{\mathit{gfb}}$ is $\min\{p,2^m-p\}$ and therefore $T_{n}^{\mathit{mb}}\neq T_{n}^{\mathit{gfb}}$ if $p\neq 0,1,2^{m}-1$.
On the other hand, we already know  (cf. Corollary \ref{min_colless}) that if $n=2^m$, then there is only one minimal Colless tree with $n$ leaves and therefore in this case $T_{n}^{\mathit{mb}}= T_{n}^{\mathit{gfb}}$, and it is straightforward to prove by induction on $m$, using Proposition \ref{Thm_number_Colless}  and the fact that $QB(2^m-1)=\{(2^{m-1},2^{m-1}-1)\}$ and $QB(2^m+1)=\{(2^{m-1}+1,2^{m-1})\}$, that if $n$ has the form $2^m\pm 1$,  then there is only one minimal Colless tree in $\TT_n$, too.  In summary, this proves the following result.

\begin{corollary}\label{cor:morethan1}
For every $n\geq 1$,  if $n\notin \{2^m-1,2^m,2^m+1\}$ for any $m\in \NN_{\geq 1}$, then $T_{n}^{\mathit{mb}}\neq T_{n}^{\mathit{gfb}}$, while if $n\in \{2^m-1,2^m,2^m+1\}$ for some $m\in \NN_{\geq 1}$, then
there is only one minimal Colless tree in $\TT_n$.
\end{corollary}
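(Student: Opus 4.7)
The corollary has two halves, and I would handle them separately, both relying on material already proved.

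For the first half, that $T_n^{\mathit{mb}}\neq T_n^{\mathit{gfb}}$ when $n$ is not of the excluded form, my plan is to compare the balance values at the root. Write $n=2^m+p$ with $m=\lfloor\log_2 n\rfloor$ and $0\leq p<2^m$. By definition, $bal_{T_n^{\mathit{mb}}}(\rho)\in\{0,1\}$. On the other hand, Proposition \ref{GFB_Decomposition} gives the sizes of the two maximal pending subtrees of $T_n^{\mathit{gfb}}$ as $\{2^{m-1}+p,2^{m-1}\}$ when $p\leq 2^{m-1}$ and as $\{2^m,p\}$ when $p\geq 2^{m-1}$, so in both cases $bal_{T_n^{\mathit{gfb}}}(\rho)=\min(p,2^m-p)$. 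These balance values coincide exactly when $\min(p,2^m-p)\leq 1$, i.e.\ when $p\in\{0,1,2^m-1\}$, which corresponds to $n=2^m$, $n=2^m+1$, or $n=2^{m+1}-1$; a trivial re-indexing shows these are precisely the numbers of the form $2^{m'}-1$, $2^{m'}$, or $2^{m'}+1$ with $m'\geq 1$.

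For the second half, uniqueness of the minimal Colless tree when $n\in\{2^m-1,2^m,2^m+1\}$, the case $n=2^m$ is immediate from Corollary \ref{min_colless}. For $n=2^m+1$ and $n=2^m-1$ I would first compute $QB(n)$ using Proposition \ref{cor:QB}, and then conclude from the recurrence in Proposition \ref{Thm_number_Colless}. In the notation of Proposition \ref{cor:QB}, $n=2^m+1$ has $k=0$, $\ell=2$ and binary digits $m_1=m,m_2=0$, so clauses (b.2), (b.3), (b.4) contribute nothing and only (b.1) survives, yielding $QB(2^m+1)=\{(2^{m-1}+1,2^{m-1})\}$. Similarly, $n=2^m-1$ has $k=0$ and binary expansion $m_j=m-j$ for $j=1,\dots,m$, so $m_j=m_{j+1}+1$ throughout, eliminating (b.2) and (b.3); (b.4) does not apply because $n$ is odd, and (b.1) gives $QB(2^m-1)=\{(2^{m-1},2^{m-1}-1)\}$. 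The recurrence of Proposition \ref{Thm_number_Colless} then collapses to
\[
\widetilde{c}(2^m+1)=\widetilde{c}(2^{m-1}+1)\cdot\widetilde{c}(2^{m-1}),\qquad
\widetilde{c}(2^m-1)=\widetilde{c}(2^{m-1})\cdot\widetilde{c}(2^{m-1}-1),
\]
and an induction on $m$ with base cases $\widetilde{c}(1)=\widetilde{c}(2)=\widetilde{c}(3)=1$ (combined with $\widetilde{c}(2^{m-1})=1$ from Corollary \ref{min_colless}) yields $\widetilde{c}(n)=1$ in both families.

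No step of this proof is genuinely difficult: the heavy lifting has been done in Propositions \ref{GFB_Decomposition}, \ref{cor:QB}, and \ref{Thm_number_Colless}, and what remains is bookkeeping. The only place where one has to be careful is the translation between the ``$n=2^m+p$'' parametrisation used in the first half and the statement's list $\{2^{m'}-1,2^{m'},2^{m'}+1\}$, which is why the case $p=2^m-1$ corresponds to $m'=m+1$ rather than $m'=m$; keeping this reindexing explicit is the main thing to watch out for.
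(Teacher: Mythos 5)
Your proof is correct and follows essentially the same route as the paper: the first half via comparing the root balance value of $T_n^{\mathit{mb}}$ (at most $1$) with that of $T_n^{\mathit{gfb}}$ ($\min\{p,2^m-p\}$ by Proposition \ref{GFB_Decomposition}), and the second half via Corollary \ref{min_colless} for $n=2^m$ together with $QB(2^m\pm 1)$ being a singleton and an induction through the recurrence of Proposition \ref{Thm_number_Colless}. Your explicit derivation of $QB(2^m\pm1)$ from Proposition \ref{cor:QB} and the careful reindexing $p=2^m-1\leftrightarrow n=2^{m+1}-1$ are exactly the bookkeeping the paper leaves implicit.
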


The next result entails that the GFB trees can also be built through a top-down strategy as follows: we start with a cluster of $n$ leaves, and build a hierarchical clustering by splitting clusters into pairs of subclusters of suitable cardinalities.

\begin{corollary}\label{cor:tdGFB}
For every $T\in \TT_n$, $T=T^{\mathit{gfb}}_n$ if, and only if, for every $v\in \mathring{V}(T)$, if  we write $\kappa_T(v)=2^k+s$ with $k=\lfloor \log_2(\kappa_T(v))\rfloor$ and $0\leq s<2^k$, then the numbers of descendant leaves of the children of $v$ are, respectively, $2^{k-1}+s$ and $2^{k-1}$, if $0\leq s\leq 2^{k-1}$, or $2^k$ and $s$,  if $2^{k-1}\leq s<2^k$. 
\end{corollary}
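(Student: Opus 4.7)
The plan is a short proof by induction on $n$, using Proposition~\ref{GFB_Decomposition} as the key ingredient and Lemma~\ref{gfb_subtrees} to pass to subtrees.

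First I would establish the direct implication. Assume $T=T^{\mathit{gfb}}_n$ and let $v\in\mathring{V}(T)$ with children $v_1,v_2$. By Lemma~\ref{gfb_subtrees} applied inductively along the path from the root to $v$, the rooted subtree $T_v$ is itself a GFB tree, namely $T_v=T^{\mathit{gfb}}_{\kappa_T(v)}$, and its maximal pending subtrees are $T_{v_1}$ and $T_{v_2}$. Writing $\kappa_T(v)=2^k+s$ as in the statement and applying Proposition~\ref{GFB_Decomposition} to $T_v$ yields exactly the two possibilities described for the pair $(\kappa_T(v_1),\kappa_T(v_2))$ (with the convention $\kappa_T(v_1)\geq\kappa_T(v_2)$). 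Note that when $s=2^{k-1}$ the two descriptions $(2^{k-1}+s,2^{k-1})$ and $(2^k,s)$ coincide, so there is no ambiguity.

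For the converse, I argue by induction on $n$. If $n=1$ there are no internal nodes and $T=T^{\mathit{gfb}}_1$ trivially. Assume the claim holds for all trees with fewer than $n$ leaves and let $T\in\TT_n$ satisfy the condition at every internal node. Applying the condition to the root $\rho$ of $T$ and comparing with Proposition~\ref{GFB_Decomposition}, we see that the decomposition $T=(T_a,T_b)$ into maximal pending subtrees has exactly the same $(n_a,n_b)$ as $T^{\mathit{gfb}}_n=(T^{\mathit{gfb}}_{n_a},T^{\mathit{gfb}}_{n_b})$. Now the internal nodes of $T_a$ are exactly the internal nodes of $T$ lying in $T_a$, with the same numbers of descendant leaves and the same children (because $T_v$ is identical whether computed inside $T$ or inside $T_a$); hence $T_a$ inherits the property stated in the corollary. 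By the induction hypothesis $T_a=T^{\mathit{gfb}}_{n_a}$, and likewise $T_b=T^{\mathit{gfb}}_{n_b}$, so $T=(T^{\mathit{gfb}}_{n_a},T^{\mathit{gfb}}_{n_b})=T^{\mathit{gfb}}_n$.

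There is no real obstacle here: all the work is concentrated in Proposition~\ref{GFB_Decomposition}, and the corollary is essentially its iterated restatement along the tree. The only small points deserving explicit mention are the coincidence of the two cases when $s=2^{k-1}$ and the fact that the hypothesis restricts only internal nodes, so the base case of the induction, in which $T$ is a single leaf, is vacuously compatible with the statement.
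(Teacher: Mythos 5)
Your proof is correct and follows essentially the same route as the paper's: the forward direction via Lemma~\ref{gfb_subtrees} and Proposition~\ref{GFB_Decomposition}, and the converse by induction on $n$ through the decomposition into maximal pending subtrees. The remark about the two cases coinciding when $s=2^{k-1}$ is a harmless clarification that does not change the argument.
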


\begin{proof}
The ``only if'' implication is a direct consequence of Proposition \ref{GFB_Decomposition} and the fact that, as as a consequence of Lemma \ref{gfb_subtrees}, any rooted subtree of a GFB tree is again GFB. We prove now the ``if'' implication by induction on $n$. The base case when $n=1$ is obvious, because there is only one tree with 1 leaf. Assume now that this implication is true for every $1\leq n'<n$, and let $T\in \TT_n$ be such that, for every $v\in \mathring{V}(T)$, if  we write $\kappa_T(v)=2^k+s$ with $k=\lfloor \log_2(\kappa_T(v))\rfloor$ and $0\leq s<2^k$, then the numbers of descendant leaves of the children of $v$ are, respectively, $2^{k-1}+s$ and $2^{k-1}$, if $0\leq s\leq 2^{k-1}$, or $2^k$ and $s$,  if $2^{k-1}\leq s<2^k$. Consider the decomposition $T=(T_a,T_b)$ of $T$ into its two maximal pending subtrees, with $T_a\in \TT_{n_a}$ and $T_b\in \TT_{n_b}$, $n_a\geq n_b$. Then, on the one hand, the internal nodes of both $T_a$ and $T_b$ satisfy the aforementioned property on the numbers of descendant leaves of their children, which implies by the induction hypothesis that $T_a=T^{\mathit{gfb}}_{n_a}$ and $T_b=T^{\mathit{gfb}}_{n_b}$. And, on the other hand, by hypothesis $n_a$ and $n_b$ satisfy that if we write $n=2^m+p$, with $m=\lfloor \log_2(n)\rfloor$ and $0\leq p<2^m$, then $n_a=2^{m-1}+p$ and $n_b=2^{m-1}$, if $0\leq p\leq 2^{m-1}$, or $n_a=2^m$ and $n_b=p$,  if $2^{m-1}\leq p<2^m$.  But then, by Proposition \ref{GFB_Decomposition} and Lemma \ref{gfb_subtrees}, the decomposition of $T^{\mathit{gfb}}_n$ into its maximal pending subtrees is $(T^{\mathit{gfb}}_{n_a},T^{\mathit{gfb}}_{n_b})$ with $n_a$ and $n_b$ precisely given by these formulas. This implies that $T=T^{\mathit{gfb}}_n$.
\qed
\end{proof}

The maximally balanced trees and the GFB trees turn out to be extremal among the minimal Colless trees in the sense that no minimal Colless tree  can have a smaller difference between the number of leaves of its maximal pending subtrees than the maximally balanced tree or a larger difference between these numbers than the GFB tree. The assertion on the maximally balanced trees being obvious, because that difference is the least possible one (0 or 1, depending on whether the number of leaves is even or odd, respectively), we must prove the assertion on the GFB trees.

\begin{proposition} \label{gfb_extreme}
Let $T^{\mathit{gfb}}_n = (T^{\mathit{gfb}}_{n^{\mathit{gfb}}_a},T^{\mathit{gfb}}_{n^{\mathit{gfb}}_b})$ be the decomposition of a GFB tree with $n$ leaves into its maximal pending subtrees. If 
${T}=({T}_a, {T}_b)$, with ${T}_a\in \TT_{{n}_a}$ and ${T}_b\in \TT_{{n}_b}$, is another minimal Colless tree with $n$ leaves, then ${n}_a - {n}_b\leq n^{\mathit{gfb}}_a-n^{\mathit{gfb}}_b$.
\end{proposition}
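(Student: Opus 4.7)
\emph{Plan.} By Proposition \ref{lem:charmin1}, the pair of leaf-counts $(n_a,n_b)$ at the root of any minimal Colless tree $T=(T_a,T_b)\in\TT_n$ belongs to $QB(n)$, and by Proposition \ref{GFB_is_minimal} the same is true for $(n^{\mathit{gfb}}_a,n^{\mathit{gfb}}_b)$. Since $n_a+n_b=n^{\mathit{gfb}}_a+n^{\mathit{gfb}}_b=n$, the desired inequality $n_a-n_b\le n^{\mathit{gfb}}_a-n^{\mathit{gfb}}_b$ is equivalent to $n_a\le n^{\mathit{gfb}}_a$. Hence it suffices to prove that $n_a\le n^{\mathit{gfb}}_a$ for every pair $(n_a,n_b)\in QB(n)$.

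The idea is to combine the explicit enumeration of $QB(n)$ from Proposition \ref{cor:QB} with the explicit formula for $n^{\mathit{gfb}}_a$ from Proposition \ref{GFB_Decomposition}. Write $n=2^kn_0$ with $n_0=\sum_{i=1}^\ell 2^{m_i}$ odd and $m_1>\cdots>m_\ell=0$, and set $m=m_1+k=\lfloor\log_2 n\rfloor$, $p=n-2^m$. Then $p\le 2^{m-1}$ iff $m_2<m_1-1$ or $\ell\le 2$, and $p\ge 2^{m-1}$ iff $m_2=m_1-1$, so Proposition \ref{GFB_Decomposition} gives $n^{\mathit{gfb}}_a=2^{m-1}+p=n-2^{k+m_1-1}$ in the former regime and $n^{\mathit{gfb}}_a=2^m=2^{k+m_1}$ in the latter. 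For each of the four families of pairs in $QB(n)$ --- namely (b.1), the pairs (b.2) at an index $j$, the pairs (b.3) at an index $j$, and the balanced split from (b.4) --- I would compute $n^{\mathit{gfb}}_a-n_a$ explicitly from the binary expansion of $n$ and check it is nonnegative, considering each of the two regimes for $p$ separately. A pleasant byproduct is that the inequality becomes an equality in exactly one pair: (b.3) at $j=2$ when $p<2^{m-1}$, (b.2) at $j=2$ when $p>2^{m-1}$, and (b.1) at the boundary degenerate cases.

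The main technical step is the bound for (b.2) at an index $j\ge 3$ in the regime $p>2^{m-1}$, where one must show that
\[
\sum_{i=3}^{j-1} 2^{m_i+k-1}+2^{m_j+k}\ \le\ 2^{m_1+k-2}.
\]
This reduces to the elementary fact that the exponents appearing on the left are pairwise distinct and form a strictly decreasing sequence bounded above by $m_3+k\le m_1+k-2$ (using $m_2=m_1-1$ together with $m_3<m_2$), so the sum is strictly less than $2^{m_3+k+1}\le 2^{m_1+k-1}$; a sharper bounding obtained by iteratively absorbing consecutive terms using $m_j<m_{j-1}-1$ yields the required $2^{m_1+k-2}$. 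All the other sub-case verifications are shorter and proceed in the same spirit: once this bookkeeping is in place, the inequalities in each family are immediate consequences of Propositions \ref{GFB_Decomposition} and \ref{cor:QB}. The obstacle is thus entirely one of case-analysis rather than of genuine mathematical depth.
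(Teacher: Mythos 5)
Your proposal follows essentially the same route as the paper's proof: reduce via Proposition \ref{lem:charmin1} to showing the inequality for every pair in $QB(n)$, and then verify it family by family using the explicit description of $QB(n)$ in Proposition \ref{cor:QB} together with the formula for $(n^{\mathit{gfb}}_a,n^{\mathit{gfb}}_b)$ from Proposition \ref{GFB_Decomposition}. The only (cosmetic) difference is that the paper establishes $n_a-n_b\le\min\{p,2^m-p\}$ by proving both bounds in every case instead of splitting into your two regimes for $p$, and the one technical estimate you do carry out in detail (type (b.2), $j\ge 3$, $m_2=m_1-1$) checks out, with the remaining sub-cases being routine as you claim.
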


\begin{proof} 
Write $n$ as $n = 2^{m} + p$ with $m=\lfloor \log_2(n)\rfloor$ and $0\leq p<2^m-1$. We know from Proposition \ref{GFB_Decomposition} that if $0\leq p\leq 2^{m-1}$, then $(n^{\mathit{gfb}}_a,n^{\mathit{gfb}}_b)=(2^{m-1}+p,2^{m-1})$ and hence $n^{\mathit{gfb}}_a-n^{\mathit{gfb}}_b=p$, and 
if $2^{m-1}\leq p< 2^m$, then $(n^{\mathit{gfb}}_a,n^{\mathit{gfb}}_b)=(2^{m},p)$ and hence $n^{\mathit{gfb}}_a-n^{\mathit{gfb}}_b=2^m-p$.
Moreover, if $p\in\{0,1,2^m-1\}$, we know from Corollary \ref{cor:morethan1} that there is only one minimal Colless tree in $\TT_n$, and therefore we can assume henceforth that $2\leq p\leq 2^m-2$. 

Now, if ${T}=({T}_a, {T}_b)\in \widetilde{\mathcal{MC}}_n$, then, by Proposition \ref{lem:charmin1},
$({n}_a, {n}_b)\in QB(n)$. Therefore, it is enough to prove that if $(n_a,n_b)\in QB(n)$, then $n_a-n_b\leq \min\{p,2^m-p\}$. We shall do it using the explicit description of $QB(n)$ given in Proposition \ref{cor:QB}. So, let 
$2^k$ be the largest power of 2 that divides $n$, which is also the largest power of 2 that divides $p$, and let
$2^{m_1}+\cdots+2^{m_\ell}$, with $m_1=m-k>\cdots>m_\ell=1$ be the binary expansion of $n_0=n/2^k$, so that
$p=2^k(2^{m_2}+\cdots+2^{m_\ell})$.

Then, using the same notations as in Proposition \ref{cor:QB}:
%If $p\in \{2,\ldots,2^{m-1}\}$, so that $n_a-n_b=p$

\begin{itemize}
\item[(a)] Since $n$ is not a power of 2, this case cannot happen.
\item[(b.1)] If $(n_a, n_b)$ has the form
$$
\Big(2^k\Big(\sum_{i=1}^{\ell-1} 2^{m_i-1}+1\Big), 2^k\sum_{i=1}^{\ell-1} 2^{m_i-1}\Big),
$$
then 
$$
n_a - n_b=2^k\leq \min\{p, 2^m-p\}
$$
because $2^k$ divides both $p$ and $2^m-p$.

\item[(b.2)] If $(n_a, n_b)$ has the form
$$
\Big(2^{k}\Big(\sum_{i=1}^{j-1} 2^{m_i-1}+2^{m_j}\Big),n-2^{k}\Big(\sum_{i=1}^{j-1} 2^{m_i-1}+2^{m_j}\Big)\Big),
$$
for some $j=2,\ldots, \ell-1$ such that $m_j>m_{j+1}+1$, then
$$
n_a - n_b   = 2\cdot 2^k\Big(\sum_{i=1}^{j-1} 2^{m_i-1}+2^{m_j}\Big) - n=2^k\Big(\sum_{i=1}^{j-1} 2^{m_i}+2^{m_j+1}\Big) - n
$$
and this is smaller or equal than $\min\{p, 2^m-p\}$
because, on the one hand,
\begin{align*}
& 2^k\Big(\sum_{i=1}^{j-1} 2^{m_i}+2^{m_j+1}\Big) - n =2^k\Big(\sum_{i=1}^{j} 2^{m_i}+2^{m_j}\Big)-n\\
& \leq  
2^k\Big(\sum_{i=1}^{\ell-1} 2^{m_i}+2^{m_j}\Big)-n <
2^k (n_0 + 2^{m_j}) - n\\
&  = 2^k \cdot 2^{m_j} 
\leq 2^k\sum_{i=2}^{\ell} 2^{m_i}=p
\end{align*}
and, on the other hand,
\begin{align*}
 &2^k\Big(\sum_{i=1}^{j-1} 2^{m_i}+2^{m_j+1}\Big) - n  \leq 2^k\Big(\sum_{i=1}^{j-1} 2^{m_i}+2^{m_{j-1}}\Big) - n\\
 &\quad  \leq 2^k\Big(\sum_{i=m_{j-1}}^{m_1} 2^{i}+2^{m_{j-1}}\Big)-n= 2^k\cdot 2^{m_1+1}-n=2^{m+1}-n=2^m-p.
\end{align*}

\item[(b.3)] If $(n_a, n_b)$ has the form
$$
\Big(n-2^{k}\sum_{i=1}^{j-1} 2^{m_i-1},2^{k}\sum_{i=1}^{j-1} 2^{m_i-1}\Big)
$$
for some $j=2,\ldots,\ell-1$ such that $m_j<m_{j-1}-1$, then
$$
n_a - n_b=n-2\cdot 2^{k}\sum_{i=1}^{j-1} 2^{m_i-1}=n-2^{k}\sum_{i=1}^{j-1} 2^{m_i}\leq\min\{p, 2^m-p\}
$$
because, on the one hand
$$
n-2^{k}\sum_{i=1}^{j-1} 2^{m_i}\leq n-2^{k}\cdot 2^{m_1}=n-2^m=p,
$$
and, on the other hand,
\begin{align*}
&  n - 2^k \sum_{i=1}^{j-1} 2^{m_i}  =     2^k \sum_{i=1}^{\ell} 2^{m_i} - 2^k \sum_{i=1}^{j-1} 2^{m_i} = 2^k \sum_{i=j}^{\ell} 2^{m_i}\\
 &\quad  \leq 2^k \sum_{i=0}^{m_j} 2^{i}=2^k(2^{m_j+1}-1)<   2^k\Big(2^{m_1}-\sum_{i=2}^\ell 2^{m_i}\Big)= 2^{m} - p,
\end{align*}
where the last inequality holds because $m_j<m_j+1<m_{j-1}$ implies that
$$
2^{m_j+1}+\sum_{i=2}^\ell 2^{m_i}\leq \sum_{s=0}^{m_2} 2^s=2^{m_2+1}-1<2^{m_1}.
$$

\item[(b.4)] If $(n_a , n_b )=(n/2,n/2)$, then $n_a - n_b = 0 < \min\{p,2^m-p\}$. \qed
\end{itemize}
\end{proof}

We now immediately have:

\begin{corollary} \label{leaf_partioning}
Let $T^{\mathit{gfb}}_n$ be the GFB tree with $n$ leaves and $n_a^{\mathit{gfb}}\geq n_b^{\mathit{gfb}}$ the numbers of leaves of its maximal pending subtrees. Then, for every $T\in \widetilde{\mathcal{MC}}_n$, if $n_a\geq n_b$ are the numbers of leaves of its maximal pending subtrees,
$$
n_b^{\mathit{gfb}} \leq n_b \leq \lfloor n/2\rfloor\leq \lceil n/2\rceil  \leq n_a \leq n_a^{\mathit{gfb}}.
$$
\end{corollary}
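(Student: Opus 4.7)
The plan is to derive this corollary as an essentially immediate consequence of Proposition \ref{gfb_extreme}, combined with the trivial observation that in any tree $T=(T_a,T_b)\in \TT_n$ with $n_a\geq n_b$, the numbers of leaves of the maximal pending subtrees satisfy $n_b\leq \lfloor n/2\rfloor \leq \lceil n/2\rceil \leq n_a$. So no new ideas are needed; the work is just an arithmetic unpacking.

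More concretely, I would first dispatch the central two inequalities $n_b\leq \lfloor n/2\rfloor$ and $\lceil n/2\rceil\leq n_a$. These follow from $n_a+n_b=n$ and $n_a\geq n_b$: doubling the larger one gives $2n_a\geq n_a+n_b=n$, hence $n_a\geq n/2$, and since $n_a\in\NN$ this forces $n_a\geq \lceil n/2\rceil$; symmetrically $n_b\leq \lfloor n/2\rfloor$.

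Next, for the outer inequalities $n_b^{\mathit{gfb}}\leq n_b$ and $n_a\leq n_a^{\mathit{gfb}}$, I would invoke Proposition \ref{gfb_extreme}, which guarantees $n_a - n_b\leq n_a^{\mathit{gfb}} - n_b^{\mathit{gfb}}$ for any minimal Colless tree $T=(T_a,T_b)\in \widetilde{\mathcal{MC}}_n$. Combining this inequality with the identity $n_a+n_b = n = n_a^{\mathit{gfb}} + n_b^{\mathit{gfb}}$ and adding yields $2n_a\leq 2n_a^{\mathit{gfb}}$, while subtracting yields $2n_b\geq 2n_b^{\mathit{gfb}}$; these are exactly the two bounds we need.

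There is no real obstacle here: the substantive content lives entirely inside Proposition \ref{gfb_extreme}, and this corollary is simply the translation of ``maximum difference of pending-subtree sizes'' into ``extremal individual subtree sizes'' via the sum/difference trick. I would therefore keep the proof to a handful of lines stating the two bounds above and concatenating them into the displayed chain.
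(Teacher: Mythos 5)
Your proposal is correct and follows essentially the same route as the paper: the central inequalities come from the trivial arithmetic of $n_a+n_b=n$ with $n_a\geq n_b$, and the outer ones are extracted from Proposition \ref{gfb_extreme} together with the equality of the two sums (the paper phrases this step as a short proof by contradiction, whereas you add and subtract the inequalities directly, but the content is identical). No gaps.
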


\begin{proof}
Assume that $n_a \leq \lceil n/2\rceil-1$. Since $n_a+n_b=n$, this would imply that $n_b\geq \lfloor n/2\rfloor+1\geq \lceil n/2\rceil$ and it would contradict the assumption that $n_a \geq n_b$. Thus, $n_a \geq\lceil n/2\rceil$. A similar argument shows that $n_b \leq \lfloor n/2\rfloor$. 

Assume now that $n_a > n_a^{\mathit{gfb}}$. Then, since $n_a+n_b=n_a^{\mathit{gfb}}+n_b^{\mathit{gfb}}$, this would imply
that $n_b<n_b^{\mathit{gfb}}$ and hence that $n_a-n_b>n_a^{\mathit{gfb}}-n_b^{\mathit{gfb}}$, which contradicts Proposition \ref{gfb_extreme}. A similar argument shows that $n_b^{\mathit{gfb}} \leq n_b$.  \qed
\end{proof}

\begin{remark}
Since any rooted subtree of a  minimal Colless tree (respectively, of a maximally balanced tree or a GFB tree) is again minimal Colless  (respectively, maximally balanced  or GFB), the last corollary applies not only to the numbers of leaves of the maximal pending subtrees of a minimal Colless tree, but also to the numbers of descendant leaves of the children of any internal node $v$ in minimal Colless trees, relative to the number of descendant leaves of $v$.
\end{remark}

{We want to point out here an interesting consequence of the last corollary: the GFB tree with $n$ leaves has the largest number of symmetry vertices, and hence also of automorphisms, among all minimal Colless trees with $n$ leaves. So, when the GFB tree with $n$ leaves is not maximally balanced, it is ``more symmetrical'' (in terms of the number of automorphisms) than the maximally balanced tree with $n$ leaves.

\begin{proposition}\label{prop:GFBmax}
For every $n\geq 1$, let $n=\sum_{i=1}^\ell 2^{m_i}$, with $\ell\geq 1$ and $m_1>\cdots> m_\ell$, be its binary expansion.
\begin{enumerate}[(a)]
\item $s(T_n^\mathit{gfb})=n-1-(m_1-m_\ell)$.
\item For every  $T\in \widetilde{\mathcal{MC}}_n$, if $T\neq T_n^\mathit{gfb}$, then $s(T)< s(T_n^\mathit{gfb})$.
\end{enumerate}
\end{proposition}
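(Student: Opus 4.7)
The plan is to establish (a) by induction on $n$ using Proposition \ref{GFB_Decomposition}, and then to deduce (b) by strong induction on $n$, combining (a), Propositions \ref{lem:charmin1} and \ref{gfb_extreme}, and the explicit enumeration of $QB(n)$ in Proposition \ref{cor:QB}.

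For (a), write $\mu(k):=\lfloor \log_2 k\rfloor - v_2(k)$, where $v_2(k)$ is the exponent of the largest power of $2$ dividing $k$; this quantity equals $m_1-m_\ell$ in the paper's notation, so the claim becomes $s(T_n^{\mathit{gfb}})=n-1-\mu(n)$. The base case $n=1$ is immediate. For the inductive step, Proposition \ref{GFB_Decomposition} gives, for $n$ not a power of $2$, that one of the two maximal pending subtrees of $T_n^{\mathit{gfb}}$ is fully symmetric (so every one of its internal nodes is a symmetry vertex) while the other is a smaller GFB tree, and the root of $T_n^{\mathit{gfb}}$ is never a symmetry vertex because its children have different numbers of leaves. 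A direct check of how the binary expansion descends to each maximal pending subtree (only $m_1$ drops by one while $m_\ell$ is preserved, or $m_1$ is removed entirely), combined with the identity $\mu(2k)=\mu(k)$, closes the induction.

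For (b), by Corollary \ref{cor:morethan1} one may assume $n\notin\{2^m-1,2^m,2^m+1\}$. Take $T=(T_a,T_b)\in\widetilde{\mathcal{MC}}_n\setminus\{T_n^{\mathit{gfb}}\}$ with $T_a\in\widetilde{\mathcal{MC}}_{n_a}$, $T_b\in\widetilde{\mathcal{MC}}_{n_b}$ and $n_a\geq n_b$; by Proposition \ref{lem:charmin1}, $(n_a,n_b)\in QB(n)$, and by strong induction $s(T_i)\leq s(T_{n_i}^{\mathit{gfb}})$ with equality iff $T_i=T_{n_i}^{\mathit{gfb}}$ (where $i\in\{a,b\}$). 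Writing $\delta_T=1$ if $T_a=T_b$ and $0$ otherwise, we have $s(T)=s(T_a)+s(T_b)+\delta_T$ and, since $n_a^{\mathit{gfb}}>n_b^{\mathit{gfb}}$, also $s(T_n^{\mathit{gfb}})=s(T_{n_a^{\mathit{gfb}}}^{\mathit{gfb}})+s(T_{n_b^{\mathit{gfb}}}^{\mathit{gfb}})$. If $(n_a,n_b)=(n_a^{\mathit{gfb}},n_b^{\mathit{gfb}})$, then $\delta_T=0$ and the assumption $T\neq T_n^{\mathit{gfb}}$ forces a strict inequality at some subtree via the induction hypothesis, giving $s(T)<s(T_n^{\mathit{gfb}})$ directly. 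If $(n_a,n_b)\neq(n_a^{\mathit{gfb}},n_b^{\mathit{gfb}})$, plugging (a) into the bounds reduces the desired $s(T)<s(T_n^{\mathit{gfb}})$ to the arithmetic claim $\mu(n_a)+\mu(n_b)\geq\mu(n)+\delta_T$. The case $n_a=n_b=n/2$ is immediate because $\mu(n/2)=\mu(n)\geq 1$ (the inequality reduces to $2\mu(n)\geq\mu(n)+1$); for $n_a>n_b$, one scales out by $2^{v_2(n)}$ and checks the inequality against each of the three families (b.1), (b.2), (b.3) in Proposition \ref{cor:QB}, reading off the binary expansions of $n_a$ and $n_b$ explicitly and invoking the strict chain $m_1>\cdots>m_\ell$.

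The main obstacle is this last case-by-case verification: one must track binary carries carefully when computing $(n_0\pm 1)/2$ for type (b.1) and handle several sub-cases in (b.2) and (b.3) according to whether consecutive exponents $m_i$ differ by exactly $1$ or more; in each family the boundary equality cases correspond precisely to configurations where the pair in question coincides with the GFB split (for instance $n_0:=n/2^{v_2(n)}\in\{2^{m_1}+1,\,2^{m_1+1}-1\}$ for type (b.1), $m_2=m_1-1$ for type (b.2) with $j=2$, and $m_2<m_1-1$ for type (b.3) with $j=2$), which are excluded by hypothesis. Outside this numerical step, the argument is routine inductive bookkeeping.
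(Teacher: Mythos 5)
Your part (a) is essentially the paper's own proof: both induct on $n$ via Proposition \ref{GFB_Decomposition}, splitting off a fully symmetric maximal pending subtree and observing that $m_1-m_\ell$ drops by exactly one in the other (GFB) subtree. For part (b), however, your route is genuinely different. The paper argues about a maximizer of $s$ over $\widetilde{\mathcal{MC}}_n$, forces $s(T_a)=s(T^{\mathit{gfb}}_{n_a})$ and $s(T_b)=s(T^{\mathit{gfb}}_{n_b})$ by a replacement argument, and then needs only the sandwich $n_b^{\mathit{gfb}}\leq n_b\leq n_a\leq n_a^{\mathit{gfb}}$ of Corollary \ref{leaf_partioning} to pin down the leading exponents of $n_a$ and $n_b$, together with the observation that the smaller of the two trailing exponents is at most $m_\ell$; it never touches the enumeration of $QB(n)$. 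You instead reduce everything to the single arithmetic inequality $\mu(n_a)+\mu(n_b)\geq\mu(n)+\delta_T$ for non-GFB pairs and verify it family by family against Proposition \ref{cor:QB}. Your reduction is cleaner and more conceptual (the invariant $\mu$ packages the bookkeeping nicely), at the price of committing to the heavier case analysis of $QB(n)$ that the paper avoids. I checked the deferred verification and it does go through: the inequality holds for every pair of types (b.1)--(b.4) except exactly the GFB split, where the sum equals $\mu(n)-1$. Two small cautions for the write-up: first, at these excluded configurations the inequality genuinely \emph{fails} rather than holding with equality, so ``boundary equality cases'' is a misnomer (equality cases such as the pair $(118,96)$ for $n=214$ do occur but are harmless since there $\delta_T=0$); second, for type (b.2) the failure is not confined to $j=2$ --- it occurs precisely when $m_1,\dots,m_j$ form a consecutive run (which can happen for $j>2$), though this still forces $m_2=m_1-1$ and hence still yields the GFB split, so the conclusion stands.
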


We postpone the proof of this proposition until Appendix A.3 at the end of the paper.

\begin{remark}
\label{rem:maxGFB}
Last proposition also has a consequence on the number $\widehat{c}(n)$ of minimal Colless phylogenetic trees on $n$ leaves. As we saw in Remark \ref{rem:widehat}, 
$$
\widehat{c}(n) = \sum_{T \in \widetilde{\mathcal{MC}}_n} \frac{n!}{2^{s(T)}}.
$$
So, by the last proposition (and using its same notations)  and the fact that for $n \in \{2^{m_1}-1, 2^{m_1}, 2^{m_1}+1\}$ there is only one minimal Colless tree in $\mathcal{T}_n$ (cf. Corollary \ref{cor:morethan1}), we have that:
\begin{itemize}
\item If $n=2^{m_1}$, $\widehat{c}(n)=n!/2^{n-1}$.
\item If $n=2^{m_1}\pm 1$, $\widehat{c}(n)=n!/2^{n-1-m_1}$.
\item For all other values of $n$, 
$\widehat{c}(n)>n!\cdot \widetilde{c}(n)/2^{n-1-(m_1-m_\ell)}$.
\end{itemize} 
\end{remark}}

\subsection{The  minimal Colless trees have also minimum Sackin index}
Finally, we shortly focus on another popular index of tree balance, namely the so-called  \emph{Sackin index} \citep{Sackin1972,Shao:90}. Recall that   the Sackin index of a (not necessarily bifurcating) rooted tree is defined as the sum of the depths of its leaves:
$$
\mathcal{S}(T)=\sum_{x\in V_L(T)} \delta_T(x).
$$
Equivalently \citep{Blum2005}, it is equal to the sum of the numbers of descendant leaves of the internal nodes of $T$:
$$
\mathcal{S}(T)=\sum_{v\in \mathring{V}(T)} \kappa_T(v).
$$
The bifurcating trees with $n$ leaves that achieve the maximum Sackin index are exactly the caterpillars \citep{Fischer2018,Shao:90}. As to those  achieving its minimum value, they have been recently characterized by \citet{Fischer2018} and in particular they include the fully symmetric trees (cf. Theorem 5 therein). We shall generalize this result by showing that they actually include all minimal Colless trees.
We shall use from Fischer's paper the following result (cf. Corollary 4 therein):

\begin{lemma} \label{cor_mareike}
Let $T= (T_a, T_b)$ be a  bifurcating tree with $n \in \mathbb{N}_{\geq 2}$ leaves and let  $k_n= \lceil \log_2 (n) \rceil$. Then, $T$ has minimum Sackin index if, and only if, $T_a$ and $T_b$ have minimal Sackin index and $n_a - n_b \leq \min \{n-2^{k_n-1}, 2^{k_n}-n\}$.
\end{lemma}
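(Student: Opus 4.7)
My plan is to prove Lemma \ref{cor_mareike} along the same lines as we did for the Colless index in Section 3, by first establishing an additive recurrence for the Sackin index, then giving a structural characterization of minimum-Sackin trees in terms of the depths of their leaves, and finally translating that characterization into the arithmetic inequality on $n_a - n_b$.

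First, using the equivalent expression $\mathcal{S}(T) = \sum_{v \in \mathring{V}(T)} \kappa_T(v)$, I would derive the recurrence
\[
\mathcal{S}(T) = \mathcal{S}(T_a) + \mathcal{S}(T_b) + n
\]
for $T = (T_a,T_b) \in \TT_n$: the root $\rho$ contributes $\kappa_T(\rho) = n$, and the remaining internal nodes split into those of $T_a$ and $T_b$ with their $\kappa$-values unchanged. A surgery argument identical to the proof of Lemma \ref{max_subtrees} then shows that if $T$ has minimum Sackin index, both $T_a$ and $T_b$ do, and moreover the pair $(n_a,n_b)$ must minimize $s_{n_a} + s_{n_b}$ over all admissible splits of $n$. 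This reduces the lemma to identifying which pairs $(n_a,n_b)$ attain the minimum.

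Second, I would prove the structural characterization that $T \in \TT_n$ has minimum Sackin index if and only if every leaf of $T$ has depth in $\{k_n - 1, k_n\}$. One direction is a direct count: any such tree has Sackin index $n(k_n - 1) + 2(n - 2^{k_n-1})$, independent of arrangement. The other direction uses a leaf-swapping argument: if $T$ has some leaf $x$ at depth $<k_n - 1$ together with some leaf $y$ at depth $\geq k_n$ (both necessarily exist if the depth condition fails, by an elementary pigeonhole counting), then detaching $y$ and its sibling-subtree and re-attaching one of them as a child of $x$'s parent (with $x$ becoming a sibling of $y$ in the new cherry) yields a new tree in $\TT_n$ with strictly smaller Sackin index; iterating this procedure yields a minimum-Sackin tree with all leaves at depths $\{k_n-1, k_n\}$.

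Third, I would translate this characterization to the sizes $n_a, n_b$. Leaves of $T$ at depths $\{k_n-1, k_n\}$ correspond to leaves of $T_a$ (resp.\ $T_b$) at depths $\{k_n-2, k_n-1\}$ inside $T_a$ (resp.\ $T_b$). A tree with all leaves at these two depths must have between $2^{k_n-2}$ and $2^{k_n-1}$ leaves (the extremes being attained by fully symmetric trees of the appropriate heights). Hence the constraints reduce to $2^{k_n-2} \leq n_b \leq n_a \leq 2^{k_n-1}$. Using $n_a + n_b = n$, the inequality $n_a \leq 2^{k_n-1}$ becomes $n_a - n_b \leq 2^{k_n} - n$, while $n_b \geq 2^{k_n-2}$ becomes $n_a - n_b \leq n - 2^{k_n-1}$; together they give the stated bound.

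The main obstacle I anticipate is executing the leaf-swapping step rigorously, handling small-$n$ and power-of-$2$ boundary cases carefully, and verifying that the surgery always produces a bifurcating tree strictly decreasing the Sackin index. Given the recurrence and the depth characterization, the arithmetic equivalence in step three is short bookkeeping.
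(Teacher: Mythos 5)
First, a point of comparison: the paper does not actually prove this lemma --- it imports it verbatim from \citet{Fischer2018} (Corollary 4 there) --- so you are supplying a proof where the manuscript gives none. Your strategy (the recurrence $\mathcal{S}(T)=\mathcal{S}(T_a)+\mathcal{S}(T_b)+n$, the characterization of minimum-Sackin trees as exactly those whose leaf depths all lie in $\{k_n-1,k_n\}$, and the Kraft-equality bookkeeping turning that depth condition into $2^{k_n-2}\leq n_b\leq n_a\leq 2^{k_n-1}$, hence into the stated bound on $n_a-n_b$) is sound and is essentially the route of Fischer's own argument. The arithmetic in your third step checks out: $n_a\leq 2^{k_n-1}$ is equivalent to $n_a-n_b\leq 2^{k_n}-n$ and $n_b\geq 2^{k_n-2}$ to $n_a-n_b\leq n-2^{k_n-1}$; just be sure to treat the boundary case $n_a=2^{k_n-2}$ (where $k_{n_a}=k_n-2$ and $T_a$ is forced to be fully symmetric) separately from $2^{k_n-2}<n_a\leq 2^{k_n-1}$ (where $k_{n_a}=k_n-1$), since ``$T_a$ has minimum Sackin index'' refers to depths $\{k_{n_a}-1,k_{n_a}\}$ inside $T_a$.

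The one genuine flaw is the pigeonhole claim in your second step. If the depth condition fails, it is \emph{not} true that a leaf at depth $<k_n-1$ must exist. For $n=7$ (so $k_n=3$) the bifurcating tree with leaf depths $2,2,2,3,4,5,5$ satisfies Kraft's equality $3\cdot 2^{-2}+2^{-3}+2^{-4}+2\cdot 2^{-5}=1$ and is therefore realizable; it violates the depth condition only through leaves deeper than $k_n$, and its shallowest leaf has depth exactly $k_n-1=2$, so your swap never fires on it. The standard repair: show that if $T$ has leaves $x,y$ with $\delta_T(y)\geq\delta_T(x)+2$, then suppressing $y$'s parent (which lifts $y$'s sibling subtree, containing $s\geq 1$ leaves, by one level) and re-inserting $y$ as a new sibling of $x$ changes the Sackin index by $\delta_T(x)-\delta_T(y)+2-s\leq\delta_T(x)-\delta_T(y)+1<0$ (taking $x$ of minimum and $y$ of maximum depth guarantees $x$ is not inside the lifted subtree); and then observe that whenever some leaf has depth outside $\{k_n-1,k_n\}$ such a pair exists, because a leaf at depth $>k_n$ forces one at depth $\leq k_n-1$ (otherwise Kraft would give $n\geq 2^{k_n}$ with equality and all depths equal to $k_n$), while a leaf at depth $<k_n-1$ forces one at depth $\geq k_n$ (since $n>2^{k_n-1}$). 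With that correction, and the observation that all trees with depths confined to $\{k_n-1,k_n\}$ share the common Sackin value $nk_n+n-2^{k_n}$, your argument goes through.
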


Based on this lemma we can prove the following statement.

\begin{proposition} \label{Colless_implies_Sackin}
%For every $n\geq 1$, if $T\in \TT_n$ is a  minimal Colless tree, then it has minimum Sackin index.
{For every $n \geq 1$, if $T$ is a bifurcating tree with $n$ leaves that has minimum Colless index, $T$ has also minimum Sackin index in $\TT_n$.}
\end{proposition}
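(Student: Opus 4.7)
The natural plan is a proof by induction on $n$ using Lemma \ref{cor_mareike}, which gives exactly the recursive characterization of minimum Sackin trees we need: $T=(T_a,T_b)$ has minimum Sackin index if and only if $T_a$ and $T_b$ do and $n_a - n_b \leq \min\{n-2^{k_n-1}, 2^{k_n}-n\}$. The base case $n=1$ is trivial since $\TT_1$ has a single element.

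For the inductive step, suppose $T=(T_a,T_b) \in \TT_n$ has minimum Colless index. By Lemma \ref{max_subtrees}, $T_a$ and $T_b$ are themselves minimal Colless in $\TT_{n_a}$ and $\TT_{n_b}$, so by the induction hypothesis both have minimum Sackin index. It therefore suffices to verify the inequality $n_a - n_b \leq \min\{n-2^{k_n-1}, 2^{k_n}-n\}$.

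This is where I would invoke the two key structural results about the GFB tree. Write $n=2^m+p$ with $m=\lfloor \log_2(n)\rfloor$ and $0\leq p < 2^m$. By Proposition \ref{GFB_Decomposition}, the maximal pending subtrees of $T^{\mathit{gfb}}_n$ have leaf counts $(n_a^{\mathit{gfb}},n_b^{\mathit{gfb}})$ with
\[
n_a^{\mathit{gfb}}-n_b^{\mathit{gfb}}=\begin{cases} p & \text{if }0\leq p\leq 2^{m-1},\\ 2^m-p & \text{if }2^{m-1}\leq p<2^m,\end{cases}
\]
so in all cases $n_a^{\mathit{gfb}}-n_b^{\mathit{gfb}}=\min\{p,2^m-p\}$. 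On the other hand, a short case analysis on whether $n$ is a power of $2$ shows that $\min\{n-2^{k_n-1},2^{k_n}-n\}$ equals exactly this same quantity: if $p=0$ then $k_n=m$ and both expressions are $0$, while if $p>0$ then $k_n=m+1$ and $\min\{n-2^m,2^{m+1}-n\}=\min\{p,2^m-p\}$. Finally, by Proposition \ref{gfb_extreme}, any minimal Colless tree $T=(T_a,T_b)\in \TT_n$ satisfies $n_a-n_b\leq n_a^{\mathit{gfb}}-n_b^{\mathit{gfb}}$, and chaining these identifications gives precisely $n_a-n_b\leq \min\{n-2^{k_n-1},2^{k_n}-n\}$, as required.

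The only step that requires any genuine thought is matching the two expressions for the difference ($\min\{p,2^m-p\}$ versus $\min\{n-2^{k_n-1},2^{k_n}-n\}$); everything else follows formally by combining results already established. A concluding remark would note that the converse fails, which the authors presumably illustrate with an explicit small example (e.g.\ some $n$-leaf tree that is minimum Sackin but has a strictly suboptimal root split, hence strictly larger Colless index than $c_n$).
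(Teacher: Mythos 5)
Your proof is correct and follows essentially the same route as the paper's: induction on $n$, Lemma \ref{max_subtrees} plus the induction hypothesis for the subtrees, and Lemma \ref{cor_mareike} reducing everything to the bound $n_a-n_b\leq\min\{p,2^m-p\}$. The paper obtains that bound by citing the computation inside the proof of Proposition \ref{gfb_extreme}, whereas you derive it from the statements of Propositions \ref{gfb_extreme} and \ref{GFB_Decomposition}; this is an immaterial difference.
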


\begin{proof} 
We show the statement by induction on $n$.
For $n=1$, it is, as always, obvious because there is only one tree in $\TT_1$.
Assume now that the claim holds for every $1\leq n'<n$  and let $T=(T_a, T_b)\in \TT_n$ be a minimal Colless tree with $n$ leaves, with $T_a\in \TT_{n_a}$ and $T_b\in \TT_{n_b}$. 
Write $n=2^m+p$, with $m=\lfloor\log_2(n)\rfloor$ and $0\leq p<2^m$. If $p=0$, there is only one minimal Colless tree, which is fully symmetric and therefore it has minimum Sackin index. So, we assume that $p>0$, in which case $k_n= \lceil \log_2 (n) \rceil=m+1$. By Lemma \ref{max_subtrees}, both $T_a$ and $T_b$ are minimal Colless trees and therefore, by the induction hypothesis, they have minimum Sackin index. Thus, by Lemma \ref{cor_mareike}, to prove that $T$ has minimum Colless index it is enough to prove that 
$$
n_a - n_b \leq \min \{n-2^{k_n-1}, 2^{k_n}-n\}=\{n-2^m,2^{m+1}-n\}=\{p,2^{m}-p\}.
$$
But this has already been proved in the proof of  Proposition \ref{gfb_extreme}.\qed
\end{proof}

The converse implication is not true. For example, the tree $T_2$ depicted in Figure \ref{Fig_Sackin_not_Colless} has minimum Sackin index, but it does not have minimum Colless index. {Note that this entails that, as far as classifying trees to be most balanced is concerned, the Colless index is a finer index than Sackin's is.}

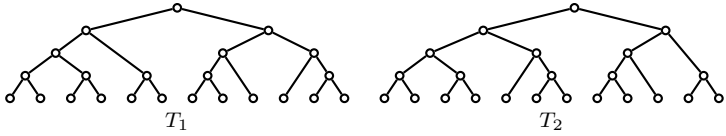
\begin{figure}[htbp]
\begin{center}
\begin{tikzpicture}[thick,>=stealth,scale=0.2]
\draw(0,0) node [treppp] (1) {}; 
\draw(2,0) node [treppp] (2) {}; 
\draw(4,0) node [treppp] (3) {}; 
\draw(6,0) node [treppp] (4) {}; 
\draw(8,0) node [treppp] (5) {};  
\draw(10,0) node [treppp] (6) {}; 
\draw(12,0) node [treppp] (7) {}; 
\draw(14,0) node [treppp] (8) {}; 
\draw(16,0) node [treppp] (9) {}; 
\draw(18,0) node [treppp] (10) {};  
\draw(20,0) node [treppp] (11) {}; 
\draw(22,0) node [treppp] (12) {}; 
\draw(1,1.5) node[treppp] (a) {};
\draw(5,1.5) node[treppp] (b) {};
\draw(9,1.5) node[treppp] (c) {};
\draw(3,3) node[treppp] (x) {};
\draw(5,4.5) node[treppp] (h) {};
\draw(13,1.5) node[treppp] (d) {};
\draw(21,1.5) node[treppp] (f) {};
\draw(14,3) node[treppp] (e) {};
\draw(20,3) node[treppp] (g) {};
\draw(17,4.5) node[treppp] (i) {};
\draw(11,6) node[treppp] (r) {};
\draw  (a)--(1);
\draw  (a)--(2);
\draw  (b)--(3);
\draw  (b)--(4);
\draw  (c)--(5);
\draw  (c)--(6);
\draw  (x)--(a);
\draw  (x)--(b);
\draw  (h)--(x);
\draw  (h)--(c);
\draw  (f)--(11);
\draw  (f)--(12);
\draw  (d)--(7);
\draw  (d)--(8);
\draw  (e)--(d);
\draw  (e)--(9);
\draw  (g)--(f);
\draw  (g)--(10);
\draw  (i)--(e);
\draw  (i)--(g);
\draw  (r)--(h);
\draw  (r)--(i);
\draw(11,-1.5) node {\footnotesize $T_1$};
\end{tikzpicture}
\quad
\begin{tikzpicture}[thick,>=stealth,scale=0.2]
\draw(0,0) node [treppp] (1) {}; 
\draw(2,0) node [treppp] (2) {}; 
\draw(4,0) node [treppp] (3) {}; 
\draw(6,0) node [treppp] (4) {}; 
\draw(8,0) node [treppp] (5) {};  
\draw(10,0) node [treppp] (6) {}; 
\draw(12,0) node [treppp] (7) {}; 
\draw(14,0) node [treppp] (8) {}; 
\draw(16,0) node [treppp] (9) {}; 
\draw(18,0) node [treppp] (10) {};  
\draw(20,0) node [treppp] (11) {}; 
\draw(22,0) node [treppp] (12) {}; 
\draw(1,1.5) node[treppp] (a) {};
\draw(5,1.5) node[treppp] (b) {};
\draw(3,3) node[treppp] (x) {};
\draw(11,1.5) node[treppp] (c) {};
\draw(10,3) node[treppp] (d) {};
\draw(6.5,4.5) node[treppp] (h) {};
\draw(15,1.5) node[treppp] (f) {};
\draw(16,3) node[treppp] (g) {};
\draw(21,1.5) node[treppp] (j) {};
\draw(18.5,4.5) node[treppp] (i) {};
\draw(12.5,6) node[treppp] (r) {};

\draw  (a)--(1);
\draw  (a)--(2);
\draw  (b)--(3);
\draw  (b)--(4);
\draw  (x)--(a);
\draw  (x)--(b);
\draw  (c)--(6);
\draw  (c)--(7);
\draw  (d)--(5);
\draw  (d)--(c);
\draw  (h)--(d);
\draw  (h)--(x);
\draw  (f)--(8);
\draw  (f)--(9);
\draw  (g)--(10);
\draw  (g)--(f);
\draw  (j)--(11);
\draw  (j)--(12);
\draw  (i)--(j);
\draw  (i)--(g);
\draw  (r)--(h);
\draw  (r)--(i);
\draw(11,-1.5) node {\footnotesize $T_2$};
\end{tikzpicture}
\end{center}
\caption{Trees $T_1$ and $T_2$ with 12 leaves. We have $\mathcal{C}(T_1)=4=c_{12}$ and $\mathcal{C}(T_2) = 6$. Thus, $T_1$ has minimum Colless index, while $T_2$ does not. Note, however, that their Sackin indices are $\mathcal{S}(T_1) = \mathcal{S}(T_2) = 44$, which can be shown to be minimal (cf. Theorem 3 in \citet{Fischer2018}).}
\label{Fig_Sackin_not_Colless}
\end{figure}

\section{Discussion}

{The balance of a phylogenetic tree is informally defined as the tendency of its internal nodes to split their sets of descendant leaves among their children nodes into clades of similar sizes. This property is independent  of the actual taxa labeling the leaves of the phylogenetic tree, and therefore it is a property of its shape, i.e., of the (unlabeled) tree underlying it. The Colless index $\mathcal{C}(T)$ of a rooted bifurcating phylogenetic tree $T$ directly quantifies the balance of $T$ by adding up 
the local imbalances of its internal nodes $v$,  measured as the absolute value of the difference of the numbers of descendant leaves of the children of $v$.  Introduced by \citet{Colless1982}, this index is not only one of the oldest balance indices for bifurcating phylogenetic trees, but probably the most widely used: 
for instance, its citations according to Google Scholar double those of the second most  widely used such index,  the Sackin index (240  \textsl{vs} 118 citations; data retrieved on December 15, 2019). But, despite its popularity, neither its minimum value for any given number of leaves nor the trees where this minimum value is achieved were known so far. This paper fills this gap in the literature, with two main contributions. 

First, we have established both a recursive and two different closed expressions for the minimum value $c_n$ of the Colless index on the space $\TT_n$ of bifurcating trees with $n$ leaves. Knowing this minimum value, as well as its maximum value, which is reached at the caterpillars and is equal to $\binom{n-1}{2}$, allows one to normalize the Colless index so that its range becomes the unit interval $[0,1]$, by means of the usual affine transformation
$$
\widetilde{C}(T)=\frac{\mathcal{C}(T)-c_n}{\binom{n-1}{2}-c_n}.
$$
This normalized index  then allows for the comparison of the balance of trees with different numbers of leaves, which cannot be done directly with the unnormalized Colless index $\mathcal{C}$, because its value tends to grow with $n$. It should be mentioned that another popular normalization, or rather \emph{standardization}, strategy for balance indices,  relative to a family of probability distributions on the spaces of phylogenetic trees with $n$ leaves, does not need the knowledge of the extremal values of the index. It consists  in subtracting the expected value of the index and dividing by its standard deviation; such a normalization of the Colless index for several probability distributions  is available for instance in the R package \texttt{apTreeshape} \citep{aptreeshape}. In this way, size effects are reduced when comparing trees with different numbers of leaves, but the resulting indices take values in intervals that still grow with $n$.

Our expressions for $c_n$ have been obtained by first proving that the \emph{maximally balanced} trees are \emph{minimal Colless  trees}, i.e. they have minimum Colless index  for their  number of leaves. This result is not surprising, because, in words of \citet{Shao:90}, they are considered  to be the  ``most balanced'' bifurcating trees. But it turns out that for almost all values of  $n$ there are minimal Colless trees  that are not maximally balanced. More precisely, if $n$ differs at least 2 from any power of 2, then there are minimal Colless trees with $n$ leaves that are not maximally balanced. So, our second main contribution has been a structural characterization of the minimal Colless trees, an efficient algorithm to produce all of them for any number $n$ of leaves, and a recurrence to compute the number $\widetilde{c}(n)$ of different minimal Colless trees with $n$ leaves for every $n$. 
Moreover, we have described a second family of minimal Colless trees, that we have called \emph{greedy from the bottom}, \emph{GFB}, with a member in every space $\TT_n$. These GFB trees are different from the maximally balanced trees for all numbers of leaves for which there exist at least two different minimal Colless trees,  and they turn out to be the most symmetrical (i.e., those with the maximum number of automorphisms) minimal Colless trees. We have not been able to characterize the minimal Colless trees with the least number of automorphisms, or even to find a formula for this least number for each number $n$ of leaves. It would be natural to conjecture that, since the maximally balanced trees and the GFB trees are extremal among all minimal Colless trees in a very specific sense (cf. Corollary \ref{leaf_partioning}) and the GFB trees have the largest number of automorphisms, the maximally balanced trees would have the least number of automorphisms, which would be $2^{n-1-c_n}$ by Corollary \ref{cor:autom}. Although this is true for many values of $n$, it is false in general. The first counterexample appears with $n=24$: see Figure \ref{fig:ceswaswrong}. The tree depicted in this figure is obtained by replacing in the maximally balanced tree $T_{24}^\mathit{mb}$ a maximally balanced rooted subtree with 6 leaves by a GFB tree with 6 leaves, reducing in this way in 1 the number of symmetry vertices in $T_{24}^\mathit{mb}$. So, we leave as an open problem to characterize the minimal Colless trees with the least number of symmetry vertices. 

\begin{figure}[htb]
\centering
\begin{tikzpicture}[thick,>=stealth,scale=0.4]
\draw(1,0) node[trepp] (l1) {}; 
\draw(2,0) node[trepp] (l2) {}; 
\draw(3,0) node[trepp] (l3) {}; 
\draw(4,0) node[trepp] (l4) {}; 
\draw(5,0) node[trepp] (l5) {}; 
\draw(6,0) node[trepp] (l6) {}; 
\draw(7,0) node[trepp] (l7) {}; 
\draw(8,0) node[trepp] (l8) {}; 
\draw(9,0) node[trepp] (l9) {}; 
\draw(10,0) node[trepp] (l10) {}; 
\draw(11,0) node[trepp] (l11) {}; 
\draw(12,0) node[trepp] (l12) {}; 
\draw(13,0) node[trepp] (l13) {}; 
\draw(14,0) node[trepp] (l14) {}; 
\draw(15,0) node[trepp] (l15) {}; 
\draw(16,0) node[trepp] (l16) {}; 
\draw(17,0) node[trepp] (l17) {}; 
\draw(18,0) node[trepp] (l18) {}; 
\draw(19,0) node[trepp] (l19) {}; 
\draw(20,0) node[trepp] (l20) {}; 
\draw(21,0) node[trepp] (l21) {}; 
\draw(22,0) node[trepp] (l22) {}; 
\draw(23,0) node[trepp] (l23) {}; 
\draw(24,0) node[trepp] (l24) {}; 
\draw(1.5,1) node[trepp] (v1) {}; 
\draw(5.5,1) node[trepp] (v2) {}; 
\draw(7.5,1) node[trepp] (v3) {}; 
\draw(11.5,1) node[trepp] (v4) {}; 
\draw(13.5,1) node[trepp] (v5) {}; 
\draw(17.5,1) node[trepp] (v6) {}; 
\draw(19.5,1) node[trepp] (v7) {}; 
\draw(21.5,1) node[trepp] (v8) {}; 
\draw(23.5,1) node[trepp] (v9) {}; 
\draw(2.5,2) node[trepp] (w1) {}; 
\draw(4.5,2) node[trepp] (w2) {}; 
\draw(8.5,2) node[trepp] (w3) {}; 
\draw(10.5,2) node[trepp] (w4) {}; 
\draw(14.5,2) node[trepp] (w5) {}; 
\draw(16.5,2) node[trepp] (w6) {}; 
\draw(20.5,2) node[trepp] (w7) {}; 
\draw(3.5,3) node[trepp] (x1) {}; 
\draw(9.5,3) node[trepp] (x2) {}; 
\draw(15.5,3) node[trepp] (x3) {}; 
\draw(21.5,3) node[trepp] (x4) {}; 
\draw(6.5,4) node[trepp] (y1) {}; 
\draw(18.5,4) node[trepp] (y2) {}; 
\draw(12.5,5) node[trepp] (r) {}; 
\draw (r)--(y1);
\draw (r)--(y2);
\draw (y1)--(x1);
\draw (y1)--(x2);
\draw (y2)--(x3);
\draw (y2)--(x4);
\draw (x1)--(w1);
\draw (x1)--(w2);
\draw (x2)--(w3);
\draw (x2)--(w4);
\draw (x3)--(w5);
\draw (x3)--(w6);
\draw (x4)--(w7);
\draw (x4)--(v9);
\draw (w1)--(v1);
\draw (w1)--(l3);
\draw (w2)--(l4);
\draw (w2)--(v2);
\draw (w3)--(v3);
\draw (w3)--(l9);
\draw (w4)--(l10);
\draw (w4)--(v4);
\draw (w5)--(v5);
\draw (w5)--(l15);
\draw (w6)--(l16);
\draw (w6)--(v6);
\draw (w7)--(v7);
\draw (w7)--(v8);
\draw (v1)--(l1);
\draw (v1)--(l2);
\draw (v2)--(l5);
\draw (v2)--(l6);
\draw (v3)--(l7);
\draw (v3)--(l8);
\draw (v4)--(l11);
\draw (v4)--(l12);
\draw (v5)--(l13);
\draw (v5)--(l14);
\draw (v6)--(l17);
\draw (v6)--(l18);
\draw (v7)--(l19);
\draw (v7)--(l20);
\draw (v8)--(l21);
\draw (v8)--(l22);
\draw (v9)--(l23);
\draw (v9)--(l24);
\end{tikzpicture}
\caption{\label{fig:ceswaswrong} A maximally balanced tree with $n=24$ leaves and $14<s(T_{24}^\mathit{mb})=24-1-c_{24}=15$ symmetry vertices.}
\end{figure}
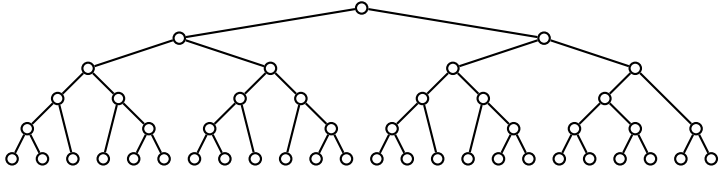

We would like to point out that one of our expressions for $c_n$ entails a fractal structure for the graph of $(n,c_n)$ related to the fractal Blancmange curve (cf. Figure \ref{Fig_MinimumColless}). It turns out that a similar fractal structure seems to appear also in the graph of $(n,\widetilde{c}(n))$ (cf. Figure \ref{Fig_NumberCollessMinima}). Unfortunately, we have not been able so far to find an explicit formula for  $\widetilde{c}(n)$, and it would definitely be of interest to find such a formula and to analyze whether this seemingly fractal structure is real or not and its possible relationship with that of the sequence $c_n$. 

We have concluded by showing that every Colless minimal tree also has minimum Sackin index, while the converse is not true. This implies that the Sackin index classifies more trees as \enquote{most balanced} than the Colless index. The Colless index, on the other hand, considers more trees as \enquote{most balanced} than for example the  \emph{total cophenetic index}, for which the minimum value on each $\TT_n$ is uniquely achieved by the maximally balanced tree (\citet{Mir2013}). These differences in performance could be due to the differences in the ranges of values of these indices. The total cophenetic index has a range of values with lower limit in $O(n^2)$ and upper limit $\binom{n}{3}$, and so its width  grows in $O(n^3)$. As for the other two indices, the range of values of the Colless index has a lower limit below $\min\{n/2,2^{\lceil\log_2(n)\rceil}/3\}$ (see Corollary \ref{min_colless_properties})  and upper limit $\binom{n-1}{2}$, while the range of values of the Sackin index goes from at least $n\lfloor\log_2(n)\rfloor$ to $\binom{n+1}{2}-1$: thus, although both widths grow in $O(n^2)$, the Sackin index has a narrower range.  

One might  also wonder about the distribution of minimal Colless trees among reconstructed phylogenetic trees. To answer this question, we report the results of an experiment in which we have looked for minimal Colless trees in the TreeBase database \citep{Piel,Vos}. To do that, we have downloaded all rooted bifurcating species trees in it with more than 3 leaves (data retrieved in December 12, 2019) and after removing those that had format
problems that prevented  parsing them, we obtained 5,617 trees with at least 4 leaves. Of them, only 24 were  minimal Colless trees. Among these  minimal Colless trees, 15 have numbers of leaves for which there exists only one minimal Colless tree shape. The other 9  minimal Colless trees can be  classified into a set of 4 trees with $n\leq 11$ leaves, all of them  GFB, and a set of 5 trees with $n\geq 12$ leaves, all of them maximally balanced. These tree shapes are available at the GitHub repository associated to this paper (\url{https://github.com/biocom-uib/Colless}).

For instance, TreeBase contains 52 rooted bifurcating trees with 4 leaves, of which 6 are fully symmetric and the remaining 46 are caterpillars (these are the only two possibilities in $\TT_4$). The 95\% Clopper-Pearson confidence interval for the probability of a phylogenetic tree with 4 leaves to be fully symmetric computed with these data goes from 0.0435 to 0.2344. Now, the probability of a phylogenetic tree with 4 leaves to be fully symmetric under Aldous' $\beta$-model for bifurcating phylogenetic trees \citep{Ald1} is
$$
P_\beta(T_4^\mathit{mb})=\frac{3\beta+6}{7\beta+18}
$$
and under Ford's $\alpha$-model \citep{Ford} this probability is
$$
P_\alpha(T_4^\mathit{mb})=\frac{1-\alpha}{3-\alpha}
$$
(for detailed computations of these probabilities, see  Lemmas 4 and 5 in \citep{CMR}, respectively).  This produces the 95\% confidence interval $(-1.94, -1.31)$ for the parameter $\beta$ in Aldous' model and the 95\% confidence interval $(0.39,0.91)$ for the parameter $\alpha$ in Ford's model. Since the Yule-Harding model \citep[p. 43]{Steel} corresponds to $\beta=0$ and  $\alpha=0$ and the uniform model \citep[p. 50]{Steel} corresponds to $\beta=-1.5$ and $\alpha=0.5$, we conclude that the data currently contained in TreeBase are inconsistent with the Yule model on $\TT_4$ and consistent with the uniform model to the 95\% level of confidence.

As another example, consider the case $n=6$. TreeBase contains 43 phylogenetic trees with 6 leaves: 3 of them are GFB trees and none of them is maximally balanced. Now, under Ford's $\alpha$-model, a  phylogenetic tree with 6 leaves is maximally balanced or GFB, respectively, with probabilities 
$$
P_\alpha(T_6^\mathit{mb})=\frac{(1-\alpha)^2(8-\alpha)}{(3-\alpha)(4-\alpha)(5-\alpha)},\quad 
P_\alpha(T_6^\mathit{gfb})=\frac{2(1-\alpha)(2-\alpha)}{(4-\alpha)(5-\alpha)}
$$
(see Figs.~28 and 29 in \citep{Ford}). Under Aldous' $\beta$-model, these probabilities are
\begin{align*}
P_\beta(T_6^\mathit{mb}) & =
\frac{10 (\beta+3)(\beta+2)}{31 \beta^2 + 194\beta + 300}
\\
P_\beta(T_6^\mathit{gfb}) &  = \frac{45 (\beta+4)(\beta+2)^2}{(31 \beta^2 + 194\beta + 300)(7\beta+18)}
\end{align*}
(see Appendix A.4 for the detailed computation of these probabilities).
From these formulas  it is easy to check that
\begin{align*}
& P_\alpha(T_6^\mathit{gfb})<P_\alpha(T_6^\mathit{mb}) \qquad \mbox{for every $\alpha\in [0,1]$}  \\
& P_\beta(T_6^\mathit{gfb})<P_\beta(T_6^\mathit{mb})  \qquad \mbox{for every $\beta\in (-2,\infty)$} 
\end{align*}
So, although for every possible value of the parameters $\alpha$ or $\beta$ the maximally balanced tree is more probable than the GFB tree in $\TT_6$, it seems that the phylogenetic reconstruction methods reverse this preference (although the difference is not statistically significant: $p=0.125$  for the bilateral binomial exact test).
}

\begin{acknowledgement}
Tom\'as M. Coronado and Francesc Rossell\'o thank the Spanish Ministry of Economy and Competitiveness and the European Regional Development Fund for partial support for this research through projects DPI2015-67082-P and PGC2018-096956-B-C43 (MINECO/FEDER). Moreover, Mareike Fischer thanks the joint research project \textit{DIG-IT!} supported by the European Social Fund (ESF), reference: ESF/14-BM-A55-0017/19, and the Ministry of Education, Science and Culture of Mecklenburg-Vorpommern, Germany.
Additionally, Lina Herbst thanks the state Mecklenburg-Western Pomerania for a Landesgraduierten-Studentship and Kristina Wicke thanks the German Academic Scholarship Foundation for a studentship.
Moreover, we thank the anonymous reviewers and the editors for their helpful comments on an earlier version of this manuscript.
\end{acknowledgement}

\bibliographystyle{spbasic}    
%\bibliography{references}

\section*{Appendices}

\subsection*{A.1 Proof of Proposition \ref{prop:eqC}}

{Recall that, for every $n \in \mathbb{N}_{\geq 2}$,
$$
QB(n)\coloneqq\big\{ (n_a,n_b)\in \NN^2\mid  n_a\geq n_b\geq 1,\ n_a+n_b=n,
 c_{n_a}+c_{n_b}+n_a-n_b=c_n \big\}.
$$}

{We shall} establish the following result.

\setcounter{proposition}{1}
\begin{proposition}
For every $n\geq 2$ and for every $n_a,n_b\in \NN_{\geq 1}$ such that $n_a\geq n_b$ and $n_a+n_b=n$:
\begin{enumerate}[(1)]

\item If $n_a=n_b=n/2$, then $(n_a,n_b)\in QB(n)$ always.

\item If $n_a>n_b$, then $(n_a,n_b)\in QB(n)$ if, and only if, one of the following three conditions is satisfied:
\begin{itemize}
\item There exist $k\in \NN$ and $p\in \NN_{\geq 1}$ such that $n=2^k(2p+1)$, $n_a=2^k(p+1)$ and $n_b=2^kp$.

\item There exist $k\in \NN$, $l\in \NN_{\geq 2}$, $p\in \NN_{\geq 1}$, and $t\in \NN$,  $0\leq t<2^{l-2}$, such that $n=2^k(2^l(2p+1)+2t+1)$, $n_a=2^{k+l}(p+1)$, and $n_b=2^{k}(2^lp+2t+1)$. 

\item There exist $k\in \NN$, $l\in \NN_{\geq 2}$, $p\in \NN_{\geq 1}$, and $t\in \NN$,  $0\leq t<2^{l-2}$, such that $n=2^k(2^l(2p+1)-(2t+1))$, $n_a=2^k(2^l(p+1)-(2t+1))$, and $n_b=2^{k+l}p$.
\end{itemize}
\end{enumerate}
\end{proposition}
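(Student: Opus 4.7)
The proof is by strong induction on the difference $d := n_a - n_b \geq 0$, establishing both implications of (2) simultaneously. Statement (1) is immediate from $c_{2m} = 2c_m$ in Corollary \ref{colless_minimum}: if $n$ is even, then $c_{n/2} + c_{n/2} + 0 = c_n$, so $(n/2, n/2) \in QB(n)$.

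For (2), assume $n_a > n_b$. The base case $d = 1$ is a direct calculation: the QB identity $c_{n_a} + c_{n_b} + 1 = c_n$ coincides with the odd recurrence of Corollary \ref{colless_minimum}, forcing $n$ odd and $(n_a, n_b) = ((n+1)/2, (n-1)/2)$, which is the first family with $k = 0$. For the inductive step, I would split according to the parities of $n_a, n_b$. When both are even, applying $c_{2m} = 2c_m$ throughout gives $(n_a, n_b) \in QB(n) \iff (n_a/2, n_b/2) \in QB(n/2)$, a pair with smaller difference $d/2$; the inductive hypothesis classifies it, and incrementing $k$ lifts the form. Conversely, any pair in one of the three families with $k \geq 1$ halves to a pair in the same family with $k-1$, closing this case in both directions.

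The harder cases occur when $n_a, n_b$ are both odd (so $d$ is even, $n$ is even) or of mixed parity (so $d$ is odd, $n$ is odd). In each sub-case, the odd recurrence $c_{2m+1} = c_{m+1} + c_m + 1$ expands one or both sides of the QB identity into a sum of two $c$-values, and after regrouping the identity splits into two simultaneous QB identities at strictly smaller difference. For instance, when $n$ is odd with $n_a = 2a+1$ and $n_b = 2b$, the identity $c_{n_a} + c_{n_b} + d = c_n$ becomes
\[
\bigl(c_{a+1} + c_b + (a{+}1 - b)\bigr) + \bigl(c_a + c_b + (a - b)\bigr) = c_{a+b+1} + c_{a+b} + 1,
\]
and since each summand on the left is bounded below by the corresponding summand on the right by Lemma \ref{lem:key}, equality is forced termwise, i.e.\ $(a+1, b) \in QB(a+b+1)$ and $(a, b) \in QB(a+b)$. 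The analogous rearrangement handles the case $n_a = 2a$, $n_b = 2b+1$, and also the case $n_a = 2a+1$, $n_b = 2b+1$ with $n$ even. The inductive hypothesis then gives the form of each reduced pair, and one lifts back to $(n_a, n_b)$ by recombining the binary data.

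The main obstacle will be the bookkeeping in the last two cases: most pairings of the reduced forms supplied by induction are incompatible (their parameters enforce contradictory constraints on the binary expansion of the common sum), and one must verify that the compatible pairings lift bijectively onto the three asserted families parametrized by $(k, l, p, t)$. Remark \ref{rem:postkey}(b) (the strict inequality when $m$ is odd and $s \geq 2$ is even) is expected to be the tool that rules out precisely the incompatible combinations. In effect, this reduces the problem to a combinatorial analysis of the binary expansion of $n$ at the bit-positions governing the parameters $l$ and $t$; once that analysis is carried out, the induction closes.
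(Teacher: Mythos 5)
Your plan is essentially the paper's own proof (Appendix A.1): the even-difference case is handled there by Lemma \ref{lem:eqeven} (halving both entries reduces the 2-adic valuation of $n_a-n_b$), and the odd-difference cases by Lemmas \ref{lem:eqodd2} and \ref{lem:eqodd3}, which expand both sides of the QB identity via the recurrence of Corollary \ref{colless_minimum} and use the inequalities \eqref{cn_leq} to force termwise equality, splitting the identity into two simultaneous QB identities at smaller parameters --- exactly your reduction, with Remark \ref{rem:postkey} playing the role you predict. Two caveats: your displayed identity has an off-by-one (a $+1$ arises on \emph{both} sides and cancels, so the right-hand side should read $c_{a+b+1}+c_{a+b}$; as written, the termwise forcing would not go through), and the ``bookkeeping'' you defer --- the sub-case analysis on the parity of $t$ and the verification that the compatible recombinations are precisely the three parametrized families --- is where the paper spends almost the entirety of its three-page appendix, so the argument is a correct plan rather than a complete proof.
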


The proof of this proposition relies on several auxiliary lemmas. In order to simplify the language in their statements and proofs, throughout this section we systematically assume, without any further notice, that the symbols $j$, $k$, $m$, $n$, $p$, $s$, $t$, and $x$, possibly with subscripts or superscripts, always represent natural numbers.

\begin{lemma}\label{lem:eqeven}
Let $s=2^ks_0$ with $k\geq 1$ and $s_0\geq 1$. Then, for every $m\geq 1$, $(m+s,m)\in QB(2m+s)$  if, and only if,  $m=2^km_0$, for some $m_0\geq 1$ such that $(m_0+s_0,m_0)\in QB(2m_0+s_0)$.
\end{lemma}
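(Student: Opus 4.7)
The plan is to establish the two directions separately, with the \emph{only if} direction proven by induction on $k$ and the base case handled by Remark \ref{rem:postkey}(b). The arithmetic backbone throughout is the recurrence $c_{2n}=2c_n$ from Corollary \ref{colless_minimum}, which iterates to $c_{2^k n}=2^k c_n$ for every $k\geq 0$.

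For the \emph{if} direction: Assume $m=2^k m_0$ with $(m_0+s_0,m_0)\in QB(2m_0+s_0)$, i.e.\ $c_{m_0+s_0}+c_{m_0}+s_0=c_{2m_0+s_0}$. Then $m+s=2^k(m_0+s_0)$, $m=2^km_0$ and $2m+s=2^k(2m_0+s_0)$, so applying $c_{2^kn}=2^kc_n$ to all three quantities gives
$$
c_{m+s}+c_m+s=2^k\bigl(c_{m_0+s_0}+c_{m_0}+s_0\bigr)=2^k c_{2m_0+s_0}=c_{2m+s},
$$
which is exactly $(m+s,m)\in QB(2m+s)$.

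For the \emph{only if} direction I would induct on $k\geq 1$. The base case $k=1$ is the crux: here $s=2s_0\geq 2$ is even, so Remark \ref{rem:postkey}(b) immediately forbids $m$ from being odd (it would yield the strict inequality $c_{m+s}+c_m+s>c_{2m+s}$, contradicting membership in $QB(2m+s)$). Therefore $m=2m_0$ with $m_0\geq 1$, and applying $c_{2n}=2c_n$ to the three terms $c_{m+s}=c_{2(m_0+s_0)}$, $c_m=c_{2m_0}$ and $c_{2m+s}=c_{2(2m_0+s_0)}$ reduces the equality $c_{m+s}+c_m+s=c_{2m+s}$ (after dividing by $2$) to $c_{m_0+s_0}+c_{m_0}+s_0=c_{2m_0+s_0}$, which is $(m_0+s_0,m_0)\in QB(2m_0+s_0)$, as required.

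For the inductive step, assume the statement holds for $k-1$ and let $s=2^ks_0$ with $k\geq 2$. Since $s$ is still even and $\geq 2$, Remark \ref{rem:postkey}(b) forces $m$ to be even, say $m=2m'$. The same scaling trick turns the identity $c_{m+s}+c_m+s=c_{2m+s}$ into $c_{m'+2^{k-1}s_0}+c_{m'}+2^{k-1}s_0=c_{2m'+2^{k-1}s_0}$, so $(m'+2^{k-1}s_0,m')\in QB(2m'+2^{k-1}s_0)$ with exponent $k-1$ and the same $s_0$. The inductive hypothesis then yields $m'=2^{k-1}m_0$ with $(m_0+s_0,m_0)\in QB(2m_0+s_0)$, whence $m=2^k m_0$, completing the proof. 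The main obstacle is conceptually mild but worth noting: one must invoke Remark \ref{rem:postkey}(b) \emph{before} trying to halve everything, because without first knowing $m$ is even the scaling step $c_{m+s}=2c_{(m+s)/2}$ is not available.
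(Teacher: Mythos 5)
Your proof is correct and follows essentially the same route as the paper's: induction on the exponent $k$, using Remark \ref{rem:postkey}(b) to force $m$ to be even before halving, and the recurrence $c_{2n}=2c_n$ to rescale the identity. The only (cosmetic) difference is that you separate the two implications and dispatch the ``if'' direction by a direct $2^k$-scaling, whereas the paper carries the full equivalence through each step of the induction.
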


\begin{proof}
We prove the equivalence in the statement by induction on the exponent $k\geq 1$.
Recall that,  by Remark \ref{rem:postkey}.(b), if $s\geq 1$ is even and $c_{m+s}+c_m+s=c_{2m+s}$, then $m$ must be even, too.  Therefore, if $s=2t_0$, then $m=2m_1$ for some $m_1\geq 1$, and then, since
$$c_{2m_1+2t_0}+c_{2m_1}+2t_0=2(c_{m_1+t_0}+c_{m_1}+t_0)$$ and 
$c_{4m_1+2t_0}=2c_{2m_1+t_0}$,
the equality $c_{m+s}+c_m+s=c_{2m+s}$ is equivalent to the equality $c_{m_1+t_0}+c_{m_1}+t_0=c_{2m_1+t_0}$. 
This proves the equivalence in the statement when $k=1$. 

Now, assume that this equivalence is true  for the exponent $k-1$, and let $s=2^ks_0$. Then, by the case $k=1$, $c_{m+s}+c_m+s=c_{2m+s}$ if, and only if, 
$m=2m_1$ for some $m_1\geq 1$ such that $$c_{m_1+2^{k-1}s_0}+c_{m_1}+2^{k-1}s_0=c_{2m_1+2^{k-1}s_0},$$ and, by the induction hypothesis, this last equality holds if, and only if, 
$m_1=2^{k-1}m_0$ for some $m_0\geq 1$ such that $c_{m_0+s_0}+c_{m_0}+s_0=c_{2m_0+s_0}$. Combining both equivalences we obtain the equivalence in the statement, thus proving the inductive step. \qed
\end{proof}

%%%%%%%%%%%%%%%%%%%%%%%%%%%%%%%%%%%%%%%%%%%%%%%%%%%%%%%%%%%%%%%%%%%%%%%%%%%%%%%%%%%

\begin{lemma}\label{lem:eqodd2}
Let  $s=2^{j+1}-(2t+1)$ be an odd integer, with $j=\floor*{\log_2(s)}$ and  $0\leq t<2^{j-1}$. Then, for every $m\geq 1$, 
$(2m+s,2m)\in QB(4m+s)$ if, and only if, $m=2^{j}p$ for some $p\geq 1$.
\end{lemma}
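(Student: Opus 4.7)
The plan is to proceed by strong induction on $j\geq 1$, using a parity-based reduction that turns the $QB$-membership condition for $(2m+s,2m)\in QB(4m+s)$ into a pair of smaller $QB$-memberships, which can then be dispatched by Lemma \ref{lem:eqeven} and the inductive hypothesis.

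For the reduction, since $s$ is odd, both $4m+s$ and $2m+s$ are odd while $2m$ is even. Setting
\[
s_1=\tfrac{s+1}{2}=2^j-t, \qquad s_2=\tfrac{s-1}{2}=2^j-t-1,
\]
one has $s_1+s_2=s$, $s_1-s_2=1$ (so exactly one of $s_1,s_2$ is even), and $s_1,s_2\in[2^{j-1},2^j]$. The recurrences of Corollary \ref{colless_minimum} give $c_{4m+s}=c_{2m+s_1}+c_{2m+s_2}+1$, $c_{2m+s}=c_{m+s_1}+c_{m+s_2}+1$ and $c_{2m}=2c_m$. Substituting into $c_{4m+s}=c_{2m+s}+c_{2m}+s$ and simplifying yields
\[
c_{2m+s_1}+c_{2m+s_2}=c_{m+s_1}+c_{m+s_2}+2c_m+s_1+s_2.
\]
Combined with the two instances of Lemma \ref{lem:key} $c_{m+s_i}+c_m+s_i\geq c_{2m+s_i}$ for $i=1,2$, this equality is equivalent to having \emph{both} $(m+s_1,m)\in QB(2m+s_1)$ and $(m+s_2,m)\in QB(2m+s_2)$.

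The base case $j=1$ forces $t=0$, $s=3$, $s_1=2$, $s_2=1$: the $s_2$-subcondition is the recurrence $c_{2m+1}=c_{m+1}+c_m+1$ itself, so it always holds, while the $s_1$-subcondition $(m+2,m)\in QB(2m+2)$ holds, by Lemma \ref{lem:eqeven} with $k=1$ and $s_0=1$, iff $m=2p$ for some $p\geq 1$. For the inductive step with $j\geq 2$, exactly one of $s_1,s_2$ is even and the other is odd. Denote the odd one by $s_{\mathrm{od}}$; it satisfies $2^{j-1}\leq s_{\mathrm{od}}\leq 2^j-1$, hence $\lfloor\log_2 s_{\mathrm{od}}\rfloor=j-1$, and writing $s_{\mathrm{od}}=2^j-(2\tau+1)$ for a suitable $\tau$ shows $s_{\mathrm{od}}$ fits the hypotheses of the lemma with exponent $j-1$. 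The even-side subcondition, by Lemma \ref{lem:eqeven}, forces $m$ to be even; writing $m=2m'$, the odd-side subcondition becomes $(2m'+s_{\mathrm{od}},2m')\in QB(4m'+s_{\mathrm{od}})$, to which the inductive hypothesis applies to give $m'=2^{j-1}p$ for some $p\geq 1$, so that $m=2^j p$.

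It remains to verify that $m=2^j p$ does in fact satisfy both subconditions. When $t=0$ one has $s_{\mathrm{ev}}=s_1=2^j$, and Lemma \ref{lem:eqeven} with $k=j$, $s_0=1$ reduces the even-side condition to the tautology $(p+1,p)\in QB(2p+1)$. When $t\neq 0$, a short computation (write $t=2^a t_0$ with $t_0$ odd, so that $s_{\mathrm{ev}}=2^a(2^{j-a}-t_0)$) shows $v_2(s_{\mathrm{ev}})\leq j-2$, so $m=2^j p$ comfortably meets the divisibility demanded by Lemma \ref{lem:eqeven}, and the residual $QB$-condition on the odd part of $s_{\mathrm{ev}}$ is a smaller instance that is either again trivial ($s_0=1$) or can be dispatched by a further application of the present lemma with a strictly smaller exponent. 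The main obstacle throughout is this bookkeeping of $2$-adic valuations: one must confirm that the divisibility $2^j\mid m$ extracted from the odd-side induction already absorbs the divisibility demanded by Lemma \ref{lem:eqeven} on the even side, so that the two constraints coincide rather than stack to something stronger.
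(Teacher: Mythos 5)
Your proof is correct and follows essentially the same route as the paper's: the identical reduction of $(2m+s,2m)\in QB(4m+s)$ to the two subconditions at $s_1=2^j-t$ and $s_2=2^j-t-1$, Lemma \ref{lem:eqeven} applied to the even half to force parity of $m$, the inductive hypothesis applied to the odd half (the paper inducts on $s$ rather than on $j$, which is equivalent), and the same recursive verification that $m=2^jp$ satisfies the even-side condition. The only blemish is the claim $v_2(s_{\mathrm{ev}})\leq j-2$, which fails when $t$ is odd and $t+1=2^{j-1}$ (there $s_{\mathrm{ev}}=2^{j-1}$, so the valuation is $j-1$); since $2^{j-1}$ still divides $m=2^jp$ and the residual condition then has $s_0=1$, this does not affect the argument.
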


\begin{proof}
We prove the equivalence in the statement by induction on $s$. When $s=1=2^{1}-1$, so that $j=t=0$, the equivalence says that 
$$
c_{2m+1}+c_{2m}+1=c_{4m+1}
$$  
for every $m\geq 1$, which is true by Corollary \ref{colless_minimum}.

Assume now that the equivalence is true for every odd natural number $s'<s$ and for every $m$, and let us prove it for $s=2^{j+1}-(2t+1)$ with $0\leq t<2^{j-1}$. We have that
$$
\begin{array}{l}
c_{2m+2^{j+1}-2t-1}+c_{2m}+2^{j+1}-2t-1\\
\qquad =\big(c_{m+2^j-t}+c_m+2^j-t\big)+\big(c_{m+2^j-t-1}+c_m+2^j-t-1\big)+1\\
c_{4m+2^{j+1}-2t-1}=c_{2m+2^j-t}+c_{2m+2^j-t-1}+1
 \end{array}
 $$
 and since, by Eqn.~\eqref{cn_leq},
 $c_{m+2^j-t}+c_m+2^j-t\geq c_{2m+2^j-t}$ and $c_{m+2^j-t-1}+c_m+2^j-t-1\geq c_{2m+2^j-t-1}$,
 we have that $c_{2m+s}+c_{2m}+s=c_{4m+s}$ if, and only if, the following two identities are satisfied:
\begin{align}
& c_{m+2^j-t}+c_m+2^j-t= c_{2m+2^j-t} \label{eq:eqodd2-1}\\
& c_{m+2^j-t-1}+c_m+2^j-t-1= c_{2m+2^j-t-1}  \label{eq:eqodd2-2}
\end{align}
So, we must prove that Eqns.~(\ref{eq:eqodd2-1}) and (\ref{eq:eqodd2-2}) hold if, and only if, $m=2^{j}p$ for some $p\geq 1$.
We  distinguish two subcases, depending on the parity of $t$:
\begin{itemize}

\item If $t=2x$ for some $0\leq x<2^{j-2}$, then Eqn.~(\ref{eq:eqodd2-1}) and Lemma  \ref{lem:eqeven} imply that $m$ is even, say $m=2m_0$, and then  (\ref{eq:eqodd2-2}) says
\begin{equation}
c_{2m_0+2^j-2x-1}+c_{2m_0}+2^j-2x-1= c_{4m_0+2^j-2x-1},
\label{eq:eqodd2-3}
\end{equation}
which, by induction,  is equivalent to $m_0=2^{j-1}p$ for some $p\geq 1$, i.e. to $m=2^{j}p$ for some $p\geq 1$.
So, to complete the proof of the desired equivalence, it remains to prove that if $m=2^{j}p$, then
Eqn.~(\ref{eq:eqodd2-1}) holds. If $t=0$, this equality says
$$
c_{2^{j}p+2^j}+c_{2^{j}p}+2^j=c_{2^{j+1}p+2^j}
$$
and it is a direct consequence  of Lemma   \ref{lem:eqeven} and   Corollary \ref{colless_minimum}. So, assume that $t>0$ and write it as $t=2^i(2x_0+1)$ with $1\leq i<j-1$ and $x_0<2^{j-i-2}$. Then
$$
\begin{array}{l}
c_{m+2^j-t}+c_m+2^j-t  \\
\qquad = c_{2^{j}p+2^j-2^i(2x_0+1)} +c_{2^{j}p}+2^j-2^i(2x_0+1)\\
\qquad = 2^i\big(c_{2^{j-i}p+2^{j-i}-2x_0-1}+c_{2^{j-i}p}+2^{j-i}-2x_0-1\big)\\
\qquad = 2^ic_{2^{j-i+1}p+2^{j-i}-2x_0-1} \mbox{ (by the induction hypothesis)}\\
\qquad = c_{2^{j+1}p+2^{j}-2^i(2x_0+1)}=c_{2m+2^j-t}.
\end{array}
$$

\item If $t=2x+1$ for some $0\leq x< 2^{j-2}$, then Eqn.~\eqref{eq:eqodd2-2} and Lemma  \ref{lem:eqeven} imply that $m$ is even, say $m=2m_0$, and then it is Eqn.~\eqref{eq:eqodd2-1} which  becomes Eqn.~\eqref{eq:eqodd2-3} above, which, in turn, by induction
is equivalent to $m_0=2^{j-1}p$ for some $p\geq 1$, that is, to $m=2^{j}p$ for some $p\geq 1$. 
Thus, to complete the proof of the desired equivalence, it remains to prove that if $m=2^{j}p$, then
(\ref{eq:eqodd2-2}) holds. Now:
$$
\begin{array}{l}
c_{m+2^j-t-1}+c_m+2^j-t-1\\ \qquad  = c_{2^{j}p+2^j-2x-2}+c_{2^{j}p}+2^j-2x-2\\
\qquad =2\big(c_{2^{j-1}p+2^{j-1}-x-1}+c_{2^{j-1}p}+2^{j-1}-x-1\big)
\end{array}
$$
If $x$ is even, say $x=2x_0$, then, since $x_0<2^{j-3}$, the induction hypothesis implies that
$$
\begin{array}{l}
2\big(c_{2^{j-1}p+2^{j-1}-x-1}+c_{2^{j-1}p}+2^{j-1}-x-1\big)\\ 
\qquad =2 c_{2^{j}p+2^{j-1}-x-1} 
 = c_{2^{j+1}p+2^{j}-2x-2}= c_{2m+2^j-t-1}.
\end{array}
$$
And if $x$ is odd, write it as $x=2^i(2t_0+1)-1$ for some $1\leq i<j-1$ (and notice that $x<2^{j-2}$ implies $t_0<2^{j-i-3}$) and then
$$
\begin{array}{l}
2\big(c_{2^{j-1}p+2^{j-1}-x-1}+c_{2^{j-1}p}+2^{j-1}-x-1\big)\\
\qquad=
2\big(c_{2^{j-1}p+2^{j-1}-2^i(2t_0+1)}+c_{2^{j-1}p}+2^{j-1}-2^i(2t_0+1)\big)\\
\qquad =2\cdot 2^i\big(c_{2^{j-i-1}p+2^{j-i-1}-(2t_0+1)}+c_{2^{j-i-1}p}+2^{j-i-1}-(2t_0+1)\big)\\
\qquad = 2^{i+1} c_{2^{j-i}p+2^{j-i-1}-(2t_0+1)} \mbox{ (by the induction hypothesis)}\\
\qquad = c_{2^{j+1}p+2^{j}-2^{i+1}(2t_0+1)}=c_{2^{j+1}p+2^{j}-2x-2}\\
\qquad =c_{2m+2^j-t-1}
\end{array}
$$
This completes the proof of the desired equivalence when $t$ is odd.
\end{itemize}
So, the inductive step is true in all cases. \qed
\end{proof}

%%%%%%%%%%%%%%%%%%%%%%%%%%%%%%%%%%%%%%%%%%%%%%%%%%%%%%%%%%%%%%%%%%%%%%%%%%%%%%%%%%%
\begin{lemma}\label{lem:eqodd3}
Let  $s=2^{j+1}-(2t+1)$ be an odd integer, with $j=\floor*{\log_2(s)}$ and  $0\leq t<2^{j-1}$.  Then, for every $m\geq 0$, $(2m+1+s,2m+1)\in QB(4m+2+s)$ if, and only if, either $m=2^{j}p+t$ for some $p\geq 1$ or $s=1$ (i.e. $j=t=0$) and $m=0$.
\end{lemma}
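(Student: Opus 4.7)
The plan is to mimic the proof of Lemma \ref{lem:eqodd2} by using Corollary \ref{colless_minimum} to split the equation $c_{2m+1+s}+c_{2m+1}+s=c_{4m+2+s}$ into two sub-equalities (each expressing membership in a smaller $QB$), and then to analyse each using Lemmas \ref{lem:eqeven} and \ref{lem:eqodd2} together with an induction on $s$ among odd positive integers. Writing $s=2s'+1$ with $s'=2^{j}-t-1$, we have $c_{2m+1+s}=2c_{m+1+s'}$ (since $2m+1+s$ is even), $c_{2m+1}=c_{m+1}+c_m+1$, and $c_{4m+2+s}=c_{2m+2+s'}+c_{2m+1+s'}+1$. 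After cancelling and regrouping, the stated equation becomes
\begin{equation*}
\bigl(c_{m+1+s'}+c_{m+1}+s'\bigr)+\bigl(c_{m+1+s'}+c_{m}+(s'+1)\bigr)=c_{2m+2+s'}+c_{2m+1+s'}.
\end{equation*}
Because $(m{+}1{+}s')+(m{+}1)=2m{+}2{+}s'$ and $(m{+}1{+}s')+m=2m{+}1{+}s'$, inequality \eqref{cn_leq} bounds each bracket on the left by the corresponding summand on the right, so the equation holds if, and only if, both
\begin{itemize}
\item[\textup{(I)}] $(m{+}1{+}s',\,m{+}1)\in QB(2m{+}2{+}s')$, and
\item[\textup{(II)}] $(m{+}1{+}s',\,m)\in QB(2m{+}1{+}s')$
\end{itemize}
are satisfied.

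The base case $s=1$ (so $j=t=s'=0$) is trivial: conditions (I) and (II) reduce to $2c_{m+1}=c_{2m+2}$ and $c_{m+1}+c_m+1=c_{2m+1}$, both holding for every $m\geq 0$ by Corollary \ref{colless_minimum}; this accounts for the exceptional clause of the statement. For $s\geq 3$ (so $j\geq 1$ and $s'\geq 1$) I would argue by induction on $s$ over odd integers. Condition (II), having the form $(m+S,m)\in QB(2m+S)$ with $S=s'+1=2^j-t$, I would dissect by the parity of $t$: if $t$ is even then $S$ is even, and Lemma \ref{lem:eqeven} peels off its power-of-$2$ factor, reducing (II) to the same type of condition with odd parameter $u$ and new base $m_0$; this remaining odd-parameter condition is settled by Lemma \ref{lem:eqodd2} (when $m_0$ is even) or by the induction hypothesis (when $m_0$ is odd, noting $u<s$). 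If $t$ is odd then $S$ is already odd and those lemmas apply directly to $m$. Condition (I), of the form $(M_1+s',M_1)\in QB(2M_1+s')$ with $M_1=m+1$ and $s'$ odd exactly when $t$ is even, admits the same dichotomy; again since $s'<s$, the induction hypothesis is available whenever needed.

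Intersecting the residue constraints on $m$ coming out of (I) with those coming out of (II), and checking that their only common solutions are $m=2^{j}p+t$ with $p\geq 1$ (with the single extra solution $m=0$ when $s=1$), will yield the ``only if'' direction. The ``if'' direction is a direct verification: substituting $m=2^{j}p+t$ into (I) and (II) and simplifying with Corollary \ref{colless_minimum} mirrors the concluding computation in the proof of Lemma \ref{lem:eqodd2}, transplanted to the offset $m\mapsto m+1$.

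The main obstacle will be the multi-layer bookkeeping of parity cases: after Lemma \ref{lem:eqeven} peels off a power of $2$, the remaining odd parameter $u$ may itself be $1$ (triggering the Corollary \ref{colless_minimum} base case) or may have $\lfloor\log_2 u\rfloor<j$ (triggering the induction hypothesis with strictly smaller parameter). One must also check carefully that boundary constraints such as $m_0\geq 1$ in Lemma \ref{lem:eqeven} correctly rule out $m=0$ whenever $s\geq 3$, so that the only $m=0$ solution really is the one flagged in the statement for $s=1$.
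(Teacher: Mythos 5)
Your decomposition is exactly the paper's: after rewriting via Corollary \ref{colless_minimum}, your conditions (I) and (II) coincide (under the substitution $s'=2^j-t-1$) with the paper's Eqns.~(\ref{eq:eqodd3-2}) and (\ref{eq:eqodd3-1}), namely $(m+2^j-t,m+1)\in QB(2m+2^j-t+1)$ and $(m+2^j-t,m)\in QB(2m+2^j-t)$, and the subsequent machinery (induction on odd $s$, parity split on $t$, Lemmas \ref{lem:eqeven} and \ref{lem:eqodd2}) is the same. The only organizational difference is that the paper uses one of the two conditions merely to pin down the parity of $m$ via Lemma \ref{lem:eqeven} and then reads the full congruence $m=2^jp+t$ off the other condition as a direct instance of the induction hypothesis, whereas you propose to solve both conditions independently and intersect; the paper's route avoids the heavier bookkeeping of the intersection.

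There is one genuine boundary flaw. The splitting $c_{2m+1}=c_{m+1}+c_m+1$, and hence the equivalence with (I) and (II), is only valid for $m\geq 1$: at $m=0$ condition (II) becomes the ill-formed pair $(1+s',0)$, and in your base case $s=1$ the identity ``$c_{m+1}+c_m+1=c_{2m+1}$ for every $m\geq 0$'' is false at $m=0$ (it would involve $c_0$); the claim $(2,1)\in QB(3)$ must instead be checked directly. More importantly, for $s\geq 3$ the exclusion of $m=0$ does not follow from the $m_0\geq 1$ clause of Lemma \ref{lem:eqeven} as you suggest, because the reduction to (I) and (II) is unavailable there in the first place; the paper disposes of $m=0$ \emph{before} decomposing, by invoking Remark \ref{rem:postkey}.(a), which states that $(s+1,1)\in QB(s+2)$ if, and only if, $s=1$. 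You need that separate argument (or an equivalent one) to make the case analysis start legitimately at $m\geq 1$.
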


\begin{proof}
We  also prove the equivalence in this statement by induction on $s$. When $s=1=2^{1}-1$, the equivalence says that
$c_{2m+2}+c_{2m+1}+1=c_{4m+3}$ for every $m\geq 0$, which is true by Corollary \ref{colless_minimum}.

Assume now that the equivalence is true for every odd natural number $1\leq s'<s$ and for every $m\geq 0$, and let us prove it for $s=2^{j+1}-(2t+1)\geq 3$ with $0\leq t<2^{j-1}$. In this case,  $m$ cannot be 0, because,  by Remark \ref{rem:postkey}.(a), $(s+1,1)\in QB(s+2)$ if, and only if, $s=1$. So, we can consider only the case $m\geq 1$. Then, we have that
$$
\begin{array}{l}
c_{2m+1+2^{j+1}-2t-1}+c_{2m+1}+2^{j+1}-2t-1\\
\qquad =\big(c_{m+2^j-t}+c_m+2^j-t\big)+\big(c_{m+2^j-t}+c_{m+1}+2^j-t-1\big)+1\\
c_{4m+2+2^{j+1}-2t-1}=c_{2m+2^j-t}+c_{2m+2^j-t+1}+1
 \end{array}
 $$
 and since, by Eqn.~\eqref{cn_leq},
 $c_{m+2^j-t}+c_m+2^j-t\geq c_{2m+2^j-t}$ and 
$c_{m+2^j-t}+c_{m+1}+2^j-t-1\geq c_{2m+2^j-t+1}$,
 we have that $c_{2m+1+s}+c_{2m+1}+s=c_{4m+2+s}$ if, and only if,
\begin{align}
& c_{m+2^j-t}+c_m+2^j-t= c_{2m+2^j-t} \label{eq:eqodd3-1}\\
& c_{m+2^j-t}+c_{m+1}+2^j-t-1=c_{2m+2^j-t+1} \label{eq:eqodd3-2}
\end{align}
So, we must prove that Eqns.~\eqref{eq:eqodd3-1} and (\ref{eq:eqodd3-2}) hold for $m\geq 1$ if, and only if, $m=2^{j}p+t$ for some $p\geq 1$. We distinguish again two subcases, depending on the parity of $t$:

\begin{itemize}

\item If $t=2x$ for some $0\leq x<2^{j-2}$, then Eqn.~\eqref{eq:eqodd3-1} and Lemma  \ref{lem:eqeven} imply that $m$ is even, say $m=2m_0$ with $m_0\geq 1$, and then Eqn.~\eqref{eq:eqodd3-2} can be written 
$$
c_{2m_0+1+2^j-2x-1}+c_{2m_0+1}+2^j-2x-1= c_{4m_0+2+2^j-2x-1}
$$ 
which, by induction,  is equivalent to $m_0=2^{j-1}p+x$ for some $p\geq 1$, that is, to $m=2m_0=2^{j}p+t$ for some $p\geq 1$. Hence, to complete the proof of the desired equivalence, it remains to check that if $m=2^{j}p+t$, then Eqn.~\eqref{eq:eqodd3-1} holds. Now, if $x=0$, so that $m=2^{j}p$,  Corollary \ref{colless_minimum} and Lemma   \ref{lem:eqeven} clearly imply Eqn.~\eqref{eq:eqodd3-1} (cf. the case when $t$ is even in the proof of Lemma \ref{lem:eqodd2}). So, assume that $x>0$ and write it as $x=2^i(2y_0+1)$ with $0\leq i< j-2$ and $y_0<2^{j-i-3}$. Then
$$
\begin{array}{l}
c_{m+2^j-t}+c_m+2^j-t\\
\quad =c_{2^{j}p+2x+2^j-2x}+c_{2^{j}p+2x}+2^j-2x\\
\quad= c_{2^{j}p+2^{i+1}(2y_0+1)+2^{j}-2^{i+1}(2y_0+1)}\\
\qquad\qquad +c_{2^{j}p+2^{i+1}(2y_0+1)}+2^{j}-2^{i+1}(2y_0+1)\\
\quad =2^{i+1}\big(c_{2^{j-i-1}p+2y_0+1+2^{j-i-1} -(2y_0+1)}\\
\qquad\qquad +c_{2^{j-i-1}p+2y_0+1}+2^{j-i-1}-(2y_0+1)\big)\\
\quad =2^{i+1} c_{2^{j-i}p+4y_0+2+2^{j-i-1}-(2y_0+1)}\quad
\mbox{(by the induction hypothesis)}\\
\quad =c_{2^{j+1}p+2^{j}+2^{i+1}(2y_0+1)}=c_{2^{j+1}p+2^{j}+2x}\\
\quad =c_{2m+2^j-t}
\end{array}
$$
as we wanted to prove.

\item If $t=2x+1$ for some $0\leq x<2^{j-2}$,  Eqn.~\eqref{eq:eqodd3-2} and Lemma  \ref{lem:eqeven} imply that $m+1$ is even, and then $m$ is odd, say $m=2m_0+1$ for some $m_0\geq 0$, and  Eqn.~\eqref{eq:eqodd3-1} can be written 
\begin{equation}
c_{2m_0+1+2^j-2x-1}+c_{2m_0+1}+2^j-2x-1= c_{4m_0+2+2^j-2x-1}.
\label{eq:encaraunaltra}
\end{equation}
Now, if $m_0=0$, Remark \ref{rem:postkey}.(a) implies that this equality holds if, and only if, $2^j-2x-1=1$ which, under the condition $0\leq x<2^{j-2}$, only happens when $j=1$ and $x=0$, but then $t=1=2^{j-1}$ against the assumption that  $t<2^{j-1}$. Therefore $m_0$ must be at least 1.

Then, by induction, Identity (\ref{eq:encaraunaltra})
 is equivalent to $m_0=2^{j-1}p+x$ for some $p\geq 1$, that is, to $m=2m_0+1=2^{j}p+2x+1=2^{j}p+t$ for some $p\geq 1$.  So, to complete the proof of the desired equivalence, it remains to check that if $m=2^{j}p+t$, then Eqn.~\eqref{eq:eqodd3-2} holds.  Now, in the current situation:
$$
\begin{array}{l}
c_{m+2^j-t}+c_{m+1}+2^j-t-1\\
\qquad =c_{2^{j}p+2x+1+2^j-2x-1}+c_{2^{j}p+2x+2}+2^j-2x-2\\
\qquad =c_{2^{j}p+2^j}+c_{2^{j}p+2x+2}+2^j-2x-2\\
\qquad = 2\big(c_{2^{j-1}p +2^{j-1}}+c_{2^{j-1}p+x+1}+2^{j-1}-x-1\big)\\
\qquad = 2\big(c_{(2^{j-1}p+x+1)+(2^{j-1}-x-1)}+c_{2^{j-1}p+x+1}+2^{j-1}-x-1\big)=(**)
\end{array}
$$

If $x$ is even, say $x=2x_0$ with $0\leq x_0<2^{j-3}$, then 
\begin{align*}
(**) & =2\big(c_{(2^{j-1}p+2x_0+1)+(2^{j-1}-2x_0-1)}+c_{2^{j-1}p+2x_0+1}+2^{j-1}-2x_0-1\big)\\ 
& = 2c_{2^{j}p+2(2x_0+1)+2^{j-1}-(2x_0+1)} \quad \mbox{ (by the induction hypothesis)}\\
&=c_{2^{j+1}p+2^{j}+4x_0+2} =c_{2m+2^j-t+1}.
\end{align*}

And if $x$ is odd, write it as $x=2^i(2t_0+1)-1$ with $1\leq i<j-1$ and $t_0<2^{j-i-3}$, and then
\begin{align*}
(**) & =2\big(c_{2^{j-1}p+2^i(2t_0+1)+2^{j-1}-2^i(2t_0+1)}\\ & \qquad +c_{2^{j-1}p+2^i(2t_0+1)}+2^{j-1}-2^i(2t_0+1)\big)\\
 & = 2^{i+1}\big(c_{2^{j-i-1}p+2t_0+1+2^{j-i-1}-(2t_0+1)}\\ & \qquad+c_{2^{j-i-1}p+2t_0+1}+2^{j-i-1}-(2t_0+1)\big)\\
& = 2^{i+1}c_{2^{j-i}p+4t_0+2+2^{j-i-1}-(2t_0+1)} \quad \mbox{ (by the induction hypothesis)}\\
& = c_{2^{j+1}p+2^{i+1}(2t_0+1)+2^{j}}=c_{2^{j+1}p+2x+2+2^{j}}\\ &=c_{2m+2^j-t+1}
\end{align*}
This completes the proof of the desired equivalence when $t$ is odd.  \qed
\end{itemize}
\end{proof}

We are now in a position to proceed with the proof of Proposition \ref{prop:eqC}. Assertion (1) in it is a direct consequence of Corollary \ref{colless_minimum}. So, assume $n_a>n_b$ and set $s=n_a-n_b$, so that $n_a=n_b+s$. Then:
\begin{enumerate}[(a)]
\item If $s=1$, then, by Lemma \ref{lem:eqodd3}, $c_{n_a}+c_{n_b}+n_a-n_b=c_{n_a+n_b}$ for every $n_b\geq 1$.

\item If $s>1$ is odd, write it as $s=2^{j+1}-(2t+1)$, with $j=\floor*{\log_2(s)}\geq 1$ and  $0\leq t<2^{j-1}$. Then, by Lemmas \ref{lem:eqodd2} and \ref{lem:eqodd3}, $c_{n_a}+c_{n_b}+n_a-n_b=c_{n_a+n_b}$ if, and only if, either
$n_b=2^{j+1}p$ or $n_b=2^{j+1}p+2t+1$, for some $p\geq 1$.

\item If $s\geq 2$ is even, write it as $s=2^ks_0$, with $k\geq 1$ the largest exponent of a power of 2 that divides $s$ and $s_0$  an odd integer, and write the latter as $s_0=2^{j+1}-(2t+1)$ with $j=\floor*{\log_2(s_0)}\geq 0$ and  $0\leq t<2^{j-1}$. Then, by Lemma \ref{lem:eqeven}, $c_{n_a}+c_{n_b}+n_a-n_b=c_{n_a+n_b}$ if, and only if,
$n_b=2^km$, for some $m\geq 1$ such that $c_{m+s_0}+c_m+s_0=c_{2m+s_0},$ and then:
\begin{itemize}
\item If $s_0=1$ (equivalently, if $j=0$), $c_{m+s_0}+c_m+s_0=c_{2m+s_0}$ for every $m\geq 1$ and therefore, in this case, $c_{n_a}+c_{n_b}+n_a-n_b=c_{n_a+n_b}$ for every $n_b=2^km$ with $m\geq 1$.
\item If $s_0>1$  (equivalently, if $j>0$), Lemmas \ref{lem:eqodd2} and \ref{lem:eqodd3} imply that $c_{m+s_0}+c_m+s_0=c_{2m+s_0}$ if, and only if, $m=2^{j+1}p$ or $m=2^{j+1}p+2t+1$, for some $p\geq 1$. Therefore, in this case, $c_{n_a}+c_{n_b}+n_a-n_b=c_{n_a+n_b}$ if, and only if, 
$n_b=2^{k+j+1}p$ or $n_b=2^k(2^{j+1}p+2t+1)$, for some $p\geq 1$.
\end{itemize}
\end{enumerate}

Combining the three cases, and taking $k=0$ in the odd $s$ case, we conclude that $$c_{n_a}+c_{n_b}+n_a-n_b=c_{n_a+n_b}$$ if, and only if, writing $n_a-n_b=2^k(2^{j+1}-(2t+1))$ (for some $k\geq 0$, $j\geq 0$, and $0\leq t<2^{j-1}$),
\begin{itemize} 
\item If $j=0$, then $n_b=2^kp$ for some $p\geq 1$, in which case $n_a=2^k(p+1)$ and $n=2^{k}(2p+1)$.

\item If $j>0$, then there exists some $p\geq 1$ for which one of the following conditions holds:
\begin{itemize}
\item $n_b=2^{k+j+1}p$,  in which case $n_a=2^k(2^{j+1}(p+1)-(2t+1))$ and $n=2^k(2^{j+1}(2p+1)-(2t+1))$.

\item  $n_b=2^k(2^{j+1}p+2t+1)$, $n_a=2^{k+j+1}(p+1)$ and $n=2^k(2^{j+1}(2p+1)+ 2t+1)$.
\end{itemize}
\end{itemize}
This is equivalent to the expressions for $n_a$ and $n_b$ in option (2) in the statement (replacing $j+1$ with $j>0$ by $l\geq 2$). 

This completes the proof of Proposition \ref{prop:eqC}.

\subsection*{A.2 Proof of Proposition \ref{GFB_Decomposition}}

This appendix is devoted to establish the following result.

\setcounter{proposition}{4}
\begin{proposition} 
Let $T_n^{\mathit{gfb}}=(T_a,T_b)$ be a GFB tree with $n\geq 2$, $T_a\in \TT_{n_a}$, $T_b\in \TT_{n_b}$ and $n_a\geq n_b$. Let $n=2^m+p$ with $m=\lfloor \log_2(n)\rfloor$ and $0\leq p<2^m$. Then, we have:
	\begin{enumerate}[\rm (i)]
	\item If $0\leq p\leq 2^{m-1}$, then $n_a = 2^{m-1}+p$, $n_b = 2^{m-1}$ and $T_b$ is fully symmetric. 
	\item If $2^{m-1}\leq p<2^m$, $n_a = 2^{m}$, $n_b=p$ and $T_a$ is fully symmetric.
	\end{enumerate}
\end{proposition}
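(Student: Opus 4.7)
The plan is to proceed by strong induction on $n$, with bases $n\in\{2,3\}$ verified directly from Algorithm \ref{alg_gfb}. By Lemma \ref{gfb_subtrees}, the two maximal pending subtrees of $T^{\mathit{gfb}}_n$ are themselves GFB, so everything reduces to identifying the sizes $(n_a,n_b)$ of the last two trees merged; the claim that one of them is fully symmetric then comes from combining the induction hypothesis with the auxiliary observation that $T^{\mathit{gfb}}_{2^k}=T^{\mathit{fs}}_k$, itself an easy induction exploiting the permutation-invariance of the algorithm on a treeset of equal-size trees.

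For even $n=2n'$, the first $n'$ iterations consume all singletons (at each such step the unique minimum size is $1$ and at least two trees attain it), leaving a treeset of $n'$ identical cherries. The remaining iterations then see only interchangeable equal-size trees, so they reproduce the execution of Algorithm \ref{alg_gfb} on $n'$ singletons; hence $T^{\mathit{gfb}}_n$ is obtained from $T^{\mathit{gfb}}_{n'}$ by replacing each leaf with a cherry. Doubling the sizes produced by the induction hypothesis for $n'$ and running through the four even sub-cases ($p=0$; $0<p<2^{m-1}$ with $p$ even; $p=2^{m-1}$; $p>2^{m-1}$ with $p$ even) yields exactly the $(n_a,n_b)$ of the statement, and the fully symmetric child of $T^{\mathit{gfb}}_{n'}$ doubles to a fully symmetric child of $T^{\mathit{gfb}}_n$ one level taller.

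The odd case is harder, because after the $(n-1)/2$ initial pairings a lone singleton is forced to merge with a cherry, breaking the permutation symmetry that made the doubling trick work. My plan here, as anticipated in the Remark following the statement, is to establish the auxiliary lemma that for every odd $n\geq 3$ the trees $T^{\mathit{gfb}}_{n-1}$, $T^{\mathit{gfb}}_n$ and $T^{\mathit{gfb}}_{n+1}$ share a maximal pending subtree that is fully symmetric, of size $2^{m-1}$ when $p\leq 2^{m-1}$ and of size $2^m$ when $p>2^{m-1}$. Granted this lemma, the inductively known even-case decomposition of $T^{\mathit{gfb}}_{n-1}$ exhibits the common fully symmetric subtree explicitly as one of its children, so that same subtree must appear as a maximal pending subtree of $T^{\mathit{gfb}}_n$, and the other pending subtree has the complementary $n-2^{m-1}$ leaves in case (i) or $n-2^m=p$ leaves in case (ii).

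The main obstacle is proving the auxiliary lemma. I would do this by running the three algorithms for $n-1$, $n$ and $n+1$ in parallel and showing that, after their respective initial phases (pairing all singletons, plus the extra $1{+}2$ merger for $n$), the three treesets contain a common batch of $2^{m-1}$ (respectively $2^m$) original singletons that is processed identically in all three runs and coalesces into the same fully symmetric tree, which ultimately surfaces as a maximal pending subtree of each output. The delicate points are the careful bookkeeping across the three parallel runs and the separate handling of the edge cases $n-1=2^m$ or $n+1=2^m$, where $T^{\mathit{gfb}}_{n\pm 1}$ is itself fully symmetric and so both of its pending subtrees coincide.
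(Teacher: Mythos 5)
Your overall architecture coincides with the paper's: induction on $n$, the even case handled by collapsing the $n/2$ cherries and invoking the hypothesis for $n/2$, and the odd case handled by relating $T^{\mathit{gfb}}_n$ to $T^{\mathit{gfb}}_{n-1}$ and $T^{\mathit{gfb}}_{n+1}$ through a shared maximal pending subtree. The gap is in the odd case: the auxiliary lemma you rely on is stated in a form that is substantially stronger than what your sketched proof can deliver, and the hard part is precisely what you defer. Asserting that a distinguished batch of $2^{m-1}$ (or $2^m$) original singletons is ``processed identically in all three runs,'' coalesces into a \emph{fully symmetric} tree, and ``ultimately surfaces as a maximal pending subtree'' of each output already presupposes knowledge of the sizes of the trees merged at every stage --- in particular of the final merge --- which is exactly the content of the proposition you are trying to prove. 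As written, the parallel-bookkeeping plan is either circular or would force you to re-derive the proposition's conclusion inside the lemma's proof; you acknowledge this is ``the main obstacle'' but do not resolve it.

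The paper's way around this is to prove only the weaker statement that $T^{\mathit{gfb}}_n$ shares \emph{some} maximal pending subtree with $T^{\mathit{gfb}}_{n-1}$ and \emph{some} maximal pending subtree with $T^{\mathit{gfb}}_{n+1}$ (with no claim about its size or symmetry). This is done by tracking the unique odd-sized tree in the treeset for $n$ --- it behaves simultaneously like an $(s-1)$-leaf tree clustered as late as possible and like an $(s+1)$-leaf tree clustered as early as possible --- so that $T^{\mathit{gfb}}_n$ is obtained from $T^{\mathit{gfb}}_{n-1}$ by growing a cherry at a suitable leaf and from $T^{\mathit{gfb}}_{n+1}$ by contracting a suitable cherry. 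The identification of \emph{which} subtree is shared, and that it is fully symmetric, then comes for free from the induction hypothesis: the decompositions of $T^{\mathit{gfb}}_{n-1}$ (directly, since $n-1<n$) and of $T^{\mathit{gfb}}_{n+1}$ (via your own even-case doubling, since $(n+1)/2<n$) are already known, and a short comparison of leaf counts shows the only consistent possibility is that $T^{\mathit{gfb}}_n$ shares with both neighbours their common fully symmetric pending subtree. If you restate your lemma in this weaker form and add that counting step, your argument closes; otherwise you must actually carry out the bookkeeping you describe, which I do not see how to do without the proposition itself.
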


The proof of this proposition  requires of the following lemma. The   idea guiding its proof  is illustrated in Figure \ref{Fig_treeset}.

\begin{lemma} \label{subtree_in_common}
Let $n\geq 3$ be an odd natural number. Then, $T_n^{\mathit{gfb}}$ shares a maximal pending subtree with $T_{n-1}^{\mathit{gfb}}$ and a maximal pending subtree with $T_{n+1}^{\mathit{gfb}}$.
\end{lemma}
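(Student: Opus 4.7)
The plan is to reduce the lemma to a combinatorial claim about the multiset of leaf-counts of the trees present in the $\treeset$ during the execution of the GFB algorithm, and then verify this claim by tracing the algorithm in parallel on the three inputs $n-1$, $n$, and $n+1$.

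First, I would invoke Lemma \ref{gfb_subtrees} iteratively to conclude that every rooted subtree of a GFB tree is again a GFB tree. Therefore, any tree appearing in the $\treeset$ at any iteration of the GFB algorithm on $n$ leaves is itself a GFB tree, and by Lemma \ref{cor:GFBunique} its shape is uniquely determined by its number of leaves. Consequently, the assertion that $T_n^{\mathit{gfb}}$ and $T_{n-1}^{\mathit{gfb}}$ share a maximal pending subtree reduces to the assertion that the two leaf-counts of the trees present in the $\treeset$ right before the final iteration of the algorithm on $n$ leaves share at least one value with the two corresponding leaf-counts for the algorithm on $n-1$ leaves; and analogously for $n$ and $n+1$.

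Second, to verify this reduced assertion, I would analyze the GFB algorithm phase by phase. In phase $j\geq 1$, pairs of trees of size $2^{j-1}$ are merged into trees of size $2^j$, with possibly one leftover tree of size $2^{j-1}$ combining with an accumulated non-power-of-$2$ tree carried forward from earlier phases. Writing $n = 2^m + p$ with $m = \lfloor \log_2 n \rfloor$ and $0 \leq p < 2^m$ (and $p$ odd, since $n$ is), I would split the analysis into the cases $p < 2^{m-1}$ and $p > 2^{m-1}$ (noting that $p = 2^{m-1}$ cannot occur when $m\geq 1$ and $p$ is odd). In the first case I expect the shared fully symmetric maximal pending subtree to be $T^{\mathit{fs}}_{m-1}$; in the second case, $T^{\mathit{fs}}_m$. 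I would verify these guesses by computing the final two leaf-counts directly from the phase-by-phase analysis for each of $n-1$, $n$, $n+1$ and checking that the proposed power-of-$2$ size is indeed produced and preserved.

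The main obstacle will be the bookkeeping around the non-power-of-$2$ leftover tree, which is created the first time an odd number of trees of some size $2^{j-1}$ is paired, and which then either absorbs the currently smallest fully symmetric tree or waits for a new leftover to appear. Tracking this leftover through the phases for each of the three inputs $n-1, n, n+1$, and verifying that the chosen fully symmetric size actually survives to the penultimate iteration of each of the three runs, is routine but requires careful case analysis based on the binary expansion of $n$; the smallest case $n=3$, where the shared subtree with $T^{\mathit{gfb}}_2$ is a singleton and with $T^{\mathit{gfb}}_4$ is a cherry, would be handled separately as a base case.
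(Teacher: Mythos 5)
Your opening reduction is sound and is a genuinely different way in from the paper's: since every tree ever present in $\treeset$ ends up as a rooted subtree of the output, Lemma~\ref{gfb_subtrees} (iterated) together with the uniqueness of GFB shapes (Lemma~\ref{cor:GFBunique}) does let you replace ``shares a maximal pending subtree'' by ``the two root leaf-counts overlap''. But notice what this buys you into: the statement you now have to verify is precisely the content of Proposition~\ref{GFB_Decomposition} (applied to $n-1$, $n$ and $n+1$), i.e.\ the result that this lemma exists to help prove. The paper deliberately avoids computing those leaf-counts here; its proof is a coupling argument. For odd $n$, after iteration $(n+1)/2$ the $\treeset$ contains exactly one tree with an odd number of leaves, and that tree is simultaneously treated by the greedy rule as a tree with one leaf fewer that is clustered as late as possible, and as a tree with one leaf more that is clustered as early as possible. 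Tracking a distinguished leaf $u$ through the run on $n-1$ (always placing it in the last tree of its size class to be clustered) shows that $T_n^{\mathit{gfb}}$ is obtained from $T_{n-1}^{\mathit{gfb}}$ by replacing $u$ with a cherry, so the maximal pending subtree not containing $u$ is common to both; dually for $n+1$. No knowledge of the actual split sizes is needed, and Proposition~\ref{GFB_Decomposition} is then derived from the lemma by induction.

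The gap in your route is that the entire burden now sits in the ``routine but careful case analysis'' you defer, and the one structural claim you do make about it is not accurate as stated. The odd leftover tree does not simply absorb a leftover of size $2^{j-1}$ in phase $j$: its merging pattern is governed by the positions of the zeros in the binary expansion of $n$, and it may instead merge with a completed tree of size $2^{j}$ (for $n=11$, the $3$-leaf tree waits until all four remaining cherries have paired into $4$-leaf trees and then merges with one of them), or sit out several phases. Making this precise amounts to establishing an invariant describing the full multiset of sizes in $\treeset$ after each phase as a function of the binary expansion of $n$ --- a workable but substantially heavier computation than either the paper's coupling argument or its subsequent inductive proof of Proposition~\ref{GFB_Decomposition}. (Also, your parenthetical that $p=2^{m-1}$ cannot occur for odd $p$ fails for $m=1$, i.e.\ $n=3$, though you do set that case aside.) As written, the proposal identifies a viable target and correctly guesses the shared subtree ($T^{\mathit{fs}}_{m-1}$ or $T^{\mathit{fs}}_{m}$), but it does not yet contain the argument that proves it.
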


\begin{proof}
Since $n\geq 3$ is odd, the first $(n-1)/{2}$ iterations of the loop in Algorithm \ref{alg_gfb} result in ${(n-1)}/{2}$ cherries and a single node, which in the $(n+1)/{2}$-th iteration is clustered with a cherry to form a tree with 3 leaves. From this moment on, as the algorithm continues clustering trees, in each $i$-th iteration there will be one, and only one, tree $T_i^{odd}$ with an odd number $s(i)$ of leaves. Note now that, on the one hand, this unique tree with $s(i)$ leaves is treated by the algorithm like a tree with $s(i)-1$ leaves, except that it is clustered as late as possible, i.e. when all other trees in $\treeset$ with $s(i)-1$ leaves (if there are any) have already been clustered. On the other hand, however, this tree is also treated by the algorithm like a tree with $s(i)+1$ leaves,  except that it is clustered as early as possible, i.e. before any other elements in $\treeset$ with $s(i)+1$ leaves (if there are any) get clustered. So, to summarize, after the first $i\geq (n+1)/{2}$ iterations of the loop, $\treeset$ contains a unique tree $T_i^{odd}$ with an odd number $s(i)$ of leaves, which at the same time
\begin{enumerate}
\item [(i)] is treated like a tree with $s(i)-1$ leaves, but is clustered as late as possible;
\item [(ii)] is treated like a tree with $s(i)+1$ leaves, but is clustered as soon as possible.
\end{enumerate}

Now, first consider Algorithm \ref{gfb} for $n-1$, which is an even number.
After the first $(n-3)/{2}$ iterations of the loop, $\treeset$ contains $(n-3)/{2}$ trees with 2 leaves and two trees with 1 leaf, which are clustered last to form the last cherry. We keep tracking one leaf $u$ of this cherry throughout the algorithm. The algorithm at this stage contains only cherries, which are all isomorphic, so without loss of generality, we may assume that $u$ is contained in the one that gets clustered with another tree last, i.e. after all other cherries have been clustered. We continue like this, always assuming without loss of generality (when there is more than one tree in $\treeset$ of the same size as the tree that contains $u$) that the tree containing $u$ is in the last one to be clustered. By (i), this means that if we replace $u$ in $T_{n-1}^{\mathit{gfb}}$ by a cherry, we derive $T_{n}^{\mathit{gfb}}$. This is due to the fact that in the analogous step where $\treeset$ for $n-1$ only contains cherries, $\treeset$ for $n$ will contain only cherries and a tree containing three leaves. This triplet will subsequently act like a cherry, but like the one that happens to be clustered last. So,  we identify the cherry {in this} triplet to {$u$} to see the correspondence between $T_{n-1}^{\mathit{gfb}}$ and $T_{n}^{\mathit{gfb}}$. Note that this directly implies that $T_{n-1}^{\mathit{gfb}}$ and $T_{n}^{\mathit{gfb}}$ share a common maximal pending subtree ---namely the one that does {\em not} contain $u$.

Note that by (ii), an analogous procedure for $n+1$ leads to $T_{n+1}^{\mathit{gfb}}$ and $T_{n}^{\mathit{gfb}}$ sharing a common maximal pending subtree. In this case, we track a cherry in $T_{n+1}^{\mathit{gfb}}$, namely the one that happens to be clustered first, and replace it by a single leaf to see the correspondence between $T_{n+1}^{\mathit{gfb}}$ and $T_{n}^{\mathit{gfb}}$. This completes the proof. \qed
\end{proof}

\begin{figure}
	\centering
	\includegraphics[width=\linewidth]{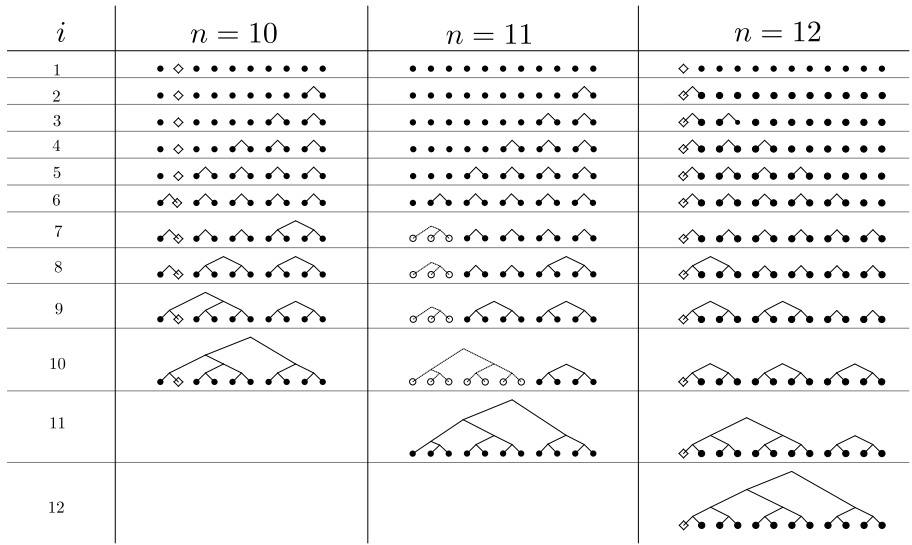}
	\caption{Content of $\treeset$ before the $i^{\rm th}$ iteration of the loop in Algorithm \ref{alg_gfb} for $n=10, n=11$ and $n=12$. In case of $n=11$, the tree {with white leaves} for $i=7, \ldots, 10$, depicts the \emph{unique} tree in $\treeset$ with an odd number of leaves. For $n=10$, the leaf depicted as a diamond represents leaf $u$ used in the proof of Lemma \ref{subtree_in_common}. Note that the tree containing this leaf is always clustered as late as possible. In case of $n=12$, the leaf depicted as a diamond again represents leaf $u$ used in the proof of Lemma \ref{subtree_in_common}. In this case, the tree containing this leaf is always clustered as soon as possible. The last tree depicted in each column represents the GFB tree. Note that $T_n^{\mathit{gfb}}$ can be obtained from $T_{n-1}^{\mathit{gfb}}$ by replacing the leaf depicted as a diamond by a cherry. Moreover, $T_n^{\mathit{gfb}}$ can be obtained from $T_{n+1}^{\mathit{gfb}}$ by replacing the cherry containing the diamond leaf by a single leaf. }
	\label{Fig_treeset}
\end{figure}

We can proceed now to prove Proposition \ref{GFB_Decomposition}. Let $n=2^m+p$ with $m=\lfloor \log_2(n)\rfloor$ and $0\leq p<2^m$.  We shall prove by induction on $n$ that if $T_n^{\mathit{gfb}}=(T_a,T_b)$ is a GFB tree with $n\geq 2$, $T_a\in \TT_{n_a}$, $T_b\in \TT_{n_b}$ and $n_a\geq n_b$ then:
\begin{enumerate}[(i)]
\item If $0\leq p\leq 2^{m-1}$, $n_a = 2^{m-1}+p$ and $n_b = 2^{m-1}$ and then $T_b$ is fully symmetric.
\item If $2^{m-1}\leq p<2^m$, we have $n_a = 2^{m}$ and $n_b=p$ and then $T_a$ is fully symmetric.
\end{enumerate}
We want to point out that we understand that the conjunction of these two assertions in the case when both premises are satisfied, namely when $p=2^{m-1}$, says that $n_a =2^m$ and  $n_b = 2^{m-1}$ and then both $T_a$ (by (ii)) and $T_b$ (by (i)) are fully symmetric.

The base case for (i) is when $n=2$ and for (ii), when $n=3$. In both cases the assertions are obvious, because there is only one bifurcating tree with $2=2^1+0$ leaves (a cherry with $n_a=n_b=1=2^{0}$) and only one  bifurcating tree with $3=2^1+1$ leaves (a caterpillar with $n_a=2=2^1$ and $n_b=1$).

%
%$n=2$ is the base case for Case \eqref{iinew}. In this case, we have $k_n=1$, as we have $n=2^1=2^{k_n}$. Applying Algorithm \ref{gfb} to 2 leaves results in a cherry. Thus, $n_a=1=2^{1-1}=2^{k_n-1}$ and $n_b = 1 = 2-2^{1-1}=n-2^{k_n-1}$. In particular, $T_a$ is a fully symmetric  trees of height $k_n-1=0$. Moreover, as $n_b=1$, $T_b$ is also a fully symmetric  tree of height $k_n-1=0$, and thus in particular $k_{n_b}=k_n-1$. \\
%$n=3$ is the base case for Case \eqref{inew} (note that it is at the same time an example of Case \eqref{iinew} as we have $n=3 \cdot 2^{2-2}=3 \cdot 2^{k_n-2}$). In this case, Algorithm \ref{gfb} returns a so-called triplet, i.e. a tree $T=(T_a,T_b)$, where $T_a$ consists of two leaves forming a cherry and $T_b$ consists only of one leaf. Thus, $n_a=2=2^{2-1}=2^{k_n-1}$ and $n_b = 1 = 3-2^{2-1}=n-2^{k_n-1}$. In particular, $T_a$ is a fully symmetric  tree of height $k_n-1=1$. Moreover, as $n_b=1$, $T_b$ is a fully symmetric  tree of height $k_n-2=0$, and thus in particular $k_{n_b}=k_n-2$. (Note that this case also shows how Cases \eqref{inew} and \eqref{iinew} together imply statement 2. of the proposition.)\\
%
%

Now, let $n \geq 4$ and assume that (i) and (ii) hold for up to $n-1$ leaves. Let $T=(T_a,T_b)$ be a GFB tree with $n$ leaves, with $T_a\in \TT_{n_a}$, $T_b\in \TT_{n_b}$ and $n_a\geq n_b$ Recall that $T_a$ and $T_b$ are  again GFB trees by Lemma \ref{gfb_subtrees}. We distinguish two cases, depending on the parity of $n$:
\begin{itemize}
\item Assume that $n$ is  even, say $n=2n_0$ with $n_0\geq 2$. In this case, Algorithm \ref{alg_gfb} results in a tree $T^{\mathit{gfb}}_n$ with $n_0$ cherries (because in each of the first $n_0$ iterations of the loop a pair of nodes are merged into a cherry). We now consider the tree $T'$ with $n_0$ leaves that is obtained from $T^{\mathit{gfb}}_n$ by replacing all cherries by single leaves. Let $T'=(T_a',T_b')$ be the decomposition into maximal pending subtrees, with $T_a'\in \TT_{n_a'}$, $T_b'\in \TT_{n_b'}$ and $n_a'\geq n_b'$. By construction, $T_a$ and $T_b$ are obtained by replacing the leaves of $T_a'$ and $T_b'$ by cherries, and therefore, in particular,
$n_a=2n_a'$ and $n_b=2n_b'$.
Note now that, since $T^{\mathit{gfb}}_n$ is a GFB tree, so is $T'$ (because as soon as Algorithm \ref{alg_gfb} only has cherries to choose from, they are treated like leaves). Note also that, since $n$ is even, so is $p$, say $p=2p_0$, and $n_0=2^{m-1}+p_0$. Then we have that:
\begin{enumerate}[(i)]
\item If $0\leq p\leq 2^{m-1}$, then $0\leq p_0\leq 2^{m-1-1}$ and hence, by the induction hypothesis,
$n_a'=2^{m-2}+p_0$, $n_b'=2^{m-2}$, and $T_b'$ is fully symmetric, which implies that $n_a=2n_a'=2^{m-1}+2p_0=2^{m-1}+p$, $n_b=2n_b'=2^{m-1}$, and $T_b$ is fully symmetric, because it is obtained from the fully symmetric tree $T_b'$ by replacing all its leaves by cherries.

\item If $2^{m-1}\leq p< 2^{m}$, then $2^{m-1-1}\leq p_0\leq 2^{m-1}$ and hence, by the induction hypothesis,
$n_a'=2^{m-1}$, $n_b'=p_0$, and $T_a'$ is fully symmetric, which implies that $n_a=2n_a'=2^{m}$, $n_b=2n_b'=2p_0=p$, and, arguing as in (i), $T_a$ is fully symmetric.
\end{enumerate}

\item Assume that $n$ is odd, say $n=2n_0+1$ with $n_0\geq 2$. In this case both $n-1=2n_0$ and $n+1=2(n_0+1)$ are even.
Write $n=2^m+p$ and $p=2p_0+1$, so that $n_0=2^{m-1}+p_0$ with $0\leq p_0<2^{m-1}$.
Let $T^1 \coloneqq T_{n-1}^{\mathit{gfb}}$ and $T^2\coloneqq T_{n+1}^{\mathit{gfb}}$. The tree $T^1$ satisfies (i) and (ii) by the induction hypothesis, and it can be proved that $T^2$ also satisfies these assertions by arguing as in the previous case when $n$ is even (i.e., replacing the pending $n_0+1$ cherries in $T^2$ by single leaves, noticing that the resulting tree is GFB, applying the induction hypothesis to it and finally returning back to $T^2$ by replacing leaves by cherries).  Let $T^1=(T^1_a, T^1_b)$ ---with $T^1_a\in \TT_{n^1_a}$ and $T^1_b\in \TT_{n^1_b}$ and $n^1_a\geq n^1_b$--- and $T^2=(T^2_a, T^2_b)$ ---with $T^2_a\in \TT_{n^2_a}$ and $T^2_b\in \TT_{n^2_b}$ and $n^2_a\geq n^2_b$--- denote the  decompositions of $T^1$ and $T^2$ into maximal pending subtrees, respectively.
Note that, since $n$ is odd, $p\neq 0,2^{m-1}$. Now we have:
\begin{enumerate}[(i)]
\item If $0< p<2^{m-1}$, then $n-1=2^m+(p-1)$ with $0\leq p-1<2^{m-1}$ and $n+1=2^m+(p+1)$ with 
$0<p+1\leq 2^{m-1}$. Then, since $T^1$ and $T^2$ satisfy assertion (i),
$$
n^1_a=2^{m-1}+p-1,\ n^1_b=2^{m-1},\  n^2_a=2^{m-1}+p+1,\ n^2_b=2^{m-1}
$$
and both $T_b^1$ and $T_b^2$ are fully symmetric and hence (since they have the same numbers of leaves) $T_b^1=T_b^2$. 

Now, we know by Lemma \ref{subtree_in_common} that $T$ shares a maximal pending subtree with $T^1$ and a maximal pending subtree with $T^2$. Looking at the numbers of leaves of the 
maximal pending subtrees of $T^1$ and $T^2$, one easily deduces that the only possibility for this to happen is that $T$ shares with $T^1$ and $T^2$ the same maximal pending subtree:  the fully symmetric subtree $T_b^1=T_b^2$. (Indeed, since $T_a^1\neq T_a^2$, because they have different numbers of leaves, if $T$ did not share $T_b^1=T_b^2$ with both $T^1$ and $T^2$, then it would have a maximal pending subtree in common with $T^1$ and the other maximal pending subtree in common with $T^2$, but no combination of a maximal pending subtree of $T^1$ and a maximal pending subtree of $T^2$ yields a tree with $2^m+p$ leaves.)
\textsl{A fortiori}, one of the maximal pending subtrees of $T$ is a fully symmetric tree with $2^{m-1}$ leaves and the other must have, thus, the remaining $2^{m-1}+p$ leaves. This shows that $n_a=2^{m-1}+p$ and $n_b=2^{m-1}$ and $T_b$ is fully symmetric.

\item If $2^{m-1}<p\leq 2^{m}-3$ then $n-1=2^m+(p-1)$ with $2^{m-1}\leq p-1<2^{m}$ and $n+1=2^m+(p+1)$ with 
$2^{m-1}<p+1< 2^{m}$. Then, since $T^1$ and $T^2$ satisfy assertion (ii),
$$
n^1_a=2^{m},\ n^1_b=p-1,\  n^2_a=2^{m},\ n^2_b=p+1
$$
and both $T_a^1$ and $T_a^2$ are fully symmetric and hence (since they have the same numbers of leaves) 
$T_a^1=T_a^2$. Reasoning as in the previous case, we deduce that $T$ shares with  both $T^1$ and $T^2$ the fully symmetric maximal pending subtree $T_a^1=T_a^2$. In particular, one of its maximal pending subtrees has $2^{m}$ leaves (and it is fully symmetric) and the other must have, thus, the remaining $p$ leaves. This shows that $n_a=2^{m}$ and $n_b=p$ and $T_a$ is fully symmetric.

\item Consider finally the case when $p= 2^{m}-1> 2^{m-1}$. Then, $n-1=2^m+(p-1)$ with $2^{m-1}\leq p-1<2^{m}$ and $n+1=2^{m+1}$. In this case, since $T^1$ satisfies assertion (ii) and $T^2$ satisfies assertion (i),
$$
n^1_a=2^{m},\ n^1_b=2^{m}-2,\  n^2_a=2^{m},\ n^2_b=2^{m}
$$
and $T_a^1$, $T_b^1$ and $T_b^2$ are fully symmetric and hence (since they have the same numbers of leaves) 
$T_a^1=T_b^1=T_b^2$. Arguing as in the previous cases we conclude that $T$ has a maximal pending subtree with $2^m$ leaves that is fully symmetric and the other maximal pending subtree with the remaining $2^m-1$ leaves, and hence it satisfies assertion (ii). 
\end{enumerate}
\end{itemize}
This completes the proof. 

{\subsection*{A.3 Proof of Proposition \ref{prop:GFBmax}}

This appendix is devoted to establish the following result.

\setcounter{proposition}{7}
\begin{proposition}
For every $n\geq 1$, let $n=\sum_{i=1}^\ell 2^{m_i}$, with $\ell\geq 1$ and $m_1>\cdots> m_\ell$, be its binary expansion.
\begin{enumerate}[(a)]
\item $s(T_n^\mathit{gfb})=n-1-(m_1-m_\ell)$.
\item For every  $T\in \widetilde{\mathcal{MC}}_n$, if $T\neq T_n^\mathit{gfb}$, then $s(T)< s(T_n^\mathit{gfb})$.
\end{enumerate}
\end{proposition}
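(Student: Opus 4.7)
The plan is to prove both assertions by induction on $n$, using the recursive structure of $T_n^\mathit{gfb}$ given by Proposition \ref{GFB_Decomposition}, the characterization of minimal Colless trees in Proposition \ref{lem:charmin1}, and the enumeration of $QB(n)$ in Proposition \ref{cor:QB}. Throughout, write $\Delta(x)$ for the difference between the leading and trailing exponents in the binary expansion of $x\in\NN_{\geq 1}$, so that assertion (a) reads $s(T_n^\mathit{gfb})=n-1-\Delta(n)$.

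For (a), the base case $n=1$ is immediate. For $n\geq 2$ I invoke Proposition \ref{GFB_Decomposition} to decompose $T_n^\mathit{gfb}=(T_a,T_b)$, where one of the two maximal pending subtrees is fully symmetric and hence contributes all of its internal nodes as symmetry vertices. When $n$ is a power of 2 the whole tree is fully symmetric and $s=n-1$, matching $\Delta(n)=0$. Otherwise, the two subtrees have different sizes, so the root is not a symmetry vertex; applying the inductive hypothesis to the non-fully-symmetric pending subtree (whose binary expansion has the same trailing exponent as $n$ but a smaller leading one) and adding the contribution of the fully symmetric pending subtree yields the formula. For (b) I again induct on $n$. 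If $n\in\{2^m-1,2^m,2^m+1\}$, by Corollary \ref{cor:morethan1} there is nothing to prove. Otherwise let $T=(T_a,T_b)\in\widetilde{\mathcal{MC}}_n\setminus\{T_n^\mathit{gfb}\}$ with $n_a\geq n_b$. By Proposition \ref{lem:charmin1}, $(n_a,n_b)\in QB(n)$ and $T_a,T_b$ are themselves minimal Colless, so by the inductive hypothesis and (a),
\begin{align*}
s(T) & = s(T_a)+s(T_b)+[T_a=T_b] \\
 & \leq s(T_{n_a}^\mathit{gfb})+s(T_{n_b}^\mathit{gfb})+[n_a=n_b] \\
 & = n-2-\Delta(n_a)-\Delta(n_b)+[n_a=n_b],
\end{align*}
with equality in the middle step if and only if $T_a=T_{n_a}^\mathit{gfb}$ and $T_b=T_{n_b}^\mathit{gfb}$. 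Since $s(T_n^\mathit{gfb})=n-1-\Delta(n)$, the problem reduces to proving that
\[
\Delta(n_a)+\Delta(n_b)\geq \Delta(n)-1+[n_a=n_b]
\]
for every $(n_a,n_b)\in QB(n)$, with equality holding exclusively at the GFB pair $(n_a^\mathit{gfb},n_b^\mathit{gfb})$. The equality at the GFB pair itself drops out by applying (a) to the decomposition $T_n^\mathit{gfb}=(T_{n_a^\mathit{gfb}}^\mathit{gfb},T_{n_b^\mathit{gfb}}^\mathit{gfb})$ guaranteed by Lemma \ref{gfb_subtrees}, noting that the root is never a symmetry vertex when $n$ is not a power of 2 because the two subtrees have distinct sizes.

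The main obstacle is establishing strict inequality for every other $(n_a,n_b)\in QB(n)$. The symmetric case $n_a=n_b=n/2$ is immediate, since $\Delta(n/2)=\Delta(n)\geq 1$ forces $\Delta(n_a)+\Delta(n_b)=2\Delta(n)>\Delta(n)$. The substantial case is $n_a>n_b$ with $(n_a,n_b)\neq(n_a^\mathit{gfb},n_b^\mathit{gfb})$, for which one must verify $\Delta(n_a)+\Delta(n_b)\geq \Delta(n)$. I plan to handle this by going through the three families of pairs (b.1), (b.2), (b.3) in Proposition \ref{cor:QB}: for each family I read off the binary expansions of $n_a$ and $n_b$ (in family (b.1) this requires tracking the carry that arises when the trailing bit of $2^k\sum_{i=1}^{\ell-1}2^{m_i-1}$ sits at position $k$), identify precisely when the pair coincides with the GFB decomposition described by Proposition \ref{GFB_Decomposition}, and compute $\Delta(n_a)+\Delta(n_b)$ in terms of the exponents $m_1,\ldots,m_\ell$ of $n$. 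The verification reduces to elementary inequalities among these exponents and shows that every pair in $\widetilde{QB}(n)$ other than the GFB one strictly overshoots $\Delta(n)-1$, which completes the induction.
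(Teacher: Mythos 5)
Your reduction is sound and, perhaps surprisingly, coincides with the paper's own proof up to notation: part (a) is proved there by the same induction over the cases of Proposition \ref{GFB_Decomposition}, and in part (b) the paper arrives at $s(T)=s(T^{\mathit{gfb}}_n)-(s_1+m_\ell-s_{\ell_a}-t_{\ell_b})$, whose correction term is exactly your $\Delta(n_a)+\Delta(n_b)-(\Delta(n)-1)$. Where you genuinely diverge is in how the final inequality is verified. You propose to run through the families (b.1)--(b.3) of Proposition \ref{cor:QB}, which obliges you to extract the binary expansions of quantities such as $2^k\bigl(\sum_{i=1}^{\ell-1}2^{m_i-1}+1\bigr)$ and $n-2^k\sum_{i=1}^{j-1}2^{m_i-1}$, track the resulting carries, and identify within each family which pair is the GFB one (it migrates between families as $n$ varies: it is the (b.1) pair for $n=20$ but the $j=2$ pair of type (b.2) for $n=214$). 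The paper sidesteps Proposition \ref{cor:QB} entirely by invoking Corollary \ref{leaf_partioning}: the sandwich $n_b^{\mathit{gfb}}\leq n_b\leq n_a\leq n_a^{\mathit{gfb}}$ pins the leading exponent of $n_b$ to $m_1-1$ and that of $n_a$ to $m_1-1$ or $m_1$ (the latter forcing $n_a=2^{m_1}$, hence the GFB pair), after which your inequality follows from two trivial observations: a trailing exponent never exceeds the corresponding leading one, and the smaller of the two trailing exponents is at most $m_\ell$, since otherwise $2^{m_\ell+1}$ would divide $n$. I have checked that the inequality you need is true --- for instance, the non-GFB pair $(118,96)\in QB(214)$ gives $\Delta(118)+\Delta(96)=5+1=6$, exceeding $\Delta(214)-1=5$ by exactly one, so the bound is sharp even away from the GFB pair --- and your family-by-family plan would therefore succeed; but be aware that this deferred verification is where essentially all of the work of part (b) lives, and the leaf-partitioning shortcut buys you the same conclusion without any binary arithmetic.
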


\begin{proof}
Note first of all that the number $s$ of symmetry vertices satisfies the following recurrence: if $T\in \TT_1$, then $s(T)=0$, and if $T=(T_a,T_b)\in \TT_n$ with $n\geq 2$, then 
\begin{align}\label{s_decomp}
s(T) = \begin{cases}
s(T_a) + s(T_b) + 1, &\text{ if $T_a$ and $T_b$ are isomorphic},\\
s(T_a)+ s(T_b) &\text{ otherwise}.
\end{cases}
\end{align}

We shall now prove (a) by induction on $n$. When $n=1=2^0$, the statement holds because $s(T_1^\mathit{gfb})=0 = 1-1-(0-0) = n-1-(m_1-m_\ell)$. 
More in general, the statement clearly holds whenever $n$ is a power of 2, say $n=2^{m_1}$, because in this case $\Tgfb$ is fully symmetric and therefore all its internal nodes are symmetry vertices, i.e.  $s(\Tgfb)=n-1=n-1-(m_1-m_1)$.

Now assume that the statement holds for every GFB tree with $n'$ leaves, with $n'<n$, and consider the tree $\Tgfb$.  By Lemma \ref{gfb_subtrees}, if $\Tgfb = (T_a, T_b)$, then $T_a$ and $T_b$ are GFB trees and, by the inductive hypothesis, the statement holds for $T_a=T_{n_a^\mathit{gfb}}^\mathit{gfb}$ and $T_b=T_{n_b^\mathit{gfb}}^\mathit{gfb}$.

Let us now write $n$ as $2^m+p$ with $m = \lfloor \log_2(n) \rfloor$ and $0 \leq p < 2^m$, and consider its binary expansion
$n = \sum_{j=1}^\ell 2^{m_j}$ with $m_1>\cdots>m_\ell$, so that $m_1=m$ and $p = \sum_{j=2}^\ell 2^{m_j}$ is the binary expansion of $p$ if $p>0$. Now, we distinguish four cases:

\begin{enumerate}
\item[(i)] If $p=0$, then $n$ is a power of 2, in which case we have already seen that the statement holds.

\item [(ii)] If $1\leq p<2^{m-1}$, then, by Proposition \ref{GFB_Decomposition}, $n_a^\mathit{gfb} = 2^{m-1} + p$ and $n_b^\mathit{gfb} = 2^{m-1}$ and $T_b$ is fully symmetric. In this case, $m_2<m-1=m_1-1$ and thus $n_a^\mathit{gfb} = 2^{m_1 - 1} + \sum_{j=2}^\ell 2^{m_j}$ is the binary expansion of $n_a^\mathit{gfb}$.
Then $s(T_b)=2^{m-1}-1$ and, by the induction hypothesis, 
$$
s(T_a)=2^{m-1} + p-1-(m_1-1-m_\ell)=2^{m-1} + p-(m_1-m_\ell)
$$ and hence
$$
s(\Tgfb) = s(T_a) + s(T_b)= 2^{m-1} + p-(m_1-m_\ell) + 2^{m-1}-1 = n- 1- (m_1 - m_\ell).
$$

\item [(iii)] If $p= 2^{m-1}$, so that $n=2^m+2^{m-1}$ is the binary expansion of $n$, then, by Proposition \ref{GFB_Decomposition}, $n_a^\mathit{gfb} = 2^{m}$ and $n_b^\mathit{gfb}=2^{m-1}$ and both $T_a$ and $T_b$ are fully symmetric. In this case, $s(T_a)=2^m-1$ and $s(T_b)=2^{m-1}-1$ and hence
\begin{align*}
s(\Tgfb) & = s(T_a) + s(T_b)= 2^m-1 + 2^{m-1}-1 \\
& = 2^m + 2^{m-1} - 1 - (m-(m-1)) = n-1-(m_1-m_\ell).
\end{align*}

\item [(iv)] Finally, assume that $p > 2^{m-1}$, so that its binary expansion is $p= 2^{m-1} + \sum_{i=3}^\ell 2^{m_i}$, and in particular $m_2=m-1=m_1-1$. In this case, by Proposition \ref{GFB_Decomposition}, $n_a^\mathit{gfb} = 2^m$, and $T_a$ is fully symmetric, and $n_b^\mathit{gfb} = p$. Then, $s(T_a)=2^m-1$ and, by the induction hypothesis,
$s(T_b)=p-1-(m_1-1-m_\ell)=p-(m_1-m_\ell)$ and hence
$$
s(T) = s(T_a) + s(T_b) = 2^{m}-1+p-(m_1-m_\ell)= n-1-(m_1 - m_\ell).
$$
\end{enumerate}
This completes the proof of (a).

As far as (b) goes, we also prove it by induction on $n$. The case  $n=1$ is obvious, since there is only one bifurcating tree in $\TT_1$. Let now $n\geq 2$ and assume that the statement is true for every number $n'$ of leaves smaller than $n$. Let $T = (T_a,T_b)$, with $T_a\in \TT_{n_a}$, $T_b\in \TT_{n_b}$, and $n_a\geq n_b$,
be a minimal Colless tree with $n$ leaves such that $s(T)$ is maximum in $\widetilde{\mathcal{MC}}_n$. We want to prove that $T=\Tgfb$.

By Lemma \ref{max_subtrees}, $T_a\in \widetilde{\mathcal{MC}}_{n_a}$ and $T_b\in \widetilde{\mathcal{MC}}_{n_b}$  and therefore, by the inductive hypothesis, $s(T_a)\leq s(T_{n_a}^\mathit{gfb})$ and $s(T_b)\leq s(T_{n_b}^\mathit{gfb})$. To prove that $T=\Tgfb=(T_{n_a^\mathit{gfb}}^\mathit{gfb},T_{n_b^\mathit{gfb}}^\mathit{gfb})$, it is enough to prove that
$n_a=n_a^\mathit{gfb}$ and $n_b=n_b^\mathit{gfb}$ (and, actually, it is enough to prove one of these equalities, because then the other will follow from $n_a+n_b=n=n_a^\mathit{gfb}+n_b^\mathit{gfb}$)
and that
$s(T_a)= s(T_{n_a}^\mathit{gfb})$ and $s(T_b)= s(T_{n_b}^\mathit{gfb})$ (because by the inductive hypothesis 
these equalities imply that $T_a=T_{n_a}^\mathit{gfb}$ and $T_b=T_{n_b}^\mathit{gfb}$). Let $n_a=\sum_{i=1}^{\ell_a} 2^{s_i}$ and 
$n_b=\sum_{i=1}^{\ell_b} 2^{t_i}$ be the binary decompositions of $n_a$ and $n_b$.

Now, two cases arise, depending on whether the root  of $T$ is a symmetry vertex or not. 
Let us assume first that it is a symmetry vertex, i.e, that $T_a=T_b$.  In this case, $n$ must be even and $n_a = n_b = n/2=\sum_{i=1}^{\ell} 2^{m_i-1}$. In particular $s_1=t_1=m_1-1$ and $s_{\ell_a}=t_{\ell_b}=m_{\ell}-1$.
Moreover, it must happen that $s(T_a)= s(T_{n/2}^\mathit{gfb})$, because if 
$s(T_a)< s(T_{n/2}^\mathit{gfb})$ and if we denote by $T'$ the tree $(T_{n/2}^\mathit{gfb},T_{n/2}^\mathit{gfb})$, then $T'\in \widetilde{\mathcal{MC}}_n$ by Proposition \ref{lem:charmin1} (recall that $(n/2,n/2)$ always belongs to $QB(n)$) and, by 
Eqn. (\ref{s_decomp}),
$$
s(T)=2s(T_a)+1<2s(T_{n/2}^\mathit{gfb})+1=s(T')
$$
against the assumption that $s(T)$ is maximum in $\widetilde{\mathcal{MC}}_n$. So, in this case it remains to prove that $n_a^\mathit{gfb}=n_b^\mathit{gfb}=n/2$.

Now, applying Eqn. (\ref{s_decomp}) and (a), we have that 
\begin{align*}
s(T) & =2s(T_a) + 1  =2s(T_{n/2}^\mathit{gfb}) +  1 = 2\Big(\frac{n}{2}-1-(s_1-s_{\ell_a})\Big)+1\\
& = n-1-2(m_1-1-m_\ell+1)=n-1-2(m_1-m_\ell)\\
&=s(T_{n}^\mathit{gfb})-(m_1-m_\ell).
\end{align*}
Thus, if $\ell>1$, then $s(T)<s(T_n^\mathit{gfb})$, against the assumption that $s(T)$ is maximum in $\widetilde{\mathcal{MC}}_n$. Therefore, $\ell=1$, i.e., $n=2^{m_1}$ and hence $n_a^\mathit{gfb}=n_b^\mathit{gfb}=n/2=n_a=n_b$, as we wanted to prove.

Let us assume now that the root of $T$ is not a symmetry vertex. Recall from Corollary \ref{leaf_partioning} that
$$
n_b^\mathit{gfb}\leq n_b\leq n_a\leq n_a^\mathit{gfb}.
$$
Combining these inequalities with  Proposition \ref{GFB_Decomposition} we obtain that:
\begin{itemize}
\item If $0\leq p< 2^{m_1-1}$, then 
\begin{equation}
2^{m_1-1}=n_b^\mathit{gfb}\leq n_b\leq n_a\leq n_a^\mathit{gfb}= 2^{m_1-1}+p<2^{m_1},
\label{eqn:maxsitem1}
\end{equation}
and then, in this case, $s_1=t_1=m_1-1$.

\item If $2^{m_1-1}\leq p<2^{m_1}$, then 
\begin{equation}
2^{m_1-1}\leq p%=2^{m_1-1}+\sum_{i=3}^\ell 2^{m_i}
=n_b^\mathit{gfb}\leq n_b\leq n_a\leq n_a^\mathit{gfb}=2^{m_1}
\label{eqn:maxsitem2}
\end{equation}
and then either $n_a=2^{m_1}=n_a^\mathit{gfb}$, in which case $n_b=p=n_b^\mathit{gfb}<2^{m_1}$, $s_1=m_1$, and $t_1=m_1-1$,
or $2^{m_1-1}\leq n_b\leq n_a<n_a^\mathit{gfb}=2^{m_1}$, in which case $s_1=t_1=m_1-1$.
\end{itemize}
So, in particular, $t_1$ is always $m_1-1$, and $s_1$ is  $m_1$,  when $n_a=2^{m_1}=n_a^\mathit{gfb}$, and  $m_1-1$ otherwise. Moreover, since $n_a+n_b=n$, it always happens that $\min\{s_{\ell_a}, t_{\ell_b}\} \leq m_\ell$. 

Now, in this case we have again that $s(T_a)=s(T_{n_a}^\mathit{gfb})$ and $s(T_b)=s(T_{n_b}^\mathit{gfb})$, because if, say, $s(T_a)<s(T_{n_a}^\mathit{gfb})$ and if we replace in $T$ its maximal pending subtree $T_a$ by $T_{n_a}^\mathit{gfb}$, then, by Proposition \ref{lem:charmin1} (and recalling that, since $T\in \widetilde{\mathcal{MC}}_n$, by that very proposition we have that $(n_a,n_b)\in QB(n)$), the resulting tree $T'=(T_{n_a}^\mathit{gfb},T_b)$ is still minimal Colless and, by Eqn. (\ref{s_decomp}),
$$
s(T) =s(T_a)+s(T_b)<s(T_{n_a}^\mathit{gfb})+s(T_b)\leq s(T')
$$
against the assumption that $s(T)$ is maximum in $\widetilde{\mathcal{MC}}_n$. So,
$T_a=T_{n_a}^\mathit{gfb}$ and $T_b=T_{n_b}^\mathit{gfb}$. 
It remains to prove that $n_a=n_a^\mathit{gfb}$ and $n_b=n_b^\mathit{gfb}$.

By Eqn. (\ref{s_decomp}) and (a), we have that
\begin{align}
s(T) & =s(T_a) + s(T_b)  =s(T_{n_a}^\mathit{gfb}) + s(T_{n_b}^\mathit{gfb})\nonumber\\
& = n_a-1-(s_1-s_{\ell_a})+n_b-1-(t_1-t_{\ell_b})\nonumber\\
& = n-2-(s_1-s_{\ell_a})-(m_1-1-t_{\ell_b})\nonumber\\
& = n-1-(m_1-m_\ell)-(s_1+m_\ell-s_{\ell_a}-t_{\ell_b})\nonumber\\
& =s(T_{n}^\mathit{gfb})-(s_1+m_\ell-s_{\ell_a}-t_{\ell_b})
\label{eqn:finalmaxs}
\end{align}
We consider now several possibilities:
\begin{itemize}
\item If $s_{\ell_a}=s_1$, then $n_a=2^{s_1}$, where $s_1$ is $m_1-1$ or $m_1$. Now, since $2^{m_1-1}\leq n_b\leq n_a$ and $n_a+n_b=2^{m_1}+p$, if we had $n_a=2^{m_1-1}$, we would also have $n_b=2^{m_1-1}$  and $p=0$, and then
$n_a^\mathit{gfb}=n_a=n_b=n_b^\mathit{gfb}$; but then
$T_a=T_{n_a}^\mathit{gfb}$ and $T_b=T_{n_b}^\mathit{gfb}$ would be isomorphic to the fully symmetric tree $T_{m_1-1}^\mathit{fs}$  and hence the root of $T$ would be a symmetry vertex, against the current  assumption that it is not so. 

So, in this case we have $n_a=2^{m_1}$. By  properties  (\ref{eqn:maxsitem1}) and (\ref{eqn:maxsitem2}), 
it can only happen when $2^{m_1-1}\leq p$ and $n_a= n_a^\mathit{gfb}$, and then $n_b= n_b^\mathit{gfb}$, too.

\item If $s_{\ell_a}<s_1$ and $t_{\ell_b}\leq s_{\ell_a}$, then  $t_{\ell_b}=\min\{s_{\ell_a}, t_{\ell_b}\} \leq m_\ell$ and hence, by (\ref{eqn:finalmaxs}),
$$
s(T)=s(T_{n}^\mathit{gfb})-(s_1+m_\ell-s_{\ell_a}-t_{\ell_b})<s(T_{n}^\mathit{gfb}),  
$$
against the assumption that $s(T)$ is maximum among all minimal Colless trees with $n$ leaves.
    
\item If $s_{\ell_a}<s_1$ and $s_{\ell_a}<t_{\ell_b}$, then  $s_{\ell_a}=\min\{s_{\ell_a}, t_{\ell_b}\} \leq m_\ell$. Since in this case $n_a$ is not a power of 2, we have $s_1=t_1=m_1-1$ and then
$$
s_1+m_\ell-s_{\ell_a}-t_{\ell_b}\geq m_1-1-t_{\ell_b}=t_1-t_{\ell_b}\geq 0.
$$
If one of these inequalities is strict, we deduce again that
$$
s(T)=s(T_{n}^\mathit{gfb})-(s_1+m_\ell-s_{\ell_a}-t_{\ell_b})<s(T_{n}^\mathit{gfb}),  
$$
reaching the same contradiction as before. Therefore, both  inequalities are equalities and hence $s_{\ell_a}=m_\ell$ and $t_{\ell_b}=t_1=m_1-1$, which implies in particular that $n_b=2^{m_1-1}$. By (\ref{eqn:maxsitem1}) and (\ref{eqn:maxsitem2}), this can only happen when $0\leq p\leq 2^{m_1-1}$ and $n_b=2^{m_1-1}=n_b^\mathit{gfb}$.
\end{itemize}
This finishes the proof of (b). \qed
\end{proof}
}

{\subsection*{A.4 Computation of some probabilities under the $\beta$-model}

Aldous' $\beta$ model \citep{Ald1}  is a probabilistic model of bifurcating phylogenetic trees that depends on one parameter $\beta\in (-2,\infty)$. As any other such probabilistic model, it yields a probabilistic model of bifurcating unlabeled trees, by defining the probability $P_\beta(T)$ of a tree $T\in \TT_n$ as the sum of the probabilities of all phylogenetic trees on $n$ leaves with shape $T$. This probabilistic model of bifurcating unlabeled trees satisfies the following Markovian recurrence. For every $m\geq 2$ and $k=1,\ldots,m-1$, let 
$$
q_{m,\beta}(k)=\frac{1}{a_m(\beta)}\cdot \frac{\Gamma(\beta+k+1)\Gamma(\beta+m-k+1)}{\Gamma(k+1)\Gamma(m-k+1)},
$$
where $a_m(\beta)$ is a suitable normalizing constant so that $\sum\limits_{a=1}^{m-1} q_{m,\beta}(a)=1$, and
$$
\widehat{q}_{m,\beta}(k)=\left\{\begin{array}{ll}
q_{m,\beta}(k)+q_{m,\beta}(m-k)=2q_{m,\beta}(k) & \mbox{ if $k\neq m/2$}\\
q_{m,\beta}(k) & \mbox{ if $k= m/2$}
\end{array}\right.
$$
Then, if $T=(T_{a},T_b)\in \TT_n$ with $T_a\in \TT_{n_a}$, $T_b\in \TT_{n_b}$ and $n_a\geq n_b$,
\begin{equation}
P_\beta(T)=\widehat{q}_{n,\beta}(n_a)\cdot P_\beta(T_a)P_\beta(T_b).
\label{eqn:markovbeta}
\end{equation}
Recall that the Gamma function $\Gamma$ satisfies the recurrence $\Gamma(x+1)=x\Gamma(x)$ and that, for every $n\in \NN_{\geq 1}$,  $\Gamma(n)=(n-1)!$.

We want to compute the probabilities under this model of $T_6^\mathit{mb}$ and $T_6^\mathit{gfb}$. To do that, we shall need to compute all values $q_{6,\beta}(k)$ (we need all of them in order to compute the normalizing constant $a_6(\beta)$):
\begin{align*}
q_{6,\beta}(1) & =q_{6,\beta}(5)=\frac{1}{a_6(\beta)}\cdot \frac{\Gamma(\beta+2)\Gamma(\beta+6)}{\Gamma(2)\Gamma(6)}\\ & =
\frac{1}{a_6(\beta)}\cdot \frac{(\beta+5)(\beta+4)(\beta+3)(\beta+2)\Gamma(\beta+2)^2}{5!}\\
q_{6,\beta}(2) & =q_{6,\beta}(4)=\frac{1}{a_6(\beta)}\cdot \frac{\Gamma(\beta+3)\Gamma(\beta+5)}{\Gamma(3)\Gamma(5)}\\ & =
\frac{1}{a_6(\beta)}\cdot \frac{(\beta+4)(\beta+3)(\beta+2)^2\Gamma(\beta+2)^2}{2\cdot 4!}\\
q_{6,\beta}(3) & =\frac{1}{a_6(\beta)}\cdot \frac{\Gamma(\beta+4)^2}{\Gamma(4)^2}\\ & =
\frac{1}{a_6(\beta)}\cdot \frac{(\beta+3)^2(\beta+2)^2\Gamma(\beta+2)^2}{3!^2}
\end{align*}
Imposing now $\sum_{k=1}^5 q_{6,\beta}(k)=1$, i.e., 
\begin{align*}
1 & =\frac{(\beta+3)(\beta+2)\Gamma(\beta+2)^2}{a_6(\beta)}\\
&\qquad\cdot\Big(\frac{2(\beta+5)(\beta+4)}{5!}+\frac{2(\beta+4)(\beta+2)}{2\cdot 4!}+\frac{(\beta+3)(\beta+2)}{3!^2}\Big)\\
&=\frac{(\beta+3)(\beta+2)\Gamma(\beta+2)^2(31 \beta^2 + 194\beta + 300)}{a_6(\beta)\cdot 3\cdot 5!}
\end{align*}
and solving for $a_6(\beta)$, we obtain
$$
a_6(\beta)=\frac{(\beta+3)(\beta+2)\Gamma(\beta+2)^2(31 \beta^2 + 194\beta + 300)}{3\cdot 5!}.
$$
We can compute now the desired probabilities:
\begin{itemize}
\item As far as $P_\beta(T_6^\mathit{mb})$ goes, by Eqn. (\ref{eqn:markovbeta}) we have that
$$
P_\beta(T_6^\mathit{mb})=q_{6,\beta}(3)\cdot P_\beta(T_3^\mathit{mb})^2=q_{6,\beta}(3)
$$
because $\TT_3=\{T_3^\mathit{mb}\}$ and hence $P_\beta(T_3^\mathit{mb})=1$. So,
\begin{align*}
P_\beta(T_6^\mathit{mb}) & = 
\frac{3\cdot 5!\cdot (\beta+3)^2(\beta+2)^2\Gamma(\beta+2)^2}{3!^2(\beta+3)(\beta+2)\Gamma(\beta+2)^2(31 \beta^2 + 194\beta + 300)}\\
& = 
\frac{10 (\beta+3)(\beta+2)}{31 \beta^2 + 194\beta + 300}.
\end{align*}

\item As to $P_\beta(T_6^\mathit{gfb})$ goes, by Eqn. (\ref{eqn:markovbeta}) (and the fact that $T_6^\mathit{gfb} = (T_4^\mathit{gfb}, T_2^\mathit{gfb})$ by Lemma \ref{gfb_subtrees} and Proposition \ref{GFB_Decomposition}) we have that
$$
P_\beta(T_6^\mathit{gfb})=2q_{6,\beta}(4)\cdot P_\beta(T_2^\mathit{gfb})P_\beta(T_4^\mathit{gfb})
$$
where $P_\beta(T_2^\mathit{gfb})=1$, because $T_2^\mathit{gfb}$ is the only tree in $\TT_2$;
$$
P_\beta(T_4^\mathit{gfb})=
P_\beta(T_4^\mathit{mb})=\frac{3(\beta+2)}{7\beta+18}
$$
by Lemma 4 in \citep{CMR}; and
\begin{align*}
q_{6,\beta}(4)& = \frac{3\cdot 5!\cdot (\beta+4)(\beta+3)(\beta+2)^2\Gamma(\beta+2)^2}{2\cdot 4!\cdot (\beta+3)(\beta+2)\Gamma(\beta+2)^2(31 \beta^2 + 194\beta + 300)}\\
& = \frac{15 (\beta+4)(\beta+2)}{2(31 \beta^2 + 194\beta + 300)}.
\end{align*}
So, finally,
\begin{align*}
P_\beta(T_6^\mathit{gfb}) & =2\cdot \frac{15 (\beta+4)(\beta+2)}{2(31 \beta^2 + 194\beta + 300)}\cdot \frac{3(\beta+2)}{7\beta+18}\\
& = \frac{45 (\beta+4)(\beta+2)^2}{(31 \beta^2 + 194\beta + 300)(7\beta+18)}.
\end{align*}
\end{itemize}

}

\end{document}